\newtheorem{theorem}{Theorem}[section]
\newtheorem{lemma}[theorem]{Lemma}
\newtheorem{claim}[theorem]{Claim}
\newtheorem{proposition}[theorem]{Proposition}
\newtheorem{corollary}[theorem]{Corollary}
\theoremstyle{definition}
\newtheorem{definition}{Definition}
\newtheorem{remark}[theorem]{Remark}
\numberwithin{equation}{section}
\newcommand{\term}[1]{\text{\tt{#1}}\xspace}
\newcommand{\ALG}{\term{ALG}}
\newcommand{\TS}{\term{TS}}
\newcommand{\MainALG}{\term{ExponentialExploration}}
\newcommand{\PoI}{\term{PoIE}} % price of incentivizing exploration
\newcommand{\PoIbf}{\term{\textbf{PoIE}}} % price of incentivizing exploration
\newcommand{\ZEROS}{\term{ZEROS}}
\newcommand{\ExplorePhase}[1][j]{\textsc{Exploration Phase} for arm $#1$}
\newcommand{\ExploitPhase}[1][N]{\textsc{Exploitation Phase} with depth $#1$\xspace}
\newcommand{\PaddedPhase}{\textsc{Padded Phase}\xspace}
\newcommand{\myTAB}{\hspace{2em}}
\newcommand{\padding}{\lambda}
\newcommand{\padG}{G_{\term{pad}}}
\newcommand{\padN}{N_{\term{pad}}}
\newcommand{\padC}{C_{\term{pad}}}
\newcommand{\bootN}{N_{\term{boot}}}
\newcommand{\bootP}{p_{\term{boot}}}
\newcommand{\RndsUB}{T_{\term{UB}}}     % main upper bound
\newcommand{\RndsUBtwo}{T_{\term{UB2}}} % upper bound for the improved algo
\newcommand{\RndsLB}{T_{\term{LB}}}   % lower bound
\newcommand{\RndsOPT}{T_{\term{OPT}}} % optimum
\newcommand{\MainLB}{L_{\term{LB}}} % main term in the LB
\newcommand{\post}{\term{pad}}
\newcommand{\Nexplore}{\bootN}
\newcommand{\Pexplore}{\bootP}
\newcommand{\Tbic}[1]{\RndsUB(#1)}
\renewcommand{\E}{\mathbb{E}} %expectation
\newcommand{\Var}{\mathtt{Var}} %variance
\newcommand{\brac}[1]{\sbr{#1}}
\renewcommand{\refeq}[1]{Eq.~(\ref{#1})}
\newcommand{\bigXhdr}[1]{\vspace{3mm}\noindent\textsc{\textbf{#1}}}
\title{The Price of Incentivizing Exploration: \\
{\Large A Characterization via Thompson Sampling and Sample Complexity}%
\footnote{An extended abstract of this paper appeared at \emph{ACM EC 2021}
(ACM Symp. on Economics and Computation).
\newline\indent 
This is the \emph{full version} for this extended abstract.
Compared to the initial version from Feb'20, the versions since Feb'21 contain several new extensions: Sections~\ref{sec:sample-K} and~\ref{sec:sample-wellknown}, the lower bounds in Section~\ref{sec:sample-canon}, and Section~\ref{sec:ext}. The monotonicity result for Thompson Sampling  (Section~\ref{sec:TS-extensions}) appears since Jun'22.
\newline\indent
We are grateful to the anonymous referees of ACM EC 2021 and Operations Research for thoughtful comments and suggestions. We thank Xinyan Hu and Dung Daniel Ngo for comments on the manuscript. 
}}
\author{Mark Sellke%
\thanks{This work was supported by an NSF GRFP and a Stanford Graduate Fellowship.}\\
       Stanford University\\\texttt{msellke@stanford.edu}
       \and
 Aleksandrs Slivkins\\
       Microsoft Research NYC\\\texttt{slivkins@microsoft.com}
 }
 \date{First version: February 2020\\
 This version: June 2022}
\begin{document}

\maketitle

\begin{abstract}
We consider \emph{incentivized exploration}: a version of multi-armed bandits where the choice of arms is controlled by self-interested agents, and the algorithm can only issue recommendations. The algorithm controls the flow of information, and the information asymmetry can incentivize the agents to explore. Prior work achieves  optimal regret rates up to multiplicative factors that become arbitrarily large depending on the Bayesian priors, and scale exponentially in the number of arms. A more basic problem of sampling each arm once runs into similar factors.

We focus on the \emph{price of incentives}: the loss in performance, broadly construed, incurred for the sake of incentive-compatibility. We prove that Thompson Sampling, a standard bandit algorithm, is incentive-compatible if initialized with sufficiently many data points. The performance loss due to incentives is therefore limited to the initial rounds when these data points are collected. The problem is largely reduced to that of sample complexity: how many rounds are needed? We address this question, providing matching upper and lower bounds  and instantiating them in various corollaries. Typically, the optimal sample complexity is polynomial in the number of arms and exponential in the ``strength of beliefs".

\end{abstract}

% hides the sections in body from from ToC
% \addtocontents{toc}{\protect\setcounter{tocdepth}{0}}
\newpage
\tableofcontents
\newpage
\section{Introduction}

Consider an online platform where users need to choose between some actions (\eg products or experiences) of initially unknown quality, and can jointly learn which actions are better. The users collectively face the tradeoff between \emph{exploring} various actions so as to acquire new information, and \emph{exploiting} this information to choose better actions. A benevolent dictator controlling the users would run an algorithm to resolve this tradeoff so as to maximize social welfare. The online platform may wish to coordinate the users in a similar way. However, each user is a self-interested agent making her own choices, and her incentives are heavily skewed in favor of exploitation. This is because she suffers full costs of her exploration, whereas its benefits are spread among many agents in the future. Misaligned incentives can lead to under-exploration, whereby better alternatives are explored very slowly or not at all if they are unappealing initially.
%Moreover, the data may be corrupted by selection bias, \eg reviews for a new movie only come from its fans because nobody else came to see it.
These issues are common in online platforms that present recommendations and ratings based on user feedback, which are ubiquitous in a variety of domains: movies, restaurants, products, vacation destinations, etc. %Similar issues concern participation incentives in medical trials.

We study \emph{incentivized exploration}: the problem faced by the platform in the scenario described above. The platform can recommend actions, but cannot force agents to follow these recommendations. However, the platform controls the flow of information, and can choose what each agent observes about the past. Revealing full information to each agent works badly: agents fail to explore in a broad range of problem instances \citep[\eg see Ch. 11,][]{slivkins-MABbook}. Information asymmetry, when the platform reveals less than it knows, can incentivize the agents to explore.

A common model for incentivized exploration from \citep{Kremer-JPE14,ICexploration-ec15-conf} and the subsequent work is as follows. The population of agents faces a \emph{multi-armed bandit} problem, a basic model of exploration-exploitation tradeoff. A bandit algorithm iteratively recommends actions, a.k.a. \emph{arms}. In each round, a new agent arrives, observes a recommendation, chooses an action, and collects a reward for this action. This reward lies in the interval $[0,1]$, and comes from a fixed but unknown action-specific reward distribution. The reward is observed by the algorithm but not by the other agents. The algorithm does not reveal any information other than the recommended action itself; this is w.l.o.g. under standard economic assumptions of Bayesian rationality. In particular, the arms' mean rewards are drawn from a Bayesian prior which is known to everyone. The algorithm needs to be Bayesian incentive-compatible (\emph{BIC}), \ie incentivize the agents to follow recommendations. The goal is to design a BIC  bandit algorithm so as to optimize its learning performance.

Prior work on BIC bandit algorithms compares their learning performance to that of optimal bandit algorithms, BIC or not.
%One relevant performance measure is Bayesian regret. \citet{ICexploration-ec15-conf,ICexploration-ec15} design a BIC bandit algorithm which obtains the worst-case optimal dependence on the time horizon $T$. More precisely, its Bayesian regret is at most
In particular, \citet{ICexploration-ec15-conf,ICexploration-ec15} obtain Bayesian regret
    $C_{K,\mP}\cdot \sqrt{T}$,
where $T$ is the time horizon and $C_{K,\mP}$ is determined by the number of arms $K$ and the Bayesian prior $\mP$; this dependence on $T$ is optimal in the worst case.
\footnote{Bayesian regret is a standard performance measure for Bayesian bandits (\ie multi-armed bandits with a Bayesian prior). It is defined as the difference in cumulative reward between the algorithm and the best arm, in expectation over the prior.}
%Moreover, this algorithm obtains, up to the same factor $C_{K,\mP}$, the optimal dependence on $T$ for every problem instance.%
%\footnote{Their algorithm is a reduction from bandit algorithms to BIC bandit algorithms, with blow-up of $C_{K,\mP}$ in Bayesian regret.}
However, $C_{K,\mP}$ can be arbitrarily large depending on the prior, even for $K=2$, and the dependence on $K$ is exponential in paradigmatic special cases. For contrast, non-BIC bandit algorithms achieve $O(\sqrt{KT})$ regret rate uniformly over all priors. Similar issues arise for a more basic variant of incentivized exploration, where one only needs to choose each arm at least once. This variant requires $C_{K,\mP}$ rounds in \citet{ICexploration-ec15-conf,ICexploration-ec15}, without any non-trivial lower bounds on the number of rounds, or any way to relate upper and lower bounds to one another.

\xhdr{Our scope.}
We focus on the \emph{price of incentivizing exploration} (\PoI): the penalty in performance incurred for the sake of the BIC property, such as the $C_{K,\mP}$ factor mentioned above. While several refinements of incentivized exploration have been studied, a more fundamental question of \textbf{characterizing the optimal \PoIbf} is largely open. This question is a unifying framing for our results.

While intuitive on a high level, the concept of \PoI is subtle to pin down formally. This is because the ``penalty in performance" can be expressed via different performance measures. Among these, we are particularly interested in Bayesian regret and \emph{sample complexity}: essentially, how many rounds are needed to choose each arm. Moreover, the increase in Bayesian regret could be multiplicative and/or additive, and is best measured relative to a particular bandit algorithm.%
\footnote{Formally, fix a near-optimal bandit algorithm $\mA^*$ and let $\E[R(\cdot)]$ denote Bayesian regret. Given a BIC algorithm $\mA$, write
    $\E[R(\mA)] = \alpha\cdot \E[R(\mA^*)] +\beta$.
Then $\alpha$, $\beta$ are, resp., multiplicative and additive increase in Bayesian regret.}

The question of characterizing the optimal \PoI comes in several flavors. First, what is the optimal dependence on $K$, the number of arms? For instance, when is this dependence polynomial as opposed to exponential? Second, what is the optimal dependence on the Bayesian prior? It is unclear what are the right parameters to summarize this dependence, and which properties of the prior make the problem difficult. In fact, it is not even clear if the dependence on the prior is needed.
%\PoI needs to depend on the prior at all: a uniform, prior-independent upper bound has not been ruled out.
Third, while BIC algorithms in prior work suffered from a multiplicative increase in Bayesian regret, it is desirable to make it additive.

% slowdown in performance, it is desirable to achieve additive \PoI.

We shed light on these issues, focusing on the canonical case of independent priors.
That is, the mean reward of each arm $i$ is drawn independently from the respective Bayesian prior $\mP_i$.

%It is also unclear  whether new algorithms are genuinely needed, or whether some of the bandit algorithms from prior work are in fact BIC.

\xhdr{Our results: Thompson Sampling is BIC.}
We consider Thompson Sampling \citep{Thompson-1933}, a well-known bandit algorithm. We prove that Thompson Sampling is BIC given a \emph{warm-start}: a known number of samples of each arm, denoted $N_\TS$. More specifically, $N_\TS$ depends only on the Bayesian prior and the number of arms $K$, but not on the time horizon $T$ or the arms' mean rewards. Further, Thompson Sampling is BIC \emph{as is} when all arms have the same prior mean reward. Thompson Sampling is the first ``natural" bandit algorithm found to be BIC, whereas all BIC bandit algorithms from prior work are custom-designed.

This result has far-reaching implications. Thompson Sampling is widely recognized as a state-of-art algorithm for multi-armed bandits (or very close thereto), in terms of provable guarantees as well as empirical performance. In particular, it achieves the optimal $O(\sqrt{KT})$ Bayesian regret starting from any prior (or any warm-start). We view it as a ``gold standard" for bandits, as far as incentivized exploration is concerned. Therefore, the \PoI reduces to the performance loss due to collecting samples to warm-start Thompson Sampling. In particular, the increase in Bayesian regret is additive rather than multiplicative.
%the initial rounds in which the necessary samples are collected.

The warm-start size $N_\TS$ is an interesting measure of \PoI in its own right, as the initial samples  may be collected exogenously,
%by some other process that does not require incentivized exploration,
\eg purchased at a fixed price per sample.
%We express $N_{\TS}$, the sufficient number of samples, in terms of two natural parameters of the prior.
We prove that $N_{\TS}$ is linear in $K$ under mild assumptions. Moreover, it can be as low as $O(\log K)$ for the natural example of Beta priors with bounded parameters. This is a huge improvement over \citep{ICexploration-ec15-conf,ICexploration-ec15}, where some (custom-designed) low-regret algorithms are proved to be BIC given some amount of initial data, but the necessary amount is not upper-bounded in terms of $K$ and can be (at least) exponential in $K$ for some examples. The $O(\log K)$ scaling is particularly appealing if each arm is contributed to the platform by a self-interested party, \eg it represents a restaurant that wishes to be advertised. Then each arm can be asked to pay an entry fee to subsidise the initial samples, and this fee only needs to scale as $O(\log K)$.

%\revedit{We consider an extension of incentivized exploration, where each agent is endowed with a \emph{private signal}: at most $n$ samples of each arm, not visible to the platform or the other agents. We show that Thompson Sampling is BIC, but $N_\TS$ increases multiplicatively by a factor exponential in $n$. This is the first result on incentivized exploration with private signals, to the best of our knowledge.}

\revedit{Lastly, our analysis of Thompson Sampling implies an important \emph{monotonicity} property: its Bayesian-expected per-round reward is non-decreasing over time. Consequently, its Bayesian simple regret%
\footnote{Bayesian simple regret is another standard performance measure, defined as the difference in reward at round $t$ between the algorithm and the best arm, in expectation over the prior. }
 at each round $t$ is at most $O(\sqrt{K/t})$. These results appear new, and may be of independent interest.}

\xhdr{Our results: sample complexity.}
\revedit{
We turn to collecting initial samples of each arm, arguably the most basic variant of  incentivized exploration. The samples can be used to warm-start Thompson Sampling (or some other bandit algorithm with a similar BIC guarantee), and to estimate the expected rewards. More formally, we consider the following problem, called \emph{BIC $n$-sampling}: collect $n$ samples of each arm by a BIC bandit algorithm, in some number of rounds determined by the prior.%
\footnote{To appreciate why the number of rounds should be determined by the prior, consider a BIC algorithm that collects $n$ samples of each arm by some round $T_0$ that depends on the data. Suppose one runs this algorithm for $T_0$ rounds and then switches to Thompson Sampling. If $T_0$ depends on the data, then the combined algorithm is not necessarily BIC, as the timing of the switch could potentially leak information to the agents and alter their incentives.}
We are interested in minimizing this number of rounds; we call it the \emph{$n$-sample complexity}.
%It stands on its own as a performance measure (more on this below), and upper-bounds Bayesian regret.

%More formally, we focus on the \emph{$n$-sample complexity} of an algorithm: the smallest $T_0$, determined by the prior, such that the algorithm completes in $T_0$ rounds and collects $\geq n$ samples of each arm.

%\footnote{It is important that $T_0$ is not data-dependent \citep{ICexploration-ec15,ICexplorationGames-ec16}. Indeed, suppose one runs the algorithm and then switches to Thompson Sampling. If $T_0$ depends on the data, then the combined algorithm is not necessarily BIC, as the timing of the switch could potentially leak information to the agents and alter their incentives.}

We provide ``polynomially matching" upper and lower bounds on the optimal $n$-sample complexity. In particular, we obtain \emph{the first non-trivial lower bound} specific to incentivized exploration, for any variant thereof (as opposed to lower bounds on regret from multi-armed bandits). The matching upper bound, \ie an algorithm and its analysis, is the most technical part of the paper. We use these bounds to resolve exponential vs. polynomial dependence on the number of arms ($K$) and the strength of beliefs, as expressed by one over the smallest variance $\sigma^2 = \min_i \Var(\mP_i)$. The common case is that the dependence on $K$ is polynomial, and the dependence on $1/\sigma$ is exponential. These are also \emph{upper bounds} for Bayesian regret of BIC $n$-sampling, which are new compared to prior work.%
\footnote{The exponential dependence on $1/\sigma$ was previously known as an upper bound, but only for $2$ arms \citep{ICexploration-ec15}.}
Thus, we characterize the additive \PoI of Thompson Sampling: it is polynomial in $K$ and exponential in $1/\sigma$ in terms of the number of rounds, and at most that much in terms of Bayesian Regret.

We emphasize that the $n$-sample complexity is an important performance measure on its own. This is because the platform may have various objectives instead of (or in addition to) Bayesian regret, and the $n$-sample complexity is meaningful for most/all of them.
\textbf{(i)} The platform may be interested in ``frequentist" performance guarantees (ones that hold for each realization of the prior), \eg as in \citet{ICexploration-ec15}. In particular, $n$-sample complexity upper-bounds frequentist regret of BIC $n$-sampling, and may plausibly be a good proxy for it.
\textbf{(ii)} The platform may be interested in ``pure exploration": predicting the best arm after a given number of rounds. In particular, the optimal $1$-sample complexity lower-bounds the number of rounds needed for any non-trivial frequentist guarantee on the prediction quality.
\textbf{(iii)} The platform's utilities for the actions may be different from the agents', \eg the former may be more forgiving for negative outcomes, and/or incorporate platform's revenue. Also, the platform may treat the prior as (merely) a \emph{belief} shared by the agents, and optimize in expectation over a different belief. In fact, the platform may wish to optimize with respect to multiple versions of utilities and/or beliefs. Yet, the $n$-sample complexity upper-bounds regret of BIC $n$-sampling with respect to any of them.
\textbf{(iv)} For two arms, the optimal $1$-sample complexity is the smallest number of rounds that guarantees \emph{any} non-trivial exploration almost surely.
} %%%% \revedit

\xhdr{Sample complexity in more detail.}
We design a BIC bandit algorithm for collecting $n$ samples of each arm, called \MainALG. We prove that it runs for $\RndsUB(n)$ rounds for a given $n$, where $\RndsUB(n)$ is expressed in terms of the prior. We also provide a lower bound $\RndsLB$ on $1$-sample complexity: the number of rounds needed to choose each arm even once.%
\footnote{Our lower bound requires Bernoulli rewards. If arbitrary reward values are allowed, even a single random sample could reveal a huge amount of information to the algorithm. For example, its binary expansion could encode the mean reward.}
This lower bound is polynomially related to $\RndsUB(N_\TS)$ if each arm's prior has at least a constant variance: $\RndsUB(N_\TS) \leq \RndsLB^{O(1)}$; recall that $N_\TS$ is from Thompson Sampling. Thus, we characterize the optimal $n$-sample complexity for any $n\in [1,N_\TS]$, \ie both for warm-starting Thompson Sampling and for choosing each arm once. Moreover, $\RndsUB(N_\TS)$ upper-bounds optimal additive \PoI in terms of Bayesian regret.

We study how the optimal $n$-sample complexity, $n\in [1,N_\TS]$, depends on the number of arms $K$ and the smallest variance
    $\sigma^2 = \min_i \Var(\mP_i)$.
To isolate the dependence on $K$, we stipulate that the priors come from a fixed collection $\mC$, and study the worst-case dependence on $K$ over all such problem instances. We find a curious dichotomy: the dependence on $K$ is either always polynomial or can be exponential, depending on $\mC$.
An improved algorithm for the ``easy" case of this dichotomy achieves \emph{linear} dependence on $K$, which is  the best possible for a fixed $n$. Next, we focus on truncated Gaussian priors and Beta priors, two paradigmatic examples for Bayesian inference. We find that the optimal $n$-sample complexity is polynomial in $K$ and exponential in $\poly(1/\sigma)$. We conclude that the dependence on the priors cannot be avoided, and that \emph{strong beliefs}, as expressed by low-variance priors, is a key factor. %that drives $n$-sample complexity.

Finally, we zoom in on the important special case when one arm $j$ represents a well-known alternative and all other arms are new:
    $\Var(\mP_j) \ll \Var(\mP_i)$ for all arms $i\neq j$.
Focusing on Beta priors, we prove that the exponential dependence on
    $\poly\rbr{1/\Var(\mP_j)}$
can be excluded from $n$-sample complexity. We replace it with a similar dependence on the second-smallest variance, and only a polynomial dependence on $1/\Var(\mP_j)$. This result is particularly clean for $K=2$ arms, whereby the sample complexity is driven by the larger variance
    $\max\rbr{ \Var(\mP_1),\,\Var(\mP_2)}$.
For $K>2$ arms, this result is the best possible: we prove that \emph{two} arms with small variance cannot be excluded in a similar fashion.

%\xhdr{Additional results.}
%We revisit $N_\TS$, the sufficient number of initial samples for Thompson Sampling. It is an interesting measure of \PoI in its own right, as the initial samples  may be collected exogenously,
%%by some other process that does not require incentivized exploration,
%\eg they may be purchased at a fixed price per sample.
%%We express $N_{\TS}$, the sufficient number of samples, in terms of two natural parameters of the prior.
%We prove that $N_{\TS}$ is linear in $K$ under mild assumptions. Moreover, it can be as low as $O(\log K)$ for the natural example of Beta priors with bounded parameters. This is a huge improvement over \citep{ICexploration-ec15-conf,ICexploration-ec15}, where some (custom-designed) low-regret algorithms are proved to be BIC given some amount of initial data, but the necessary amount is not upper-bounded in terms of $K$ and can be (at least) exponential in $K$ for some examples. The $O(\log K)$ scaling is particularly appealing if each arm is contributed to the platform by a self-interested party, \eg it represents a restaurant that wishes to be advertised. Then each arm can be asked to pay an entry fee to subsidise the initial samples, and this fee only needs to scale as $\log K$.

\xhdr{Explorability characterization.}
An important aspect of the \PoI is whether all arms are \emph{explorable}: can be sampled at least once by a BIC bandit algorithm. It is easy to construct examples when this is not the case. For instance, if there are two arms and $\E[\mu_2]<\mu_1$ almost surely, where $\mu_i$ is the mean reward of arm $i$, then arm $2$ cannot be explored.%
\footnote{In this case, Bayesian regret is $\Omega(T)$ provided that $\Pr[\mu_2-\mu_1>0]>0$.}
To ensure that all arms are explorable, we posit that
    $\E[\mu_i]>\mu_j$
with positive probability for all arms $i\neq j$. This condition, called \emph{pairwise non-dominance}, suffices for our results. Moreover, we prove that this condition is necessary for exploring all arms, under mild non-degeneracy assumptions. In fact, an arm $i$ is explorable if and only if it satisfies this condition, and all our results can be restricted to explorable arms. Thus, we provide a full characterization for which arms are explorable. This result complements several partial results from \citep{ICexploration-ec15-conf,ICexplorationGames-ec16,ICexploration-ec15}.%
\footnote{Specifically, a full characterization for $K=2$ arms, some sufficient conditions for $K>2$, and an algorithm that explores all ``explorable" arms but does not yield any explicit conditions. Unlike ours, these results extend to correlated priors.}

\xhdr{Our techniques.}
Algorithm \MainALG extends and amplifies the ``hidden exploration" approach from \cite{ICexploration-ec15-conf,ICexploration-ec15}, whereby one hides low-probability exploration amidst high-probability exploitation. We prove that exploration has a compounding effect: exploration in the present gives the algorithm more leverage to explore in the future, which allows the exploration probability to increase \emph{exponentially} over time. A simple algorithm design which branches out into ``pure exploration" and ``pure exploitation" in each round is no longer sufficient to realize these improvements. We introduce a new branch which combines exploration and exploitation so as to guarantee a stronger BIC property for one particular arm that is being explored. The latter property allows the algorithm to offset additional exploration of this arm (and does so more efficiently than ``pure exploitation").
The policy for this new branch is defined indirectly, as a maximin solution of a certain zero-sum game. The three branches are interleaved in a somewhat intricate way, to achieve a BIC algorithm with the above-mentioned exponential growth.

Our analysis of Thompson Sampling relies on martingale techniques, the FKG inequality (a correlation inequality from statistical mechanics), and a Bayesian version of Chernoff Bounds which appears new. When all prior means are the same, our analysis zooms in on the covariance between the posterior mean reward of one arm and the event that another arm is posterior-best.

%so as to ensure that the total exploration probability increases over time at the optimal rate.
%we have two in addition to the ``pure exploitation" which maximizes the posterior mean reward, we use a ``biased exploitation" which is slanted in favor of a particular arm that we wish to explore.

%While Bernoulli rewards simplify technical exposition,\footnote{Bernoulli rewards can be relaxed to arbitrary reward distributions that take values in the $[0,1]$ interval and put at least a constant probability on both tails $[0,a]$ and $[b, 1]$, for some constants $0<a<b<1$. We do not pursue this extension in the present version.\mscomment{I think the right thing to say here is that for upper bounds it is fully general while for lower bounds, as stated later.}}

\xhdr{Further discussion.}
We focus on a fundamental model of incentivized exploration that combines standard economic assumptions and a basic model of multi-armed bandits. Conceptually, this is the simplest model in which one can study the \PoI. Reality can be more complex in a variety of ways, both on the economics side and on the machine learning side (see related work for examples). However, our lower bounds immediately apply to models in which incentivized exploration is more difficult.

The standard economic assumptions mentioned above include common priors, agents' rationality and platform's commitment power. They are shared by all prior work on incentivized exploration (with a notable exception of \citet{Jieming-unbiased18}). Likewise, we assume that rewards are observed by the algorithm. Eliciting informative signals from the agents (\eg via reviews on an online platform such as Yelp or Amazon) is an important problem that is beyond our scope.

While we do not optimize absolute constants in the performance guarantees, several aspects of our results have practical appeal. We justify the usage of Thompson Sampling in incentivized exploration, reduce the problem to collecting initial data, and calibrate expectations for how much data is needed. An informal take-away is that Thompson Sampling with a moderately-sized batch of initial data should be BIC. Our results on $n$-sample complexity feature exponential improvements in the dependence on the number of arms and (for the scenario with ``one known arm") on the strength of beliefs.

\xhdr{Open questions.}
%Let us flag several open questions prompted by our results.
\revedit{The most immediate questions concern the dependence on the strength of agents' beliefs, as expressed by the smallest variance $\sigma^2 = \min_i \Var(\mP_i)$.
The first question is about the warm-start size $N_\TS$ for Thompson Sampling. Can it be made polynomial in $1/\sigma$? While it scales exponentially in $\poly(1/\sigma)$ in our result, we do not have any lower bounds.
The second question is about Bayesian regret for collecting $1$ sample of each arm. Can it be made polynomial in $1/\sigma$? We upper-bound it by  $1$-sample complexity, which in turn is lower-bounded by $\exp(\poly(1/\sigma))$. However, Bayesian regret of \MainALG is unclear. Our lower bound on $1$-sample complexity does not appear to have any bearing on Bayesian regret, so even a \emph{constant} dependence on $\sigma$ is not ruled out.
} %%% \revedit

%Recall that the additive increase in Bayesian regret is upper-bounded by $\RndsUB$, which in typical scenarios is polynomial in $K$, but exponential in $\poly(1/\sigma)$, where $\sigma^2 = \min_i \Var(\mP_i)$ is the smallest variance. However, it may be possible to collect $N_\TS$ samples of each arm while incurring \emph{lower} Bayesian regret, perhaps via a different algorithm. Particularly, can Bayesian regret be made polynomial in $\sigma$?

Thompson Sampling as a technique applies far beyond the basic version of multi-armed bandits, and our results suggest it as a promising approach for more general models of incentivized exploration. One may hope to handle correlated priors and large-but-tractable bandit problems such as linear bandits. Likewise, one would like to extend our sample-complexity results to such problems.

%The dependence on the number of arms may now be intrinsically intertwined with the dependence on the prior, whereas the two are cleanly separated in this paper.

Going back to independent priors, it is interesting whether other ``natural" bandit algorithms can be proved BIC given enough initial data. Our proof techniques are heavily tailored to Thompson Sampling. However, proving that such a result is impossible for a particular algorithm appears quite challenging, too.

% TS -> correlated priors?
% initial sampling, PoIE results -> correlated priors?
% [immediate technical] poly(M) BR?

%On a technical level, it is essential that algorithm \MainALG stops at a pre-determined time, independent of the observed data. Indeed, suppose it stops and switches to Thompson Sampling in a data-dependent way, \eg once enough samples are collected. Then such switch could potentially leak information to the agents and alter their incentives. Stopping at a pre-determined time removes this issue.

%Our lower bound requires Bernoulli rewards. With arbitrary reward values, huge amounts of information may potentially be encoded in the precise binary expansion of a reward value, and therefore revealed to the algorithm. Bernoulli rewards is a convenient way to avoid this issue.

\xhdr{Map of the paper.}
First, we analyze BIC properties of Thompson Sampling (Section~\ref{sec:TS}). Then we present and analyze \MainALG, our algorithm for collecting initial samples (Section~\ref{sec:sampling}). Next, we investigate the sample complexity of incentivized exploration (Section~\ref{sec:sample}): we derive a general lower bound, and mine the upper and lower bounds for the corollaries discussed above. Section~\ref{sec:ext} contains improved guarantees via fine-tuned versions of \MainALG. Explorability characterization can be found in Section~\ref{sec:explorability}. Various details are deferred to the appendices.

\section{Related work}

Our model of incentivized exploration was introduced in \citet{Kremer-JPE14}, who obtain an optimal policy for the special case of two arms and deterministic rewards.%
\footnote{The study of incentivized exploration, broadly construed, was initiated in
\citet{Kremer-JPE14,Che-13}.}
\cite{ICexploration-ec15-conf,ICexploration-ec15} consider the general case of stochastic multi-armed bandits and design BIC bandit algorithms with near-optimal regret rates, up to multiplicative factors that depend on the prior and the number of arms.%
\footnote{In fact, \cite{ICexploration-ec15-conf,ICexploration-ec15} obtain several results of this form: both for Bayesian regret and standard (frequentist) notion of regret, and both in the worst case and for each problem instance.}
Further, they obtain a general reduction from bandit algorithms to incentive-compatible ones, and derive several extensions: to correlated priors, partially known priors, contextual bandits, and bandits with auxiliary feedback. They also suggest a connection to participation incentives in medical trials. \citet{ICexplorationGames-ec16} focus on exploring all arms than can possibly be explored, and allow for misaligned incentives when the algorithm's reward is different from the agents' utility. Several other extensions were considered, under various simplifying assumptions: to repeated games
\citep{ICexplorationGames-ec16}, heterogenous agents \citep{Jieming-multitypes18},
social networks \citep{Bahar-ec16,Bahar-ec19},
and relaxed economic assumptions
\citep{Jieming-unbiased18}.  Several related but technically different models have been studied: with time-discounted utilities \citep{Bimpikis-exploration-ms17}, monetary incentives \citep{Frazier-ec14,Kempe-colt18}, and continuous information flow \citep{Che-13}. A textbook-like introduction to this line of work can be found in
\citet[Ch. 11,][]{slivkins-MABbook}.

Incentivized exploration is closely related to two important recent topics in theoretical economics.
Bayesian Persuasion
\citep[\eg][]{BergemannMorris-survey19,Kamenica-survey19}
studies, essentially, a single round of our model, where the ``principal" uses  information asymmetry to persuade the agent to take particular actions. Social learning
\citep[\eg][]{Horner-survey16,Golub-survey16}
studies various scenarios in which multiple strategic agents interact and learn over time in a shared environment. 

Exploration-exploitation problems with incentives issues arise in several other domains, such as dynamic pricing, auction design, and human computation. These problems substantially differ from one another (and from incentivized exploration), depending on who are the self-interested agents and what they control. A review of this literature can be found in Ch. 11.7 of
\citet{slivkins-MABbook}.

%Exploration-exploitation problems with incentives issues arise in several other scenarios that substantially differ from one another, depending on who are the self-interested agents and what they control: dynamic pricing
%    \citep[\eg][]{KleinbergL03,BZ09,BwK-focs13},
%dynamic auctions
%    \citep[\eg][]{AtheySegal-econometrica13,DynPivot-econometrica10,Kakade-pivot-or13},
%pay-per-click ad auctions
%    \citep[\eg][]{MechMAB-ec09,DevanurK09,Transform-ec10-jacm},
%human computation
%    \citep[\eg][]{RepeatedPA-ec14,Ghosh-itcs13,Krause-www13},
%and competition between firms
%    \citep[\eg][]{bergemann2000experimentation,keller2003price,CompetingBandits-merged}.

Exploration-exploitation tradeoff and multi-armed bandits received a huge amount of attention over the past few decades. The diverse and  evolving body of research has been summarized in several books:
    \citet{CesaBL-book},
    \citet{Bubeck-survey12},
    \citet{Gittins-book11},
    \citet{slivkins-MABbook}, and
    \citet{LS19bandit-book}.
Stochastic $K$-armed bandits
    \citep{Lai-Robbins-85,bandits-ucb1}
is a canonical ``basic" version of the problem, by now it is very well understood.

Thompson Sampling \citep{Thompson-1933} is a well-known bandit algorithm with much recent progress, see  \citet{TS-survey-FTML18} for background. Most relevantly, it enjoys Bayesian regret bounds which are optimal in the worst case
\citep{Russo-MathOR-14,bubeck2013prior} and improve for some ``nice" priors \citep{Russo-MathOR-14}. Also, it attains optimal ``frequentist" regret bounds if initialized with some simple priors
\citep{Shipra-colt12,Kaufmann-alt12,Shipra-aistats13}.

\section{Preliminaries}
\label{sec:prelim}

\textbf{Problem formulation: incentivized exploration.}
There are $T$ rounds and $K$ actions, a.k.a. \emph{arms}. In each round $t\in [T]$, an algorithm (a.k.a. the \emph{planner}) interacts with a new agent according to the following protocol. The algorithm recommends an arm $A_t$, the agent observes the recommendation (and nothing else) and chooses an arm $A'_t$ (not necessarily the same). The agent collects reward $r_t\in[0,1]$ for the chosen action, which is observed by the algorithm, but not by the other agents. The reward of each arm $i$ is drawn independently from some fixed distribution with mean $\mu_i\in [0,1]$. The reward distributions are initially not known to anybody. If agents always follow recommendations, \ie if $A'_t = A_t$ in all rounds $t$, then the problem reduces to (\revedit{Bayesian}, stochastic) multi-armed bandits.

We posit Bernoulli rewards, \ie $r_t \in \{0,1\}$ for all rounds $t$. This assumption is without loss of generality for all algorithmic results (\ie all results except the lower bounds). Essentially, this is because one can replace a reward $r\in [0,1]$ by a randomized Bernoulli reward with the same expectation.
\footnote{\revedit{The same trick applies to rewards of larger magnitude after re-scaling them to lie in $[0,1]$.}}

Let us specify Bayesian priors and incentives. For each arm $i$, the mean reward $\mu_i$ is independently drawn from prior $\mP_i$. (The joint prior is therefore
    $\mP_1 \times \ldots \times \mP_K$.)
The priors are known to all agents and the algorithm.  We require the algorithm to be \emph{Bayesian-incentive compatible} (\emph{BIC}): following recommendations is in the agents' best interest. Formally, we condition on the event that recommendations have been followed in the past,
    $\mE_{t-1} = \{A_s = A'_s:\; s\in [t-1]\}$.
The BIC condition is as follows: for all rounds $t$,
\begin{align}\label{eq:BIC-defn}
\E\sbr{ \mu_i - \mu_j \mid A_t=i,\; \mE_{t-1} }\geq 0
\quad\text{all arms $i,j$ such that $\Pr[A_t=i]>0$}.
\end{align}
If an algorithm is BIC, we assume that the agents actually follow recommendations.

As a stepstone towards BIC, we use a more restricted condition: a fixed subset $S$ of rounds is called BIC if \eqref{eq:BIC-defn} holds for all rounds $t\in S$. In this definition, we still require that recommendations at rounds $t\not\in S$ are followed, as per event $\mE_{t-1}$, even if these rounds are not necessarily BIC.

One could consider a more general version of the problem, in which the algorithm can reveal an arbitrary message $\sigma_t$ in each round $t$ and does not need to be BIC; then each agent $t$ chooses an arm $i$ which maximizes
    $\E[\mu_i \mid \sigma_t]$.
However, it is easy to show that restricting to BIC algorithms that (only) recommend arms is w.l.o.g. \citep{Kremer-JPE14}, by a suitable version of the \emph{revelation principle}. We also remark that agents in realistic situations are likely to not know exactly which round they arrive in. However, our BIC condition easily extends if agents instead have beliefs over their arrival times.

We posit a condition called \emph{pairwise non-dominance}: for each arm $i$,
\begin{align}\label{eq:pwise-cond}
\Pr\sbr{ \mu_j< \E[\mu_i] }>0
\quad \text{for all arms $j\neq i$}.
\end{align}
This condition is w.l.o.g.: essentially, each arm $i$ is explorable if and only if it satisfies \eqref{eq:pwise-cond}, see Section~\ref{sec:explorability}.

%that this condition is necessary for all arms can be explored, and in fact provides a simple characterization for \emph{which} arms can be explored, under some mild non-degeneracy assumptions. We also address the scenario when  \eqref{eq:pwise-cond} fails to hold. Essentially, we show that all our  algorithms can be restricted to arms $i$ that satisfy \eqref{eq:pwise-cond} (see Theorem~\ref{thm:explorability}(b)), modulo the same non-degeneracy assumptions.

\xhdr{Recommendation policies.}
A \emph{recommendation policy} $\pi$ is a function that inputs a random signal $\mS$ and outputs an arm. More formally, let \emph{signal} $\mS$ be a random variable (taking values in some abstract set) that is jointly distributed with the mean rewards, in the sense that the tuple
     $(\mS; \mu_1 \LDOTS \mu_K)$
comes from some joint distribution. \revedit{A recommendation policy $\pi$ given signal $\mS$ is a mapping from $\text{support}(\mS)$ to arms.  In particular, one round of a bandit algorithm can be interpreted a recommendation policy, with signal $\mS$ being the algorithm's history up to this round. Likewise, if an algorithm invokes a recommendation policy, then (unless specified otherwise) the policy receives the algorithm's current history as a signal.

A natural version of the BIC property considers random variable $\pi(S)$ and posits that}
\begin{align}\label{eq:BIC-policy-defn}
\E\sbr{ \mu_i - \mu_j \mid \pi(\mS) =i }\geq 0
\quad\text{all arms $i,j$ such that $\Pr\sbr{ \pi(\mS)=i }>0$}.
\end{align}
If \eqref{eq:BIC-policy-defn} holds, we say that policy $\pi$ is \revedit{\emph{BIC given signal $\mS$}.}

% \ascomment{Mark's previous version is below -- I suggest to remove}

% \msedit{
% Formally, a recommendation policy $\pi$ is a function outputting an arm in $[K]$ from a signal $\mS$. In all cases we consider, the signal $\mS$ consists of the algorithm's history as well as an independent uniform random variable $\omega\in [0,1]$. The only role of $\omega$ is to enable randomization and it will not be emphasized. In particular, the $t$-th round of any bandit algorithm can be realized as a recommendation policy.

% In general, specifying $\pi$ does not determine the distribution of the associated signal $\mS$; $\pi$ is just an abstract function. However when $\mS$ is a specified random variable (such as the history of a bandit algorithm up to time $t$), we may identify the abstract policy $\pi$ with the $[K]$-valued random variable $\pi(\mS)$.
% In such a situation we say the policy $\pi$ is \textbf{BIC} if
% }

\vspace{-1mm}
\xhdr{Conventions.}
We index arms by $i,j,k\in [K]$. We refer to them as ``arm $i$" or ``arm $a_i$" interchangeably.

Let $\mu_i^0 = \E[\mu_i]$ be the prior mean reward of each arm $i$. W.l.o.g., we order arms by their prior mean rewards:
    $\mu_1^0 \geq \mu_2^0 \geq \ldots \geq \mu_K^0$.
Let
    $A^* = \min\rbr{\argmax_j \mu_j}$
be the best arm, with this specific tie-breaking.
%We say $A^*=j$ if $\mu_j$ is maximal, i.e. $a_j$ is the true optimal arm. In the case of an exact tie we choose the smallest value of $j$ as the value of $A^*$.
$\mF_t$ denotes the filtration generated by the chosen actions and the realized rewards up to (and not including) a given round $t$. We set
    $\E^t\sbr{\cdot} = \E\sbr{ \cdot \mid \mF_t}$
and
    $\Pr^t\sbr{\cdot} = \Pr\sbr{\cdot \mid \mF_t}$,
as a shorthand.
Sometimes we condition only on the first $N_j$ samples of each arm $j$, for some fixed $N_j$. Such a $\sigma$-algebra is called \emph{static} and denoted by $\mG_{(N_1,\dots,N_K)}$. In the special case when we condition on the first $N$ samples of arms $1 \LDOTS j$, the $\sigma$-algebra is denoted by $\mG_{N,j}$.

The set of all distributions over arms $1 \LDOTS k$ is denoted by $\Delta_k$.
If $q=(q_1,\dots,q_K)$ is a distribution over arms, the corresponding mean reward
is
    $\mu_q := \sum_i \mu_i q_i$.

\xhdr{Thompson Sampling.}
The core concept in Thompson Sampling is sampling from a Bayesian posterior. Given a random quantity $X$ determined by the mean rewards $(\mu_1 \LDOTS \mu_K)$, the Bayesian posterior at round $t$ is the conditional distribution of $X$ given $\mF_t$. A \emph{posterior sample} of $X$ at round $t$ (equivalently: given $\mF_t$) is an independent random draw from this distribution. Thompson Sampling is a very simple bandit algorithm: in each round $t$, the chosen arm $A_t$ is a posterior sample for the best arm $A^*$. In particular,
\begin{align}\label{eq:TS-def}
    \Pr\sbr{ A_t = i \mid \mF_t } = \Pr\sbr{ A^*=i \mid \mF_t }
    \quad \text{for each arm $i$}.
\end{align}
The algorithm is computationally efficient in various special cases, \eg for independent Beta priors and Bernoulli rewards, and for independent Gaussian priors (truncated or not) and Gaussian rewards.

%For independent priors, this is equivalent to the following. Draw a posterior sample $\nu_i$ for mean reward $\mu_i$, for each arm $i$. Then choose an arm $i$ with the largest $\nu_i$, breaking ties in favor of smaller $i$. \mscomment{Why is this paragraph here? Is the point just to emphasize that the posterior is still independent?}

\xhdr{Tools.}
We make use of FKG inequality, a correlation inequality which says that increasing functions of independent random variables are non-negatively correlated. We state it in
Appendix~\ref{app:mon-ineq}.

We also use a Bayesian concentration inequality. For a given arm $i$, it relates the posterior mean reward and an independent draw from the posterior distribution on $\mu_i$. We prove that both quantities are within $1/\sqrt{\text{\#samples}}$ of $\mu_i$. While reminiscent of Chernoff Bounds, which compare $\mu_i$ to the sample average, this result appears new. We state it below, and prove it in Appendix~\ref{app:Bayesian-concentration}.

\begin{restatable}[Bayesian Chernoff Bound]{lemma}{tschernoff}
\label{lem:TSchernoff}
%Let $\mu_i$ be the true mean reward of a bandit arm, generated by a known prior and
Fix round $t$ and parameters $\eps,r>0$. Suppose $\mF_t$  almost surely contains at least $\eps^{-2}$ samples of a given arm $i$. Let $\hat\mu_i$ be a posterior sample for the mean reward $\mu_i$. Then for some universal absolute constant $C$ we have:
\begin{align}
\label{eq:BChernoff}
\Pr\sbr{ |\hat\mu_i-\mu_i|\geq r\eps }
&\leq
C\,e^{-r^2/C};
\\
\label{eq:BChernoff-2}
\Pr\sbr{ |\E[\mu_i|\mF_t]-\mu_i|\geq r\eps}
&\leq
C\,e^{-r^2/C}.
\end{align}
More generally, let $q=(q_1,\dots,q_K)$ be a distribution over arms. Let
    $\mu_q = \sum_i \mu_i\, q_i$ and $\hat\mu_q = \sum_i \hat\mu_i\, q_i$
be the corresponding mean reward and posterior sample.
Suppose $\mF_t$ almost surely contains at least $\eps^{-2}$ samples of each arm $i$ with $q_i\neq 0$.
Then \eqref{eq:BChernoff} holds with $i$ replaced with $q$.
\end{restatable}

\section{Incentivized Exploration via Thompson Sampling}
\label{sec:TS}

We prove that Thompson Sampling is BIC if initialized with enough samples of each arm.

\begin{theorem}\label{thm:TSsamples}
Let \ALG be a BIC bandit algorithm such that by some fixed time $T_0$ it almost surely collects at least
    $N_{\TS} = C_{\TS}\,\eps_{\TS}^{-2}\;\log \delta_{\TS}^{-1}$
samples from each arm, for a large enough absolute constant $C_{\TS}$, where
\begin{align}\label{eq:thm:TSsamples}
    \eps_{\TS} = \min_{i,j\in[K]}\; \E\sbr{ (\mu_i-\mu_j)_+}
    \quad\text{and}\quad
    \delta_{\TS} = \min_{i\in[K]}\;\Pr\sbr{A^*=i}.
\end{align}
Then running \ALG for $T_0$ rounds followed by Thompson sampling is BIC.
 \end{theorem}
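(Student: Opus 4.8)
The plan is to establish the BIC condition \eqref{eq:BIC-defn} round by round. For rounds $t\le T_0$ it holds because \ALG\ is BIC, so the whole task reduces to verifying it for the \TS\ rounds $t>T_0$. Fix such a $t$ and condition on $\mE_{t-1}$, so that the realized history coincides with the recommendations and $\mF_t$ almost surely contains at least $N_{\TS}$ samples of every arm. The defining property \eqref{eq:TS-def} of \TS\ gives $\Pr\sbr{A_t=i\mid\mF_t}=\Pr\sbr{A^*=i\mid\mF_t}=:p_i$, and the posterior sample is drawn independently of $(\mu_1\LDOTS\mu_K)$ given $\mF_t$. Writing $g_k=\E\sbr{\mu_k\mid\mF_t}$ and $\Delta_{ij}=g_i-g_j$, the tower rule collapses the (conditional) BIC condition to showing
\[
\E\sbr{(\mu_i-\mu_j)\,\mathbb{1}[A_t=i]}=\E\sbr{\Delta_{ij}\,p_i}\ge 0
\]
for every ordered pair $(i,j)$, the outer expectation being over the posterior at round $t$.

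The next step is to exploit that under independent priors the posteriors factorize across arms. Let $X_k$ denote arm $k$'s samples; the vectors $X_1\LDOTS X_K$ are independent, $g_i$ is a nondecreasing function of $X_i$ alone, and $p_i$ is nondecreasing in $X_i$ and nonincreasing in each $X_k$, $k\ne i$. Conditioning on $X_{-i}$ and applying the association (FKG) inequality in the single variable $X_i$ yields $\E\sbr{g_i\,p_i}\ge \mu_i^0\,\Pr\sbr{A^*=i}$, and symmetrically $\E\sbr{g_j\,p_i}\le \mu_j^0\,\Pr\sbr{A^*=i}$, so that
\[
\E\sbr{\Delta_{ij}\,p_i}\ \ge\ (\mu_i^0-\mu_j^0)\,\Pr\sbr{A^*=i}.
\]
This already settles every pair for which the recommended arm has the larger prior mean, $\mu_i^0\ge\mu_j^0$.

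The remaining, genuinely hard, pairs are those with $\mu_i^0<\mu_j^0$: recommending $i$ over $j$ contradicts prior beliefs and must be justified by the data. Here I would abandon the (now negative) FKG bound and instead split $\E\sbr{(\mu_i-\mu_j)\mathbb{1}[A_t=i]}$ into its correctly ordered part ($\mu_i\ge\mu_j$), which is nonnegative, minus the reversed part ($\mu_j>\mu_i$). On the reversed event, $A_t=i$ forces $\hat\mu_i\ge\hat\mu_j$ even though $\mu_j>\mu_i$; by the Bayesian Chernoff Bound (Lemma~\ref{lem:TSchernoff}) both the posterior sample and the posterior mean track the true means within $O(1/\sqrt{N_{\TS}})$, so reversing a true gap of size $\Delta$ costs probability $e^{-\Omega(\Delta^2 N_{\TS})}$. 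The choice $N_{\TS}=C_{\TS}\,\eps_{\TS}^{-2}\log\delta_{\TS}^{-1}$ is calibrated precisely for this trade-off: the factor $\eps_{\TS}^{-2}$ resolves the posterior at the scale of the smallest typical gap $\eps_{\TS}=\min_{i,j}\E\sbr{(\mu_i-\mu_j)_+}$, while $\log\delta_{\TS}^{-1}$ pushes the total reversal mass below $\delta_{\TS}\le\Pr\sbr{A^*=i}$, so that the nonnegative correctly ordered contribution dominates.

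I expect this last step to be the crux. Unlike the FKG direction, it cannot be obtained from monotonicity alone: one must quantitatively balance the data-driven signal of strength $\eps_{\TS}$ against the concentration-failure budget $\delta_{\TS}$, which is exactly what the $C_{\TS}\,\eps_{\TS}^{-2}\log\delta_{\TS}^{-1}$ sample count buys (with $C_{\TS}$ absorbing the absolute constant of Lemma~\ref{lem:TSchernoff}). A secondary technical point, which I would handle by a short monotonicity-of-information argument, is the reduction from a static $\sigma$-algebra carrying exactly $N_{\TS}$ samples per arm to the actual adaptive filtration $\mF_t$ conditioned on $\mE_{t-1}$: extra data only sharpens the posterior and hence preserves the inequality.
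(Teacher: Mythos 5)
Your proposal assembles the right ingredients (the identity $\Pr^t\sbr{A_t=i}=\Pr^t\sbr{A^*=i}$, FKG, Lemma~\ref{lem:TSchernoff}), but its core step --- the ``hard'' pairs with $\mu_i^0<\mu_j^0$ --- has a genuine gap. You bound the reversed contribution $\E\sbr{(\mu_j-\mu_i)_+\ind{A_t=i}}$ from above, but you never bound the correctly ordered contribution $\E\sbr{(\mu_i-\mu_j)_+\ind{A_t=i}}$ from \emph{below}: a merely nonnegative term cannot ``dominate'' a strictly negative one, and comparing the reversal mass to $\Pr\sbr{A^*=i}$ compares the wrong pair of quantities. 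Supplying that positive margin is precisely the crux, and with the algorithmic event $\{A_t=i\}$ inside the expectation it is not accessible by monotonicity. The paper's proof gets it by an exact manipulation you do not perform: via \eqref{eq:TS-bayes-2} and the tower rule, the quantity to control becomes $\E\sbr{\E^{T_0}[\mu_i-\mu_j]\cdot\ind{A^*=i}}$, where the indicator is now of the \emph{prior} event $\{A^*=i\}$, a monotone function of the independent true means. FKG over the prior then gives $\E\sbr{(\mu_i-\mu_j)\ind{A^*=i}}=\E\sbr{(\mu_i-\mu_j)_+\ind{A^*=i}}\geq\eps_{\TS}\,\delta_i$ with $\delta_i=\Pr\sbr{A^*=i}$, uniformly over all pairs with no case split on prior means; the replacement error is then controlled by Lemma~\ref{lem:TSchernoff} \emph{combined with} Lemma~\ref{lem:subgaussiantail}, whose integration of sub-Gaussian tails over the probability-$\delta_i$ event yields $O\rbr{\delta_i\sqrt{\log(1/\delta_i)/N_{\TS}}}\leq \eps_{\TS}\,\delta_i$. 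That event-integrated factor is essential: your pointwise calibration $e^{-\Omega(\Delta^2 N_{\TS})}$ is vacuous for gaps $\Delta\ll\eps_{\TS}$, where the prior may concentrate its reversal mass, so ``total reversal mass below $\delta_{\TS}$'' does not follow from your choice of $N_{\TS}$; pushing your route through would force $N_{\TS}$ to grow polynomially in $\delta_{\TS}^{-1}$ (and would also need a gap-conditional version of Lemma~\ref{lem:TSchernoff} that the lemma does not provide).

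Two further steps are unsound as stated. First, your FKG step conditions on $X_{-i}$ and treats the sample vectors $X_1\LDOTS X_K$ as independent; under adaptively collected data (by \ALG before $T_0$, and by Thompson Sampling itself at later rounds) they are not, even though the posterior on $(\mu_1\LDOTS\mu_K)$ factorizes. The paper's FKG applications deliberately avoid this: they are either over the independent prior, or conditional on $\mF_t$ applied to a single round's Bernoulli reward. Second, the extension from the first TS round to all later rounds cannot be dismissed as ``extra data only sharpens the posterior and hence preserves the inequality'' --- refining a filtration does not preserve such inequalities in general. This is exactly where the paper invests real work: Lemma~\ref{lem:submartingale} shows that $\E^t[\mu_i-\mu_j]\cdot\Pr^t[A^*=i]$ is a submartingale (via FKG on the one-step update plus MLR/stochastic-dominance arguments, crucially using Bernoulli rewards, bandit feedback, and independent priors), and Lemma~\ref{thm:TSBIC} propagates the BIC property forward by induction. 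Incidentally, that martingale argument is also the correct route to your ``easy-case'' inequality $\E\sbr{\Delta_{ij}\,p_i}\geq(\mu_i^0-\mu_j^0)\Pr\sbr{A^*=i}$, which your conditional-FKG-over-data derivation does not validly establish.
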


\begin{remark}\label{rem:switch-time}
It is essential for the BIC property that the ``switching time" $T_0$ in Theorem~\ref{thm:TSsamples} is fixed in advance. In particular, switching to Thompson sampling as soon as \ALG collects enough samples could leak information and destroy the BIC property. For example, suppose \ALG has the following property: if arm $2$ is good, then \ALG w.h.p. spends a long time exploring this arm, and then does not play arm $2$ during some fixed time interval. Then if arm $2$ is recommended by Thompson sampling during the latter time interval, the agent will recognize that arm $2$ must be bad, and refuse to play it.
\end{remark}

\begin{remark}\label{rem:TS-strict}
\revedit{The BIC property in Theorem~\ref{thm:TSsamples} can be made to hold with a quantifiable margin: specifically, we can ensure that the right-hand side in \refeq{eq:BIC-defn} is $\eps_{\TS}/2$.}
\end{remark}

\begin{remark}\label{rem:TS-exogenous}
\revedit{Our analysis of Thompson Sampling is oblivious to where the warm-up data is coming from. In particular, the data can be collected by a non-BIC algorithm, and agents' participation may be secured via other means, \eg monetary payments. However, one needs to ensure that Bayesian update on the warm-up data does not depend on the algorithm used to collect it. One could achieve this by reporting the full history of the data-collection algorithm, or, \eg, only reporting the first $N_{\TS}$ samples of each arm.}
\end{remark}

Let us investigate how $N_\TS$, the sample count from Theorem~\ref{thm:TSsamples}, depends on $K$, the number of arms. We find that $N_{\TS}= O_\mC(K)$ if all priors belong to a finite family $\mC$, and $N_{\TS}= O_\mC(\log K)$ if $\mC$ consists of all Beta priors of bounded strength of beliefs.

\begin{corollary}\label{cor:TSsamples}
Suppose all priors $\mP_i$, $i\in [K]$ come from some fixed, finite collection $\mC$ of priors which satisfy the pairwise non-dominance condition \eqref{eq:pwise-cond}. Then $N_{\TS}= O_\mC(K)$.
\end{corollary}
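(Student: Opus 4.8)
The plan is to bound the two factors of $N_{\TS} = C_{\TS}\,\eps_{\TS}^{-2}\,\log\delta_{\TS}^{-1}$ separately, showing $\eps_{\TS}^{-2} = O_\mC(1)$ and $\log\delta_{\TS}^{-1} = O_\mC(K)$; multiplying then gives the claim. The whole argument rests on a single observation about the finiteness of $\mC$: any quantity that is (a) a function of at most two priors drawn from $\mC$ and (b) provably strictly positive is automatically bounded below by a constant depending only on $\mC$, being a minimum over a finite set of positive numbers.

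First I would handle $\eps_{\TS} = \min_{i\neq j}\E[(\mu_i-\mu_j)_+]$ and argue each term is strictly positive. Since $\mu_i^0 = \E[\mu_i]$ is the mean of $\mu_i$, one always has $\Pr[\mu_i\ge\mu_i^0]>0$, and pairwise non-dominance \eqref{eq:pwise-cond} gives $\Pr[\mu_j<\mu_i^0]>0$. By independence of $\mu_i,\mu_j$,
\[
\Pr[\mu_i>\mu_j]\ \ge\ \Pr[\mu_i\ge\mu_i^0]\cdot\Pr[\mu_j<\mu_i^0]\ >\ 0,
\]
hence $\E[(\mu_i-\mu_j)_+]>0$. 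As this expectation depends only on the ordered pair of priors $(\mP_i,\mP_j)\in\mC\times\mC$, of which there are at most $|\mC|^2$, taking the minimum of the finitely many positive values yields a constant $\eps_0=\eps_0(\mC)>0$ with $\eps_{\TS}\ge\eps_0$, so $\eps_{\TS}^{-2}\le\eps_0^{-2}=O_\mC(1)$.

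The main work is $\log\delta_{\TS}^{-1}$, and this is where the \emph{linear} dependence on $K$ enters. Fixing an arm $i$, I would use the threshold $\theta=\mu_i^0$: whenever $\mu_i\ge\mu_i^0$ and $\mu_j<\mu_i^0$ for every $j\neq i$, arm $i$ is the strict maximizer and so $A^*=i$ regardless of tie-breaking. By independence across arms,
\[
\Pr[A^*=i]\ \ge\ \Pr[\mu_i\ge\mu_i^0]\prod_{j\neq i}\Pr[\mu_j<\mu_i^0].
\]
Every factor is strictly positive (by the elementary fact above and by \eqref{eq:pwise-cond}) and each is a function of at most two priors from $\mC$, hence bounded below by a constant $p(\mC)>0$. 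Thus $\Pr[A^*=i]\ge p(\mC)^K$, giving $\delta_{\TS}\ge p(\mC)^K$ and $\log\delta_{\TS}^{-1}\le K\log p(\mC)^{-1}=O_\mC(K)$. Combining, $N_{\TS}\le C_{\TS}\,\eps_0^{-2}\,K\log p(\mC)^{-1}=O_\mC(K)$.

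The one point requiring care — the ``main obstacle'' — is ensuring the constants $\eps_0$ and $p(\mC)$ are genuinely uniform over \emph{all} instances whose priors lie in $\mC$, rather than degrading as $K$ grows or as the prior assignment varies. This is exactly what finiteness of $\mC$ buys: it turns each pointwise ``$>0$'' statement (supplied by pairwise non-dominance and by $\Pr[\mu_i\ge\E\mu_i]>0$) into a minimum over a finite set of positive reals. Degenerate (point-mass) priors are handled automatically, since two arms sharing such a prior would violate \eqref{eq:pwise-cond}, so the offending pair (whose $\E[(\mu_i-\mu_j)_+]$ or $\Pr[\mu_j<\mu_i^0]$ could vanish) never arises in a valid instance and is simply excluded from the relevant minima.
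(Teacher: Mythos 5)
Your proposal is correct, but it takes a genuinely different route at the key step. The paper lower-bounds $\Pr[A^*=i]$ by $\prod_{j}\Pr[\mu_i\geq \mu_j]$; since the events $\{\mu_i\geq\mu_j\}$, $j\in[K]$, are all correlated through $\mu_i$, independence alone does not give this product bound, and the paper invokes the mixed-monotonicity FKG inequality (repeatedly) to decouple them, then bounds each factor by $\Pr[\mu_i\geq\mu_j]\geq \E[(\mu_i-\mu_j)_+]\geq \eps_{\TS}$. You sidestep the correlation issue entirely with the fixed threshold $\mu_i^0$: the events $\{\mu_i\geq\mu_i^0\}$ and $\{\mu_j<\mu_i^0\}$, $j\neq i$, concern distinct independent coordinates, so their probabilities multiply exactly, and positivity of each factor follows from \eqref{eq:pwise-cond} together with the elementary fact $\Pr[\mu_i\geq\E[\mu_i]]>0$. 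Both arguments are sound (your version even handles the tie-breaking in $A^*=\min(\argmax_j\mu_j)$ more transparently, since your event forces a strict unique maximizer), and both reduce to finiteness of $\mC$ for uniform constants. What the paper's FKG route buys is the cleaner, collection-free inequality $\delta_{\TS}\geq \eps_{\TS}^K$, i.e.\ $N_{\TS}=O(K\,\eps_{\TS}^{-2}\log\eps_{\TS}^{-1})$, which expresses everything through the single parameter $\eps_{\TS}$ and is reused elsewhere in the paper (e.g.\ in the proof of Theorem~\ref{thm:tscomplexity}); your bound instead involves the separate threshold constant $p(\mC)$, which is perfectly adequate for the corollary as stated but does not yield that reusable relation. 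What your route buys is elementarity: no correlation inequality is needed, only independence across arms.
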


\begin{proof}
In fact we have $N_{\TS} = O(K\;\eps_{\TS}^{-2}\;\log \eps_\TS^{-1})$.
Indeed for each arm $i$,
    \[\Pr[A^*=i]\geq \prod_j\Pr[\mu_i\geq \mu_j]\geq \eps_{\TS}^K.\]
The latter inequality holds simply because $\Pr[\mu_i\geq \mu_j]\geq \mathbb E[(\mu_i-\mu_j)_+]$ for all $i,j$. Let $\eps_{\mC}$ be the version of $\eps_\TS$ where the $\min$ is over all ordered pairs of (not necessary distinct) priors in $\mC$. Note that $\eps_\TS\geq \eps_\mC$. Since $\mC$ is finite and satisfies pairwise non-dominance, $\eps_{\mC}$ is strictly positive. It remains to show the first part of the inequality above.

We proceed via the FKG inequality. Define the indicator functions $I_j=1_{\mu_i\geq \mu_j}$ for each $j\in [K]$; we interpret them as functions of $(\mu_1 \LDOTS \mu_K)$. The functions $I_j$ are each increasing in $\mu_i$ and decreasing in $\mu_{\ell}$ for all ${\ell}\neq i$ \revedit{(including $\ell=j$ when $j\neq i$)}. As the values $\{\mu_{\ell}\}_{{\ell}\in [K]}$ are independent by assumption, the mixed-monotonicity FKG inequality (see Remark~\ref{rem:fkgmixed}) implies that the indicator functions $I_j$ are non-negatively correlated. In fact, each product $\prod_{j=1}^{\ell}I_j$ satisfies the same monotonicity properties, so repeated application of mixed-monotonicity FKG implies
\begin{align*}
\Pr[A^*=i]
    &= \textstyle \E\left[\prod_{j=1}^K I_j\right]\\
    &\geq \textstyle \E\left[\prod_{j=1}^{K-1}I_j\right]\cdot \E[I_K]
    \geq \E\left[\prod_{j=1}^{K-2}I_j\right]\cdot \E[I_{K-1}]\cdot\E[I_K]
    \geq \dots\\
    &\geq \prod_{j=1}^K \E[I_j]
    =\prod_j \Pr[\mu_i\geq \mu_j].\qedhere
\end{align*}
\end{proof}

\begin{corollary}\label{cor:tslogk}
Suppose each prior $\mP_i$, $i\in [K]$ is a $\term{Beta}(\alpha_i,\beta_i)$ distribution with parameters $\alpha_i,\beta_i\in [1,M]$, for some fixed $M$. Then
    $N_{\TS}= O_M(\log K)$.
\end{corollary}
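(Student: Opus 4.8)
The plan is to bound the two factors of $N_{\TS} = C_{\TS}\,\eps_{\TS}^{-2}\,\log\delta_{\TS}^{-1}$ separately and show that $\eps_{\TS}^{-2}=O_M(1)$ while $\log\delta_{\TS}^{-1}=O_M(\log K)$. The first factor is easy. For independent $\mu\sim\term{Beta}(\alpha,\beta)$ and $\mu'\sim\term{Beta}(\alpha',\beta')$ with all four parameters in $[1,M]$, the map $(\alpha,\beta,\alpha',\beta')\mapsto\E[(\mu-\mu')_+]$ is continuous on the compact cube $[1,M]^4$ (the Beta densities are jointly continuous and uniformly bounded there, so dominated convergence applies), and it is strictly positive because both priors have full support on $(0,1)$, forcing $\Pr[\mu>\mu']>0$. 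Hence it attains a positive minimum $\eps_M>0$ depending only on $M$; in the notation of the proof of Corollary~\ref{cor:TSsamples} this is exactly $\eps_{\mC}$, so $\eps_{\TS}\ge\eps_M$ and $\eps_{\TS}^{-2}=O_M(1)$.

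The crux is lower-bounding $\delta_{\TS}=\min_i\Pr[A^*=i]$; here the FKG estimate $\Pr[A^*=i]\ge\prod_j\Pr[\mu_i\ge\mu_j]\ge\eps_{\TS}^{K}$ from the proof of Corollary~\ref{cor:TSsamples} is far too lossy, as it yields only $\log\delta_{\TS}^{-1}=O_M(K)$. Instead I would exploit the polynomial tail of bounded Beta priors near the right endpoint. Writing $\Pr[A^*=i]=\E_{\mu_i}\big[\prod_{j\neq i}F_j(\mu_i)\big]$ with $F_j$ the CDF of $\mu_j$, note that $\beta_j\ge1$ gives a uniform upper-tail bound $1-F_j(x)\le C_M\,(1-x)$ for a constant $C_M$. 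Conditioning on the event $\{\mu_i\ge 1-\tfrac{1}{2C_M K}\}$ then makes every factor satisfy $F_j(\mu_i)\ge 1-\tfrac{1}{2K}$, so the product of at most $K-1$ factors is at least a universal constant. This reduces the problem to lower-bounding $\Pr[\mu_i\ge 1-\tfrac{1}{2C_M K}]$, and since $\beta_i\le M$ the near-$1$ mass of a Beta density behaves like $\int_0^\eps s^{\beta_i-1}\,ds$, giving $\Pr[\mu_i> 1-\eps]\ge c_M\,\eps^{M}$ for small $\eps$. Taking $\eps=\tfrac{1}{2C_M K}$ yields $\Pr[A^*=i]\ge c'_M\,K^{-M}$ uniformly in $i$, hence $\delta_{\TS}\ge c'_M\,K^{-M}$ and $\log\delta_{\TS}^{-1}\le M\log K+O_M(1)$.

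Combining the two bounds gives $N_{\TS}=C_{\TS}\cdot O_M(1)\cdot O_M(\log K)=O_M(\log K)$, as claimed. The main obstacle is the middle step: escaping the exponential $\eps_{\TS}^{K}$ barrier of the generic FKG argument and replacing it with the polynomial-in-$K$ estimate $\Pr[A^*=i]\ge K^{-O_M(1)}$. The right intuition is that the maximum of $K$ bounded Beta variables concentrates within $O(1/K)$ of $1$, and on that scale a single fixed arm still has probability $\gtrsim K^{-M}$ of being the largest; turning this into clean constants requires only the elementary two-sided tail estimates above, which hold uniformly precisely because the parameters lie in the compact range $[1,M]$ (small $K$ is absorbed into the $O_M$ constant).
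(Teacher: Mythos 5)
Your proposal is correct and takes essentially the same approach as the paper: both proofs escape the exponential $\eps_{\TS}^{K}$ barrier via a two-sided tail estimate at scale $\Theta(1/K)$ near $1$ --- each arm exceeds a threshold $1-\Theta(1/(MK))$ with probability $K^{-O(M)}$ while every other arm falls below it with probability $1-O(1/K)$, so the product over the $K-1$ remaining arms is $\Omega(1)$, giving $\delta_{\TS}\geq K^{-O(M)}$ (the paper packages this via quantiles, you via CDFs, a cosmetic difference). Your compactness argument for $\eps_{\TS}^{-2}=O_M(1)$ is a valid, if non-quantitative, variant of the paper's explicit bound $\eps_{\TS}=\Omega(9^{-M})$.
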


\begin{proof}
Note that
    $\eps_{\TS}=\Omega(9^{-M})$.
This immediately follows from definition of $\eps_{\TS}$ in \refeq{eq:thm:TSsamples}, because
    $\Pr[\mu_i>\nicefrac{2}{3}] \geq 3^{-M}$
and
    $\Pr[\mu_j<\nicefrac{1}{3}] \geq 3^{-M}$
for all arms $i,j$.

To handle $\delta_\TS$, let $Q_p[\mP_i]$ be the top $p$-th quantile of distribution $\mP_i$. Suppose for some $\eta>0$
\begin{align}\label{eq:rem:TSsamples}
  Q_{\eta}[\mP_i] \geq Q_{1/K}[\mP_j] \quad\text{for all arms $i,j$.}
\end{align}
Then
$\delta_{\TS}
    \geq \Pr[A^*=a_i]
    \geq \eta\, \rbr{1-\nicefrac{1}{K}}^{K-1}
    \geq \Omega(\eta)$
for all arms $i$, so $N_{\TS} = O_M(\log \nicefrac{1}{\eta})$.

To complete the proof, we claim that \eqref{eq:rem:TSsamples} holds with
    $\eta = (MK)^{-M}$.
This is because for each arm $i$ we have
    $(MK)^{-M} \leq \Pr\sbr{ \mu_i>1-\tfrac{1}{MK}}\leq \tfrac{1}{K}$.
To verify the last statement, it suffices to focus on the extremal cases $\term{Beta}(1,M)$ and $\term{Beta}(M,1)$.
\end{proof}

Moreover, we prove that Thompson sampling is BIC \emph{as is} if all prior mean rewards are the same.
%,  and then build on these techniques to prove Theorem~\ref{thm:TS-BIC-simple}.

\begin{theorem}\label{thm:TS-BIC-simple}
If $\mu_1^0=\mu_2^0=\dots=\mu_K^0$ then Thompson sampling is BIC.
\end{theorem}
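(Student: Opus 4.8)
The plan is to verify the per-round BIC condition \eqref{eq:BIC-defn} for every round $t$, reduce it to a single scalar inequality, and then prove that inequality by a martingale-telescoping argument that isolates the effect of each individual reward observation.

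First I would set up the reduction. Fix a round $t$ and arms $i,j$ with $\Pr[A_t=i]>0$, and write $p^s_i:=\Pr^s[A^*=i]$. Since the recommended arm $A_t$ is a posterior sample for $A^*$, identity \eqref{eq:TS-def} shows that, conditionally on $\mathcal F_t$, the arm $A_t$ is independent of the mean-reward vector and satisfies $\Pr^t[A_t=i]=p^t_i$. Conditioning on $\mathcal F_t$ and using this independence gives $\E[(\mu_i-\mu_j)\,\mathbf 1[A_t=i]] = \E[(\E^t[\mu_i]-\E^t[\mu_j])\,p^t_i]$, while $\Pr[A_t=i]=\E[p^t_i]$. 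Hence \eqref{eq:BIC-defn} at round $t$ is equivalent to $\E[(\E^t[\mu_i]-\E^t[\mu_j])\,p^t_i]\ge 0$.

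Next I would recognize the two factors as values of Doob martingales. Put $X=\mu_i-\mu_j$ and $Y=\mathbf 1[A^*=i]$, with associated martingales $M^X_s=\E^s[\mu_i]-\E^s[\mu_j]$ and $M^Y_s=p^s_i$. Orthogonality of martingale increments yields $\E[M^X_tM^Y_t]=M^X_0M^Y_0+\sum_{s=0}^{t-1}\E[\Delta^X_s\Delta^Y_s]$, where $\Delta^{\bullet}_s=M^{\bullet}_{s+1}-M^{\bullet}_s$. The hypothesis $\mu_1^0=\dots=\mu_K^0$ forces $M^X_0=\mu_i^0-\mu_j^0=0$, so the boundary term vanishes and it suffices to prove $\E[\Delta^X_s\Delta^Y_s]\ge 0$ for each $s$. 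I would then sign each increment-product by conditioning on $\mathcal F_s$ and the pulled arm $A_s$: the only new randomness is the Bernoulli reward $r_s$, and it updates the posterior of arm $A_s$ alone. If $A_s\notin\{i,j\}$ then neither $\E^{\cdot}[\mu_i]$ nor $\E^{\cdot}[\mu_j]$ moves, so $\Delta^X_s=0$. If $A_s\in\{i,j\}$, let $\delta^X_{\pm},\delta^Y_{\pm}$ denote the increments in the two outcomes $r_s\in\{1,0\}$; the martingale property makes $\delta^X_+,\delta^X_-$ (and $\delta^Y_+,\delta^Y_-$) opposite in sign, and the monotonicity below aligns $\operatorname{sign}\delta^X_\pm$ with $\operatorname{sign}\delta^Y_\pm$, so $\delta^X_+\delta^Y_+\ge0$ and $\delta^X_-\delta^Y_-\ge0$, whence $\E[\Delta^X_s\Delta^Y_s\mid\mathcal F_s,A_s]\ge0$. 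Summing over $s$ then finishes the proof.

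The main obstacle is exactly the monotonicity invoked in the last step: that conditioning arm $A_s$'s posterior on an observed success stochastically increases $\Pr[A^*=A_s]$ and decreases $\Pr[A^*=k]$ for every other $k$, while simultaneously raising the posterior mean of arm $A_s$ (so that the effects on $X=\mu_i-\mu_j$ and on $Y$ move in tandem in both cases $A_s=i$ and $A_s=j$). I would establish it from the independence of the arms' posteriors given $\mathcal F_s$ together with the monotone-likelihood-ratio shift produced by a Bernoulli update (a success reweights the posterior by $\mu$, a failure by $1-\mu$), which is a small FKG/stochastic-dominance step. The appeal of this route is that it localizes everything to a single extra sample and thereby sidesteps the adaptivity of Thompson Sampling's sample counts, which is what would make a direct global FKG argument over the whole history awkward.
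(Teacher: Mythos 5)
Your proposal is correct and is essentially the paper's own argument: the reduction to $\E\bigl[\E^t[\mu_i-\mu_j]\cdot\Pr^t[A^*=i]\bigr]\ge 0$ is the paper's \eqref{eq:TS-bayes-2} combined with \eqref{eq:TS-def}, and your per-step analysis (only the pulled arm's posterior moves, and a Bernoulli success moves $\E^{s+1}[\mu_{A_s}]$ and $\Pr^{s+1}[A^*=A_s]$ up while moving $\Pr^{s+1}[A^*=k]$, $k\neq A_s$, down, via MLR/stochastic dominance and independence) is exactly the content of Lemma~\ref{lem:submartingale}. Your martingale-increment telescoping is just the integrated form of the paper's sub/supermartingale statement (the cross terms vanish for the same reason in both), and your pointwise sign-alignment on the binary reward is the FKG step the paper invokes, specialized to a two-point distribution.
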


%\begin{lemma} \label{lem:bayes}
%$\E[\mu_i|A_t=j]= \E\left[\; \E^t[\mu_i]\cdot \Pr^t[A_t=j]\; \right] / \Pr[A_t=j]$
%for any algorithm.
%\end{lemma}

%\begin{proof}

\subsection{Proofs}

We prove Theorems~\ref{thm:TSsamples} and~\ref{thm:TS-BIC-simple}.
First, we note that for any algorithm and any arms $i,j$ it holds that
\begin{align}
 \E[\mu_i \mid A_t=j]\cdot \Pr[A_t=j]
    &=\E\brac{ \mu_i\cdot \ind{A_t=j} } \nonumber\\
    &=\E\brac{ \E^t[\mu_i\cdot \ind{A_t=j}] }
    =\E\brac{ \E^t[\mu_i]\cdot \E^t[\ind{A_t=j}] } \nonumber\\
    &= \E\brac{ \E^t[\mu_i]\cdot {\Pr}^t[A_t=j] }. \label{eq:TS-bayes} \\
\E[\mu_i -\mu_j \mid A_t=j]\cdot \Pr[A_t=j]
    &= \E\brac{ \E^t[\mu_i-\mu_j]\cdot {\Pr}^t[A_t=j] }. \label{eq:TS-bayes-2}
\end{align}
(\refeq{eq:TS-bayes-2} follows by taking a version \eqref{eq:TS-bayes} with $i=j$, and subtracting it from \eqref{eq:TS-bayes}.)

Next, we analyze the object inside the expectation in \eqref{eq:TS-bayes}.
%\end{proof}

\begin{lemma}\label{lem:submartingale}
Fix arms $i,j$. Let
    $H_t:=\E^t[\mu_i]\cdot \Pr^t[A^*=j]$.
For any algorithm, the sequence
    $(H_1 \LDOTS H_T)$
is a supermartingale if $i\neq j$ and a submartingale if $i=j$.
\end{lemma}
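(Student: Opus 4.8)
The plan is to write $H_t = M_t P_t$ as a product of two martingales, where $M_t := \E^t[\mu_i]$ is the posterior mean of arm $i$ and $P_t := \Pr^t[A^*=j] = \E^t[\ind{A^*=j}]$ is the posterior probability that arm $j$ is best. Both are conditional expectations of fixed random quantities, hence $(\mF_t)$-martingales. Writing the one-step increments as $\Delta M := M_{t+1}-M_t$ and $\Delta P := P_{t+1}-P_t$ and using $\E^t[\Delta M]=\E^t[\Delta P]=0$, expanding $H_{t+1}=(M_t+\Delta M)(P_t+\Delta P)$ yields the clean identity $\E^t[H_{t+1}]-H_t = \E^t[\Delta M\cdot\Delta P]$. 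Thus everything reduces to signing the conditional covariance of the two increments.

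Next I would use the independence of the priors across arms to show that only the event $\{A_t=i\}$ contributes. Conditioning on $\mF_t$, the action $A_t$ is independent of the mean rewards (the algorithm randomizes with external coins), so conditioning on $A_t=k$ does not alter the posterior of $\mu_i$. On $\{A_t=k\}$ with $k\neq i$ the only new data in $\mF_{t+1}$ is a Bernoulli sample of arm $k$, which by independence across arms is independent of $\mu_i$ given $\mF_t$; hence $\E^{t+1}[\mu_i]=\E^t[\mu_i]$, i.e. $\Delta M=0$ there. Splitting by the value of $A_t$, $\E^t[\Delta M\,\Delta P]=\sum_k \Pr^t[A_t=k]\,\E^t[\Delta M\,\Delta P\mid A_t=k]$, every term with $k\neq i$ vanishes, leaving $\E^t[H_{t+1}]-H_t = \Pr^t[A_t=i]\cdot\E^t[\Delta M\,\Delta P\mid A_t=i]$.

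It then remains to sign $\E^t[\Delta M\,\Delta P\mid A_t=i]$, where the only randomness is the single Bernoulli reward $r_t$ of arm $i$. Let $\nu$ be the $\mF_t$-posterior of $\mu_i$; observing $r_t=1$ reweights it to $d\nu_1\propto\mu_i\,d\nu$ and $r_t=0$ to $d\nu_0\propto(1-\mu_i)\,d\nu$. Since the likelihood ratio $\mu_i/(1-\mu_i)$ is increasing, $\nu_1$ stochastically dominates $\nu_0$, so $\E_{\nu_1}[\mu_i]\ge M_t\ge\E_{\nu_0}[\mu_i]$ (as $M_t$ is their martingale average); thus $\Delta M\ge 0$ on $\{r_t=1\}$ and $\Delta M\le 0$ on $\{r_t=0\}$. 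Because the other arms' posteriors are unchanged, I compare $P_{t+1}$ under $\nu_1$ versus $\nu_0$ through the monotone coupling witnessing $\nu_1\succeq\nu_0$: under the fixed $\min\argmax$ tie-break, raising $\mu_i$ can only make arm $i$ win, so $\{A^*=i\}$ is increasing in $\mu_i$, while it can only remove any $j\neq i$ from being best, so $\{A^*=j\}$ is decreasing in $\mu_i$. Hence for $i=j$, $\Delta P$ has the same sign as $\Delta M$ (both $\ge 0$ when $r_t=1$, both $\le 0$ when $r_t=0$), giving $\Delta M\,\Delta P\ge 0$ pointwise and a submartingale; for $i\neq j$, $\Delta P$ has the opposite sign, giving $\Delta M\,\Delta P\le 0$ pointwise and a supermartingale.

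The main obstacle is this last step: rigorously showing that a single Bernoulli observation moves $M$ and $P$ monotonically and in the claimed relative directions. This rests on the monotone-likelihood-ratio comparison $\nu_1\succeq\nu_0$ together with the monotonicity of the events $\{A^*=j\}$ in $\mu_i$ under the specified tie-breaking; the independence-across-arms reduction that annihilates all $k\neq i$ terms is precisely what makes this purely local, one-arm analysis sufficient.
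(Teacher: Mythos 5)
Your proof is correct and follows essentially the same route as the paper's: both reduce to the event $\{A_t=i\}$ (using independence across arms to show the posterior of $\mu_i$ is unchanged otherwise), and both then sign the one-step covariance of $\E^{t+1}[\mu_i]$ and $\Pr^{t+1}[A^*=j]$ via MLR/stochastic monotonicity of the posterior of $\mu_i$ in the Bernoulli reward together with monotonicity of the event $\{A^*=j\}$ in $\mu_i$. The only cosmetic difference is that you sign the product of increments pointwise, whereas the paper invokes the one-variable FKG (Chebyshev correlation) inequality conditionally on the history; for functions of a single binary reward these are the same argument.
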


\begin{proof}
Note that
    $(\E^t[\mu_i]:\; t\in [T])$ and $(\Pr^t[A^*=j]:\; t\in [T])$
are martingales by definition.

We consider two cases, depending whether $A_t=i$. \revedit{Recall that an expression such as $\E^{t+1}[\mu_i]$ is a random variable (with randomness coming from $\mF_{t+1}$), and event such as $A_t=i$ restricts this random variable.}
%that, \eg assuming $A_t=i$ amounts to studying this random variable restricted to the event $A_t=i$. For instance it would be redundant to write $\E^{t+1}[\mu_i|A_t=i]$ since $A_t$ is determined at time $t+1$.
First, suppose $A_t\neq i$. Then $\E^{t+1}[\mu_i]=\E^t[\mu_i]$ almost surely, therefore $H_t$ has expected change $0$ on this step since $\Pr^t[A^*=j]$ is a martingale.

Next, suppose $A_t=i$. The crucial claim is that $\E^{t+1}[\mu_i]$ and $\Pr^{t+1}[A^*=i]$ are increasing in the time-$t$ reward while $\Pr^{t+1}[A^*=j]$ is decreasing in the time-$t$ reward. Indeed, Corollary~\ref{cor:1sampledom} and Lemma~\ref{lem:mlr} in the Appendix imply that the conditional distribution of $\mu_i$ is stochastically increasing in the time-$t$ reward. Observing that the event $\{A^*=j\}$ is decreasing in the value of $\mu_i$ now implies the claim. Note that this argument crucially uses both the Bernoulli reward assumption and the fact that we have bandit feedback and independent arms.

Next, recall the FKG inequality (see Appendix~\ref{app:tools}): if $f,g$ are increasing functions of the same variable then they are positively correlated, i.e. $\E[fg]\geq\E[f]\E[g]$. We set $f=\E^{t+1}[\mu_i]$ and $g=\Pr^{t+1}[A^*=i]$ and apply FKG conditionally on $\mF_t$ and $A_t$, interpreting both $f$ and $g$ as functions of the time-$t$ reward. Then
\begin{align*}
\E^t[H_{t+1}]&=\E^t\left[\E^{t+1}[\mu_i]\cdot {\Pr}^{t+1}[A^*=i]\right]\\
&\geq \E^{t+1}[\mu_i]\cdot {\Pr}^{t+1}[A^*=i]\\
&=H_t.
\end{align*}
We have just shown that $H_t$ is a submartingale when $i=j$. Similarly, when $i\neq j$ we apply the FKG inequality to $f=\E^{t+1}[\mu_i]$ and $g=\Pr^{t+1}[A^*=j]$, again conditionally on $\mF_t$ and $A_t$. In this case $g$ is a decreasing function of the time-$t$ reward and so the FKG inequality goes in the opposite direction, stating that $\E[fg]\leq\E[f]\E[g]$. We hence obtain
\begin{align*}
\E^t[H_{t+1}]&=\E^t\left[\E^{t+1}[\mu_i]\cdot {\Pr}^{t+1}[A^*=j]\right]\\
&\leq\E^{t+1}[\mu_i]\cdot {\Pr}^{t+1}[A^*=j]\\
&=H_t. \qedhere
\end{align*}
\end{proof}

The following lemma is essentially an inductive step. It implies Theorem~\ref{thm:TS-BIC-simple} by induction on $t$, because the premise in the lemma holds trivially when $t=0$ and all prior mean rewards are the same.

\begin{lemma}
\label{thm:TSBIC}
Let \ALG be any bandit algorithm. Fix round $t$. Suppose that running \ALG for $t-1$ steps, followed by Thompson sampling at time $t$, is BIC at time $t$. Then running \ALG for $t$ steps, followed by Thompson sampling at time $t+1$, is BIC at time $t+1$.
\end{lemma}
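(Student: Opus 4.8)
The plan is to reduce the BIC condition at each round to an inequality between expectations of the processes $H_\tau := \E^\tau[\mu_i]\cdot\Pr^\tau[A^*=j]$ studied in Lemma~\ref{lem:submartingale}, and then let the super-/sub-martingale structure carry the induction across round $t$. Throughout I work under the (forced-following) process in which all recommendations through round $t$ are followed, i.e.\ conditionally on $\mE_t$; this is exactly the regime in which $\mF_\tau$, $\E^\tau$, $\Pr^\tau$ and the identity \eqref{eq:TS-bayes-2} are defined.

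First I would translate both the hypothesis and the goal into martingale language. Fix a recommended arm $k$ and an alternative arm $\ell\neq k$ (the case $\ell=k$ is vacuous). Since round $t+1$ uses Thompson sampling, \eqref{eq:TS-def} gives $\Pr^{t+1}[A_{t+1}=k]=\Pr^{t+1}[A^*=k]$, and substituting this into \eqref{eq:TS-bayes-2} shows that the desired inequality $\E[\mu_k-\mu_\ell\mid A_{t+1}=k]\geq 0$ is equivalent to
\[
\E\big[\E^{t+1}[\mu_\ell]\cdot\Pr^{t+1}[A^*=k]\big]\;\leq\;\E\big[\E^{t+1}[\mu_k]\cdot\Pr^{t+1}[A^*=k]\big].
\]
The same computation at round $t$ (which runs Thompson sampling in the hypothesis) shows that ``BIC at time $t$'' is exactly
\[
\E\big[\E^{t}[\mu_\ell]\cdot\Pr^{t}[A^*=k]\big]\;\leq\;\E\big[\E^{t}[\mu_k]\cdot\Pr^{t}[A^*=k]\big].
\]
I would also note that the set of pairs that must be checked is unchanged: under Thompson sampling $\Pr[A_\tau=k]=\E[\Pr^\tau[A^*=k]]=\Pr[A^*=k]$ is independent of $\tau$ by the martingale property, so the support condition $\Pr[A_\tau=k]>0$ reduces to $\Pr[A^*=k]>0$ in both rounds.

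Next I would invoke Lemma~\ref{lem:submartingale} for the algorithm that runs \ALG through its first $t$ rounds. Taking $i=\ell\neq k$, $j=k$ makes $H_\tau=\E^\tau[\mu_\ell]\Pr^\tau[A^*=k]$ a supermartingale, so unconditional expectations satisfy $\E[H_{t+1}]\leq\E[H_t]$; taking $i=j=k$ makes $\E^\tau[\mu_k]\Pr^\tau[A^*=k]$ a submartingale, so its expectation is non-decreasing. The observation that closes the induction is that $\E\big[\E^{t}[\mu_\cdot]\Pr^{t}[A^*=k]\big]$ depends only on the distribution of the filtration $\mF_t$, i.e.\ on \ALG run for its first $t-1$ rounds; this is identical in the hypothesis (where round $t$ is Thompson sampling) and in the conclusion (where round $t$ is \ALG), so the round-$t$ inequality above can be used verbatim.

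Chaining the three inequalities then finishes the proof:
\[
\E\big[\E^{t+1}[\mu_\ell]\Pr^{t+1}[A^*=k]\big]\;\leq\;\E\big[\E^{t}[\mu_\ell]\Pr^{t}[A^*=k]\big]\;\leq\;\E\big[\E^{t}[\mu_k]\Pr^{t}[A^*=k]\big]\;\leq\;\E\big[\E^{t+1}[\mu_k]\Pr^{t+1}[A^*=k]\big],
\]
where the outer bounds are the supermartingale/submartingale estimates across round $t$ and the middle bound is the inductive hypothesis. This is precisely the BIC inequality at round $t+1$ for the pair $(k,\ell)$; since $(k,\ell)$ were arbitrary (with $\ell\neq k$ and $\Pr[A^*=k]>0$), BIC at time $t+1$ follows. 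I expect the main obstacle to be conceptual rather than computational: aligning the two scenarios so that the shared $\mF_t$ lets the hypothesis — stated for Thompson sampling at round $t$ — apply even though the conclusion runs \ALG at round $t$, and verifying that the ``decreasing for $\ell\neq k$, increasing for $k$'' directions in Lemma~\ref{lem:submartingale} point the right way to open the gap over the extra round.
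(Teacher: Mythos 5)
Your proof is correct and follows essentially the same route as the paper's: reduce BIC at each round to the sign of $\E\sbr{\E^\tau[\mu_i-\mu_j]\cdot\Pr^\tau[A^*=i]}$ via \refeq{eq:TS-bayes-2} and \refeq{eq:TS-def}, then propagate across round $t$ using Lemma~\ref{lem:submartingale} (the paper treats the difference as a single submartingale where you split it into a supermartingale and a submartingale term, a trivial reorganization). Your explicit observation that the round-$t$ quantity depends only on $\mF_t$, hence is identical whether round $t$ is played by Thompson sampling or by \ALG, is a point the paper leaves implicit, and it is exactly the right justification.
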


\begin{proof}
Thompson sampling is BIC at time $t$ if and only if
    $\E[\mu_i-\mu_j \mid A_t=i]\geq  0$ for all $(i,j)$.
\begin{align*}
\E[\mu_i-\mu_j \mid A_t=i]
   =&\frac{\E\sbr{ \E^t[\mu_i-\mu_j]\cdot \Pr^t[A_t=i]}}{\Pr[A_t=i]}
    &\EqComment{by \refeq{eq:TS-bayes-2}}\\
   =& \frac{\E\sbr{ \E^t[\mu_i-\mu_j]\cdot \Pr^t[A^*=i]}}{\Pr[A^*=i]}
    &\EqComment{by \refeq{eq:TS-def}}.
\end{align*}
%\asdelete{The last equality follows because
%    $\Pr^t[A_t=i] = \Pr^t[A^*=i]$
%is the essential property of Thompson Sampling.}
In the numerator,
    $\E^t[\mu_i-\mu_j]\cdot \Pr^t[A^*=i]$
is a submartingale by Lemma~\ref{lem:submartingale}. In particular its expectation
is non-decreasing in $t$. On the other hand the denominator $\Pr[A^*=i]$ is a constant independent of $t$.
\end{proof}

\begin{proof}[Proof of Theorem~\ref{thm:TSsamples}]
Fix arms $i,j$ and set $\delta_i:=\Pr[A^*=i]\geq \delta_{\TS}$. Then
\begin{equation}
\label{eq:TSsamples-calc}
\begin{aligned}
 \E[\mu_i -\mu_j \mid A_{T_0}=i]\cdot \Pr[A_{T_0}=i]
 =&
 \E\brac{ \E^{T_0}[\mu_i-\mu_j]\cdot {\Pr}^{T_0}[A_{T_0}=i] }
 &\EqComment{by \refeq{eq:TS-bayes-2}}\\
 =&\E\brac{ \E^{T_0}[\mu_i-\mu_j]\cdot {\Pr}^{T_0}[A^*=i] }
  \\
 =&\E\brac{ \E^{T_0}[\E^{T_0}[\mu_i-\mu_j]\cdot \ind{A^*=i}] }
 \\
 =&\E\brac{ \E^{T_0}[\mu_i-\mu_j]\cdot \ind{A^*=i} }.
\end{aligned}
\end{equation}
To establish that Thompson Sampling is BIC we prove
    $\E\brac{ \E^{T_0}[\mu_i-\mu_j]\cdot \ind{A^*=i} } \geq 0$. Since the functions $(\mu_i-\mu_j)_+,\ind{A^*=i}$ are increasing in $\mu_i$ and decreasing in $\mu_k$ for each $k\neq i$, the FKG inequality implies
\[
    \E[(\mu_i-\mu_j)\cdot \ind{A^*=i}]=\E[(\mu_i-\mu_j)_+\cdot \ind{A^*=i}]\geq \eps_{\TS}\,\delta_{i},
\]
    see Remark~\ref{rem:fkgmixed}. If our estimates $\E^{T_0}[\mu_i],\E^{T_0}[\mu_j]$ of $\mu_i,\mu_j$ were exactly correct then we could immediately conclude. Inspired by this, we show the expected absolute error in estimating
    $ \E^{T_0}[\mu_i-\mu_j]\cdot \ind{A^*=i}$
by $(\mu_i-\mu_j)\cdot \ind{A^*=i}$ is upper bounded by $\eps_{\TS}\,\delta_{i}$. In other words we aim to show:
\begin{equation}\label{eq:TSapprox}
    \E\left[\left|\E^{T_0}[\mu_i-\mu_j]\cdot \ind{A^*=i} -(\mu_i-\mu_j)\cdot \ind{A^*=i}\right|\right]\leq \eps_{\TS}\,\delta_{i}.
\end{equation}
By the triangle inequality, establishing Equation~\eqref{eq:TSapprox} will complete the proof. By regrouping and using again the triangle inequality as well as $|x\cdot \ind{A^*=i}|=|x|\cdot \ind{A^*=i}$ for any $x\in\mathbb R$, the left-hand side of~\eqref{eq:TSapprox} is upper bounded by
\begin{align}\label{eq:pf-TS-1}
  \E\brac{ \left|  \E^{T_0}[\mu_i]-\mu_i \right| \cdot \ind{A^*=i} } +
    \E\brac{ \left|\E^{T_0}[\mu_j]-\mu_j \right| \cdot \ind{A^*=i} }.
\end{align}

By Lemma~\ref{lem:TSchernoff}, the values $\left|  \E^{T_0}[\mu_i]-\mu_i \right|$ and $\left|\E^{T_0}[\mu_j]-\mu_j \right|$ are $N_{\TS}^{-1/2}$ times $O(1)$-sub-Gaussian random variables. Applying Lemma~\ref{lem:subgaussiantail}, we obtain that both terms in \refeq{eq:pf-TS-1} are at most $O\left(\delta_{i}\sqrt{\log(1/\delta_{i})/N_{\TS}}\right)$. Using $\delta_{\TS}\leq \delta_i$ and our choice of $N_\TS$ we now conclude.
\end{proof}

\begin{proof}[Proof of Remark~\ref{rem:TS-strict}]
\revedit{Increasing the value of $N_{\TS}$ by a factor $4$, compared to Theorem~\ref{thm:TSsamples}, ensures that
\begin{equation}\label{eq:TSapprox-2}
    \E\sbr{ \left|\,\E^{T_0}[\mu_i-\mu_j]\cdot \ind{A^*=i} -(\mu_i-\mu_j)\cdot \ind{A^*=i}\,\right| }
    \leq \eps_{\TS}\,\delta_{i}/2
\end{equation}
Revisiting the above proof, this directly implies
\[
    \E\brac{ \E^{T_0}[\mu_i-\mu_j]\cdot \ind{A^*=i} }
    \geq
    \eps_{\TS}\,\delta_{i}/2.
\]
Recall \eqref{eq:TSsamples-calc} and the fact that $\delta_i=\Pr[A^*=i]=\Pr[A_{T_0}=i]$ (the latter equality holds by the definition of Thompson sampling and the martingale property of $\Pr^t[A^*=i]$). Combining implies
\[
      \E\sbr{\mu_i -\mu_j \mid A_{T_0}=i}
        \geq \eps_{\TS}/2. \qedhere
\]
} %%% \revedit
\end{proof}

% The two summands in \eqref{eq:pf-TS-1} are upper-bounded by the integrated upper $p$-quantiles, for $p=\Pr[A^*=i]$, of the random variables
%     $|\E^{T_0}[\mu_i]-\mu_i|$ and $|\E^{T_0}[\mu_j]-\mu_j|$, respectively.
% By Bayesian Chernoff Bound (Lemma~\ref{lem:TSchernoff}), each of these are $N_{\TS}^{-1/2}$ times a $O(1)$-sub-Gaussian random variable. Therefore, their $\delta_{\TS}$ tails have $L^1$ norm at most
%     $O\left(\delta_{\TS}\sqrt{\log(1/\delta_{\TS})/N_{\TS}}\right)$. (This is proved in the Appendix in Lemma~\ref{lem:subgaussiantail}.)

\subsection{Monotonicity of Thompson Sampling}
\label{sec:TS-extensions}

Our analysis of Thompson Sampling implies that its expected reward grows monotonically with time, which in turn allows us to upper-bound its Bayesian simple regret
    $\E_{\text{prior}}\sbr{ \max_{i\in[K]} \mu_i  - \mu_{A_t}}$.
These results are new in the literature on Thompson Sampling, to the best of our knowledge.

\begin{corollary}
For Thompson Sampling (starting with an arbitrary prior), $\E_{\text{prior}}\sbr{\mu_{A_t}}$ is non-decreasing in round $t$. Consequently, Bayesian simple regret at each round $t$ is at most $O(\sqrt{K/t})$.
\end{corollary}

\begin{proof}
Denote
    $H_{t,i} = \E^t\sbr{\mu_i}\, \Pr^t[A_t=i]$,
a key object in the proof of Theorem~\ref{thm:TSsamples}. Then
\begin{align*}
\E\sbr{\mu_{A_{t+1}}}
    &= \sum_{\text{arms $i$}} \E\sbr{H_{t+1,i}}
    \geq \sum_{\text{arms $i$}} \E\sbr{H_{t,i}}
    &\EqComment{since $(H_{t,i}:\, t\in\N)$ is a submartingale} \\
    &= \sum_{\text{arms $i$}} \E[\mu_i \mid A_t=i]\; \Pr[A_t=i]
    &\EqComment{by \eqref{eq:TS-bayes-2}}\\
    &= \E\sbr{\mu_{A_t}}.
\end{align*}
The bound on Bayesian simple regret follows simply because the (cumulative) Bayesian regret at round $t$ is upper-bounded as $O(\sqrt{Kt})$ by \cite[Theorem 1]{bubeck2013prior}, and equals the sum of Bayesian simple regret over all rounds $s\in[t]$.
\end{proof}

\OMIT{ %%%%%%%%%%

Consider an extension of incentivized exploration in which each agent privately observes $n$ independent samples of each arm, for some fixed $n$. We prove that Thompson Sampling is BIC with respect to this signal, provided that the warm-up data is increased by the factor of $\exp(O(n))$.

\begin{theorem}\label{thm:TSsamples-private}
Fix $n$. For each round $t$, let $\mS_{t,n}$ be a tuple of $n$ independent samples of each arm. Let \ALG be a BIC bandit algorithm such that by some fixed time $T_0$ it almost surely collects at least
    $N_{\TS}\cdot c^n$
samples from each arm, where $N_{\TS}$ is from Theorem~\ref{thm:TSsamples} and $c$ is a large enough absolute constant. Then running \ALG for $T_0$ rounds followed by Thompson sampling satisfies the following for each round $t>T_0$:
\begin{align}\label{eq:BIC-defn-private}
\E\sbr{ \mu_i - \mu_j \mid A_t=i,\; \mS_{t,n},\;\mE_{t-1} }\geq 0
\quad\text{all arms $i,j$ such that $\Pr[A_t=i]>0$}.
\end{align}
\end{theorem}

} %%%%%%%%%%

\section{Collecting Initial Samples}
\label{sec:sampling}

\newcommand{\IndPadded}{\indE{\term{padded}}}
\newcommand{\IndExplore}{\indE{\term{explore}}}
\newcommand{\IndExploit}{\indE{\term{exploit}}}
\newcommand{\IndFlag}{\indE{\term{exploreflag}}}
\newcommand{\IndZero}{\indE{\ZEROS_{j,N_0}}}
\newcommand{\padExpr}{\Lambda^{\term{pad}}_{i,j}}

We turn to a basic version of incentivized exploration: collect $N$ samples of each arm. We design a BIC algorithm, called \MainALG, which completes after a pre-determined number of rounds and collects $N$ samples of each arm almost surely.
(This is a desirable property as per Remark~\ref{rem:switch-time}.)
We bound the completion time in terms some parameters of the prior.

%how many rounds does it take to obtain $N$ samples from each arm almost surely? This is the ``waiting time" before switching to Thompson Sampling, as per Theorem~\ref{thm:TSsamples}, if  $N=N_{\TS}$.

%we now seek to efficiently acquire samples of each arm. We ask for a BIC algorithm which by a fixed time $T$ always achieves at least $\Nsamples$ samples of each arm, and seek to minimize $T$. For $\Nsamples\geq N_{\TS}$, this is the amount of waiting time before Thompson sampling can be implemented.

%Call a given arm \emph{explored} if it has been sampled at least $\Nsamples$ times.

We describe the algorithm on a high level, and then fill in the details. The algorithm explores the arms in order of increasing index $j$, \ie in the order of decreasing prior mean reward. A given arm $j$ is explored as follows. We partition time in phases of $N$ rounds each, where $N$ is a parameter. Within a given phase, the algorithm recommends the same arm in all rounds. It uses phases of three types: \emph{exploration phases}, when arm $j$ is always recommended, \emph{exploitation phases}, when the algorithm chooses an arm with the largest posterior mean reward, and \emph{padded phases}, which combine exploration and exploitation. In the exploitation phase, the algorithm conditions on the first $N'$ samples of each arm $i<j$, where $N'$ is a given \emph{depth} parameter. The algorithm also conditions on the first $N'$ samples of arm $j$ if these samples are available before the phase starts. In the padded phase, it leverages the samples from arm $j$  %of we recommend arm $j$ sufficiently often and for a sufficiently good reason so as
to guarantee a stronger BIC property for this arm which is ``padded" by some prior-dependent amount $\padding >0$;
%We think of this amount as ``padding", and
this property offsets more exploration for arm $j$. A given phase is assigned to one of these three types in a randomized and somewhat intricate way.

\newcommand{\FakePi}[1][j]{\term{transform}(\pi_{#1})} % padded policy with fake data
\newcommand{\happened}{exploration phase has happened\xspace}
\newcommand{\flag}{\term{flag}}
\newcommand{\invariant}[2]{\textcolor{blue}{\tcp*[h]{Invariant #1:} #2.}} % \invariant{1}{blah}

%\begin{figure}[H]
\SetKwFor{Loop}{loop}{}{end}

\begin{algorithm2e}[t]
\caption{\MainALG}
\label{alg:pyramidtight}
\SetAlgoLined\DontPrintSemicolon

\textbf{Parameters:}
    phase length $N$,
    bootstrapping parameter $N_0\leq N$,
    padding $\padding>0$.

\textbf{Given:}
recommendation policies $\pi_2 \LDOTS \pi_K$ for \PaddedPhase.
%$\pi_i$ outputs either $i$ or $\emptyset$.

\textbf{Initialize:} \ExplorePhase[1]

\For{each arm $j=2, 3  \LDOTS K$}{ \label{line:algloop}
    \invariant{1}{each arm $i<j$ has been sampled at least $N$ times}

    \vspace{1mm}\tcp*[l]{Bootstrapping: two phases}
    Event $\ZEROS_{j,N_0} =
        \cbr{ \text{the first $N_0$ samples of each arm $i<j$ return reward $0$}} $. \\
    $p_j\leftarrow q/(1+q)$, where $q = \padding\cdot\Pr\sbr{\ZEROS_{j,N_0}}$.
    \label{line:init-prob}\\ %\frac{\padding\cdot\Pr\sbr{\ZEROS_{j,N_0}}}{1+\padding\cdot\Pr\sbr{\ZEROS_{j,N_0}}}$.\\
    \ExploitPhase[N_0] \label{line:initialexploit}

    \textbf{with probability} $p_j$ \textbf{do}\\
        \myTAB\ExplorePhase \label{line:alg1explorephase} \\
    \uElseIf{$\ZEROS_{j,N_0}$\label{line:zeros}}
           % \For{each arm $i=1,2, 3  \LDOTS j-1$}{
           % Let $\widetilde{\mu}_i=\mathbb E[\mu_i|\text{ first }N\text{ samples}].$\\
           % Sample $\hat\mu_i$ from $\mathcal X_i$-law conditioned on $\widetilde{\mu}_i$, so that $(\widetilde{\mu}_i,\hat\mu_i)\sim \mathcal X_i.$}
        {\PaddedPhase: use policy $\FakePi$\label{line:padded1}}
        %using posterior means $(\hat \mu_1,\dots,\hat\mu_{j-1},\widetilde{\mu}_j).$
                \lElse{\ExploitPhase[N]}

    \vspace{2mm}\tcp*[l]{main loop: exponentially grow the exploration probability}

    \While{$p_j< 1$}{
    \invariant{2}{$\Pr\sbr{\text{\happened} \mid \mu_1 \LDOTS \mu_K}=p_j$}
    %\invariant{3}{The choice of phase during each iteration is independent of the mean rewards $(\mu_i)_{i\in [K]}$}

    \uIf{\happened}{
        \PaddedPhase: use policy $\pi_j$ \label{line:padded2}
        }
    \textbf{else with probability} $\min\rbr{1,\,\frac{p_j}{1-p_j}\cdot \padding}$ \textbf{do}\\
    \myTAB \ExplorePhase \label{line:algexplore} \\
    \lElse{\ExploitPhase}
Update $p_j\leftarrow \min\rbr{1,\, p_j\, (1+\padding)}$.
}%end while
} %end for
\end{algorithm2e}
%\end{figure}

Algorithm~\ref{alg:pyramidtight} presents the algorithm with abstract parameters $N_0\leq N$ and $\padding>0$, and recommendation policies for the padded phase. Let us again focus on exploring a particular arm $j$. The first two phases (\emph{bootstrapping}) ensure that exploration phase is invoked with probability $p_j$ for any given vector of mean rewards $\mu_1 \LDOTS \mu_K$, where $p_j$ is given in Line~\ref{line:init-prob}. We capture this condition as Invariant 2. Then the algorithm enters the main loop, where it exponentially grows the exploration probability. More precisely, consider the \emph{phase-exploration probability}: the probability that the ``pure exploration" phase for arm $j$ has been invoked. The algorithm increases this probability by a $1+\padding$ factor after each iteration, maintaining Invariant 2. This exponential growth is the key aspect of the algorithm, which side-steps the fact that the initial phase-exploration probability in Line~\ref{line:init-prob} may be very small. Inside the main loop, the algorithm randomizes between ``pure exploitation" and ``pure exploitation" with predetermined probability (chosen so as to guarantee that the $1+\padding$ increase in phase-exploration probability overall), and permanently switch to the padded phase once exploration phase has been invoked. The padded phase offsets the additional exploration in the same iteration. This process continues until the phase-exploration probability reaches $1$.

The bootstrapping phases hide a considerable amount of complexity which may be skipped at a first reading. Conceptually, we would like to implement the ``hidden exploration" approach from \citet{ICexploration-ec15-conf,ICexploration-ec15}, which randomizes between exploration and exploitation phases with some predetermined probability. While this approach may suffice for some ``well-behaved" priors, it appears to be insufficient more generally, in the sense that the phase-exploration probability depends on some additional prior-dependent parameters that are difficult to deal with. Instead, we combine exploration and exploitation in a more sophisticated way, as explained later in the section.

Let's make some observations that are immediate from the algorithm's specification.

\begin{claim}\label{cl:alg-obs}
Algorithm~\ref{alg:pyramidtight} samples each arm at least $N$ times with probability $1$, and completes in
    $(2K-1)N+N\,\sum_{j=2}^K \Cel{\,\ln_{1+\padding} (p_j)\,}$
rounds, where $p_j$ is given in Line~\ref{line:init-prob}. Both invariants hold at each iteration of the respective loops. Bootstrapping takes exactly two phases, and each iteration of the \term{while} loop takes exactly one phase.
\end{claim}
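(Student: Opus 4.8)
The plan is to treat this as a structural bookkeeping statement and prove each assertion by induction, reading the phase structure directly off Algorithm~\ref{alg:pyramidtight}. Fix an arm $j$ and write $p_j^{(0)}$ for the value assigned in Line~\ref{line:init-prob}, and $p_j^{(m)}=\min(1,\,p_j^{(0)}(1+\padding)^m)$ for the value of the variable $p_j$ after $m$ iterations of the main loop. The two ``exactly one/two phases'' assertions are immediate by inspection: bootstrapping consists of the depth-$N_0$ exploitation phase (Line~\ref{line:initialexploit}) followed by exactly one phase chosen by the three-way branch below it, hence two phases; and each pass through the main loop executes exactly one phase — a padded phase if arm $j$ has already been explored, and otherwise a single randomized choice between an exploration phase and an exploitation phase. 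It is worth noting at the outset that the event $\ZEROS_{j,N_0}$ and the policy $\FakePi$ appearing in bootstrapping are irrelevant to the counting and to both invariants; they serve only to secure the BIC property analyzed later, and can be ignored here.

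The crux is Invariant~2, which I would establish by induction over the phases devoted to arm $j$. For the base case, observe that inside bootstrapping the exploration phase for arm $j$ (Line~\ref{line:alg1explorephase}) is invoked exactly when the first branch fires, an independent coin of probability $p_j^{(0)}$ that does not depend on $(\mu_1,\dots,\mu_K)$; hence entering the main loop we have $\Pr[\,\text{exploration phase has happened}\mid \mu_1,\dots,\mu_K\,]=p_j^{(0)}$. For the inductive step, suppose this conditional probability equals $p_j$ at the top of an iteration. Conditioned on $(\mu_1,\dots,\mu_K)$ and on the event that arm $j$ has not yet been explored (conditional probability $1-p_j$), the algorithm tosses a fresh, independent coin that fires with probability $\min(1,\,\tfrac{p_j}{1-p_j}\,\padding)$; since the ``already explored'' indicator is a deterministic function of earlier coins, it is independent of this new coin given $(\mu_1,\dots,\mu_K)$. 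The updated conditional probability is therefore $p_j+(1-p_j)\cdot\min(1,\,\tfrac{p_j}{1-p_j}\,\padding)=\min(1,\,p_j(1+\padding))$, which is exactly the update performed at the end of the loop body. I expect this single algebraic/independence check to be the only step needing genuine care, in particular verifying that the two truncation conditions $\tfrac{p_j}{1-p_j}\padding\ge 1$ and $p_j(1+\padding)\ge 1$ are equivalent, so that the clipped and unclipped cases agree.

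With Invariant~2 available, the rest follows. The main loop runs until the first $m$ with $p_j^{(m)}=1$, i.e. for $m_j:=\lceil \ln_{1+\padding}(1/p_j^{(0)})\rceil$ iterations — the least $m$ with $p_j^{(0)}(1+\padding)^m\ge 1$; at termination Invariant~2 gives $\Pr[\,\text{exploration phase has happened}\mid\mu_1,\dots,\mu_K\,]=1$ for every realization of the means, so the exploration phase for arm $j$ has occurred almost surely. Because an exploration phase for arm $j$ recommends arm $j$ in all $N$ of its rounds, this yields $N$ samples of arm $j$ almost surely, which proves Invariant~1 by induction on $j$ (base case: the initial exploration phase supplies $N$ samples of arm $1$; inductive step: handling arm $j$ adds $N$ samples of arm $j$ almost surely, while the counts of arms $i<j$ only grow) and simultaneously shows that every arm is sampled at least $N$ times with probability $1$. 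Finally, summing phases gives one initial exploration phase plus $2+m_j$ phases for each $j\in\{2,\dots,K\}$, i.e. $1+2(K-1)+\sum_{j=2}^K m_j=(2K-1)+\sum_{j=2}^K m_j$ phases in total; multiplying by the phase length $N$ yields the completion time $(2K-1)N+N\sum_{j=2}^K \lceil \ln_{1+\padding}(1/p_j)\rceil$, as claimed.
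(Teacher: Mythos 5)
Your proof is correct and is exactly the bookkeeping the paper has in mind: the paper states this claim without proof, as observations ``immediate from the algorithm's specification,'' and your induction (the exploration decision is driven by external coins independent of the means, so the conditional probability given $(\mu_1,\dots,\mu_K)$ updates as $p_j\mapsto\min\rbr{1,\,p_j(1+\padding)}$, with the two truncation conditions agreeing) is precisely that verification. Note also that you silently and correctly read the statement's $\ln_{1+\padding}(p_j)$ as $\ln_{1+\padding}(1/p_j)$, which fixes an evident sign typo and matches the round count implied by Theorem~\ref{thm:goodalgo}.
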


\revedit{ %%%%
\xhdr{Recommendation policies} $\pi_j$ need to satisfy several properties. Fix arm $j$. First, we require $\pi_j$ to input only the first $N$ samples of each arm $i\leq j$, ignoring the order in which the arms were sampled by the algorithm.%
\footnote{Formally, $\pi_j$ inputs an ordered tuple of arm-reward pairs, and  pre-processes it as a $j\times N$ matrix whose $(i,n)$-th entry, $(i,n)\in [j]\times[N]$, is the reward from arm $i$ from the $n$-th time it was sampled. The policy is then determined by this matrix.}
Such policies are called $(j,N)$-informed. Note that the algorithm has sufficient data to compute policy $\pi_j$ thanks to Invariant 1.

Second, we require policy $\pi_j$ to be BIC. Formally, we let $\mS_{j,N}$ be a signal that consists of exactly $N$ independently realized samples of each arm $i\leq j$, and we require $\pi_j$ to be BIC w.r.t this signal.

Third, we require $\pi_j$ to satisfy a stronger BIC property for arm $j$:
% \begin{align}\label{eq:padded-BIC-defn}
% \E^t\sbr{ (\mu_j - \mu_i)\cdot \ind{\pi_j=j}}\geq \padding
% \text{ for each arm $i<j$, and time $t\leq T$ calling $\pi_j$ in Algorithm~\ref{alg:pyramidtight}}.
% \end{align}
\begin{align}\label{eq:padded-BIC-defn}
\E\sbr{ (\mu_j - \mu_i)\cdot \ind{\pi_j(\mS_{j,N})=j}}\geq \padding
\quad\text{for each arm $i<j$}.
\end{align}
If \eqref{eq:padded-BIC-defn} holds, we say that policy $\pi_j$ is \emph{$(j,\padding)$-padded BIC}, where $\lambda$ is the ``padding".
%\msedit{(and a signal $\mS$ is implicitly specified, recall the discussion just before \eqref{eq:BIC-policy-defn})}
The left-hand side in \eqref{eq:padded-BIC-defn} is the expected loss for the $j\to i$ swap: the expected loss when one starts with policy $\pi_j$ and replaces arm $j$ with arm $i$ whenever arm $j$ is recommended. Note that we integrate over the event that arm $j$ being chosen, rather than condition on this event. We recover the ``usual" BIC property for arm $j$ when $\padding=0$.

If policy $\pi_j$ satisfies all three properties,
%is BIC and $(j,\padding)$-padded BIC and $(j,N)$-informed,
it is called \emph{$(j,\padding,N)$-suitable}.

While BIC and $(j,\padding)$-padded BIC properties of $\pi_j$ are defined relative to signal $\mS_{j,N}$, one could also define them relative to any other signal $\mS$ that almost surely contains at least $N$ samples of each arm $i\leq j$. It is easy to see that these definitions are equivalent: any $(j,N)$-informed policy is BIC relative to signal $\mS_{j,N}$ if and only if it is BIC relative to signal $\mS$; likewise for the $(j,\padding)$-padded BIC property. This point allows us to analyze $(j,\padding,N)$-suitable policies abstractly, regardless of where exactly their input comes from.

} %%%%% \revedit

\xhdr{Padding and the main loop.}
The key is to specify what happens in the ``padded phase", and argue about  incentives that it creates. We use the properties of policies $\pi_j$, as listed above, to guarantee that the main loop is BIC. The main point is that the padded-BIC property compensates for the probability of new exploration. We carefully spell out which properties are needed where. In particular, the exploitation phase is only used to skip the round in a BIC way.%
\footnote{This is an interesting contrast with ``hidden exploration" \citep{ICexploration-ec15-conf,ICexploration-ec15}, where the exploitation phase is used to offset exploration. We have little use for this, because the padded phase enables nearly as much additional exploration as possible.}
In fact, the algorithm would work even if the exploitation phase in the main loop would always choose arm $1$.  However, using the available data for exploitation only improves the algorithm's efficiency as well as the incentives.

\begin{lemma}\label{lm:main-BIC}
Consider Algorithm~\ref{alg:pyramidtight} with arbitrary parameters $N_0\leq N$ and $\padding>0$.  Fix arm $j\geq 2$ and focus on the respective iteration of the \term{for} loop of the algorithm. Assume that policy $\pi_j$ is $(j,\padding,N)$-suitable. Then the \term{while} loop is well-defined and BIC.
\end{lemma}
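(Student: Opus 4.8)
The plan is to verify the two assertions separately. \emph{Well-definedness} is quick. Inside the loop $p_j<1$, so $\tfrac{p_j}{1-p_j}\padding>0$ and the $\min$ with $1$ keeps the exploration probability in $[0,1]$; and since $p_j$ is multiplied by $1+\padding$ each iteration, the loop terminates. To see that Invariant~2 is preserved, write $W$ for the event that the exploration phase for arm $j$ has already happened, and suppose $\Pr[W\mid \mu_1\LDOTS\mu_K]=p_j$ at the start of an iteration. After the iteration the conditional probability becomes $p_j+(1-p_j)\min\rbr{1,\tfrac{p_j}{1-p_j}\padding}$, which equals $\min\rbr{1,\,p_j(1+\padding)}$ in both the unclipped and clipped regimes (the clipping being consistent because $\tfrac{p_j}{1-p_j}\padding\ge 1$ forces $p_j(1+\padding)\ge 1$). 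This is exactly the updated value of $p_j$, so the invariant persists.

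For BIC, fix a round $t$ in this iteration and a pair of arms; since $\E[\mu_i-\mu_k\mid A_t=i]\,\Pr[A_t=i]=\E\brac{(\mu_i-\mu_k)\,\ind{A_t=i}}$, it suffices to show $\E\brac{(\mu_i-\mu_k)\,\ind{A_t=i}}\ge 0$ for every recommended arm $i$ and every $k$. I split $\ind{A_t=i}$ across the three branches. The exploitation branch conditions on a static $\sigma$-algebra $\mathcal{G}$ and recommends a posterior-best arm, so conditioning on $\mathcal{G}$ and using that the branch choice is independent of $\mu$ given $\mathcal{G}$ gives $\E\brac{(\mu_i-\mu_k)\,\ind{A_t=i,\,\text{exploit}}}=\E\brac{\rbr{\E[\mu_i\mid\mathcal{G}]-\E[\mu_k\mid\mathcal{G}]}\ind{A_t=i,\,\text{exploit}}}\ge 0$. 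The exploration branch recommends only arm $j$, so for an arm $i\ne j$ only the exploit and padded branches contribute, and the padded contribution is $\Pr[W]\cdot\E\brac{(\mu_i-\mu_k)\,\ind{\pi_j=i}}\ge 0$ because $\pi_j$ is BIC. Hence every $i\ne j$ is handled.

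The crux is arm $j$, where the exploration and padded branches must be balanced. Using that $W$ is independent of $(\mu_1\LDOTS\mu_K)$ by Invariant~2, the exploration contribution is
\begin{align*}
\E\brac{(\mu_j-\mu_k)\,\ind{A_t=j,\,\text{explore}}}
&=\min\rbr{1,\tfrac{p_j}{1-p_j}\padding}\cdot\E\brac{(\mu_j-\mu_k)\,\ind{\neg W}}\\
&=\min\rbr{1,\tfrac{p_j}{1-p_j}\padding}\,(1-p_j)\,\rbr{\mu_j^0-\mu_k^0},
\end{align*}
which is $\ge 0$ when $k>j$ and at least $-p_j\padding$ when $k<j$, since $\mu_k^0-\mu_j^0\in[0,1]$ and the coefficient $\min\rbr{1,\tfrac{p_j}{1-p_j}\padding}(1-p_j)$ is at most $p_j\padding$. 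The padded contribution is $\Pr[W]\cdot\E\brac{(\mu_j-\mu_k)\,\ind{\pi_j=j}}=p_j\,\E\brac{(\mu_j-\mu_k)\,\ind{\pi_j=j}}$, which is at least $p_j\padding$ for $k<j$ by the $(j,\padding)$-padded BIC property~\eqref{eq:padded-BIC-defn} and at least $0$ for $k>j$ by plain BIC of $\pi_j$. Adding the nonnegative exploit term, the three contributions sum to a nonnegative quantity in every case, which is the BIC condition for arm $j$.

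The step I expect to be the main obstacle is justifying the factorizations ``$\E[\,\cdot\,\ind{W}]=p_j\,\E[\,\cdot\,]$'' and ``$\E\brac{(\mu_j-\mu_k)\,\ind{\neg W}}=(1-p_j)\rbr{\mu_j^0-\mu_k^0}$'', together with the analogous independence used for the exploit branch. The second needs only $W\perp(\mu_1\LDOTS\mu_K)$, which is Invariant~2. The first is more delicate: the padded phase evaluates $\pi_j$ on the first $N$ samples of the arms $i\le j$, and the very existence of arm $j$'s samples is correlated with whether exploration has occurred. The resolution is that $W$ is determined by the exploration coins, which are independent of the realized reward \emph{values}; so conditioned on $W$ the first $N$ samples of each arm are distributed as fresh i.i.d. draws from the reward distribution, making $W$ independent of both $\mu$ and the static $\sigma$-algebra on which $\pi_j$ depends. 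Disentangling the \emph{timing} of sample collection (which may depend on past rewards) from the \emph{values} of the samples, so that these factorizations hold with the prior-distribution expectation on the right, is where Invariant~2 and the ``$(j,N)$-informed''/static structure of $\pi_j$ do the essential work.
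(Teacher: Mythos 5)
Your proof is correct and takes essentially the same approach as the paper's: the same decomposition of $\E\brac{(\mu_i-\mu_k)\,\ind{A_t=i}}$ across the padded, exploration, and exploitation branches, with the $(j,\padding)$-padded BIC property of $\pi_j$ offsetting the (at most $p_j\padding$) exploration loss for comparisons against arms $k<j$, plain BIC of $\pi_j$ plus the prior-mean ordering handling $k>j$, and Invariant~2 supplying the independence needed to factor out the branch indicators. The only differences are refinements: you additionally verify invariant preservation and loop termination (the paper's Claim~\ref{cl:alg-obs}, stated there without proof) and you make explicit why the branch indicators are independent of the \emph{samples} that $\pi_j$ reads rather than only of the means --- a point the paper glosses over by citing Invariant~2 --- whereas the paper's ``well-definedness'' claim refers to data availability for the $(j,N)$-informed policy via Invariant~1, which your argument covers only implicitly.
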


\begin{proof}
The \term{while} loop is well-defined because policy $\pi_j$ is $(j,N)$-informed, so by Invariant 1 the algorithm has a sufficient amount of data to implement it.

Fix some round $t$ in the \term{while} loop. Let us restate the BIC property when arm $j$ is recommended:
%Recommending an arm $i\neq j$ can only be a result of exploiting, so it suffices to show that recommending arm $j$ is BIC.
\begin{align}\label{eq:BICcondition}
\E\sbr{ \Lambda_{i,j}} \geq 0,
\text{ where } \Lambda_{i,j}:=(\mu_j-\mu_i)\cdot \ind{A_t=j},
\end{align}
for all arms $i\neq j$. Let $\IndPadded$, $\IndExploit$, $\IndExplore$ be the indicator variables for the event that round $t$ is assigned to, resp., a padded, explotation, or exploration phase.  Due to Invariant 2, these indicator variables are independent of mean rewards $\mu_1 \LDOTS \mu_K$. We write
    \[\E\sbr{ \Lambda_{i,j}]=\E[\Lambda_{i,j}\cdot \rbr{\IndPadded+\IndExploit+\IndExplore}}\]
and estimate each expectation separately. Let
    $\padExpr :=(\mu_j-\mu_i)\cdot \ind{\pi_j=j}$
and observe that
\begin{align} \label{eq:lm:main-BIC:proof-padded}
\E\sbr{ \Lambda_{i,j}\cdot \IndPadded}
    =\E[\; \padExpr \cdot \IndPadded \;]
    =\E[\; \padExpr\;] \cdot \E\sbr{\IndPadded}
    = \E[\; \padExpr \;] \cdot p_j.
\end{align}
The last two equalities hold, resp., by independence of $\IndPadded$ and by Invariant 2.
%\begin{align}
%\E\sbr{ \Lambda_{i,j}\cdot \IndPadded}
%    &=\E\sbr{\padExpr \cdot \IndPadded} \nonumber \\
%    &=\E\sbr{\padExpr} \cdot \E\sbr{\IndPadded}
%        &\EqComment{by independence of $\IndPadded$} \nonumber \\
%    &= \E\sbr{\padExpr} \cdot p_j
%        &\EqComment{by Invariant 2}.
%        \label{eq:lm:main-BIC:proof-padded}
%\end{align}

First consider the case $i<j$. Then
\begin{align*}
\E\sbr{ \Lambda_{i,j}\cdot \IndPadded}
    & \geq \padding p_j
    &\EqComment{by \eqref{eq:lm:main-BIC:proof-padded} and the padded-BIC property of $\pi_j$}, \\
\E\sbr{ \Lambda_{i,j}\cdot \IndExplore}
    &\geq -\E\sbr{\IndExplore}\geq  -\padding\, p_j
    &\EqComment{a worst-case bound},\\
\E\sbr{\Lambda_{i,j}\cdot \IndExploit}
    &\geq 0
    &\EqComment{exploitation is always BIC}.
\end{align*}
Summing it up gives \refeq{eq:BICcondition}. For $i>j$, we have a similar argument, albeit for different reasons:
\begin{align*}
\E\sbr{ \Lambda_{i,j}\cdot \IndPadded}
    &\geq 0
    &\EqComment{by \eqref{eq:lm:main-BIC:proof-padded} and the BIC property of $\pi_j$}, \\
\E\sbr{ \Lambda_{i,j}\cdot \IndExplore}
    & = \E\sbr{(\mu_j- \mu_i)\cdot \IndExplore } \geq 0
    &\EqComment{by independence of $\IndExplore$},
\end{align*}
and
    $\E\sbr{\Lambda_{i,j}\cdot \IndExploit} \geq 0$
as before. We note that
    $\E\sbr{\Lambda_{i,j}\cdot \IndExploit}\geq 0$
would hold even if the exploitation phase would simply choose arm $1$; this is by independence of $\IndExploit$.

It remains to show the BIC property when the algorithm recommends some arm $\ell\neq j$, \ie that
    $\E\sbr{ \Lambda_{i,\ell}} \geq 0$
for all arms $i\neq \ell$. We derive
    $\E\sbr{ \Lambda_{i,\ell} \cdot \IndPadded} \geq 0$
and
    $\E\sbr{ \Lambda_{i,\ell} \cdot \IndExploit} \geq 0$
same way as before, and
    $\E\sbr{ \Lambda_{i,\ell} \cdot \IndExplore} = 0$
because the exploration phase always chooses arm $j$.
\end{proof}

Now, we show that a suitable policy $\pi_j$ exists, in terms of the following parameters:
\begin{align}
\padG &= \min_{j,\,q\in \Delta_{j-1}}\E\sbr{ \rbr{ \mu_j-\mu_q}_+ },
\quad \text{where }
    \mu_q := \textstyle  \sum_{i \in [K]} q_i\, \mu_i. \nonumber \\
\padN &=  \padC\, \padG^{-2}\, \log \rbr{\padG^{-1}},
    \label{eq:params-padded}
\end{align}
for large enough absolute constant $\padC$. In words, $\padG$ is the smallest ``expected advantage" of any arm $j$ over a convex combination of arms $i<j$. We merely guarantee existence of a policy via the minimax theorem, rather than specify how to compute it.

\begin{lemma}\label{lm:policy-existence}
For each arm $j\geq 2$ there exists a policy $\pi_j$ which is $(j,\padding,\padN)$-suitable,
for
 $\padding = \nicefrac{\padG}{10}$.
\end{lemma}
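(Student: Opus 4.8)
The plan is to build $\pi_j$ from the maximin solution of a zero-sum game and then read off its three required properties. Fix the signal to be the static $\sigma$-algebra $\mG_{\padN,j}$ generated by the first $\padN$ samples of each arm $i\le j$, and write $\hat\mu_i:=\E[\mu_i\mid\mG_{\padN,j}]$ for the posterior mean; note $\hat\mu_i=\mu_i^0$ for $i>j$, since those arms are independent of the signal. Any recommendation policy that is a function of this signal is automatically $(j,\padN)$-informed, so only the BIC and padded-BIC requirements remain. The key observation is that if $f:\mG_{\padN,j}\to[0,1]$ denotes the probability with which $\pi_j$ recommends arm $j$, then by linearity in $q$ the padded-BIC requirement \eqref{eq:padded-BIC-defn} is exactly $\min_{q\in\Delta_{j-1}}\E[(\mu_j-\mu_q)\,f]\ge\padding$.

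This invites a minimax analysis of the bilinear game with payoff $\E[(\mu_j-\mu_q)\,f]$ in which the designer picks $f$ and the adversary picks $q\in\Delta_{j-1}$. Since rewards are Bernoulli, $\mG_{\padN,j}$ takes finitely many values, so $f$ ranges over a finite-dimensional cube and von Neumann's minimax theorem gives value $\min_q\max_f\E[(\mu_j-\mu_q)f]=\min_{q\in\Delta_{j-1}}\E[(\hat\mu_j-\hat\mu_q)_+]$, the designer's best response being a threshold rule. I would then lower-bound this value: since $t\mapsto t_+$ is $1$-Lipschitz, for each $q$ we have $\E[(\hat\mu_j-\hat\mu_q)_+]\ge\E[(\mu_j-\mu_q)_+]-\E|\hat\mu_j-\mu_j|-\E|\hat\mu_q-\mu_q|\ge\padG-O(\padN^{-1/2})$, where the first term is $\ge\padG$ by definition of $\padG$ and the posterior-mean deviations are controlled by the Bayesian Chernoff bound (Lemma~\ref{lem:TSchernoff}, using its general posterior-mean form for $\mu_q$). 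With $\padN=\padC\,\padG^{-2}\log(\padG^{-1})$ and $\padC$ large, the error falls below $\tfrac{9}{10}\padG$, so the value is at least $\padG/10=\padding$. I would then take a \emph{saddle point} $(f^\star,q^\star)$: its policy coordinate $f^\star$ is simultaneously maximin (hence satisfies padded-BIC) and a best response to $q^\star$, hence of the explicit threshold form $f^\star=\ind{\hat\mu_j>\hat\mu_{q^\star}}$.

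It remains to pin down $\pi_j$ off the event $\{f^\star=1\}$ and verify ordinary BIC. I would let $\pi_j$ recommend arm $j$ exactly when $f^\star=1$, and otherwise recommend $\argmax_{i\ne j}\hat\mu_i$ (ties by index; arms $i>j$ enter with the constant $\hat\mu_i=\mu_i^0$, preserving the informed property). When arm $j$ is recommended, BIC against any $i<j$ follows from padded-BIC with slack $\padding>0$; against $i>j$, independence gives $\E[(\mu_j-\mu_i)\ind{\pi_j=j}]=\E[\hat\mu_j f^\star]-\mu_i^0\,\E[f^\star]$, and since both the threshold $f^\star$ and $\hat\mu_j$ are nondecreasing in arm $j$'s samples while $\hat\mu_{q^\star}$ depends only on arms $<j$, conditioning on the other arms and applying FKG yields $\E[\hat\mu_j f^\star]\ge\mu_j^0\,\E[f^\star]\ge\mu_i^0\,\E[f^\star]$. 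When $\pi_j$ recommends some $i\ne j$ we are on $\{f^\star=0\}$, where $\hat\mu_i=\max_{i'\ne j}\hat\mu_{i'}\ge\hat\mu_{q^\star}\ge\hat\mu_j$; thus $\hat\mu_i-\hat\mu_{i'}\ge0$ and $\hat\mu_i-\hat\mu_j\ge0$ hold pointwise on this event, and integrating $\E[\hat\mu_i-\hat\mu_{i'}\mid\cdot]$ gives BIC against every other arm.

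\textbf{Main obstacle.} The two halves of the argument pull $f^\star$ in opposite directions: padded-BIC demands that $f^\star$ be a robust, \emph{maximin} strategy, whereas the remaining BIC inequalities are cleanest precisely when $f^\star$ is the simple threshold $\ind{\hat\mu_j>\hat\mu_{q^\star}}$, which a priori is only a best response to one fixed $q^\star$ and need not be maximin. Reconciling them is exactly why one should invoke a saddle point rather than an arbitrary optimal policy, so that the policy coordinate inherits both properties at once. The only genuinely quantitative step is bridging the ``noiseless'' quantity $\padG=\min_q\E[(\mu_j-\mu_q)_+]$ and the posterior-mean value $\min_q\E[(\hat\mu_j-\hat\mu_q)_+]$ actually attainable from $\padN$ samples; this gap is what the Bayesian Chernoff bound and the calibrated choice of $\padN$ close.
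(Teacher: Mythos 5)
Your route is, in outline, the same as the paper's: the same recommendation game on the signal $\mG_{\padN,j}$, the same minimax-theorem identity
$\min_{q}\max_{f}\E[(\hat\mu_j-\hat\mu_q)f]=\min_{q\in\Delta_{j-1}}\E[(\hat\mu_j-\hat\mu_q)_+]$,
the same Bayesian-Chernoff calculation showing this value is at least $\padG/10$ for $\padN=\padC\,\padG^{-2}\log(\padG^{-1})$ (this matches the paper's Lemma~\ref{lem:gamevalue} and Lemma~\ref{lem:chernoff2}), and the same FKG-plus-exploitation endgame for ordinary BIC (the paper's Lemma~\ref{lem:jcorbic}). The genuine gap is the step ``$f^\star$ is \ldots{} a best response to $q^\star$, hence of the explicit threshold form $f^\star=\ind{\hat\mu_j>\hat\mu_{q^\star}}$,'' together with the ensuing determinization ``recommend arm $j$ exactly when $f^\star=1$.'' A best response to $q^\star$ is pinned down only \emph{off} the tie event $\{\hat\mu_j=\hat\mu_{q^\star}\}$; on that event it is arbitrary, and in general every maximin strategy must be strictly fractional there, so no deterministic threshold policy attains the value. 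Concretely, consider a two-point signal, each realization having probability $\nicefrac12$, with posterior means $(\hat\mu_1,\hat\mu_2,\hat\mu_3)=(0.6,\,0.2,\,0.4)$ at $s_1$ and $(0.2,\,0.5,\,0.5)$ at $s_2$, and $j=3$. The game value is $0.075$, the unique optimal $q^\star=(\nicefrac12,\nicefrac12)$ ties with $\hat\mu_3$ exactly at $s_1$, and the \emph{unique} maximin policy recommends arm $3$ with probability $\nicefrac34$ at $s_1$ and probability $1$ at $s_2$; the hard threshold $\ind{\hat\mu_3>\hat\mu_{q^\star}}$ recommends only at $s_2$ and earns $0$ against the pure strategy ``arm $2$,'' far below the value. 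Nothing in your argument excludes such configurations (your deduction uses only the abstract bilinear game structure, for which it is false), so as written the policy you construct can fail the $(j,\padding)$-padded BIC property entirely.

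The fix is not to determinize: use $f^\star$ itself as a randomized recommendation probability. Padded-BIC is then immediate from the maximin property, and the rest of your argument survives intact: any best response to $q^\star$ is sandwiched, $\ind{\hat\mu_j>\hat\mu_{q^\star}}\leq f^\star\leq\ind{\hat\mu_j\geq\hat\mu_{q^\star}}$, and since $\hat\mu_j-\hat\mu_{q^\star}$ vanishes on the tie event, this sandwich is enough for your covariance/FKG step (conditionally on arms $i<j$, one still has $\E[\hat\mu_j f^\star]\geq\E[\hat\mu_j]\,\E[f^\star]$, because $f^\star(\hat\mu_j-\hat\mu_{q^\star})=(\hat\mu_j-\hat\mu_{q^\star})_+$ pointwise); likewise, on the event that arm $j$ is not recommended one still has $\hat\mu_j\leq\hat\mu_{q^\star}\leq\max_{i<j}\hat\mu_i$, so the exploitation recommendations remain BIC. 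With that repair your proof is correct and is essentially the paper's, the one real difference being how monotonicity of the recommendation probability in $\hat\mu_j$ is obtained: you get it from the best-response structure at a saddle point (which is leaner), whereas the paper selects a maximin mixed strategy that is \emph{not weakly dominated} (Lemma~\ref{lem:goodnash}) and proves any such strategy is ``$j$-correlated'' (Lemma~\ref{lem:weakdomjcor}); both devices exist precisely to handle the tie ambiguity that the deterministic-threshold claim sweeps under the rug.
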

\begin{proof}[Proof Sketch]
Fix arm $j\geq 2$. Maximizing padding $\padding$ in \refeq{eq:padded-BIC-defn} is naturally expressed as a zero-sum game between a \emph{planner} who chooses a $(j,N)$-informed  policy $\pi_j$ and wishes to maximize the right-hand side of \eqref{eq:padded-BIC-defn}, and an \emph{agent} who chooses arm $i$; we call it the \emph{$(j,N)$-recommendation game}.%
\footnote{It is a finite game because there are only finitely many deterministic $(j,N)$-informed policies.}
 A policy $\pi_j$ is $(j,\padding)$-padded if and only if it is guaranteed payoff at least $\padding$ in this game. So, any maximin policy in this game is $(j,\padding)$-padded with $\padding = V_{j,N}$, where $V_{j,N}$ is the game value.

We connect the game value with $\padG$ via the minimax theorem, proving that
    $V_{j,N}\geq \padG/10$.
Indeed, using the minimax theorem and the linearity of expectation, we can write the game value as
\begin{align}\label{eq:pf:lm:policy-existence-1}
 V_{j,N}
    = \min_{q\in \Delta_{j-1}} \quad
    \max_{\text{$(j,N)$-informed policies $\pi_j$}}\quad
    \E\sbr{ (\mu_j - \mu_q)\cdot \ind{\pi_j=j}}.
\end{align}
If the policy $\pi_j$ knew the mean rewards $(\mu_1 \LDOTS \mu_j)$ exactly, it could recommend arm $j$ if and only if $\mu_j-\mu_q>0$. This would guarantee the minimax value of at least
    $G_j := \min_{q\in \Delta_{j-1}} \E\sbr{ \rbr{ \mu_j-\mu_q}_+ } $.
We show that for a large enough $N$ we can guarantee the minimax value of at least $\Omega(G_j)$. Specifically, for $N = \padN$ and any given distribution $q\in\Delta_{j-1}$ there exists a $(j,N)$-informed policy $\pi = \pi_{j,N}^{q}$ such that
\begin{align}\label{eq:pf:lm:policy-existence-2}
 \E\sbr{ (\mu_j - \mu_q)\cdot \ind{\pi=j}}
        \geq \tfrac{1}{10}\cdot \E\sbr{ \rbr{ \mu_j-\mu_q}_+ }.
\end{align}
This policy is very simple: given the data (the first $N$ samples of each arm $i\leq j$), recommend arm $j$ if and only if its empirical reward on this data is larger than the expected reward of distribution $q$ on the same data. For every given realization of the mean rewards $(\mu_1 \LDOTS \mu_j)$, we use Chernoff Bounds to compare
    $(\mu_j - \mu_q)\cdot \ind{\pi=j}$ and $(\mu_j - \mu_q)_+$,
which implies \eqref{eq:pf:lm:policy-existence-2}.

% From here on, let us focus on the $(j,\padN)$-recommendation game. Use ``well-formed" as a shorthand for ``$(j,\padN)$-informed", so that planners strategies in the game are well-informed policies.

From here on, let us focus on the $(j,\padN)$-recommendation game. We have proved that there exists a $(j,\padN)$-informed policy $\pi_j$ that is $(j,\padding)$-padded, for $\padding = \padG/10$, even if we do not specify how to compute such a policy. However, we are not done yet, as we also need this policy to be BIC. First, we need the BIC property when arm $j$ is recommended. (As spelled out in \refeq{eq:BICcondition}; we already have this property for $i<j$, by the padded-BIC property, but we also need it for $i>j$.)

Any finite two-player zero-sum game has a minimax-optimal strategy that is \emph{non-weakly-dominated}: not weakly dominated by any other mixed strategy for the max player. So, let us take such a policy $\pi^*$. One can prove that any such policy is BIC when recommending arm $j$. The proof of this claim focuses on the probability of recommending arm $j$ as a function of the posterior mean rewards
    $\widetilde{\mu}_i := \E\sbr{ \mu_i \mid \mathcal \mG_{N,j}}$.%
    \footnote{Recall that $\mG_{N,j}$ denotes the $\sigma$-algebra generated by the first $N$ samples of each arm $i\leq j$.}
We show that the conditional probability
        $\Pr\sbr{ \pi_j = j \mid \widetilde{\mu}_1 \LDOTS \widetilde{\mu}_j }$
is non-decreasing in $\widetilde{\mu}_j$ and non-increasing in $\widetilde{\mu}_i$ for each arm $i<j$. This, in turn, allows us to invoke the FKG inequality and derive the claim.

Finally, we extend $\pi^*$ to a BIC policy. When $\pi^*$ does \emph{not} recommend arm $j$, we choose an arm $i$ for exploitation, \ie to maximize
     $\E\sbr{ \mu_i \mid \mG_{N,j}}$.
The resulting policy is BIC, hence $(j,\padding,\padN)$-suitable.
\end{proof}

\xhdr{Bootstrapping, revisited.}
The bootstrapping proceeds as follows, as per the pseudocode. We start with an exploitation phase with depth $N_0$. We choose $N_0$ so as to guarantee that we explore arm $j$ in the $\ZEROS_{j,N_0}$ event, \ie if all previous arms return $0$ rewards in the first $N_0$ samples. In the second phase, we explore arm $j$ with a small probability $p_j$; otherwise we do something else to guarantee incentives. Specifically, under event $\ZEROS_{j,N_0}$ we invoke a version of the padded phase; else we just exploit. This phase is BIC because the small probability of invoking the padded phase compensates for the exploration.

Given that the padded phase is now invoked under event $\ZEROS_{j,N_0}$, we need the padded-BIC property to hold conditional on this event. We transform policy $\pi_j$ into another policy $\pi = \FakePi$ which replaces the BIC and padded-BIC properties with conditional ones:
\begin{align}
\E\sbr{ (\mu_j - \mu_i)\cdot \ind{\pi=j} \mid \ZEROS_{j,N_0} }&\geq \padding
\quad\text{for each arm $i<j$},
\label{eq:padded-BIC-defn-zeroes}\\
\E\sbr{ (\mu_{\ell} - \mu_i)\cdot \ind{\pi=\ell} \mid \ZEROS_{j,N_0} }&\geq 0
\quad\text{for each arm $i\neq \ell$}.
\label{eq:ordinary-BIC-defn-zeroes}
\end{align}

This transformation is generic, in the sense that it works for any policy $\pi_i$ and any parameters.

\begin{restatable}{lemma}{transform}\label{lem:transform}
Fix padding $\padding>0$ and any parameters $N_0,N$. Given a $(j,\padding,N)$-suitable policy $\pi_j$, there exists a BIC policy $\pi = \FakePi$ which is $(j,N)$-informed and satisfies \eqref{eq:padded-BIC-defn-zeroes} and \eqref{eq:ordinary-BIC-defn-zeroes}.
\end{restatable}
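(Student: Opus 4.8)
The plan is to define $\pi=\FakePi$ by a resampling (``laundering'') of the samples that the event $\ZEROS_{j,N_0}$ pins down, together with a safe fallback for recommendations other than arm $j$. Concretely, $\pi$ reads the first $N$ samples of each arm $i\le j$ (so it is automatically $(j,N)$-informed); for each arm $i<j$ it discards the first $N_0$ samples and replaces them by fresh draws $\tilde X_{i,1},\dots,\tilde X_{i,N_0}$ from the posterior-predictive distribution of arm $i$ given its remaining samples $X_{i,N_0+1},\dots,X_{i,N}$ under the prior $\mP_i$. It then feeds the laundered data to $\pi_j$. Finally, whenever $\pi_j$ would recommend an arm $\ell\ne j$, the policy $\pi$ instead recommends the exploitation arm $\argmax_i \E\sbr{\mu_i\mid\mG_{N,j}}$; when $\pi_j$ recommends arm $j$, so does $\pi$.

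The key structural observation is that the laundering makes the recommendation insensitive to the conditioning. Since the resampled first-$N_0$ block of each arm $i<j$ is a function of the later samples and fresh randomness, and since the later samples are, conditionally on the mean rewards, independent of the forced zeros, the laundered data fed to $\pi_j$ is conditionally independent of $\ZEROS_{j,N_0}$ given $\rbr{\mu_1,\dots,\mu_K}$. Hence, writing $g_\ell(\mu):=\Pr\sbr{\pi=\ell\mid\mu}$, for every pair of arms we have $\E\sbr{(\mu_\ell-\mu_i)\ind{\pi=\ell}\mid\ZEROS_{j,N_0}}=\E\sbr{(\mu_\ell-\mu_i)\,g_\ell(\mu)\mid\ZEROS_{j,N_0}}$. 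The only way the conditioning now enters is through the law of $\mu$: conditioning on $\ZEROS_{j,N_0}$ tilts each arm $i<j$ downward by the monotone likelihood factor $(1-\mu_i)^{N_0}$, so the post-$\ZEROS$ law of $(\mu_i)_{i<j}$ is stochastically dominated by the prior, while arms $i\ge j$ are left untouched and independent.

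I would then verify the three required families of inequalities. For the padded-BIC bound \eqref{eq:padded-BIC-defn-zeroes} (recommending $j$ versus $i<j$): the integrand $(\mu_j-\mu_i)\,g_j(\mu)$ is decreasing in $\mu_i$, so the downward tilt of $\mu_i$ can only increase the expectation; thus the conditional value is at least the untilted value, which matches the unconditional padded-BIC of $\pi_j$ and is therefore $\ge\padding$. For the ordinary BIC \eqref{eq:ordinary-BIC-defn-zeroes} with $\ell=j$ and $i>j$: arms $\ge j$ are unaffected by $\ZEROS_{j,N_0}$ and $\mu_i$ is independent of the recommendation, so it suffices to check $\E\sbr{\mu_j\mid\pi=j,\,\ZEROS_{j,N_0}}\ge\mu_i^0$; since recommending $j$ is positively correlated with $\mu_j$ (a larger $\mu_j$ stochastically improves arm $j$'s data, which can only help $\pi_j$ recommend $j$) and arms are ordered by prior mean, this conditional mean is at least $\mu_j^0\ge\mu_i^0$. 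Finally, for every $\ell\ne j$ the recommendation is the exploitation arm, which maximizes $\E\sbr{\mu_i\mid\mG_{N,j}}$ pointwise; since both $\ZEROS_{j,N_0}$ and the event $\{\pi=\ell\}$ are $\mG_{N,j}$-measurable, exploitation is BIC conditional on $\ZEROS_{j,N_0}$, giving \eqref{eq:ordinary-BIC-defn-zeroes} for these $\ell$. Unconditional BIC of $\pi$ follows from the same arguments with $\ZEROS_{j,N_0}$ replaced by the whole space.

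The step I expect to be the main obstacle is controlling the laundering precisely enough: because resampling from the posterior predictive preserves only the data marginal and not the joint law of $(\mu_i,\text{data})$, I must show that it nonetheless preserves the unconditional padded-BIC and BIC values used above, and that the needed monotonicity of $g_j$ in the arms' mean rewards survives for an arbitrary suitable $\pi_j$ rather than only for the monotone game-optimal policy of Lemma~\ref{lm:policy-existence}. Making these two points rigorous is where I would invest the effort, using the FKG inequality and a monotone coupling (stochastic dominance via the monotone likelihood ratio $(1-\mu_i)^{N_0}$) to transfer each inequality from the untilted to the tilted measure, and a tower/martingale identity to argue that the laundered expectations agree with the original ones in the directions that matter.
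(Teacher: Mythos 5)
Your construction has a genuine gap, and it sits exactly where you flag it. Your laundering replaces the first $N_0$ samples of each arm $i<j$ with \emph{mean-preserving} fresh draws (posterior-predictive given the later samples), and this ``honesty'' is what breaks the argument, in two places. First, conditional on $\ZEROS_{j,N_0}$ the laundered data does \emph{not} have the unconditional law of genuine data: the later samples $X_{i,N_0+1},\dots,X_{i,N}$ are themselves drawn from Bernoulli$(\mu_i)$ with $\mu_i$ tilted down by the factor $(1-\mu_i)^{N_0}$, so the $(j,\padding,N)$-suitability of $\pi_j$ (an unconditional statement about genuine data) cannot be invoked directly; you must bridge both a change of measure on $\mu$ \emph{and} a change in the conditional law of the data given $\mu$. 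Second, neither bridge holds for an arbitrary suitable $\pi_j$. The change-of-measure step needs $(\mu_j-\mu_i)\,g_j(\mu)$ to be coordinate-wise decreasing in $(\mu_k)_{k<j}$; but suitability consists of a few expectation inequalities and implies no pointwise monotonicity of $g_j$, and even if $g_j$ were decreasing in $\mu_i$ the product need not be, since $\mu_j-\mu_i$ changes sign (when $\mu_i>\mu_j$, decreasing $g_j$ pushes the negative integrand \emph{up}). The laundering-vs-genuine step needs
$\E\sbr{(\hat\nu_i-\nu_i)\,\ind{\pi_j=j}}\geq 0$,
where $\hat\nu_i$ is the posterior mean computed from laundered data as if genuine and $\nu_i=\E\sbr{\mu_i\mid \text{laundered data}}$ is the true one. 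Your tower identity gives only $\E\sbr{\hat\nu_i \mid X_{i,N_0+1},\dots,X_{i,N}}=\nu_i$; the resampling noise $\hat\nu_i-\nu_i$ is \emph{negatively} correlated with $\ind{\pi_j=j}$ for any policy that recommends $j$ when arm $i$'s data looks bad (which any useful $\pi_j$ does), so this error term has the wrong sign. The same unavailable monotonicity reappears in your treatment of arms $i>j$: ``a larger $\mu_j$ can only help $\pi_j$ recommend $j$'' is false for an arbitrary suitable policy; it is a property of the particular non-weakly-dominated, game-optimal policies constructed in Lemma~\ref{lm:policy-existence}, whereas Lemma~\ref{lem:transform} must hold for \emph{any} suitable $\pi_j$.

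The paper's proof is engineered to avoid exactly these issues by making the fake data \emph{optimistic} rather than honest. It couples the true conditional posterior means $\widetilde\mu_i$, $i<j$, with fake ones $\hat\mu_i$ so that (i) conditional on $\ZEROS_{j,N_0}$, the tuple $(\hat\mu_1,\dots,\hat\mu_{j-1},\widetilde\mu_j)$ has \emph{exactly} the unconditional law of $(\widetilde\mu_1,\dots,\widetilde\mu_j)$ --- hence the BIC and padded-BIC properties of $\pi_j$ apply verbatim under the conditioning, with no monotonicity assumption on $\pi_j$ --- and (ii) $\hat\mu_i\geq\widetilde\mu_i$ almost surely (a monotone coupling, via Lemma~\ref{lem:zerosdom} and Lemma~\ref{lem:mvstochdom}), so the guarantee against the fake means transfers pathwise to the true means with the inequality pointing the right way:
$(\widetilde\mu_j-\widetilde\mu_i)\,\ind{\pi_j=j}\geq(\widetilde\mu_j-\hat\mu_i)\,\ind{\pi_j=j}$.
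To salvage your approach you would need to bias the resampled block so that its conditional-on-$\ZEROS_{j,N_0}$ law equals the unconditional one while the induced posterior means dominate the true ones --- which is precisely the paper's construction.
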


\begin{proof}[Proof Sketch]
Intuitively, conditioning on arms $i<j$ being worse-than-usual should only help in finding a $(j,\lambda)$-padded BIC policy. To explicitly reduce to $\pi_j$, we consider the ``true data": the first $N$ samples of each arm $i<j$ which satisfy $\ZEROS_{j,N_0}$. We construct a ``fake" data set ($N$ samples for each arm $i<j$), which relates to the ``true" data in a certain way.
Specifically, the ``fake" posterior mean reward $\hat\mu_i$ for each arm $i<j$ is coupled with the ``true" posterior mean reward, denoted $\widetilde\mu_i$, so that $\hat\mu_i\geq \widetilde\mu_i$ almost surely and the distribution of $\hat\mu_i$ is the unconditional distribution of $\widetilde\mu_i$ (\ie the distribution of $\widetilde\mu_i$ without conditioning on the event $\ZEROS_{j,N_0}$). We define $\FakePi$ by applying $\pi_j$ on the fake data.

Since $\pi_j$ is $(j,\padding,N)$-suitable, we obtain the BIC (resp., padded-BIC) property against the fake posterior means $\hat\mu_i$, $i<j$. Since $\hat\mu_i\geq \widetilde\mu_i$, the corresponding properties hold against the values $\widetilde\mu_i$ as well.
\end{proof}

Now we prove that bootstrapping is BIC given a suitable policy $\pi_j$. The proof follows the same strategy as that of Lemma~\ref{lm:main-BIC}, but everything is conditioned on $\ZEROS_{j,N_0}$.

\begin{restatable}{lemma}{mainBIC}\label{lm:main-BIC0}
Consider Algorithm~\ref{alg:pyramidtight} with arbitrary parameters $N_0\leq N$ and $\padding>0$. Fix some arm $j\geq 2$. The bootstrapping phases are BIC as long as policy $\pi_j$ is $(j,\padding,N)$-suitable and parameter $N_0$ satisfies
\begin{align}\label{eq:cl:alg-obs:boot}
 \E\sbr{ \mu_j-\mu_i \mid \ZEROS_{j,N_0} } \geq 0
    \quad \text{for all arms $i<j$}.
\end{align}
\end{restatable}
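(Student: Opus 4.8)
The plan is to follow the proof of Lemma~\ref{lm:main-BIC} but to carry out the incentive accounting \emph{conditionally} on the event $\ZEROS_{j,N_0}$, and then to check separately that the padded phase is actually computable. Recall the two bootstrapping phases: a depth-$N_0$ exploitation phase (Line~\ref{line:initialexploit}), followed by a phase that, independently, either explores arm $j$ (probability $p_j$), or runs $\FakePi$ when $\ZEROS_{j,N_0}$ holds, or exploits at depth $N$ otherwise. The first phase is unconditional exploitation and is BIC for free: if it recommends $\ell=\argmax_k \widetilde\mu_k$ with $\widetilde\mu_k=\E[\mu_k\mid\mG_{N_0,j-1}]$ (and $\widetilde\mu_k=\mu_k^0$ for the not-yet-sampled arms $k\geq j$), then conditioning on $\mG_{N_0,j-1}$ gives $\E[(\mu_\ell-\mu_i)\ind{A_t=\ell}]=\E[\ind{A_t=\ell}(\widetilde\mu_\ell-\widetilde\mu_i)]\geq 0$ for every $i$, since $\widetilde\mu_\ell\geq\widetilde\mu_i$ whenever $\ell$ is the maximizer.

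For the second phase, let $\IndExplore$ be the independent bias-$p_j$ coin that triggers exploration, so that $\IndPadded=(1-\IndExplore)\,\ind{\ZEROS_{j,N_0}}$ and $\IndExploit=(1-\IndExplore)\,(1-\ind{\ZEROS_{j,N_0}})$. The identity I will lean on, read off from Line~\ref{line:init-prob}, is $(1-p_j)\,\padding\,\Pr[\ZEROS_{j,N_0}]=p_j$. I verify \eqref{eq:BICcondition} for arm $j$ term by term. Because the coin is independent of the mean rewards, $\E[\Lambda_{i,j}\IndExplore]=p_j\,\E[\mu_j-\mu_i]$, which is $\geq -p_j$ by $|\mu_j-\mu_i|\leq 1$ when $i<j$, and $\geq 0$ when $i>j$. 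For the padded term, pulling out the coin and then conditioning on $\ZEROS_{j,N_0}$ gives $\E[\Lambda_{i,j}\IndPadded]=(1-p_j)\Pr[\ZEROS_{j,N_0}]\,\E[(\mu_j-\mu_i)\ind{\FakePi=j}\mid\ZEROS_{j,N_0}]$, which by the transform properties \eqref{eq:padded-BIC-defn-zeroes} and \eqref{eq:ordinary-BIC-defn-zeroes} of $\FakePi$ (Lemma~\ref{lem:transform}) is at least $(1-p_j)\Pr[\ZEROS_{j,N_0}]\padding=p_j$ for $i<j$ and at least $0$ for $i>j$. Summing the explore and padded terms for $i<j$ produces the exact cancellation $-p_j+p_j=0$, exactly as in Lemma~\ref{lm:main-BIC}.

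The genuinely new point, and what I expect to be the main obstacle, is the exploitation term: unlike in Lemma~\ref{lm:main-BIC}, its indicator $\IndExploit$ is \emph{not} independent of the mean rewards, since it carries the factor $1-\ind{\ZEROS_{j,N_0}}$. The resolution is that $\ZEROS_{j,N_0}$ is $\mG_{N_0,j-1}$-measurable, hence measurable with respect to the static $\sigma$-algebra on which the depth-$N$ exploitation conditions, where the recommended arm $\argmax_k\widetilde\mu_k$ (the $\widetilde\mu_k$ now at depth $N$) is measurable as well. Conditioning on that $\sigma$-algebra and pulling out the nonnegative factor $1-\ind{\ZEROS_{j,N_0}}$, every exploitation swap reduces to $(1-p_j)\,\E[\ind{A_t=\ell}(1-\ind{\ZEROS_{j,N_0}})(\widetilde\mu_\ell-\widetilde\mu_i)]\geq 0$, valid for all comparison arms $i$ (including $i=j$, where $\widetilde\mu_j=\mu_j^0$) and in particular when exploitation happens to play arm $j$. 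This is the step where the independence of the exploit indicator used in Lemma~\ref{lm:main-BIC} must be replaced by a measurability argument. Recommendations of an arm $\ell\neq j$ are then handled the same way: exploration never plays such $\ell$, the padded contribution is nonnegative by \eqref{eq:ordinary-BIC-defn-zeroes}, and exploitation is nonnegative by the argmax property, so \eqref{eq:BICcondition} holds with $j$ replaced by $\ell$.

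Finally, I invoke hypothesis \eqref{eq:cl:alg-obs:boot} to ensure the padded phase is well-defined, i.e.\ that $\FakePi$ can be run. On $\ZEROS_{j,N_0}$ the depth-$N_0$ posterior means satisfy $\widetilde\mu_i=\E[\mu_i\mid\ZEROS_{j,N_0}]\leq\mu_j^0=\widetilde\mu_j$ for all $i<j$ by \eqref{eq:cl:alg-obs:boot}, while $\mu_j^0\geq\mu_i^0=\widetilde\mu_i$ for $i>j$ by the prior ordering; hence arm $j$ is a maximizer and is recommended throughout the first (depth-$N_0$, length-$N$) exploitation phase, so arm $j$ collects $N$ samples before the second phase. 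Together with Invariant~1 for the arms $i<j$, this supplies the first $N$ samples of every arm $i\leq j$ that the $(j,N)$-informed policy $\FakePi$ of Lemma~\ref{lem:transform} needs on the event $\ZEROS_{j,N_0}$ where it is invoked. Combining well-definedness with the branch-by-branch incentive bounds above establishes that both bootstrapping phases are BIC.
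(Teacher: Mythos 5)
Your proof is correct and follows essentially the same route as the paper's: the same decomposition into explore/padded/exploit indicators, the conditional application of Lemma~\ref{lem:transform} to lower-bound the padded term by $(1-p_j)\,\padding\,\Pr[\ZEROS_{j,N_0}]=p_j$, and the same cancellation against the worst-case bound $-p_j$ on the exploration term. You additionally spell out two points the paper's proof leaves implicit --- why the exploitation term stays nonnegative even though its indicator is \emph{not} independent of the rewards (via measurability of $\ind{\ZEROS_{j,N_0}}$ with respect to the static $\sigma$-algebra on which exploitation conditions), and why hypothesis \eqref{eq:cl:alg-obs:boot} ensures that arm $j$ has the $N$ samples that the $(j,N)$-informed policy $\FakePi$ needs on the event $\ZEROS_{j,N_0}$ --- which strengthens rather than changes the argument.
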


\OMIT{ %%%%%%%%%%
\begin{proof}
The proof follows the same strategy as that of Lemma~\ref{lm:main-BIC}. Line~\ref{line:initialexploit} is BIC since it is just exploitation. For the next phase, recommending any arm $i\neq j$ is BIC since it can only happen via exploitation. Hence we fix an arm $i\neq j$ and show that arm $j$ is BIC against arm $i$ in this phase. We define
$\IndPadded+\IndExploit+\IndExplore=1$ and $\Lambda_{i,j}$ as before. Note that we have probability $\Pr[\ZEROS_{j,N_0}]\cdot (1-p_j)$ to reach line~\ref{line:padded1}. Applying the guarantee of Lemma~\ref{lem:transform} conditional on this event, we obtain:

\begin{align*}\E[\Lambda_{i,j}\cdot \IndPadded]& \geq \Pr[\ZEROS_{j,N_0}]\cdot \lambda (1-p_j)\\
 &= p_j.\end{align*}

Moreover using a worst-case bound on exploration, and that exploitation is always BIC, we have:

\[\E[\Lambda_{i,j}\cdot \IndExplore]\geq -\E[\IndExplore]=-p_j\quad\text{ and }\quad  \E[\Lambda_{i,j}\cdot \IndExploit]\geq 0.\]

Adding, we obtain Equation~\ref{eq:BICcondition} for $i<j$. For $i>j$, similar reasoning again leads to the same three inequalities with $p_j$ replaced by $0$, i.e. $\E[\Lambda_{i,j}\cdot 1_E]\geq 0$ for $1_E\in \{\IndPadded,\IndExploit,\IndExplore\}.$ Altogether we have shown that playing arm $j$ is BIC against any arm $i$, establishing the BIC property for the bootstrapping phases.
\end{proof}
} %%%% \OMIT

\noindent\textbf{Putting the pieces together.}
Let us formulate an end-to-end guarantee for the algorithm in terms of the appropriate parameters. Recall that we use $\padG,\padN$ from \eqref{eq:params-padded} to handle the ``padded phase".
Given \eqref{eq:cl:alg-obs:boot}, we define two more parameters to handle the ``bootstrapping phase":
\begin{align}\label{eq:params-boot}
\bootN &= \min \cbr{ N_0\in\N:\;
    \E\sbr{ \mu_j-\mu_i \mid \ZEROS_{j,N_0}} > 0
    \; \text{for all arms $i<j$}}. \\
\bootP &= \min_{\text{arms $j$}}
    \Pr\sbr{ \ZEROS_{j,\bootN} }=\Pr\sbr{ \ZEROS_{K,\bootN} }.
\end{align}
In words, $\bootN$ is the smallest $N_0$ such that each arm $j$ is the best arm conditional on seeing $N_0$ initial samples from all arms $i<j$ that are all zeroes. Setting $N_0 = \bootN$ as we do in the theorem, $\bootP$ is the smallest phase-exploration probability after the bootstrapping phase.

\begin{restatable}{theorem}{goodalgo}
\label{thm:goodalgo}
Suppose algorithm \MainALG is run with parameters
    $N\geq \max(\bootN,\padN)$,
    $N_0 = \bootN$
and
    $\padding = \padG/10$.
Suppose each policy $\pi_i$ is $(j,\padding,N)$-suitable (such policies exist by Lemma~\ref{lm:policy-existence}). Then the algorithm is BIC and collects at least $N$ samples of each arm almost surely in time
\[ \RndsUB(N) = O\rbr{
    KN\;\padG^{-1}\; \log\rbr{\padG^{-1}\;\bootP^{-1}} }.\]
\end{restatable}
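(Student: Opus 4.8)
The plan is to assemble Theorem~\ref{thm:goodalgo} from the pieces already in place, since all three assertions --- the BIC property, almost-sure collection of $N$ samples per arm, and the running-time bound --- follow once the hypotheses of the supporting lemmas are matched to the parameter choices $N\geq\max(\bootN,\padN)$, $N_0=\bootN$, $\padding=\padG/10$. The almost-sure sampling guarantee and the exact round count are immediate from Claim~\ref{cl:alg-obs}, so the real work is twofold: (i) verify the global BIC property by combining the per-phase guarantees, and (ii) convert the exact round count into the stated bound $\RndsUB(N)$.

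For BIC I would argue round by round, partitioning the run into the initial \ExplorePhase[1] and, for each arm $j=2,\dots,K$, the two bootstrapping phases followed by the \term{while} loop. The initial phase is BIC for free: before any round the agent's posterior equals the prior (the recommendation of arm $1$ is deterministic and agents never observe rewards), and arm $1$ is prior-best under the ordering $\mu_1^0\geq\dots\geq\mu_K^0$. For the bootstrapping phases of arm $j$ I invoke Lemma~\ref{lm:main-BIC0}, and for the \term{while} loop Lemma~\ref{lm:main-BIC}. Both require a $(j,\padding,N)$-suitable $\pi_j$, which is exactly the hypothesis; such a policy exists by Lemma~\ref{lm:policy-existence}, noting that a $(j,\padN)$-informed policy is a fortiori $(j,N)$-informed once $N\geq\padN$, while the BIC and padded-BIC conditions \eqref{eq:padded-BIC-defn} do not reference $N$. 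The one remaining hypothesis of Lemma~\ref{lm:main-BIC0} is \eqref{eq:cl:alg-obs:boot}, i.e. $\E[\mu_j-\mu_i\mid\ZEROS_{j,N_0}]\geq 0$ for all $i<j$; this holds because $N_0=\bootN$ is by definition \eqref{eq:params-boot} the least value making this quantity strictly positive. The invariants underlying both lemmas --- in particular Invariant 2, which makes the phase-type indicators independent of the mean rewards --- are supplied by Claim~\ref{cl:alg-obs}. Since every round lies in one of these phases, each phase is BIC, and every iteration of the \term{for} loop is reached almost surely, the whole algorithm is BIC.

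For the running time I start from the exact count in Claim~\ref{cl:alg-obs}: $(2K-1)N$ rounds for the initial and bootstrapping phases, plus $N$ rounds per \term{while}-loop iteration. The number of iterations for arm $j$ is the number of times the factor $(1+\padding)$ must be applied to push the initial probability $p_j$ from Line~\ref{line:init-prob} up to $1$, namely $\lceil\log_{1+\padding}(1/p_j)\rceil$. Writing $q_j=\padding\,\Pr[\ZEROS_{j,\bootN}]$ and $p_j=q_j/(1+q_j)$ gives $1/p_j=1+1/q_j\leq 1+(\padding\,\bootP)^{-1}$, using $\Pr[\ZEROS_{j,\bootN}]\geq\bootP$. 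Combined with $\ln(1+\padding)=\Theta(\padding)$ for the small value $\padding=\padG/10\leq 1/10$ (note $\padG\leq 1$ since rewards lie in $[0,1]$), each arm contributes $O\!\left(\padG^{-1}\log(\padG^{-1}\bootP^{-1})\right)$ iterations. Multiplying by $N$, summing over the $K-1$ arms, and absorbing the $O(KN)$ bootstrapping overhead yields $\RndsUB(N)=O\!\left(KN\,\padG^{-1}\log(\padG^{-1}\bootP^{-1})\right)$.

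Because the statement is essentially a composition of earlier results, there is no single deep obstacle here; the substance resides in Lemmas~\ref{lm:main-BIC}, \ref{lm:policy-existence} and~\ref{lm:main-BIC0}. The main care point is the global BIC patching: one must note that each lemma's conclusion is an \emph{unconditional} inequality $\E[(\mu_j-\mu_i)\ind{A_t=j}]\geq 0$ (equivalently the agent-posterior statement after dividing by $\Pr[A_t=j]$) over the full probability space, so the per-arm, per-phase guarantees genuinely combine into BIC for every round without hidden cross-phase conditioning --- this is precisely what licenses treating the arms sequentially. The secondary point is matching the suitability depth: the policies are produced at depth $\padN$ but invoked at depth $N\geq\padN$, which is harmless only because the ``informed'' property is monotone in the sample budget.
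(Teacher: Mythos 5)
Your proposal is correct and matches the paper's (largely implicit) argument: the paper presents Theorem~\ref{thm:goodalgo} as a direct composition of Claim~\ref{cl:alg-obs}, Lemmas~\ref{lm:main-BIC}, \ref{lm:policy-existence}, \ref{lem:transform} and~\ref{lm:main-BIC0}, which is exactly the assembly you carry out, including the depth-monotonicity of the ``informed'' property and the $\log_{1+\padding}(1/p_j)$ iteration count. No gaps.
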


\begin{remark}\label{rem:init-strict-BIC}
\revedit{If all prior mean rewards $\E[\mu_i]$, $i\in [K]$ are pairwise distinct, then the algorithm in Theorem~\ref{thm:goodalgo} is strictly BIC: \refeq{eq:BIC-defn} is satisfied with a strict inequality, as per the same analysis.}
\end{remark}

Theorem~\ref{thm:goodalgo} provides an explicit formula for the time horizon of \MainALG. This formula is an \emph{upper bound} on the the sample complexity in question. It is polynomially optimal and allows for concrete corollaries, as we discuss in Section~\ref{sec:sample}. Setting $N\geq N_{\TS}$, we collect enough data to bootstrap Thompson Sampling, as per Section~\ref{sec:TS}.

How to compute a suitable policy $\pi_j$ for Lemma~\ref{lm:policy-existence}? For $K=2$ arms, one can take a very simple policy $\pi_2$: given the first $N$ samples of both arms, recommend an arm with a larger empirical reward. (This idea can be extended to an arbitrary $K$, but the padding $\padding$ degrades exponentially in $K$; we omit the easy details.) We provide a computationally efficient implementation for Beta priors in Section~\ref{sec:ext}. However, we do not provide a computational implementation for the general case.

%prove that a very simple policy $\PiSimple_j$ suffices for Beta priors: given the $\padN$ samples of each arm $i\leq j$, recommend arm $j$ if its posterior mean reward is larger than the posterior mean reward averaged over all arms $i<j$; otherwise, choose an arm $i<j$ which maximizes  $\E[\; \mu_i \mid \PiSimple_j\neq j  \;]$.}

%Computing policies $\pi_j$ for \MainALG is not computationally efficient.

%We find policies $\pi_j$ in Lemma~\ref{lm:policy-existence} by solving a zero-sum game up to high precision. This is feasible via standard no-regret learning techniques
%\citep[\eg see Ch. 6 in][]{SchapireBoosting}. However, it is not computationally efficient because the game is very large.

%\subsection{Proof of Lemma~\ref{lm:policy-existence}}
%\label{sec:policy-existence}

%\input{sec-pyramid-2021-02-04}
\newcommand{\frsigma}{\nicefrac{1}{\sigma}}

%\subsection{Polynomial Tightness of the Exploration Algorithm}
%\section{Lower bound for Initial Sampling}
\section{Sample Complexity of Incentivized Exploration}
\label{sec:sample}

We characterize the sample complexity of incentivized exploration: the minimal number of rounds need to collect $N$ samples of each arm. More precisely, we are interested in the smallest time horizon $T$ such that some BIC bandit algorithm collects $N$ samples of each arm almost surely; denote it $\RndsOPT(N)$. Note that $\RndsOPT(N)$ is determined by the joint prior, and does not depend on the mean rewards or observed rewards.

We are particularly interested in $\RndsOPT(1)$, the sample complexity of collecting at least one sample of each arm, and $\RndsOPT(N_\TS)$, the sample complexity of bootstrapping Thompson Sampling. The upper bound
    $\RndsUB(N_\TS)\geq \RndsOPT(N_\TS)$
comes from Section~\ref{sec:sampling}.%
\footnote{Since Theorem~\ref{thm:goodalgo} only provides $\RndsUB(N)$ for
    $N\geq \tilde{N} := \max(\bootN,\padN)$,
we define
    $\RndsUB(N) = \RndsUB(\tilde{N})$
for $N<\tilde{N}$.}
In this section, we derive a lower bound on $\RndsOPT(1)$ and mine the upper/lower bounds for a number of corollaries. In particular, we prove that upper and lower bounds are ``polynomially matching",
investigate how the sample complexity scales with $K$ and smallest variance
    $\sigma^2 = \min_{i\in[K]} \Var(\mP_i)$,
and work out canonical special cases. We focus on, and largely resolve, the distinction between polynomial and exponential dependence on $K$ and $\sigma^{-1}$.

\subsection{Lower bound}
\label{sec:sample-LB}

We provide a lower bound $\RndsLB\leq \RndsOPT(1)$, and prove that it matches $\RndsUB(N_\TS)$, in a specific sense that we explain below. The lower bound is driven by the following parameter:
\begin{align}
\label{eq:MainLB-defn}
\MainLB
    := \max_{j\in [K],\;q\in \Delta_{K}}
        \frac{\E\sbr{ (\mu_j-\mu_q)_-}}{\E\sbr{(\mu_j-\mu_q)_+}}
    =\max_{j\in [K],\;q\in \Delta_{K}}
        \frac{\mu_q^0-\mu_j^0}{\E\sbr{ (\mu_j-\mu_q)_+}}-1.
\end{align}

\begin{theorem}\label{thm:LB}
$\RndsOPT(1) \geq \RndsLB := \max(K,\,\bootN,\, \MainLB)$.
\end{theorem}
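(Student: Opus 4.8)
The plan is to prove the three lower bounds $\RndsOPT(1)\ge K$, $\RndsOPT(1)\ge\MainLB$, and $\RndsOPT(1)\ge\bootN$ separately; the theorem then follows by taking the maximum. Throughout I fix an arbitrary BIC algorithm that collects at least one sample of every arm almost surely within $T$ rounds, and for each part I show that $T$ is at least the corresponding quantity. The bound $T\ge K$ is immediate: each round produces exactly one sample, so collecting a sample of each of the $K$ arms requires at least $K$ rounds.

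For the bound $T\ge\MainLB$ I would use a short averaging argument that is insensitive to the prior. Fix the maximizing pair $(j,q)$ and let $c_j:=\sum_{t\le T}\ind{A_t=j}$ be the number of times arm $j$ is recommended; then $1\le c_j\le T$ almost surely, the lower bound holding because arm $j$ is collected almost surely. Since the algorithm is BIC and agents follow recommendations, for each round $t$ we have $\E[(\mu_j-\mu_i)\ind{A_t=j}]\ge 0$ for every arm $i$; averaging these against the weights $q_i$ and using linearity gives $\E[(\mu_j-\mu_q)\ind{A_t=j}]\ge 0$. Summing over $t\le T$ yields $\E[(\mu_j-\mu_q)\,c_j]\ge 0$, i.e. $\E[(\mu_j-\mu_q)_+\,c_j]\ge\E[(\mu_j-\mu_q)_-\,c_j]$. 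Bounding $c_j\le T$ on the left and $c_j\ge 1$ on the right gives $T\cdot\E[(\mu_j-\mu_q)_+]\ge\E[(\mu_j-\mu_q)_-]$, which rearranges to $T\ge\MainLB$. The interpretation is that each of the at most $T$ recommendations of arm $j$ can ``afford'' at most $\E[(\mu_j-\mu_q)_+]$ of incentive budget, while overcoming the prior deficit costs $\E[(\mu_j-\mu_q)_-]$.

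The bound $T\ge\bootN$ is the most delicate part and is where I expect the real work. By definition of $\bootN$ there is a pair $i^*<j^*$ whose condition fails at $\bootN-1$, i.e. (dropping the irrelevant conditioning by independence) $\E[\mu_{i^*}\mid(\bootN-1)\text{ zero samples of }i^*]\ge\mu_{j^*}^0$. The key structural input is that for Bernoulli rewards and independent non-degenerate priors the posterior mean of an arm is strictly increasing in the number of observed successes (Corollary~\ref{cor:1sampledom} and Lemma~\ref{lem:mlr}); hence among all configurations with a fixed number of observations the posterior mean is minimized by all-zeros, and strictly fewer zeros give a strictly larger posterior mean. Suppose toward a contradiction that $T\le\bootN-1$. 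Let $t_0$ be the earliest round with $\Pr[A_{t_0}=j^*]>0$; it exists and satisfies $t_0\le T$ since $j^*$ is collected almost surely. Because arm $j^*$ is not recommended before $t_0$, it has no samples at round $t_0$, so $\E^{t_0}[\mu_{j^*}]=\mu_{j^*}^0$ and every recommendation of $j^*$ at round $t_0$ is its first. Moreover arm $i^*$ has at most $t_0-1\le\bootN-2$ samples, so by the strict monotonicity above $\E^{t_0}[\mu_{i^*}]>\mu_{j^*}^0$. Consequently, on the positive-probability event $\{A_{t_0}=j^*\}$ we have $\E^{t_0}[\mu_{j^*}-\mu_{i^*}]<0$; since this event depends only on $\mF_{t_0}$ and the algorithm's independent internal coins, $\E[(\mu_{j^*}-\mu_{i^*})\ind{A_{t_0}=j^*}]=\E[\ind{A_{t_0}=j^*}\cdot\E^{t_0}[\mu_{j^*}-\mu_{i^*}]]<0$, contradicting the BIC constraint at round $t_0$.

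I would flag two points as the crux. First, the reason to single out the earliest round $t_0$: the BIC constraint is an average over all recommendations of $j^*$ at a fixed round, which at later rounds mixes ``first-time'' explorations (unfavorable) with ``repeat'' recommendations on paths where $j^*$ already looks good (favorable), so the averaging can mask the violation; choosing $t_0$ removes the repeats and makes the constraint bite. Second, the strictness of the monotonicity — which needs a mild non-degeneracy assumption on the priors — is exactly what converts the $\le 0$ failure in the definition of $\bootN$ into the strict $<0$ required for the contradiction. Establishing the stochastic-monotonicity input and its strictness for Bernoulli rewards is the main technical obligation; the remaining steps are bookkeeping, and the $\MainLB$ argument needs no assumptions on the priors at all.
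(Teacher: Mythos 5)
Your proof is correct, and on the only bound the paper actually proves in detail --- $T\ge\MainLB$ --- your argument is identical to the paper's: the per-round BIC inequality $\E[(\mu_j-\mu_q)\ind{A_t=j}]\ge 0$, summed over rounds to get $\E[c_j(\mu_j-\mu_q)]\ge 0$, then bounded by taking $c_j\le T$ on the positive part and $c_j\ge 1$ (arm $j$ is sampled at least once a.s.) on the negative part. The paper dismisses the bounds $T\ge K$ and $T\ge \bootN$ as ``clear,'' so your detailed argument for $\bootN$ --- isolating the \emph{first} round $t_0$ at which arm $j^*$ is recommended with positive probability, noting that arm $j^*$ then has exactly its prior mean while arm $i^*$ has at most $\bootN-2$ samples and hence posterior mean at least the all-zeros value $f_{i^*}(\bootN-2)$, and invoking the tower property to contradict BIC --- goes beyond what the paper writes, and it is the right argument. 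One refinement to your flagged caveat: the strictness you need (that fewer zeros give a \emph{strictly} larger all-zeros posterior mean on the relevant range) does not require an extra non-degeneracy assumption; it already follows from the paper's standing pairwise non-dominance condition \eqref{eq:pwise-cond}. Indeed, $f_{i^*}(m)-f_{i^*}(m+1)$ is proportional to the variance of the posterior after $m$ zeros, so equality at some $m\le \bootN-2$ forces that posterior to be a point mass, i.e., $\mathrm{supp}(\mP_{i^*})\subseteq\{c,1\}$ with $c=f_{i^*}(\bootN-1)\ge \mu_{j^*}^0$; but then $\Pr[\mu_{i^*}<\mu_{j^*}^0]=0$, contradicting \eqref{eq:pwise-cond}, so a binding pair in the definition of $\bootN$ can never be degenerate in this way. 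With that observation your proof is complete under exactly the paper's hypotheses.
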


\begin{proof}
The lower bounds of $K,\bootN$ are clear so we focus on the last one. Suppose it is possible to sample all arms within $T$ rounds with some BIC algorithm. Fix an arbitrary arm $j$ and distribution $q\in \Delta_{K}$. In each rounds $t$ when arm $j$ is played, it must appear better than $q$, \ie
    $\E\sbr{ \ind{A_t=j}\cdot (\mu_j-\mu_q)}\geq 0$.
Summing this up over all rounds $t$, and letting $n_j$ be the number of times arm $j$ is played (which is a random variable) we obtain
    $\E\sbr{ n_j\cdot (\mu_j-\mu_q)}\geq 0$.
Now, the expression $n_j\cdot (\mu_j-\mu_q)$ is minimized by taking $n_j=1$ when $\mu_j<\mu_q$, and $n_j=T$ when $\mu_j\geq\mu_q$. Consequently we have
\[
    0
    \leq
    \E\sbr{ n_j\cdot (\mu_j-\mu_q)}
    \leq
    T\, \E\sbr{(\mu_j-\mu_q)_+} - \E\sbr{ (\mu_j-\mu_q)_-}.
\]
Rearranging shows that $T\geq \MainLB$ as claimed.
\end{proof}

\begin{remark}
\revedit{While Theorem~\ref{thm:LB} provides a rather tight estimate for $1$-sample complexity, we believe it does not imply any non-trivial lower bound on Bayesian regret of BIC $1$-sampling. To illustrate this point, we describe a stylized attempt to prove such lower bound via $\MainLB$, and explain why it fails.

Recall that BIC $1$-sampling must proceed for at least $\MainLB$ rounds by Theorem~\ref{thm:LB}. Fix arm $j$ that maximizes \refeq{eq:MainLB-defn}. Let us attempt to lower-bound Bayesian regret due to \emph{not} playing this arm, under an (overly pessimistic and unjustified) assumption that it is never played in the first $\MainLB$ rounds. Even then, the lower bound we obtain is at most $1$. 

Fix $q\in \Delta_K$ that maximizes \eqref{eq:MainLB-defn} for arm $j$, so that
\begin{align}\label{eq:attempted-LB-1}
     \MainLB
    =
    \frac{\E\sbr{ (\mu_j-\mu_q)_-}}{\E\sbr{(\mu_j-\mu_q)_+}}.
\end{align}
Assume $q_j=0$ w.l.o.g. (the quantities in \eqref{eq:MainLB-defn} are unchanged if we linearly rescale the remaining entries by $q_i\leftarrow \frac{q_i}{1-q_j}$ and set $q_j=0$). Each of the $\MainLB$ rounds when arm $j$ is not played contributes Bayesian regret 
\begin{align}\label{eq:attempted-LB-2}
    \E\sbr{\max_{i\in [K]}\mu_i-\max_{i\in [K]\setminus \{j\}} \mu_i}
    = \E\sbr{(\mu_j-\max_{i\neq j} \mu_i)_+} 
    \leq \E\sbr{ (\mu_j-\mu_q)_+}
    \leq 1/\MainLB.
\end{align}
(The first inequality in \refeq{eq:attempted-LB-2} holds because $\mu_q\leq \max_{i\neq j} \mu_i$ almost surely, which in turn holds because $q_j=0$. The second inequality in \refeq{eq:attempted-LB-2} holds by \eqref{eq:attempted-LB-1}.) Thus, this stylized argument lower-bounds Bayesian regret by $\MainLB$ times the left-hand side of \eqref{eq:attempted-LB-2}, which unfortunately is at most $1$.}

\end{remark}

\subsection{Polynomially matching upper/lower bounds}
\label{sec:sample-poly}

We express the upper bound $\RndsUB(N_{\TS})$ in terms of the parameters in the lower bound and the smallest variance $\sigma^2$. We conclude that the upper bound is polynomially optimal up to $\sigma^{-O(1)}$ factor.

\begin{corollary}\label{thm:matching}
Suppose the prior $\mP_i$ for each arm $i\in [K]$ has variance at least $\sigma^2$
. Then
\begin{align}\label{eq:thm:matching}
\RndsUB(N_{\TS})
    \leq \tildeO\rbr{ \sigma^{-4}\, K^{4.5}\, \MainLB^3\,\bootN+
        \sigma^{-2}\, K^{2.5}\, \MainLB\, \bootN^2}
    = \tildeO\rbr{ \sigma^{-4}\cdot \RndsLB^{O(1)}}.
\end{align}
\end{corollary}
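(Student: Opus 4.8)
The plan is to bound $\RndsUB(N_{\TS})$ by substituting the explicit formula from Theorem~\ref{thm:goodalgo} and then controlling each parameter ($N_{\TS}$, $\padG$, $\bootP$) in terms of $\sigma$, $K$, $\MainLB$, and $\bootN$. Recall from Theorem~\ref{thm:goodalgo} that
\begin{align*}
\RndsUB(N)
    = O\rbr{ KN\;\padG^{-1}\; \log\rbr{\padG^{-1}\;\bootP^{-1}} },
\end{align*}
evaluated at $N = \max(N_{\TS}, \bootN, \padN)$ since $\RndsUB$ is defined to be constant below $\tilde N = \max(\bootN,\padN)$. So the first step is to expand this product and track the polynomial dependence on each constituent parameter, suppressing logarithmic factors into the $\tildeO$.

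First I would handle the sample-count factor $N$. We need $N \geq \max(N_{\TS}, \bootN, \padN)$. Using the definition $N_{\TS} = C_{\TS}\,\eps_{\TS}^{-2}\log\delta_{\TS}^{-1}$ and $\padN = \padC\,\padG^{-2}\log(\padG^{-1})$, the main task is to lower-bound the ``gap'' parameters $\eps_{\TS}$, $\padG$, and $\delta_{\TS}$ in terms of $\sigma$ and the lower-bound quantity $\MainLB$. The key observation is that $\eps_{\TS} = \min_{i,j}\E[(\mu_i-\mu_j)_+]$ and $\padG = \min_{j, q\in\Delta_{j-1}}\E[(\mu_j-\mu_q)_+]$ are both expected-advantage quantities of exactly the type appearing in the denominator of $\MainLB$. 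Indeed, $\MainLB \geq (\mu_q^0 - \mu_j^0)/\E[(\mu_j-\mu_q)_+] - 1$, so a small expected advantage $\E[(\mu_j-\mu_q)_+]$ forces $\MainLB$ large, which lets me write $\padG^{-1} = O(\MainLB \cdot K^{1/2}/\sigma)$ or similar: the variance assumption $\Var(\mP_i)\geq\sigma^2$ supplies a matching \emph{upper} bound $\mu_q^0 - \mu_j^0 = O(1)$ on the numerator and, more importantly, a lower bound $\E[(\mu_j-\mu_q)_+] = \Omega(\sigma/\sqrt{K})$ type estimate in the non-dominated regime, tying the two gap parameters together. I expect $\padG^{-1} \lesssim \MainLB/\sigma \cdot \poly(K)$ and $\eps_{\TS}^{-1}$ to admit a comparable bound, yielding $N_{\TS} = \tildeO(\sigma^{-2} K^{c} \MainLB^2)$ for some small power $c$.

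Next I would control $\bootP^{-1} = \Pr[\ZEROS_{K,\bootN}]^{-1}$, which only enters logarithmically, so a crude bound suffices: $\Pr[\ZEROS_{K,\bootN}] \geq \prod_{i<K}\Pr[\text{first }\bootN\text{ samples of arm }i\text{ are }0]$ is at worst exponentially small in $K\bootN$, contributing only a $\tildeO(K\bootN)$ additive term inside the logarithm, absorbed by the $\tildeO$. Assembling the pieces, the dominant product $KN\padG^{-1}$ splits according to whether $N$ is governed by $\padN \sim \padG^{-2}$ (giving the $\sigma^{-4}K^{4.5}\MainLB^3\bootN$ term after collecting $K\cdot\padG^{-2}\cdot\padG^{-1}$ with the $\padG^{-1}\lesssim\MainLB\poly(K)/\sigma$ substitution) or by $\bootN$ (giving the $\sigma^{-2}K^{2.5}\MainLB\bootN^2$ term). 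The final equality $= \tildeO(\sigma^{-4}\RndsLB^{O(1)})$ then follows immediately because $\RndsLB = \max(K,\bootN,\MainLB)$ dominates each of $K$, $\bootN$, and $\MainLB$ individually, so every polynomial factor in these three quantities is at most $\RndsLB^{O(1)}$.

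The main obstacle I anticipate is pinning down the precise polynomial exponents in $K$ and $\sigma^{-1}$ when lower-bounding $\padG$ (equivalently upper-bounding $\padG^{-1}$) against the convex-combination denominator in $\MainLB$. The quantity $\padG$ minimizes over \emph{all} $q\in\Delta_{j-1}$, whereas $\MainLB$ maximizes a ratio over $q\in\Delta_K$, so I must argue that the worst-case convex combination for $\padG$ is witnessed (up to $\poly(K)$ and $\sigma^{-O(1)}$ slack) by the ratio structure controlling $\MainLB$. This requires using the variance lower bound to prevent $\E[(\mu_j-\mu_q)_+]$ from being anomalously small relative to the numerator gap $\mu_q^0-\mu_j^0$, essentially an anti-concentration argument showing that with $\Var(\mP_i)\geq\sigma^2$ the positive part of $\mu_j-\mu_q$ carries at least a $\sigma/\poly(K)$ share of mass; getting the exponent of $K$ exactly right (the $K^{4.5}$ versus $K^{2.5}$ split) is where the bookkeeping is delicate, but the qualitative conclusion $\RndsUB(N_{\TS}) = \tildeO(\sigma^{-4}\RndsLB^{O(1)})$ is robust to the exact constants.
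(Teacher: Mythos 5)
Your overall architecture matches the paper's proof (which appears as the more refined Theorem~\ref{thm:tscomplexity} in the appendix): plug $N=\max(N_{\TS},\bootN,\padN)$ into the formula of Theorem~\ref{thm:goodalgo}, control $\eps_{\TS}^{-1}$ and $\padG^{-1}$ by combining the ratio information in the definition of $\MainLB$ with variance-based anti-concentration of $\mu_j-\mu_q$, and finish using $\RndsLB=\max(K,\bootN,\MainLB)$. The paper packages your ``dominated vs.\ non-dominated regime'' case split as one clean multiplicative decomposition: with
$\widehat L:=\sup_{j,\,q:\,q_j=0}\; 1/\E\sbr{(\mu_j-\mu_q)_+}$,
it writes
$\widehat L\leq \sup \frac{\E[(\mu_j-\mu_q)_+]+\E[(\mu_j-\mu_q)_-]}{\E[(\mu_j-\mu_q)_+]}\cdot \sup\frac{1}{\E[|\mu_j-\mu_q|]} \leq O(\MainLB)\cdot O(\sqrt{K}/\sigma)$,
where the second supremum is bounded because either $\mu_j$ or $\mu_q$ has standard deviation $\Omega(\sigma/\sqrt{K})$; this is the rigorous form of the estimate you sketch.

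However, two steps in your proposal do not go through as written. First, your treatment of $\bootP$ is false as stated: the claim that $\Pr\sbr{\ZEROS_{K,\bootN}}$ is ``at worst exponentially small in $K\bootN$'' fails for general priors, since $\Pr[\text{first }\bootN\text{ samples of arm }i\text{ are }0]=\E[(1-\mu_i)^{\bootN}]$ can be arbitrarily small compared to $e^{-\bootN}$ when $\mP_i$ concentrates near $1$. This is precisely where the variance hypothesis must be spent: $\Var(\mP_i)\geq\sigma^2$ forces $\Pr[\mu_i\leq 1-\sigma/2]\geq \Omega(\sigma^2)$, hence $\E[(1-\mu_i)^{\bootN}]\geq \Omega(\sigma^2)\,(\sigma/2)^{\bootN}$ and $\log(\bootP^{-1})=O(K\bootN\log(1/\sigma))$; the paper's proof does this via its $\sigma$-non-degeneracy condition, obtaining $\log(\bootP^{-1})=O(K\bootN/\sigma)$. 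Moreover, this quantity is a genuine polynomial multiplier in the final bound, not a log factor ``absorbed by the $\tildeO$'': your own first term $\sigma^{-4}K^{4.5}\MainLB^3\bootN$ has no source for the factor $\bootN$ other than this logarithm, so your accounting is internally inconsistent on this point. Second, you list $\delta_{\TS}$ among the parameters to control but never control it; since $N_{\TS}=C_{\TS}\,\eps_{\TS}^{-2}\log\delta_{\TS}^{-1}$, one needs a lower bound such as the FKG-based $\delta_{\TS}\geq\eps_{\TS}^{K}$ (established in the proof of Corollary~\ref{cor:TSsamples}), which gives $N_{\TS}=\tildeO(K\,\eps_{\TS}^{-2})$; this extra factor of $K$ is part of how the exponent $K^{4.5}$ arises. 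Both gaps are fixable with ingredients already in the paper, and the qualitative conclusion $\RndsUB(N_{\TS})=\tildeO(\sigma^{-4}\RndsLB^{O(1)})$ survives, but as written these two steps are where the proof breaks.
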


This result suffices to resolve polynomial vs. exponential dependence on $\sigma^{-1}$ or $K$ or any other parameter. To make this statement explicit, suppose we have a family of problem instances indexed by a single parameter
    $\sigma^2 = \min_{i\in [K]} \Var(\mP_i)$,
and for this family the lower bound scales as $\RndsLB = f(\frsigma)$. Then
\begin{align}\label{eq:matching-resolve}
f(\frsigma) = \RndsLB \leq \RndsOPT(N) \leq \RndsUB(N_\TS) \leq \rbr{f(\frsigma)}^{O(1)}
\quad\forall N\in[N_\TS].
\end{align}
So, both upper and lower bounds are polynomial (resp., exponential) in $\frsigma$ when so is $f(\frsigma)$. A similar statement holds for any family of problem instances indexed by $K$ (or any other parameter), where $\sigma$ is lower-bounded by an absolute positive constant. In particular, assuming the lower bound scales as $\RndsLB = g(K)$, we have
\eqref{eq:matching-resolve} with $f(\frsigma)$ replaced by $g(K)$.

\subsection{Dependence on the number of arms}
\label{sec:sample-K}

To investigate how $\RndsOPT(\cdot)$ scales with $K$, we need to separate the dependence on $K$ from the dependence on the priors $\mP_i$, $i\in[K]$. Therefore,
we posit that all priors come from some (possibly infinite) collection $\mC$;
such problem instances are called $\mC$-consistent. Keeping $\mC$ fixed, we study the dependence on $K$ in the worst case over all $\mC$-consistent instances.

We find a curious dichotomy: under a mild non-degeneracy assumption, either
    $\RndsUB(N_\TS)= O_{\mC}(K^3)$
for all $\mC$-consistent instances (under a fairly reasonable assumption on $\mC$), and otherwise
    $\RndsLB(1)> \exp\rbr{\Omega_{\mC}(K)}$
for some $\mC$-consistent instance. In its simplest form, this dichotomy can be stated for a finite collection $\mC$. For each prior $\mP\in \mC$, we consider the supremum of its support,
    $\sup(\mP) := \sup(\text{support}(\mP))$.
We relate these quantities to the largest prior mean reward over $\mC$, denoted
    $\Phi_\mC = \sup_{\mP\in \mC}\; \E[\mP]$.

\begin{theorem}\label{thm:K-simple}
Let $\mC$ be a finite collection of distributions over $[0,1]$. Then
\begin{OneLiners}
\item[(a)] If $\sup(\mP)> \Phi_\mC$ for all priors $\mP\in\mC$,
then $\RndsUB(N_\TS)= O_{\mC}(K^3)$
for all $\mC$-consistent instances.

\item[(b)] If $\sup(\mP)< \Phi_\mC$ for some $\mP\in\mC$,
then $\RndsLB(1)> \exp\rbr{\Omega_{\mC}(K)}$
for some $\mC$-consistent instance.
\end{OneLiners}
Either (a) or (b) holds assuming that
    $\min_{\mP\in\mC}\,\sup(\mP)\neq \Phi_\mC$.
\end{theorem}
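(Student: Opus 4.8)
The plan is to reduce the dichotomy to controlling the instance-dependent parameters that appear in the upper and lower bounds. For part (a) I will use the explicit running-time bound $\RndsUB(N)=O\rbr{KN\,\padG^{-1}\,\log(\padG^{-1}\bootP^{-1})}$ from Theorem~\ref{thm:goodalgo} with $N=N_\TS$, and for part (b) I will use $\RndsLB\geq \MainLB$ from Theorem~\ref{thm:LB}. The discriminating quantity is $\min_{\mP\in\mC}\sup(\mP)$ versus $\Phi_\mC$: if it exceeds $\Phi_\mC$ then $\sup(\mP)>\Phi_\mC$ for every $\mP\in\mC$ and we are in case (a); if it is smaller then the minimizing prior $\mP_0$ has $\sup(\mP_0)<\Phi_\mC$ and we are in case (b). This covers every situation except the excluded equality.

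For part (a), the heart of the matter is a uniform-in-$K$ lower bound $\padG=\Omega_\mC(1)$. Fix an arm $j$ and $q\in\Delta_{j-1}$, and set $\eta:=\min_{\mP\in\mC}\rbr{\sup(\mP)-\Phi_\mC}>0$. Since arm $j$ is independent of the arms $i<j$ and every prior mean is at most $\Phi_\mC$, we have $\E[\mu_q]\leq\Phi_\mC$, so the nonnegative variable $\mu_q$ satisfies $\Pr[\mu_q\leq\Phi_\mC+\eta/4]\geq \tfrac{\eta/4}{\Phi_\mC+\eta/4}=\Omega_\mC(1)$ by Markov's inequality. Meanwhile $\sup(\mP_j)\geq\Phi_\mC+\eta$ forces $\Pr[\mu_j\geq\Phi_\mC+\eta/2]=\Omega_\mC(1)$. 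Multiplying these independent events, on which $\mu_j-\mu_q\geq\eta/4$, yields $\padG\geq\tfrac{\eta}{4}\cdot\Omega_\mC(1)=\Omega_\mC(1)$, hence also $\padN=O_\mC(1)$. Next I will show $\bootN=O_\mC(1)$: the conditional mean $\E[\mu_i\mid \text{first }N_0\text{ samples of arm }i\text{ all }0]$ converges to $\inf(\text{support}(\mP_i))$ as $N_0\to\infty$, while pairwise non-dominance \eqref{eq:pwise-cond} gives $\inf(\text{support}(\mP_i))<\E[\mu_j]=\E[\mu_j\mid\ZEROS_{j,N_0}]$ for all $i<j$, so \eqref{eq:cl:alg-obs:boot} holds for a finite $N_0$; finiteness of $\mC$ makes the threshold uniform. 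Finally, $\eps_{\TS}$ and $\sigma$ are bounded below by $\mC$-dependent constants, so $\log\delta_\TS^{-1}\leq K\log\eps_{\TS}^{-1}$ gives $N_\TS=O_\mC(K)$ (as in Corollary~\ref{cor:TSsamples}), while $\bootP=\Pr[\ZEROS_{K,\bootN}]\geq \rho_\mC^{K-1}$ for some constant $\rho_\mC>0$ gives $\log\bootP^{-1}=O_\mC(K)$. Substituting $N=N_\TS=O_\mC(K)$, $\padG^{-1}=O_\mC(1)$ and $\log(\padG^{-1}\bootP^{-1})=O_\mC(K)$ into the formula of Theorem~\ref{thm:goodalgo} yields $\RndsUB(N_\TS)=O_\mC(K^3)$.

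For part (b), I will construct a single bad $\mC$-consistent instance. Let $\mP^*\in\mC$ attain $\E[\mP^*]=\Phi_\mC$ and let $\mP_0\in\mC$ satisfy $\sup(\mP_0)=\Phi_\mC-\theta$ for some $\theta>0$. Take $K-1$ arms with prior $\mP^*$ and one arm (call it $j$) with prior $\mP_0$, and in $\MainLB$ choose $q$ uniform over the $K-1$ copies of $\mP^*$, so $\mu_q=\tfrac{1}{K-1}\sum_\ell\mu^{(\ell)}$ is an average of i.i.d.\ $[0,1]$ variables of mean $\Phi_\mC$. The numerator is bounded below by a constant, $\E[(\mu_j-\mu_q)_-]=\E[(\mu_q-\mu_j)_+]\geq\E[\mu_q-\mu_j]=\Phi_\mC-\E[\mP_0]=\Omega_\mC(1)$. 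For the denominator, $\mu_j\leq\sup(\mP_0)=\Phi_\mC-\theta$ forces $\{\mu_j>\mu_q\}\subseteq\{\mu_q<\Phi_\mC-\theta\}$, and Hoeffding's inequality for the average of $K-1$ bounded i.i.d.\ variables gives $\E[(\mu_j-\mu_q)_+]\leq\Pr[\mu_q<\Phi_\mC-\theta]\leq e^{-2(K-1)\theta^2}=e^{-\Omega_\mC(K)}$. Dividing, $\MainLB\geq e^{\Omega_\mC(K)}$, so $\RndsLB(1)\geq\MainLB>\exp(\Omega_\mC(K))$ by Theorem~\ref{thm:LB}. (If this instance fails pairwise non-dominance one may perturb the chosen priors within $\mC$, or simply note that a non-explorable instance has $\RndsOPT(1)=\infty$; either way the claimed bound holds.)

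The main obstacle is the uniform lower bound on $\padG$ in part (a): a priori the adversarial mixture $q$ could place $\mu_q$ close to $\Phi_\mC$, and one must rule out $\mu_q$ exceeding $\sup(\mP_j)$ for all $q$ and all $K$ simultaneously. The Markov estimate on the nonnegative average $\mu_q$, combined with the strict gap $\sup(\mP_j)>\Phi_\mC$, is exactly what breaks this; it is the mirror image of the concentration estimate that drives the exponential lower bound in part (b).
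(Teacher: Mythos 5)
Your proposal is correct and takes essentially the same route as the paper: the paper derives Theorem~\ref{thm:K-simple} from the quantitative Theorem~\ref{thm:K-general}, whose part (a) likewise bounds the parameters $\padG$, $\padN$, $\bootN$, $N_{\TS}$, $\log(\bootP^{-1})$ (Lemma~\ref{lem:easyexploreparameters}) and plugs them into Theorem~\ref{thm:goodalgo}, and whose part (b) uses the identical construction ($K-1$ arms from the mean-maximizing prior, one arm from the capped prior, $q$ uniform) with a Chernoff/Hoeffding bound on $\mu_q$ and Theorem~\ref{thm:LB}. Your only deviations are cosmetic---Markov plus independence in place of the paper's Jensen argument for $\padG\geq\delta$, and a convergence-plus-finiteness argument in place of a likelihood-ratio estimate for $\bootN$---which exploit finiteness of $\mC$ directly rather than through the $\delta$-easy/$\delta$-hard quantification.
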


The assumption in part (a) is very reasonable. For instance, it holds whenever each prior has a positive density everywhere on the $[0,1]$ interval.

A quantitative version of Theorem~\ref{thm:K-simple}, which is the version we actually prove, is more difficult to state. For a given parameter $\delta>0$, the easy vs. hard distinction is as follows:
\begin{align}
\text{$\mC$ is called \emph{$\delta$-easy} if}\quad
    &\inf_{\mP\in \mC}\;
        \E_{\mu\sim \mP}\sbr{ (\mu - \Phi_\mC)_+}>\delta
        \label{eq:K-pwise-inv} \\
\text{$\mC$ is called \emph{$\delta$-hard} if}\quad
    & \inf_{\mP\in \mC}\;
\Pr_{\mu\sim \mP}\sbr{ \mu \geq  \Phi_\mC-\delta}=0.
    \label{eq:K-pwise-inv2}
\end{align}
Moreover, we need a quantitative version of pairwise non-dominance \eqref{eq:pwise-cond}: $\mC$ is called \emph{$\delta$-non-dominant} if
\begin{align}\label{eq:K-pwise-quant}
\E\sbr{ (\mu_j^0-\mu_i)_+}\geq \delta
    \quad\text{for every $\mC$-consistent instance}.
\end{align}
In terms of these properties, the characterization extends to infinite collections $\mC$. Note that if $\mC$ is $\delta$-easy (resp., $\delta$-non-dominant) then so is any subset of $\mC$. Likewise, if $\mC$ is $\delta$-hard, then so is any superset of $\mC$.

\begin{theorem}\label{thm:K-general}
Let $\mC$ be a (finite or infinite) collection of distributions over $[0,1]$.
%which is $\delta$-explorable, for some $\delta>0$.
 Then

\begin{OneLiners}
\item[(a)] If $\mC$ is $\delta$-easy and $\delta$-non-dominant,
then $\RndsUB(N_\TS)= \tildeO\rbr{K^3/\delta^4}$
for all $\mC$-consistent instances.

\item[(b)] If $\mC$ is $\delta$-hard,
then $\RndsLB(1)> \exp\rbr{\Omega_{\delta}(K)}$
for some $\mC$-consistent instance.
\end{OneLiners}
If $\mC$ is finite, then it is either $\delta$-easy or $\delta$-hard for some $\delta>0$, provided that
    $\min_{\mP\in\mC}\,\sup(\mP)\neq \Phi_\mC$.
\end{theorem}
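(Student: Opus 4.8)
\textbf{Part (a): reduce to the parameter bounds behind Theorem~\ref{thm:goodalgo}.}
My plan is to run \MainALG with $N=\max(N_{\TS},\padN,\bootN)$ and to control every prior-dependent quantity in $\RndsUB(N)=O\rbr{KN\,\padG^{-1}\log(\padG^{-1}\bootP^{-1})}$ by a function of $\delta$ times at most a single factor of $K$. The cleanest bounds come from a Jensen argument: for fixed $j$ the map $v\mapsto \E[(\mu_j-v)_+]$ is convex and decreasing, so for any $q\in\Delta_{j-1}$, independence of $\mu_j$ from $\{\mu_i\}_{i<j}$ gives $\E[(\mu_j-\mu_q)_+]\geq \E[(\mu_j-\E[\mu_q])_+]\geq \E[(\mu_j-\Phi_\mC)_+]>\delta$, using $\E[\mu_q]=\sum_{i<j}q_i\mu_i^0\leq\Phi_\mC$ and $\delta$-easiness. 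Hence $\padG\geq\delta$, and the identical argument gives $\eps_{\TS}\geq\delta$; consequently $\padN=\tildeO(\delta^{-2})$ by \eqref{eq:params-padded}. For $\delta_{\TS}$ I would reuse the FKG product bound from the proof of Corollary~\ref{cor:TSsamples}, namely $\Pr[A^*=i]\geq\prod_{k\neq i}\Pr[\mu_i\geq\mu_k]$, together with $\Pr[\mu_i\geq\mu_k]\geq\Pr[\mu_i>\Phi_\mC]\,\Pr[\mu_k\leq\Phi_\mC]\geq\delta^2$ (both tails are $\Omega(\delta)$ under $\delta$-easiness). This yields $\delta_{\TS}\geq\delta^{2K}$ and therefore $N_{\TS}=O(\eps_{\TS}^{-2}\log\delta_{\TS}^{-1})=\tildeO_\delta(K)$.

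\textbf{Part (a), continued: the bootstrapping parameters.}
It remains to handle $\bootN$ and $\bootP$. Since $\delta$-easiness forces each $\mu_i$ to place $\Omega(\delta)$ mass above $\Phi_\mC+\Omega(\delta)$, every prior mean obeys $\mu_i^0=\Omega(\delta^2)$; combined with $\delta$-non-dominance (guaranteeing $\Omega(\delta)$ mass of $\mu_i$ strictly below $\mu_K^0$), conditioning on $N_0=O(\poly(1/\delta))$ zero rewards of arm $i$ drives $\E[\mu_i\mid \ZEROS_{j,N_0}]$ below $\mu_j^0$ for every $i<j$, so $\bootN=O(\poly(1/\delta))$ is constant in $K$. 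Then $\bootP=\prod_{i<K}\E[(1-\mu_i)^{\bootN}]\geq(1-\Phi_\mC)^{(K-1)\bootN}=\delta^{O_\delta(K)}$ by Jensen, so $\log\bootP^{-1}=O_\delta(K)$. Substituting $N=\tildeO_\delta(K)$, $\padG^{-1}=O(1/\delta)$ and $\log(\padG^{-1}\bootP^{-1})=O_\delta(K)$ into Theorem~\ref{thm:goodalgo} gives $\RndsUB(N_{\TS})=K\cdot\tildeO_\delta(K)\cdot O(1/\delta)\cdot O_\delta(K)=\tildeO(K^3/\delta^4)$. It is important to go through Theorem~\ref{thm:goodalgo} directly here, since Corollary~\ref{thm:matching} would only deliver $K^{4.5}$.

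\textbf{Part (b): an exponential lower bound via $\MainLB$.}
Here I would engineer an instance that forces $\MainLB$ to be exponential and then invoke Theorem~\ref{thm:LB}. Take $K-1$ ``good'' arms with a common prior $\mP^*\in\mC$ satisfying $\E[\mP^*]\geq\Phi_\mC-\delta/4$ (such a prior exists because $\Phi_\mC$ is the supremum of prior means), and one ``hard'' arm $j$ with prior $\mP^{\mathrm{hard}}\in\mC$ satisfying $\Pr_{\mu\sim\mP^{\mathrm{hard}}}[\mu\geq\Phi_\mC-\delta]\leq\eta$ for a target $\eta>0$ --- such a prior exists for \emph{every} $\eta$ precisely because $\mC$ is $\delta$-hard. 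Taking $q$ uniform over the good arms, Hoeffding's inequality gives $\mu_q\geq\Phi_\mC-\delta/2$ except with probability $e^{-\Omega(\delta^2 K)}$, whence $\E[(\mu_j-\mu_q)_+]\leq\eta+e^{-\Omega(\delta^2 K)}$ while $\E[(\mu_j-\mu_q)_-]=\Omega(\delta)$. Choosing $\eta=e^{-\Omega(\delta^2 K)}$ makes $\MainLB\geq\Omega(\delta)\,e^{\Omega(\delta^2 K)}=\exp(\Omega_\delta(K))$, so $\RndsLB(1)\geq\MainLB>\exp(\Omega_\delta(K))$ by Theorem~\ref{thm:LB}.

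\textbf{The dichotomy, and where the difficulty lies.}
For finite $\mC$, set $s^*=\min_{\mP\in\mC}\sup(\mP)$ and recall $\Phi_\mC=\max_{\mP\in\mC}\E[\mP]$. If $s^*>\Phi_\mC$, every $\mP$ has support reaching above $\Phi_\mC$, so $\E_{\mu\sim\mP}[(\mu-\Phi_\mC)_+]>0$; the minimum of these finitely many positive numbers is some $2\delta>0$, making $\mC$ $\delta$-easy (and $\delta$-non-dominant after shrinking $\delta$, using finiteness and pairwise non-dominance). If $s^*<\Phi_\mC$, the minimizing prior $\mP_0$ has $\Pr_{\mu\sim\mP_0}[\mu\geq\Phi_\mC-\delta]=0$ for every $\delta<\Phi_\mC-s^*$, so $\mC$ is $\delta$-hard; the excluded boundary is exactly $s^*=\Phi_\mC$. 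I expect the two genuine obstacles to be: in (b), recognizing that the exponential factor must be \emph{manufactured} by averaging $\Theta(K)$ good arms into the reference $q$ (concentration then supplies the $e^{-\Omega(\delta^2 K)}$) in combination with the freedom to select an arbitrarily extreme hard prior; and in (a), the bootstrapping estimate $\bootN=O(\poly(1/\delta))$, which rests on the non-obvious lower bound $\mu_K^0=\Omega(\delta^2)$ extracted from $\delta$-easiness, so that the posterior under $\ZEROS_{j,N_0}$ can be pushed below $\mu_K^0$ in a number of zeros independent of $K$.
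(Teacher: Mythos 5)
Your route is essentially the paper's own: part (a) by bounding each prior-dependent parameter ($\eps_\TS$, $\delta_\TS$, $\padG$, $\padN$, $\bootN$, $\bootP$) in terms of $\delta$ and plugging into Theorem~\ref{thm:goodalgo} (this is exactly what the paper does via Lemma~\ref{lem:easyexploreparameters}); part (b) by the very same construction the paper uses ($K-1$ arms drawn from a prior whose mean is within $O(\delta)$ of $\Phi_\mC$, one arm from an $\eta$-extreme hard prior, $q$ uniform over the good arms, Hoeffding concentration, and $\eta=e^{-\Omega(\delta^2K)}$), fed into $\MainLB$ and Theorem~\ref{thm:LB}; and the finite-$\mC$ dichotomy by comparing $\min_{\mP\in\mC}\sup(\mP)$ with $\Phi_\mC$, a step the paper states but does not spell out. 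Part (b) and the dichotomy argument are correct as written, as are the Jensen argument giving $\padG,\eps_\TS\geq\delta$, the FKG product bound for $\delta_\TS$ (your lower bound $\Pr[\mu_k\leq\Phi_\mC]=\Omega(\delta)$ is valid, since $\E[\mu_k]\leq\Phi_\mC$ forces $\E[(\mu_k-\Phi_\mC)_-]\geq\E[(\mu_k-\Phi_\mC)_+]>\delta$), and the Jensen bound $\bootP\geq(1-\Phi_\mC)^{(K-1)\bootN}$ with $1-\Phi_\mC\geq\delta/2$.

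The gap is in the bootstrapping estimate in part (a). You assert that $N_0=O(\poly(1/\delta))$ zero rewards drive $\E[\mu_i\mid\ZEROS_{j,N_0}]$ below $\mu_j^0$, but you give no argument, and the facts you cite in support do not produce one: the claim $\mu_i^0=\Omega(\delta^2)$ is irrelevant here, and ``$\Omega(\delta)$ mass strictly below $\mu_K^0$'' is not enough without a quantitative margin --- if that mass sat at distance $\eps\to 0$ below $\mu_j^0$, the number of zeros needed would blow up like $1/\eps$. What is actually needed is the paper's likelihood-ratio computation: non-dominance gives $\Pr[\mu_i\leq\mu_j^0-\delta/2]\geq\delta/2$, and each block of $\delta^{-1}$ zero rewards multiplies the likelihood of $\{\mu_i\leq\mu_j^0-\delta/2\}$ relative to $\{\mu_i\geq\mu_j^0\}$ by a constant factor, since $\bigl(\tfrac{1-\mu_j^0+\delta/2}{1-\mu_j^0}\bigr)^{n}\geq(1+\delta/2)^{n}$; hence $\bootN=\tildeO(\delta^{-1})$. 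This sharper bound is not optional for the statement you are proving: your closing arithmetic $K\cdot\tildeO_\delta(K)\cdot O(1/\delta)\cdot O_\delta(K)=\tildeO(K^3/\delta^4)$ silently requires $N=\tildeO(K\delta^{-2})$ and $\log(\bootP^{-1})=\tildeO(K\delta^{-1})$, and the latter needs $\bootN=\tildeO(\delta^{-1})$ because $\log\bootP^{-1}\leq K\,\bootN\,\log(2/\delta)$. With only $\bootN=\poly(1/\delta)$ you obtain $\tildeO(K^3\cdot\poly(1/\delta))$ --- still enough for the polynomial-versus-exponential dichotomy in $K$, which is the main point, but not the stated $\delta^{-4}$ exponent.
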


\begin{proof}
For part (a) we upper-bound in Lemma~\ref{lem:easyexploreparameters} the prior-dependent parameters as follows:
    $N_{\TS}=\tildeO(K\delta^{-2})$
for Thompson Sampling,
    $\padG\geq \delta$ and $\padN=\tildeO(\delta^{-2})$
for the padded phase, and
    $\bootN=\tildeO(\delta^{-1})$
and
    $\log(\bootP^{-1})=\tildeO(K\delta^{-1})$
for the bootstrapping phase. Then part (a) follows from Theorem~\ref{thm:goodalgo}.

For part (b), assume $\mC$ is $\delta$-hard. Then for any $\eta>0$ there exist priors $\mP,\mP'\in\mC$ with
\[\Pr_{\mu\sim \mP}\sbr{ \mu\leq \E[\mP']-\nicefrac{\delta}{2}}\geq 1-\eta.\]
Consider the problem instance in which
    $\mu_1,\dots,\mu_{K-1}\sim \mP'$ and $\mu_K\sim \mP$.
Let $q$ be the uniform distribution on $[K-1]$. %\[q=\left(\frac{1}{K-1},\frac{1}{K-1},\dots,\frac{1}{K-1}\right)\in\Delta_{K-1}\]
Using Chernoff Bounds, we have
\[\Pr\sbr{ \mu_q\leq \E[\mP']-\nicefrac{\delta}{2}} \leq e^{-\Omega(K\delta^{2})}.\]
Consequently,
    $\E\sbr{ (\mu_K-\mu_q)_+ } \leq \Pr\sbr{ \mu_K\geq \mu_q} \leq 2\eta$.

Moreover, it holds that
    $\E[\mu_q-\mu_K]\geq\nicefrac{\delta}{2}-\eta$.
Taking
    $\eta\leq \min\rbr{ \nicefrac{\delta}{4}\,,\,e^{-\Omega(K\delta^{2})}}$,
we conclude that
\[\RndsLB\geq \MainLB
    \geq \frac{\E[\mu_q-\mu_K]}{\E[(\mu_K-\mu_q)_+]}-1
    \geq \nicefrac{\delta}{4}\cdot e^{-\Omega(K\delta^{2})}-1. \qedhere\]
\end{proof}

%\begin{lemma}\label{lm:K-general-non-degeneracy}
%The collection $\mC$ in Theorem~\ref{thm:K-general} is either $\delta$-easy for some $\delta>0$ or $\delta$-hard for some $\delta>0$, assuming the following non-degeneracy condition: \ascomment{supposed to be an inverse}
%\[\lim_{\eps\to 1}\inf_{\mP} F_{\mP}^{-1}(1-\eps)=\Phi_\mC.\]
%Here $F_{\mP}^{-1}$ is the inverse cumulative probability function for $\mP$.
%\end{lemma}

%\ascomment{old text below.}
\OMIT{ %%%%%%%%%
    In all cases we assume that

    \[\mathbb E[(\mu_j^0-\mu_i)_+]\geq \delta\]

    for all $\mu_i,\mu_j\sim \mP_i,\mP_j\in \mC$. Up to the uniform value of $\delta$ this asserts that all arms can be explored. The positive condition, that $\mC$ is $\delta$-easy to explore, is as follows:

The opposite condition, that $\mC$ is $\delta$-hard to explore, states:

\begin{align}\label{eq:K-pwise-inv2}
\inf_{\mP\in \mC}\quad
\Pr_{\mu\sim \mP}\sbr{ \mu \geq  \Phi_\mC-\delta}=0.
\end{align}

% \mscomment{Whoops both defns are actually equivalent when you put an inf in both, I got mixed up before.}

When $|\mC|<\infty$, one of the above holds unless

\[\inf_{\mP} \sup(\text{support}(\mP))=\Phi_\mC.\]

For fully general $\mC$, one of the above conditions holds for some $\delta>0$ unless we have the equality:

\[\lim_{\eps\to 1}\inf_{\mP} F_{\mP}^{-1}(1-\eps)=\Phi_\mC.\]

Here $F_{\mP}^{-1}$ is the inverse cumulative density function for $\mP$. If a strict inequality holds then $\mC$ is either $\delta$-easy or $\delta$-hard for some constant $\delta>0$. As such we consider families $\mC$ which are neither easy nor hard to be edge cases.

In particular,
    $\Pr[\mu_i > \mu_j^0]>0$,
\ie any arm $i$ can be better than the prior mean of any other arm $j$. Recall that the pairwise non-dominance condition \eqref{eq:pwise-cond} is similar but ``inverted": any arm $i$ can be \emph{worse} than the prior mean of $j$. Let us spell out a version of the latter condition for $\mC$:
\begin{align}\label{eq:K-pwise}
\inf_{\text{priors }\mP\in \mC}\quad
\Pr_{\mu\sim \mP}\sbr{ \mu < \Phi^{\inf}_\mC}>0,
    \quad\text{where}\quad
\Phi^{\inf}_\mC = \inf_{\text{priors }\mP\in \mC}\; \E[\mP].
\end{align}
Both conditions can be ``easily" satisfied simultaneously: they merely assert that the prior mean rewards lie in some interval
    $\sbr{\Phi^{\inf}_\mC,\, \Phi_\mC}$
with non-empty ``tails"
    $[0,\Phi^{\inf}_\mC)$
and
    $(\Phi_\mC,\,1]$,
and each prior assigns probability at least $\eps>0$ to both ``tails". Thus, the formal statement is as follows.

\begin{theorem}\label{thm:dichotomy-K}
Let $\mC$ be an arbitrary (and possibly infinite) collection of distributions over $[0,1]$ which is $\delta$-easy to explore. If $\mC$ satisfies \eqref{eq:K-pwise-inv}, then
    $\RndsUB(N_\TS)= \tilde O(K^3\delta^{-4})$
for all $\mC$-consistent instances. On the other hand if $\mC$ is $\delta$-hard to explore, there exists a $\mC$-consistent instance such that
    $\RndsLB(1)> \exp\rbr{\Omega_{\delta}(K)}$.

\end{theorem}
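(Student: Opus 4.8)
The plan is to treat the three components separately: the upper bound for $\delta$-easy collections, the lower bound for $\delta$-hard collections, and the easy-vs-hard dichotomy for finite $\mC$. For part (a), I would reduce everything to Theorem~\ref{thm:goodalgo}, which already yields
\[
\RndsUB(N) = O\rbr{KN\,\padG^{-1}\,\log\rbr{\padG^{-1}\,\bootP^{-1}}}.
\]
The work is then to bound each prior-dependent quantity entering this formula (and $N_\TS$) uniformly over all $\mC$-consistent instances in terms of $K$ and $\delta$ alone. Concretely, I would show: (i) $\eps_\TS=\Omega(\delta)$, so that the FKG estimate $\Pr[A^*=i]\geq \eps_\TS^{K}$ from the proof of Corollary~\ref{cor:TSsamples} gives $N_\TS=\tildeO(K\delta^{-2})$; (ii) $\padG=\Omega(\delta)$, by comparing $(\mu_j-\mu_q)_+$ pointwise against $(\mu_j-\Phi_\mC)_+\cdot\ind{\mu_q\leq\Phi_\mC}$ and using independence of arm $j$ from its competitors together with $\delta$-easiness (with $\delta$-non-dominance keeping the quantity from collapsing), whence $\padN=\tildeO(\delta^{-2})$ by \eqref{eq:params-padded}; and (iii) the bootstrapping parameters $\bootN=\tildeO(\delta^{-1})$ and $\log(\bootP^{-1})=\tildeO(K\delta^{-1})$. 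Substituting these into the displayed bound with $N=N_\TS$ and collecting powers of $\delta$ yields $\RndsUB(N_\TS)=\tildeO(K^3\delta^{-4})$.

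For part (b), I would exhibit a single bad $\mC$-consistent instance and invoke $\RndsOPT(1)\geq\MainLB$ from Theorem~\ref{thm:LB}. Choose priors $\mP,\mP'\in\mC$ witnessing $\delta$-hardness, so that for arbitrarily small $\eta>0$ we have $\Pr_{\mu\sim\mP}[\mu\leq \E[\mP']-\nicefrac{\delta}{2}]\geq 1-\eta$, and set $\mu_1,\dots,\mu_{K-1}\sim\mP'$ and $\mu_K\sim\mP$. Taking $q$ uniform on $[K-1]$, a Hoeffding bound shows the average $\mu_q$ stays above $\E[\mP']-\nicefrac{\delta}{2}$ except with probability $e^{-\Omega(K\delta^{2})}$, so arm $K$ almost never beats $q$: $\E[(\mu_K-\mu_q)_+]\leq 2e^{-\Omega(K\delta^{2})}$, while $\E[\mu_q-\mu_K]=\Omega(\delta)$. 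Plugging these into the definition of $\MainLB$ gives $\MainLB\geq \Omega(\delta)\cdot e^{\Omega(K\delta^{2})}$, hence $\RndsLB(1)>\exp\rbr{\Omega_{\delta}(K)}$.

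For the dichotomy, when $\mC$ is finite I would argue directly from the hypothesis $\min_{\mP\in\mC}\sup(\mP)\neq\Phi_\mC$. If $\min_{\mP\in\mC}\sup(\mP)>\Phi_\mC$, every prior places positive mass strictly above $\Phi_\mC$, so each $\E_{\mu\sim\mP}[(\mu-\Phi_\mC)_+]$ is strictly positive; finiteness turns the infimum into a minimum, which is therefore a positive $\delta$, and $\mC$ is $\delta$-easy. If instead $\min_{\mP\in\mC}\sup(\mP)<\Phi_\mC$, some prior $\mP_0$ has $\sup(\mP_0)<\Phi_\mC$, so $\Pr_{\mu\sim\mP_0}[\mu\geq\Phi_\mC-\delta]=0$ for every $\delta<\Phi_\mC-\sup(\mP_0)$, forcing the infimum in \eqref{eq:K-pwise-inv2} to vanish and making $\mC$ $\delta$-hard.

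The main obstacle I anticipate is step (iii) of part (a), namely controlling the bootstrapping parameters $\bootN$ and $\bootP$. Bounding $\bootN$ requires showing that conditioning on all earlier arms returning only zero rewards drives their posterior means below arm $j$'s prior mean after merely $\tildeO(\delta^{-1})$ samples — a posterior-shift estimate that must hold uniformly over a possibly infinite family $\mC$ — while bounding $\log(\bootP^{-1})$ requires that the all-zeros event decay no faster than $e^{-O(K/\delta)}$. These are precisely the quantities that blow up in the hard case, so the estimates must genuinely exploit $\delta$-easiness and $\delta$-non-dominance rather than mere finiteness; this is the delicate heart of the argument and where I would concentrate most of the effort.
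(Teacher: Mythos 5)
Your overall plan coincides with the paper's: part (a) reduces to Theorem~\ref{thm:goodalgo} via uniform bounds on $N_\TS,\padG,\padN,\bootN,\bootP$ (the paper collects exactly these in Lemma~\ref{lem:easyexploreparameters}), and your part (b) and finite-dichotomy arguments are correct and essentially identical to the paper's. The problem is your step (ii). The pointwise bound $(\mu_j-\mu_q)_+\geq(\mu_j-\Phi_\mC)_+\cdot\ind{\mu_q\leq\Phi_\mC}$ plus independence gives $\E[(\mu_j-\mu_q)_+]\geq\E[(\mu_j-\Phi_\mC)_+]\cdot\Pr[\mu_q\leq\Phi_\mC]$, so to reach $\padG=\Omega(\delta)$ you need $\Pr[\mu_q\leq\Phi_\mC]=\Omega(1)$, and this is false for $\delta$-easy, $\delta$-non-dominant collections. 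Counterexample: let $\mC$ consist of the single prior with $\mu=1$ w.p.\ $1-2\delta$ and $\mu=0$ w.p.\ $2\delta$ (for $\delta<1/4$). Then $\Phi_\mC=1-2\delta$, and $\mC$ is $\delta$-easy ($\E[(\mu-\Phi_\mC)_+]=2\delta(1-2\delta)>\delta$) and $\delta$-non-dominant ($\E[(\mu_j^0-\mu_i)_+]=2\delta(1-2\delta)\geq\delta$), yet $\Pr[\mu_i\leq\Phi_\mC]=2\delta$. Neither does $\delta$-non-dominance rescue you: it controls individual arms, while for a weighted average $\mu_q$ of many independent arms you would need an anti-concentration statement at the exact point $\Phi_\mC$, which is unavailable. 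Patching your argument with a slack $\eps\approx\delta/2$ and Lemma~\ref{lem:closetoEV} only yields $\padG=\Omega(\delta^2)$, and the loss is real: in the example above the true value is $\E[(\mu_j-\mu_i)_+]\approx 2\delta$ while your bound gives $\approx 4\delta^2$. Propagating $\padG=\Omega(\delta^2)$ through Theorem~\ref{thm:goodalgo} (so $\padN=\tildeO(\delta^{-4})$ and the prefactor $\padG^{-1}=O(\delta^{-2})$) produces only $\RndsUB(N_\TS)=\tildeO\rbr{K^3\delta^{-5}+K^2\delta^{-7}}$, which misses the claimed $\tildeO(K^3\delta^{-4})$.

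The fix is the paper's one-line convexity argument: since $q\in\Delta_{j-1}$ makes $\mu_q$ independent of $\mu_j$, Jensen applied conditionally on $\mu_j$ gives $\E[(\mu_j-\mu_q)_+]\geq\E[(\mu_j-\mu_q^0)_+]\geq\E[(\mu_j-\Phi_\mC)_+]\geq\delta$; i.e., compare against the deterministic prior mean $\mu_q^0\leq\Phi_\mC$ rather than against the random $\mu_q$ restricted to a good event. (The same remark applies to your step (i) if you intended the same device for $\eps_\TS$.) Separately, you defer step (iii) entirely, calling it the delicate heart; for completeness note the paper handles it in Lemma~\ref{lem:easyexploreparameters} by a short likelihood-ratio argument (every $O(\delta^{-1})$ zero rewards of arm $i$ shift the posterior odds of $\{\mu_i\leq\mu_j^0-\delta/2\}$ against $\{\mu_i\geq\mu_j^0\}$ by a constant factor, and $\delta$-non-dominance gives the former event prior mass at least $\delta/2$; meanwhile $1-\mu_i^0\geq\delta$ by $\delta$-easiness controls $\log(\bootP^{-1})$). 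So as proposed, part (a) is deficient on two counts, and step (ii) is an actual error rather than an omission.
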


\begin{proof}

The first case follows from Lemma~\ref{lem:easyexploreparameters} stated just below, which bounds all the fundamental problem parameters when $\mC$ is $\delta$-easy to explore. Applying the estimate of Theorem~\ref{thm:pyramidtightfast} completes the proof.

For the exponential lower bound, if $\mC$ is $\delta$-hard to explore, then for any $\eta>0$ there exist $\mP,\mP'\in\mC$ with \[\Pr_{\mu\sim \mP}[\mu<\E_{\mu'\sim\mP'}[\mu']-\frac{\delta}{2}]\geq 1-\eta.\] We take $\mu_1,\dots,\mu_{K-1}\sim \mu$ and $\mu_K\sim \mu'$. Letting \[q=\left(\frac{1}{K-1},\frac{1}{K-1},\dots,\frac{1}{K-1}\right)\in\Delta_{K-1}\] a Chernoff estimate shows that

\[\mathbb P[\mu_q\leq c_U-\frac{\delta}{2}]\leq e^{-\Omega_{\delta}(K)}+\eta.\]

Moreover we clearly have $\E[\mu_q-\mu']\geq \frac{\delta}{2}$. Since $\eta$ was arbitrary (and in particular may decay as $K\to\infty$) we conclude that the lower bound $L$ is exponential in $K$ as desired.
\end{proof}

\begin{restatable}{lemma}{easyexploreparameters}
\label{lem:easyexploreparameters}
If $\mC$ is $\delta$-easy to explore, then:
\begin{itemize}
    \item $N_{\TS}=\tilde O(K\delta^{-2})$.
    \item $G\geq \delta$.
    \item $\padN=\tilde O(\delta^{-2}).$
    \item $\bootN=\tilde O(\delta^{-1})$.
    \item $\log(\Pexplore^{-1})=\tilde O(K\delta^{-1})$.
\end{itemize}
\end{restatable}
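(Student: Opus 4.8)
The plan is to bound each of the five parameters using the two tail conditions in tandem: $\delta$-easiness \eqref{eq:K-pwise-inv} controls the \emph{upper} tails (how often an arm beats a mixture of others), while $\delta$-non-dominance \eqref{eq:K-pwise-quant} controls the \emph{lower} tails (the posterior-mean collapse needed for bootstrapping). I would organize everything around the single cleanest estimate, $\padG\ge\delta$, which feeds into the rest. A point worth flagging at the outset: the bootstrapping bounds genuinely need the lower-tail control of $\delta$-non-dominance (an arm whose mass never dips below $\mu_j^0$ forces $\bootN=\infty$), so I would carry both hypotheses, matching the usage in Theorem~\ref{thm:K-general}(a).

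First I would prove $\padG\ge\delta$. Fix an arm $j$ and $q\in\Delta_{j-1}$; since the arms are independent and $q$ involves only arms $i<j$, I condition on $\mu_j$ and apply Jensen's inequality to the convex map $x\mapsto(\mu_j-x)_+$, giving $\E[(\mu_j-\mu_q)_+\mid\mu_j]\ge(\mu_j-\E[\mu_q])_+$ with $\E[\mu_q]=\sum_{i<j}q_i\mu_i^0\le\mu_1^0\le\Phi_\mC$. Hence $\E[(\mu_j-\mu_q)_+]\ge\E[(\mu_j-\Phi_\mC)_+]>\delta$ by $\delta$-easiness, and minimizing over $j,q$ yields $\padG\ge\delta$. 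The bound $\padN=\tildeO(\delta^{-2})$ is then immediate from \eqref{eq:params-padded}. The very same Jensen argument applied to arbitrary ordered pairs shows $\eps_{\TS}=\min_{i,j}\E[(\mu_i-\mu_j)_+]>\delta$; combining $\Pr[\mu_i\ge\mu_j]\ge\E[(\mu_i-\mu_j)_+]$ with the FKG computation from the proof of Corollary~\ref{cor:TSsamples} gives $\delta_{\TS}=\min_i\Pr[A^*=i]\ge\eps_{\TS}^{K}\ge\delta^{K}$, so $\log\delta_{\TS}^{-1}=\tildeO(K)$ and $N_{\TS}=C_{\TS}\,\eps_{\TS}^{-2}\log\delta_{\TS}^{-1}=\tildeO(K\delta^{-2})$.

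The main work is $\bootN$. Because arms are independent and $\ZEROS_{j,N_0}$ depends only on arms $<j$, I have $\E[\mu_j\mid\ZEROS_{j,N_0}]=\mu_j^0$ and $\E[\mu_i\mid\ZEROS_{j,N_0}]$ equals the posterior mean of $\mu_i$ after $N_0$ zeros, so \eqref{eq:params-boot} reduces to showing $\int(\mu-\mu_j^0)(1-\mu)^{N_0}\,dF_i(\mu)<0$ for all $i<j$. I would split this integral at $\mu_j^0$: the positive part is at most $(1-\mu_j^0)^{N_0}$, while $\delta$-non-dominance gives $\Pr[\mu_i\le\mu_j^0-\delta/2]\ge\delta/2$, so the negative part is at least $(\delta^2/4)(1-\mu_j^0+\delta/2)^{N_0}$. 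The ratio of the two geometric factors is $(1-\Theta(\delta))^{N_0}$ uniformly in $\mu_j^0\in[0,1]$, so the negative part dominates once $N_0=\tildeO(\delta^{-1})$, giving $\bootN=\tildeO(\delta^{-1})$. Finally, writing $\bootP=\prod_{i<K}\E[(1-\mu_i)^{\bootN}]$ by independence, the same lower-tail mass together with $1-\Phi_\mC>\delta$ (itself a consequence of $\E[(\mu-\Phi_\mC)_+]\le 1-\Phi_\mC$ and $\delta$-easiness) gives $\E[(1-\mu_i)^{\bootN}]\ge(\delta/2)(1-\Phi_\mC)^{\bootN}$, whence $\log\bootP^{-1}\le(K-1)\bigl[\log(2/\delta)+\bootN\log\tfrac{1}{1-\Phi_\mC}\bigr]=\tildeO(K\delta^{-1})$.

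I expect the $\bootN$ estimate to be the main obstacle. The other four bounds are essentially one-line consequences of Jensen/FKG and the definitions, but controlling the rate at which the posterior mean collapses under a run of zeros requires the quantitative lower-tail guarantee and a careful comparison of the two geometric weights $(1-\mu)^{N_0}$ on the two sides of the threshold $\mu_j^0$, uniformly over all admissible thresholds. Once the five bounds are in hand, substituting them into Theorem~\ref{thm:goodalgo} (noting $N_{\TS}\ge\max(\bootN,\padN)$) recovers $\RndsUB(N_\TS)=\tildeO(K^3/\delta^4)$, as used in Theorem~\ref{thm:K-general}(a).
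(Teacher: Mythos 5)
Your proposal is correct and takes essentially the same route as the paper's own proof: Jensen/convexity against $\Phi_\mC$ for $\padG$, $\padN$, and $\eps_{\TS}$; the non-dominance lower-tail mass $\Pr[\mu_i\leq \mu_j^0-\delta/2]\geq \delta/2$ combined with the geometric weighting $(1-\mu)^{N_0}$ for $\bootN$ (the paper phrases this same comparison as a likelihood-ratio advantage per $\delta^{-1}$ zero samples); and per-arm zero-probability bounds for $\log(\bootP^{-1})$. The only cosmetic deviations are that you get $\delta_{\TS}\geq \eps_{\TS}^K$ via the FKG argument of Corollary~\ref{cor:TSsamples} where the paper instead uses Lemma~\ref{lem:closetoEV} to bound $\Pr[A^*=i]\geq \Omega(\delta)^K$ directly, and that you explicitly carry the $\delta$-non-dominance hypothesis, which the paper's proof also invokes (and which is indeed indispensable for the $\bootN$ and $\bootP$ bounds).
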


} %%%%%%%% \OMIT

% Instead of considering the worst case over a collection $\mC$, it is also natural to consider a second order prior $\mathcal M$ over priors $\mathcal P$ and choose each $\mP_i\sim \mathcal M$ independently. In fact both assertions of Theorem~\ref{thm:dichotomy-K} continue to hold in this case, with $\mC$ the support of $\mathcal M$. Indeed if $\mC$ is easy to explore this is immediate, as the random case is only easier. On the other hand if $\mC$ is hard to explore, the following result show that random instances still satisfy an exponential lower bound with high probability. Therefore the dichotomy is rather generic.

% \mscomment{With the fully tight version of $\delta$-hard this is actually not quite true, we need the $\delta$-uniformity now lol...}

% \begin{theorem}

% Let $\mathcal M$ be a second order prior whose support $\mC$ is $\delta$-hard to explore. If $\mP_i\sim \mathcal M$ are chosen independently, the resulting instance has $\RndsLB(1)> \exp\rbr{\Omega_{\delta}(K)}$ with probability $1-e^{-\Omega(K)}.$

% \end{theorem}

\subsection{Canonical priors}
\label{sec:sample-canon}

We consider two canonical examples of incentivized exploration: when the priors $\mP_i$ are truncated Gaussians and when they are Beta  distributions. We find that the optimal sample complexity $\RndsOPT(N_{\TS})$ scales polynomially on $K$ and exponentially in the ``strength of beliefs".

For truncated Gaussian priors, we focus on the case when all Gaussians have the same variance $\sigma^2$, and we find that the sample complexity is polynomial in $K$ and exponential in $\sigma^{-2}$. Thus, \emph{strong beliefs}, as expressed by small variance $\sigma^2$, is what makes incentivized exploration difficult.

\begin{restatable}{corollary}{gaussianbound}
\label{thm:gaussianbound}
Let $\widetilde N(\nu,\sigma^2)$ be a Gaussian with mean $\nu$ and variance $\sigma^2\leq 1$, conditioned to lie in $[0,1]$. Suppose $\mP_i\sim \widetilde N(\nu_i,\sigma^2)$ for each arm $i$, where
    $\nu_1 \LDOTS \nu_K\in [0,1]$.
Then
\begin{OneLiners}
\item[(a)] $\Tbic{N_{\TS}} = K^3\cdot \poly\rbr{\sigma^{-1},\,e^{R^2}}$
where
    $R=\sigma^{-1}\;\max_{i,j} |\nu_i-\nu_j|$.
\item[(b)]
    $\RndsLB \geq e^{\Omega(1/\sigma^2)}$
when
    $\max_{i,j} |\nu_i-\nu_j|$
is a positive absolute constant.
\end{OneLiners}
\end{restatable}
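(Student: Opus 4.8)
I would prove the two parts separately, both resting on a single analytic fact about differences of truncated Gaussians, which I would establish first. \textbf{Core estimate.} Fix an arm $j$ and $q\in\Delta_K$, and set $X=\mu_j-\mu_q$. Then $X$ is a difference of independent truncated Gaussians with mean $m=\mu_j^0-\mu_q^0$ and standard deviation $s$, where $\Var(\mu_j)=\Theta(\sigma^2)$ already forces $s=\Theta(\sigma)$ (the averaging in $\mu_q$ can only add to the variance). Since truncation of a Gaussian to $[0,1]$ contracts the mean, $|m|\le \max_{i,k}|\mu_i^0-\mu_k^0|\le \max_{i,k}|\nu_i-\nu_k|=\sigma R$. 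I would record two bounds. Conditioning on $\mu_q$ and using that a truncated Gaussian of scale $\sigma$ exceeds any target by $\tfrac{\sigma}{2}$ with probability at least $e^{-\Theta(m^2/\sigma^2)}$ gives the lower bound
\begin{align}\label{eq:plan-lb}
\E\sbr{X_+}\;\geq\; \sigma\,e^{-\Theta(R^2)}.
\end{align}
Conversely, writing the truncated density as the Gaussian density renormalised by a factor $\Theta(1)$ (valid for $\sigma\le1$ and mean in $[0,1]$) shows each $\mu_i$ is $\Theta(\sigma^2)$-sub-Gaussian about $\nu_i$, so for a negative mean $m=-a$ with $a=\Theta(1)$,
\begin{align}\label{eq:plan-ub}
\E\sbr{X_+}\;\leq\;\Pr\sbr{X>0}\;\leq\; e^{-\Omega(a^2/\sigma^2)}.
\end{align}

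\textbf{Part (a).} Here I would bound each parameter feeding Theorem~\ref{thm:goodalgo}. Both $\padG=\min_{j,q}\E[(\mu_j-\mu_q)_+]$ and $\eps_\TS=\min_{i,j}\E[(\mu_i-\mu_j)_+]$ are instances of~\eqref{eq:plan-lb}, so $\padG,\eps_\TS\ge \sigma\,e^{-\Theta(R^2)}$. The FKG argument of Corollary~\ref{cor:TSsamples} gives $\delta_\TS\ge\eps_\TS^{K}$, hence $\log\delta_\TS^{-1}=O(K(\log\sigma^{-1}+R^2))$ and $N_{\TS}=O(\eps_\TS^{-2}\log\delta_\TS^{-1})=K\cdot\poly(\sigma^{-1},e^{R^2})$; likewise $\padN=\poly(\sigma^{-1},e^{R^2})$. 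For bootstrapping, overcoming a prior-mean gap of at most $\sigma R$ by conditioning on all-zero rewards shifts the posterior mean of each arm $i<j$ toward $0$ at rate $\Theta(\sigma^2)$ per sample, so $\bootN=\poly(\sigma^{-1},R)$, and $\log\bootP^{-1}=O(K\,\bootN)$ since $\ZEROS_{K,\bootN}$ factors over the $K-1$ independent arms. Substituting $N=\max(N_{\TS},\bootN,\padN)\le K\,\poly(\sigma^{-1},e^{R^2})$, $\padG^{-1}=\poly(\sigma^{-1},e^{R^2})$, and $\log(\padG^{-1}\bootP^{-1})=K\,\poly(\sigma^{-1},R)$ into $\RndsUB(N)=O(KN\,\padG^{-1}\log(\padG^{-1}\bootP^{-1}))$ produces three factors of $K$ — one explicit, one from $N_\TS$, one from $\log\bootP^{-1}$ — giving the claimed $K^{3}\cdot\poly(\sigma^{-1},e^{R^2})$.

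\textbf{Part (b).} Now $\max_{i,k}|\nu_i-\nu_k|=c$ is a positive constant, so $R=\Theta(\sigma^{-1})$ and $e^{R^2}=e^{\Theta(1/\sigma^2)}$, matching the upper bound. Let $a,b$ be arms with $\nu_a-\nu_b=c$, take $j=b$ and $q=e_a$ (a point mass), so $\mu_j-\mu_q=\mu_b-\mu_a$ has mean $-(\mu_a^0-\mu_b^0)$. Monotonicity of the truncated mean in $\nu$ gives $\mu_a^0-\mu_b^0=\Omega(1)$, whence $\E[(\mu_j-\mu_q)_-]=\E[(\mu_a-\mu_b)_+]\ge \E[\mu_a-\mu_b]=\mu_a^0-\mu_b^0=\Omega(1)$ (using $x_+\ge x$), while \eqref{eq:plan-ub} gives $\E[(\mu_j-\mu_q)_+]\le e^{-\Omega(1/\sigma^2)}$. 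Hence
\begin{align}
\RndsLB\;\ge\;\MainLB\;\ge\;\frac{\E[(\mu_j-\mu_q)_-]}{\E[(\mu_j-\mu_q)_+]}\;\ge\; e^{\Omega(1/\sigma^2)}.
\end{align}

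\textbf{Main obstacle.} The delicate point throughout is the truncated-Gaussian tail analysis behind \eqref{eq:plan-lb}--\eqref{eq:plan-ub}: I must verify that the renormalising constant stays $\Theta(1)$ for means in $[0,1]$ and $\sigma\le1$, that the truncated variance does not degenerate below $\Theta(\sigma^2)$ even when $\nu_j$ sits at a boundary, and --- most subtly --- that the lower bound \eqref{eq:plan-lb} survives when $\mu_q$ is pushed near $1$, so that the event $\{\mu_j>\mu_q\}$ lives in the thin region near the upper boundary. Handling these boundary cases (e.g.\ by comparing to a half-Gaussian and exploiting that only the single-arm variance of $\mu_j$ is needed) is where the real work lies; the reductions to Theorem~\ref{thm:goodalgo} and to $\MainLB$ are then routine.
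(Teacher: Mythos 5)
Your plan is correct and is in substance the paper's own proof: part (a) bounds the parameters $\eps_{\TS},\delta_{\TS},\padG,\padN,\bootN,\log\bootP^{-1}$ and plugs them into Theorem~\ref{thm:goodalgo}, and part (b) is exactly the paper's argument — lower-bound $\MainLB$ by comparing two truncated Gaussians whose centers differ by $\Omega(1)$, using that truncation only costs a constant factor in the sub-Gaussian tail bound. The only organizational difference is that the paper routes part (a) through the $\delta$-easy\,/\,$\delta$-non-dominant framework of Section~\ref{sec:sample-K}, proving the collection of priors satisfies both with $\delta^{-1}\leq \poly(\sigma^{-1},e^{R^2})$ and then invoking Lemma~\ref{lem:easyexploreparameters}, whereas you estimate the parameters directly. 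One substantive remark on your flagged ``main obstacle'': as literally written, your core lower bound does fail, since conditioning on a realized $\mu_q$ near $1$ forces an exponent $(\mu_q-\nu_j)^2/\sigma^2\approx 1/\sigma^2$ rather than $R^2$; but the fix requires no new idea. Use convexity of $x\mapsto x_+$ (Jensen over the randomness of $\mu_q$, independent of $\mu_j$) to get
\begin{align*}
\E\sbr{(\mu_j-\mu_q)_+}\;\geq\;\E\sbr{(\mu_j-\mu_q^0)_+},
\end{align*}
and note that the \emph{deterministic} target $\mu_q^0$ satisfies both $\mu_q^0\leq 1-\Omega(\sigma)$ (truncated means stay $\Omega(\sigma)$ below $1$) and $\mu_q^0\leq \nu_j+(R+O(1))\sigma$ (all centers lie within $R\sigma$ of each other), so the density of $\mu_j$ on an interval of length $\Omega(\sigma)$ just above $\mu_q^0$ is $\Omega(\sigma^{-1}e^{-O(R^2)})$. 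This is precisely how the paper handles it: Lemma~\ref{lem:easyexploreparameters} performs this convexity step against $\Phi_\mC$, the supremum of the prior means, so your plan closes completely.
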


\newcommand{\strength}{\term{strength}}

%The key point in this result is the $e^{\poly(1/\sigma)}$ dependence on the variance.

Strength of beliefs expressed by a particular truncated Gaussian can be usefully interpreted as the number of samples inherent therein. Indeed, any Gaussian distribution with variance $\sigma^2$ can be represented as a Bayesian update of a unit-variance Gaussian given $M = \Theta(\sigma^{-2})$ independent data points. Thus, the sample complexity in Corollary~\ref{thm:gaussianbound} is exponential in $M$.

We obtain a similar result for Beta priors. We define the \emph{strength} of distribution $\mP= \term{Beta}(a,b)$ as the sum $a+b$. When $a,b\in\N$, the number of data points needed to obtain $\mP$ as a posterior starting from a uniform prior $\term{Beta}(1,1)$ is $a+b-2$;
we interpret it as the strength of beliefs expressed by the prior. Note that $M=\strength(\mP)$ approximately captures variance: indeed,
        $1/\Var(\mP) \leq [M,\,M^2]$,
A problem instance is called \emph{$M$-strong} if $\term{strength}(\mP_i)\equiv M$.
We prove that $\RndsUB(N_{\TS})$ is polynomial in $K$ for any fixed $M$, and exponential in $M$ even for $K=2$; the latter dependence is inevitable.

\begin{restatable}{corollary}{betabound}
\label{thm:betabound}
%Suppose $\mP_i\sim \term{Beta}(a_i,b_i)$ for each arm $i$,
Suppose all priors $\mP_1 \LDOTS \mP_K$ are Beta distributions.
\begin{OneLiners}
\item[(a)]
    $ \RndsUB(N_{\TS}) \leq K^3\cdot \rbr{\min(K,M)}^{O(M)}$
if $\strength(\mP_i)\leq M$ for all arms $i$.

\item[(b)]
%The dependence on $M$ is polynomially optimal in the worst case:
    $\RndsLB\geq   \rbr{\min(K,M)}^{\Omega(M)}$
for \textbf{some} $M$-strong problem instance.

\item[(c)]
    $\RndsLB\geq 2^{\Omega(M)}$
for \textbf{any} $M$-strong problem instance such that
    $\mu_1^0-\mu_K^0 \geq\Omega(1)$.\\
More generally, this holds whenever arms $i\neq j$ have strength at least $M$ and
%     $\min(a_i+b_i,\;a_j+b_j)\geq M$
        $|\mu_i^0-\mu_j^0| \geq \Omega(1)$.
\end{OneLiners}
\end{restatable}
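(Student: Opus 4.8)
The plan is to prove part (a) by instantiating the master bound of Theorem~\ref{thm:goodalgo}, and parts (b),(c) by exhibiting good choices of $j$ and $q$ in the definition of $\MainLB$ (recall $\RndsLB\geq\MainLB$ by definition, Theorem~\ref{thm:LB}). For part (a), Theorem~\ref{thm:goodalgo} gives $\RndsUB(N_{\TS})=O(KN\,\padG^{-1}\log(\padG^{-1}\bootP^{-1}))$ with $N=\max(\bootN,\padN,N_{\TS})$, so it suffices to bound each prior-dependent parameter for $\term{Beta}(a,b)$ priors of strength $a+b\leq M$ (with $a,b\geq 1$). I would first record the extremal tail estimate $\Pr[\mu_i>1-t]\geq t^{M-1}$ and $\Pr[\mu_i<t]\geq t^{M-1}$, minimized at $\term{Beta}(1,M-1)$ and $\term{Beta}(M-1,1)$ respectively.

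The crucial quantity is $\padG=\min_{j,q}\E[(\mu_j-\mu_q)_+]$ (and likewise $\eps_{\TS}$). Writing $g_j(y):=\E[(\mu_j-y)_+]=\int_y^1\Pr[\mu_j>x]\,dx$, the tail estimate gives $g_j(y)\geq (1-y)^M/M$ uniformly over strength-$\leq M$ arms, whence
\[
\E[(\mu_j-\mu_q)_+]=\E[g_j(\mu_q)]\geq \tfrac1M\,\E[(1-\mu_q)^M].
\]
The heart of part (a) is then the moment bound $\E[(1-\mu_q)^M]\geq(\min(K,M))^{-O(M)}$, valid for every convex combination $q$ of the (at most $K-1$) higher-mean arms. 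I would prove this by expanding $(1-\mu_q)^M=(\sum_i q_i(1-\mu_i))^M$ multinomially and retaining one well-chosen term: spreading the exponent as evenly as possible over the $\leq K-1$ arms yields $M^{-O(M)}$ when $M\leq K$ and $K^{-O(M)}$ when $M>K$, which is exactly the source of the $\min(K,M)$. Given $\padG^{-1}\leq(\min(K,M))^{O(M)}$, I get $\padN=O(\padG^{-2}\log\padG^{-1})=(\min(K,M))^{O(M)}$ and $\eps_{\TS}^{-2}=(\min(K,M))^{O(M)}$; the quantile argument of Corollary~\ref{cor:tslogk} gives $\log\delta_{\TS}^{-1}=\tildeO(M\log(MK))$, so $N_{\TS}=(\min(K,M))^{O(M)}$. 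Finally $\bootN=\poly(M)$ (a posterior-mean-ordering computation after observing zeros, using $\mu_j^0\geq 1/M$) and $\log\bootP^{-1}=\tildeO(KM)$, both entering only polynomially. Assembling these into the master bound yields $\RndsUB(N_{\TS})\leq K^3(\min(K,M))^{O(M)}$.

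For part (b) I would use the extremal instance: arms $1,\dots,K-1\sim\term{Beta}(M-1,1)$ (means $\approx 1$) and arm $K\sim\term{Beta}(1,M-1)$ (mean $\approx 0$), with $q$ uniform over $[K-1]$ and $j=K$. Using the identity $g_K(y)=(1-y)^M/M$ exactly for $\term{Beta}(1,M-1)$,
\[
\E[(\mu_K-\mu_q)_+]=\tfrac1M\,\E[(1-\mu_q)^M],\qquad 1-\mu_q=\tfrac1{K-1}\textstyle\sum_i d_i,\quad d_i\sim\term{Beta}(1,M-1).
\]
A Gamma/Stirling evaluation of this moment gives $\E[(1-\mu_q)^M]=(\min(K,M))^{-\Theta(M)}$, while the numerator satisfies $\E[(\mu_K-\mu_q)_-]\geq\E[\mu_q-\mu_K]=\Omega(1)$; hence $\MainLB\geq(\min(K,M))^{\Omega(M)}$. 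Part (c) is cleaner: given two arms $i\neq j$ of strength $\geq M$ with $|\mu_i^0-\mu_j^0|\geq\Omega(1)$, say $\mu_i^0>\mu_j^0$, take $q$ to be the point mass on $i$ and $j$ as the recommended arm. The numerator obeys $\E[(\mu_j-\mu_i)_-]=\E[(\mu_i-\mu_j)_+]\geq\mu_i^0-\mu_j^0=\Omega(1)$, while a standard Chernoff-type concentration bound for Beta distributions (whose effective sample size is the strength) gives $\E[(\mu_j-\mu_i)_+]\leq\Pr[\mu_j>\mu_i]\leq 2^{-\Omega(M)}$. Thus $\MainLB\geq\Omega(1)\cdot 2^{\Omega(M)}=2^{\Omega(M)}$.

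The main obstacle is the uniform moment estimate $\E[(1-\mu_q)^M]\geq(\min(K,M))^{-O(M)}$ underlying the lower bound on $\padG$ in part (a): one must show that the worst configuration is indeed arms concentrated near $1$ (so that the $\term{Beta}(1,M-1)$ reduction is legitimate for general strength-$\leq M$ priors, e.g.\ via $\E[d_i^{k}]\geq(1-\mu_i^0)^{k}$ or a suitable extremal comparison), and must track the multinomial expansion carefully enough to extract the correct $\min(K,M)$ rather than a cruder $M^{O(M)}$ or $2^{O(MK)}$ bound. The matching computation in part (b) confirms this estimate is tight.
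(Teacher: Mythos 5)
Your parts (b) and (c) track the paper's own proof quite closely: the paper uses the very same extremal instance for (b) ($K-1$ arms with prior $\term{Beta}(M-1,1)$, one arm $\term{Beta}(1,M-1)$, $q$ uniform), and for (c) exactly your argument — $\E[(\mu_j-\mu_i)_+]\leq\Pr[\mu_j\geq t]+\Pr[\mu_i\leq t]\leq 2^{-\Omega(M)}$ with $t=\tfrac12(\mu_i^0+\mu_j^0)$. Where you genuinely diverge is the crux of part (a): the uniform bound $\padG^{-1}\leq (\min(K,M))^{O(M)}$ over \emph{all} strength-$\leq M$ priors and \emph{all} $q\in\Delta_{j-1}$. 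The paper never faces general priors or general $q$ head-on: it first applies stochastic dominance (Lemma~\ref{lem:stochdomcompare}, with $\term{Beta}(M,1)$ and $\term{Beta}(1,M)$ extremal by MLR) to replace every prior by the worst one, then symmetry plus convexity (Lemma~\ref{lem:sym}) to force $q$ to be uniform, and only then estimates the single remaining quantity: $M^{-O(M)}$ via Jensen against $\E[\mu_q]$, and $K^{-O(M)}$ via the explicit event $\{\mu_K>1-\tfrac1K,\ \mu_1,\mu_2,\mu_3<\tfrac13\}$. Your route — $\E[(\mu_j-\mu_q)_+]\geq\tfrac1M\E[(1-\mu_q)^M]$ followed by a moment bound — trades those reduction lemmas for a direct computation; and your part (b) evaluation via the exact identity $\E[(\mu_K-y)_+]=(1-y)^M/M$ plus Gamma/Stirling is arguably cleaner than the paper's exponential-moment Chernoff bound (for $K\leq M/10$) combined with a monotonicity-in-$K$ step.

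The one place your sketch needs repair is exactly the obstacle you flag. ``Spreading the exponent as evenly as possible'' fails for skewed $q$: with $q=(1-\eps,\eps,0,\dots,0)$ the factor $\prod_i q_i^{k_i}$ under even exponents is vanishingly small, so no single evenly-spread multinomial term can certify the bound. A working fix is: (i) for $K\leq M$, retain the term with $k_i\approx Mq_i$ (the multinomial mode, whose probability $\binom{M}{k_1,\dots}\prod_i q_i^{k_i}$ is at least $(M+1)^{-(K-2)}\geq K^{-O(M)}$), together with the refined moment bound $\E[(1-\mu_i)^{k}]\geq (ck/M)^{k}$ — which itself requires the stochastic-dominance reduction to $\term{Beta}(1,M-1)$, since your per-arm Jensen bound $\E[(1-\mu_i)^k]\geq(1-\mu_i^0)^k\geq M^{-k}$ only yields $M^{-O(M)}$, not the needed $K^{-O(M)}$; and (ii) for $K>M$, plain Jensen on the whole sum, $\E[(1-\mu_q)^M]\geq(1-\mu_q^0)^M\geq M^{-M}$, which is the cheap half. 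Taking the better of (i) and (ii) gives $(\min(K,M))^{-O(M)}$ for every $q$. Note that once you import the stochastic-dominance comparison needed in (i), you are most of the way back to the paper's symmetrization-first argument — which is precisely why the paper's route, though less self-contained-looking, ends up shorter.
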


The lower bound in part (b) holds if all arms $i<K$ have the ``smallest" prior
    $\mP_i= \term{Beta}(M-1,1)$,
and arm $K$ has the ``largest" prior
$\mP_K\sim \term{Beta}(1,M-1)$.

In fact, (slightly weaker versions of) Corollaries~\ref{thm:gaussianbound}(a) and~ \ref{thm:betabound}(a) can be derived from Theorem~\ref{thm:K-general}. This is because the corresponding collections of priors are $\delta$-easy and $\delta$-non-dominant.

\begin{restatable}{lemma}{examplescollections}\label{lm:examples-collections}
The collection of all truncated Gaussians $\widetilde N(m,\sigma^2)$, $m\in[0,1]$ with a fixed variance $\sigma^2<1$ is $\delta$-easy and $\delta$-non-dominant with
$\delta = e^{-\Omega(1/\sigma^2)}$. Likewise, the collection of all Beta distributions of strength at most $M\geq 1$ is $\delta$-easy and $\delta$-non-dominant with
    $\delta = M^{-O(M)}$.
\end{restatable}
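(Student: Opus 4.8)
The plan is to verify, separately for each family, the $\delta$-easy condition \eqref{eq:K-pwise-inv} and the $\delta$-non-dominant condition \eqref{eq:K-pwise-quant}, in each case reducing the quantity of interest to a one-parameter tail estimate at an \emph{extremal} prior. The structural fact I would exploit throughout is that both families are stochastically monotone in their mean parameter: a larger location gives a stochastically larger law. Since $\mu \mapsto (\mu - c)_+$ is non-decreasing and $\mu \mapsto (c - \mu)_+$ is non-increasing, this pins down which prior realizes each infimum or worst case, after which only elementary tail bounds remain. In both families the two relevant corners are the ``leftmost'' prior (smallest mean) and the ``rightmost'' prior (largest mean, which also determines $\Phi_\mC$).

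First I would handle the truncated Gaussians $\mC=\{\widetilde N(m,\sigma^2): m\in[0,1]\}$. The truncated mean is increasing in $m$, so $\Phi_\mC=\E[\widetilde N(1,\sigma^2)]=1-\Theta(\sigma)$ and the smallest prior mean is $\E[\widetilde N(0,\sigma^2)]=\Theta(\sigma)$. For $\delta$-easy, stochastic monotonicity shows $\E_{\mu\sim\widetilde N(m,\sigma^2)}[(\mu-\Phi_\mC)_+]$ is minimized at $m=0$; there $\mu$ exceeds $\Phi_\mC\ge 1-\Theta(\sigma)$ only in the upper tail, and evaluating the mass in a window of width $\Theta(\sigma)$ just above $\Phi_\mC$ gives $\E[(\mu-\Phi_\mC)_+]\ge \Theta(\sigma)\,e^{-\Omega(1/\sigma^2)}=e^{-\Omega(1/\sigma^2)}$. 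For $\delta$-non-dominant, the worst $\mC$-consistent instance takes $\mu_j^0$ minimal, i.e. $\mu_j^0=\Theta(\sigma)$, against $\mu_i\sim\widetilde N(1,\sigma^2)$ concentrated near $1$; then $\E[(\mu_j^0-\mu_i)_+]$ is driven by the probability that a near-$1$-centered Gaussian dips below $\Theta(\sigma)$, again $e^{-\Omega(1/\sigma^2)}$. Hence $\delta=e^{-\Omega(1/\sigma^2)}$ works for both.

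Next I would treat the Beta family $\mC=\{\term{Beta}(a,b): a,b\ge 1,\ a+b\le M\}$. The mean $a/(a+b)$ is maximized at $\term{Beta}(M-1,1)$, giving $\Phi_\mC=1-1/M$, and minimized at $\term{Beta}(1,M-1)$, with prior mean $1/M$. For $\delta$-easy, increasing $a$ (or decreasing $b$) stochastically shifts mass right, so the infimum of $\E[(\mu-\Phi_\mC)_+]$ is attained at $\term{Beta}(1,M-1)$; integrating $(\mu-(1-1/M))_+$ against its density $(M-1)(1-\mu)^{M-2}$ over $[1-1/M,1]$ yields exactly $M^{-(M+1)}=M^{-O(M)}$. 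For $\delta$-non-dominant, the worst pair is $\mu_j^0=1/M$ (from $\term{Beta}(1,M-1)$) against $\mu_i\sim\term{Beta}(M-1,1)$, whose density $(M-1)\mu^{M-2}$ puts mass of order $M^{-(M-1)}$ below $1/M$; the same elementary integral gives $\E[(\mu_j^0-\mu_i)_+]=M^{-(M+1)}=M^{-O(M)}$. Thus $\delta=M^{-O(M)}$ for both conditions.

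The main obstacle I anticipate is not any single estimate but the care needed to certify the extremal priors rigorously. For the Gaussians, stochastic monotonicity in $m$ combined with the monotonicity of $(\cdot-c)_+$ makes the identification clean, but truncation forces the means and tail masses to be bounded rather than read off exact Gaussian formulas, so I would keep the constants inside $\Theta(\cdot)$ and $\Omega(\cdot)$ explicit enough to ensure the window used in the lower bound stays inside $[0,1]$. For the Beta family the comparison is genuinely two-dimensional in $(a,b)$, so I would argue monotonicity coordinate-wise (increasing $a$ or decreasing $b$ stochastically enlarges the law) to reduce to the corner priors $\term{Beta}(1,M-1)$ and $\term{Beta}(M-1,1)$, after which the incomplete-Beta tail integrals are exact and elementary, producing the stated $M^{-O(M)}$ dependence uniformly in $M$.
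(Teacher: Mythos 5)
Your proposal is correct and takes essentially the same route as the paper: both reduce each condition to the extremal members of the family (the stochastically smallest prior $\term{Beta}(1,M-1)$ or $\widetilde N(0,\sigma^2)$ against the stochastically largest $\term{Beta}(M-1,1)$ or $\widetilde N(1,\sigma^2)$) and finish with elementary tail estimates. The paper's own write-up is merely terser—it delegates the Gaussian case to the density bounds inside the proof of Corollary~\ref{thm:gaussianbound} and gives a one-line tail bound at $\term{Beta}(1,M)$—whereas you make the stochastic-monotonicity reduction and the incomplete-Beta integrals explicit; the substance is identical.
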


\subsection{One well-known arm}
\label{sec:sample-wellknown}

Let us consider an important special case when some arm $\ell$ represents a well-known, ``default" alternative, and the other arms are new to the agents. Put differently, agents have strong beliefs on one arm but not on all others. We would like to remove the dependence on the well-known arm as much as possible. We obtain two results of this flavor: a version of Corollary~\ref{thm:matching} on polynomially matching upper/lower bounds, and a version of Corollary~\ref{thm:betabound} on Beta priors.
We strengthen Corollary~\ref{thm:matching} under a mild non-degeneracy condition which ensures that the priors are not extremely concentrated near $1$.

\begin{corollary}\label{eq:thm:matching-improved}
\refeq{eq:thm:matching} holds if
    $\Var(\mP_i)\geq \sigma^2$
for all but one arm $i$, provided that
\begin{align}\label{eq:non-degeneracy}
\Pr\sbr{ \mu_i \leq 1-\sigma } \geq e^{-1/\sigma}
    \quad\text{for all arms $i$}.
\end{align}
\end{corollary}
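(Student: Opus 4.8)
The plan is to invoke Theorem~\ref{thm:goodalgo} directly rather than Corollary~\ref{thm:matching}, whose hypothesis (variance $\geq\sigma^2$ for \emph{every} arm) now fails, and to re-examine each prior-dependent quantity entering its bound
    $\RndsUB(N)=O\rbr{KN\,\padG^{-1}\log\rbr{\padG^{-1}\,\bootP^{-1}}}$
with $N=\max(\bootN,\padN,N_{\TS})$. Write $\ell$ for the single arm exempt from the variance assumption, so every other arm has $\Var(\mP_i)\geq\sigma^2$. The goal is to show that the five quantities $\eps_{\TS},\delta_{\TS},\padG,\padN,\bootP$ obey the same bounds in terms of $\sigma,K,\MainLB,\bootN$ as in the all-arms-good case, so that the computation behind Corollary~\ref{thm:matching} reproduces \eqref{eq:thm:matching}.

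First I would treat $\eps_{\TS}=\min_{i,j}\E[(\mu_i-\mu_j)_+]$ and $\padG=\min_{j,\,q\in\Delta_{j-1}}\E[(\mu_j-\mu_q)_+]$, the two quantities that require an anti-concentration lower bound. The key observation is that the deficient arm $\ell$ never appears ``alone''. For $\eps_{\TS}$, every pair $\{i,j\}$ contains an arm of variance $\geq\sigma^2$, whose spread together with $\MainLB$ (which controls the prior-mean gap) gives the same lower bound $\eps_{\TS}\geq\Omega(\sigma/\poly(\MainLB))$ as before. For $\padG$ the only delicate case is $j=\ell$; but then $q$ ranges over $\Delta_{\ell-1}$, i.e.\ over the \emph{good} arms $1,\dots,\ell-1$ only, so $\mu_q$ still carries variance $\geq\sigma^2/K$ and supplies the anti-concentration for $\E[(\mu_\ell-\mu_q)_+]$ regardless of how concentrated $\mu_\ell$ is; for $j\neq\ell$ the explored arm $\mu_j$ itself has good variance. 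Hence $\padG$ obeys its usual lower bound, and $\padN=\tildeO(\padG^{-2})$, $N_{\TS}=\tildeO(\eps_{\TS}^{-2}\log\delta_{\TS}^{-1})$, and the lower bound on $\delta_{\TS}$ (via the FKG argument of Corollary~\ref{cor:TSsamples}) all follow unchanged.

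The one genuinely new step — and the place where the variance hypothesis on $\ell$ was being used in Corollary~\ref{thm:matching} — is the lower bound on $\bootP$. Here I would use the non-degeneracy condition \eqref{eq:non-degeneracy} in place of the variance of $\ell$. By independence of the arms,
\[
\bootP=\Pr\sbr{\ZEROS_{K,\bootN}}=\prod_{i<K}\E\sbr{(1-\mu_i)^{\bootN}}\geq\prod_{i<K}\sigma^{\bootN}\,\Pr\sbr{\mu_i\leq 1-\sigma}\geq\rbr{\sigma^{\bootN}\,e^{-1/\sigma}}^{K-1},
\]
so $\log\bootP^{-1}=\tildeO(K\bootN+K/\sigma)$. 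This is essentially the bound a variance hypothesis would have produced for $\ell$, since $\Var(\mu_i)\geq\sigma^2$ forces probability mass $\Omega(\sigma^2)\geq e^{-1/\sigma}$ below $1-\sigma$; thus \eqref{eq:non-degeneracy} is exactly the minimal surrogate needed. Substituting the five bounds into Theorem~\ref{thm:goodalgo} then yields \eqref{eq:thm:matching}, the only change from the original computation being the replacement of the variance-based $\bootP$ estimate by the non-degeneracy one, whose extra $K/\sigma$ term (appearing only as a factor $\log(\padG^{-1}\bootP^{-1})$) is absorbed into the stated bound because $\bootN\geq 1$.

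I expect the main obstacle to be the $\padG$ analysis in the case $j=\ell$: one must confirm that the anti-concentration for $\E[(\mu_\ell-\mu_q)_+]$ can be sourced entirely from $\mu_q$ (the good arms), decoupled from the spread of $\mu_\ell$, and that the $\MainLB$-based control of the prior-mean gap $\mu_q^0-\mu_\ell^0$ survives even though $\mu_\ell$ is near-deterministic. A secondary point to verify is that $\bootN<\infty$ even when $\ell$ precedes the explored arm: this holds because pairwise non-dominance \eqref{eq:pwise-cond} guarantees $\inf\operatorname{supp}(\mu_\ell)<\mu_j^0$ whenever $\ell<j$, so conditioning on enough zeros still drives $\E[\mu_\ell\mid\ZEROS]$ below $\mu_j^0$.
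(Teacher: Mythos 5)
Correct, and essentially the paper's own proof: the paper establishes this corollary via Theorem~\ref{thm:tscomplexity} under the $\sigma$-non-degeneracy condition of Definition~\ref{defn:epsndg}, which likewise plugs re-estimated parameters into Theorem~\ref{thm:goodalgo}, handles the single low-variance arm exactly as you do (when $j=\ell$ the distribution $q\in\Delta_{\ell-1}$ is supported on arms of variance $\geq\sigma^2$, so $\Var(\mu_q)\geq\sigma^2/K$; otherwise $\mu_j$ itself has variance $\geq\sigma^2$), converts anti-concentration of $\mu_j-\mu_q$ into a lower bound on $\E[(\mu_j-\mu_q)_+]$ through the factor $1+\MainLB$, and uses \eqref{eq:non-degeneracy} to bound $\log\rbr{\bootP^{-1}}$ by $O(K\bootN/\sigma)$ up to logarithmic factors. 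The only cosmetic difference is that the paper packages the $\eps_{\TS}$ and $\padG$ estimates into the single quantity $\widehat L=\sup_{j,\,q:\,q_j=0}\;1/\E[(\mu_j-\mu_q)_+]\leq O\rbr{\MainLB\sqrt{K}/\sigma}$, whereas you treat the two parameters separately.
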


We obtain a stronger result focusing on Beta priors. Then the exponential dependence on $\strength(\mP_i)$ holds only for arms $i\neq \ell$, whereas the dependence on $\strength(\mP_\ell)$ is only polynomial. There is also a ``penalty term" which scales exponentially in $\ell$; this is mild if $\ell$ is small.

% \begin{restatable}{corollary}{realistic}
% Suppose $\mP_i\sim \term{Beta}(a_i,b_i)$ for each arm $i$, where
%     $a_1,b_1\leq M$
% and
%     $a_i,b_i\leq m$
% for all $i\geq 2$, for some $M\geq m\geq 2$. Then
% \[\RndsUB(N_{\TS})\leq  MK^2\cdot \max\rbr{ m,(1-\mu_1^0)^{-1}}^{O(m)}.\]

% \mscomment{Check if it should be $M^2$.}

% \end{restatable}

\begin{restatable}{corollary}{realistic}
Suppose all priors $\mP_1 \LDOTS \mP_K$ are Beta distributions. Suppose $\strength(\mP_\ell)=M$ for some arm $\ell$, and $\strength(\mP_i)\leq m$ for all other arms $i$, where $M\geq m\geq 2$.
%Suppose $\mP_i\sim \term{Beta}(a_i,b_i)$ for each arm $i$. Suppose
%    $a_j,b_j\leq M$
%for some arm $\ell$ (the ``well-known arm"), and
%    $a_i,b_i\leq m$
%for all arms $i\neq \ell$, where $M\geq m\geq 2$.
Then
\[
    \RndsUB(N_{\TS})
        \leq  M^2\cdot K^{O(1)}\cdot \max\rbr{ m,\;(1-\mu_\ell^0)^{-1},\;(\mu_\ell^0)^{-(\ell-1)}}^{O(m)}.
\]
In particular,
    $ \RndsUB(N_{\TS}) \leq M^2\cdot K^{O(1)}\cdot m^{O(\ell\, m)}$
if $\mu_\ell^0\in [\nicefrac{1}{4},\, \nicefrac{3}{4}]$.
\end{restatable}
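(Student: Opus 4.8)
The plan is to derive this as a corollary of the general running-time bound of Theorem~\ref{thm:goodalgo},
\[
\RndsUB(N_{\TS}) = O\rbr{K\,N\,\padG^{-1}\,\log(\padG^{-1}\bootP^{-1})}, \qquad N = \max(N_{\TS},\bootN,\padN),
\]
by estimating each prior-dependent quantity ($\eps_{\TS},\delta_{\TS},\padG,\padN,\bootN,\bootP$) for the one-strong-arm Beta configuration. The guiding principle is that arm $\ell$, having variance $\Theta\rbr{\mu_\ell^0(1-\mu_\ell^0)/M}$, concentrates within $O(1/\sqrt M)$ of $\mu_\ell^0$; I would make this rigorous with standard Beta tail and concentration estimates and then treat $\mu_\ell$ as the constant $\mu_\ell^0$ up to controlled $O(1/\sqrt M)$ corrections. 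Under this reduction, any quantity depending on arm $\ell$ alone costs at most $\poly(M)$ (or enters only logarithmically), while the genuinely exponential cost is borne by the weak arms of strength $\leq m$, producing the $m^{O(m)}$-type dependence and the $\mu_\ell^0$-dependent penalties.

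First I would bound the factors that appear \emph{outside} the logarithm. For $\padG=\min_{j,\,q\in\Delta_{j-1}}\E\sbr{(\mu_j-\mu_q)_+}$ I split on the choice of $j$. For $j<\ell$ the combination ranges only over weak arms, giving the usual Beta margin $\geq m^{-O(m)}$. For $j>\ell$ the adversary may pool weight onto the concentrated arm $\ell$, so the advantage is controlled by the upper tail $\Pr\sbr{\mu_j>\mu_\ell^0}\geq (1-\mu_\ell^0)^{O(m)}$, giving the $(1-\mu_\ell^0)^{-1}$ penalty. The decisive case is $j=\ell$, where $q\in\Delta_{\ell-1}$ pools exactly the $\ell-1$ arms of prior mean at least $\mu_\ell^0$: lower-bounding $\E\sbr{(\mu_\ell-\mu_q)_+}$ by the favorable event that \emph{all} $\ell-1$ preceding arms fall below $\mu_\ell^0/2$ simultaneously — a product event of probability $\geq (\mu_\ell^0)^{O(m(\ell-1))}$ on which $\mu_q<\mu_\ell^0\approx\mu_\ell$ regardless of $q$ — yields exactly the base $(\mu_\ell^0)^{-(\ell-1)}$ raised to $O(m)$. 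The same case analysis bounds $\eps_{\TS}=\min_{i,j}\E\sbr{(\mu_i-\mu_j)_+}$, which involves only pairs and hence only single-power penalties $(\mu_\ell^0)^{-O(m)},(1-\mu_\ell^0)^{-O(m)}$ that are subsumed. Together,
\[
\eps_{\TS}^{-1},\;\padG^{-1}\;\leq\;\max\rbr{m,\;(1-\mu_\ell^0)^{-1},\;(\mu_\ell^0)^{-(\ell-1)}}^{O(m)},
\]
and since $\delta_{\TS}=\min_i\Pr[A^*=i]\geq\min_i\prod_j\Pr[\mu_i\geq\mu_j]$ by FKG (as in Corollary~\ref{cor:TSsamples}), the factor $\log\delta_{\TS}^{-1}$ is only logarithmic in these quantities. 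Hence $N_{\TS}=O\rbr{\eps_{\TS}^{-2}\log\delta_{\TS}^{-1}}$ and $\padN=O\rbr{\padG^{-2}\log\padG^{-1}}$ carry only the weak-arm exponential dependence and no polynomial factor of $M$.

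The two powers of $M$ come from bootstrapping. Overturning the concentrated prior of arm $\ell$ via observed zeros costs $\Theta(M)$ samples: solving $\E\sbr{\mu_j-\mu_i\mid \ZEROS_{j,N_0}}>0$ for $i=\ell$ gives $N_0>M(\mu_\ell^0-\mu_j^0)/\mu_j^0$, and using $\mu_j^0\geq 1/m$ for weak arms, $\bootN=O(Mm)$. This $\bootN$ enters the running-time bound twice: directly through $N\geq\bootN$, and through $\log\bootP^{-1}=\sum_{i<K}\log\rbr{1/\E\sbr{(1-\mu_i)^{\bootN}}}$, whose arm-$\ell$ summand is $\bootN\log(1-\mu_\ell^0)^{-1}=O\rbr{Mm\log(1-\mu_\ell^0)^{-1}}$ (the weak-arm summands contributing only $O(Km\log(Mm))$, and $\log(1-\mu_\ell^0)^{-1}$ absorbing into the $(1-\mu_\ell^0)^{-1}$ penalty). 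Multiplying $K\cdot N\cdot \padG^{-1}\cdot\log(\padG^{-1}\bootP^{-1})$ thus yields $M^2\cdot K^{O(1)}\cdot\max\rbr{m,(1-\mu_\ell^0)^{-1},(\mu_\ell^0)^{-(\ell-1)}}^{O(m)}$; the stated $M^2$ is the worst case $\ell<K$ and is a valid (if loose) upper bound otherwise. Specializing to $\mu_\ell^0\in[\tfrac14,\tfrac34]$ bounds the penalty base by $4^{\ell-1}$, giving $m^{O(\ell m)}$.

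The main obstacle is the suite of Beta estimates that cleanly separate the strength-$M$ arm from the strength-$\leq m$ arms — in particular, proving that pooling arms in the $(j,N)$-recommendation game can only drive $\padG$ down to the product-type tail over the $\ell-1$ higher-mean arms, so that the base is exactly $(\mu_\ell^0)^{-(\ell-1)}$ and no worse, while simultaneously verifying that \emph{no} factor outside a logarithm acquires a polynomial dependence on $M$. Controlling the $O(1/\sqrt M)$ concentration corrections so they never dominate the $m^{-O(m)}$ weak-arm margins is the delicate bookkeeping; once these estimates are in place the result follows by substitution into Theorem~\ref{thm:goodalgo}.
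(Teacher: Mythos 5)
Your proposal is correct and follows essentially the same route as the paper's proof: estimate the prior-dependent parameters $\padG$, $\bootN$, $\log(\bootP^{-1})$, $\eps_{\TS}$, $\delta_{\TS}$ for this one-strong-arm Beta configuration (with the decisive case $j=\ell$ of $\padG$ handled, exactly as in the paper, by the product event that all $\ell-1$ higher-mean weak arms fall below $\mu_\ell^0/2$, giving the $(\mu_\ell^0)^{-(\ell-1)}$ base) and substitute into Theorem~\ref{thm:goodalgo}. If anything, your bookkeeping is slightly more careful than the paper's own: the paper asserts $\bootN=O(M+m)$ where your $O(Mm)$ is the correct worst-case order, and your explicit tracing of both powers of $M$ (one via $N\geq\bootN$, one via the arm-$\ell$ contribution to $\log(\bootP^{-1})$) is precisely the accounting that the claimed $M^2$ factor requires.
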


% if $j=1$ we have:
%\[\RndsUB(N_{\TS})\leq  M^2K^2\cdot \max\rbr{ m,(1-\mu_j^0)^{-1}}^{O(m)} .\]
%For general $j$ we have:
%\[\RndsUB(N_{\TS})\leq  M^2K^{O(1)}\cdot \max\rbr{ m,(1-\mu_j^0)^{-1},(\mu_j^0)^{-(j-1)}}^{O(m)} .\]

While dependency on \emph{one} well-known arm can be mitigated, the lower bound in Lemma~\ref{lm:examples-collections}(c) rules out a similar improvement if $\strength(\mP_i)\geq M$ for two or more arms $i$.

We obtain a particularly clean characterization for $K=2$ arms, which is worth stating explicitly.

\begin{corollary}
Assume $K=2$ arms and Beta priors
    $\mP_i\sim \term{Beta}(a_i,b_i)$.
Let
    $M = \max_i(a_i+b_i)$ and $m = \min_i(a_i+b_i)\geq 2$.
Suppose
    $\mu_i\in [\eps,1-\eps]$
for some $\eps>0$ and both arms $i$. Then
\[ \RndsUB(N_{\TS})
        \leq  M^2\cdot \max\rbr{m,\nicefrac{1}{\eps}}^{O(m)}.
\]
Moreover,
    $\MainLB\geq 2^{\Omega(m)}$
provided that
 $\mu_1^0-\mu_2^0 \geq \Omega(1)$.
\end{corollary}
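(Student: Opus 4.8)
The plan is to obtain both parts as specializations of results already established for general $K$, so almost no new work is required. For the upper bound, I would invoke the preceding corollary (the ``one well-known arm'' bound for Beta priors). Designate the well-known arm $\ell$ to be an arm of maximal strength, so that $\strength(\mP_\ell)=M$ and the remaining arm has strength exactly $m\leq M$; this matches the hypotheses $M\geq m\geq 2$ of that corollary (if $M=m$ either arm may serve as $\ell$). Setting $K=2$, the factor $K^{O(1)}$ collapses to an absolute constant, so it remains only to bound the penalty term $\max\rbr{m,\;(1-\mu_\ell^0)^{-1},\;(\mu_\ell^0)^{-(\ell-1)}}$.

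The key observation handling this penalty is that, under the global ordering convention $\mu_1^0\geq \mu_2^0$, the rank $\ell$ lies in $\{1,2\}$, so the exponent $\ell-1\in\{0,1\}$. If $\ell=1$ then $(\mu_\ell^0)^{-(\ell-1)}=1$, while if $\ell=2$ then $(\mu_\ell^0)^{-(\ell-1)}=1/\mu_2^0\leq 1/\eps$, using the non-degeneracy bound $\mu_\ell^0\geq \eps$. Likewise $(1-\mu_\ell^0)^{-1}\leq 1/\eps$ since $\mu_\ell^0\leq 1-\eps$. In every case the penalty term is at most $\max(m,1/\eps)$, and substituting into the preceding corollary yields
\[
\RndsUB(N_{\TS})\leq M^2\cdot \max\rbr{m,\nicefrac{1}{\eps}}^{O(m)},
\]
as claimed.

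For the lower bound, I would appeal to the general form of Corollary~\ref{thm:betabound}(c): both arms have strength at least $m$ (the smaller of the two strengths), and by hypothesis $|\mu_1^0-\mu_2^0|=\mu_1^0-\mu_2^0\geq \Omega(1)$, so that statement applies to the pair of arms $1,2$ and gives $\MainLB\geq 2^{\Omega(m)}$. Here I would note that the exponential term in the proof of part (c) is driven precisely by the $\MainLB$ component of $\RndsLB$, so the conclusion transfers from $\RndsLB$ to $\MainLB$ directly.

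The only genuine subtlety, and the step I would watch most carefully, is the treatment of the factor $(\mu_\ell^0)^{-(\ell-1)}$: it is essential that the well-known arm is permitted to be either the prior-best or the prior-second-best arm, and that in the latter case the exponent is exactly $1$, so that the single non-degeneracy inequality $\mu_\ell^0\geq \eps$ already keeps this factor at most $1/\eps$ rather than letting it blow up. Everything else is a mechanical specialization of the $K=2$ case of results proved earlier, with the constant-$K$ factors absorbed into absolute constants.
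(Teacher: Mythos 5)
Your proposal is correct and follows exactly the route the paper intends: the upper bound is the $K=2$ specialization of the preceding ``one well-known arm'' corollary (with $\ell\in\{1,2\}$ so the penalty $(\mu_\ell^0)^{-(\ell-1)}$ and $(1-\mu_\ell^0)^{-1}$ are both at most $\max(1,\nicefrac{1}{\eps})$), and the lower bound is the general form of Corollary~\ref{thm:betabound}(c), whose proof indeed bounds $\E[(\mu_j-\mu_i)_+]\leq 2^{-\Omega(m)}$ and hence lower-bounds $\MainLB$ itself. Your handling of the exponent $\ell-1\in\{0,1\}$ is precisely the point that makes the specialization clean, and no further argument is needed.
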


%\section{Improved Algorithm for ``Easy" Problem Instances}
\section{Extensions via Improved Algorithms}
\label{sec:ext}

\vspace{-2mm}
\bigXhdr{Improved Algorithm for ``Easy" Problem Instances.}
We design a new algorithm for collecting $N$ samples of each arm, for any given $N$  (Algorithm~\ref{alg:pyramidsuperfast} in Appendix~\ref{sec:fastest-algo}). This algorithm is tailored to the ``easy" case of the polynomial vs. exponential dichotomy in Section~\ref{sec:sample-K}. Formally, a problem instance is called \emph{$\delta$-easy}, $\delta>0$ if the collection of  priors $(\mP_1 \LDOTS \mP_K)$ is $\delta$-easy and $\delta$-non-dominant. We obtain linear dependence on $K$, which is obviously optimal for a fixed $N$, along with computational efficiency.

 %one sample of each arm can be collected in $\tildeO\rbr{K/\delta^2}$ rounds.

\begin{restatable}{theorem}{linearexplore}
\label{thm:linearexplore}
Given a $\delta$-easy problem instance with $K$ arms, Algorithm~\ref{alg:pyramidsuperfast} is BIC and collects at least $N$ samples of each arm almost surely in
    $\tildeO\rbr{\frac{KN}{\delta}+\frac{K}{\delta^4}}$
rounds, for any desired $N\in\N$. The running time for each round is $O(1)$ plus one call to ``exploitation" given up to $\tilde O(\delta^{-2})$ samples per arm.
\end{restatable}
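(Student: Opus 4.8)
\emph{Overall approach.} The plan is to reuse the skeleton of \MainALG (Algorithm~\ref{alg:pyramidtight})---process the arms in order of decreasing prior mean, interleave exploration, exploitation, and padded phases, and grow the per-arm exploration probability geometrically by a factor $1+\padding$---and to change only what is responsible for the extra factor of $K$ in Theorem~\ref{thm:goodalgo}. Recall the bound there is $\RndsUB(N)=O(KN\,\padG^{-1}\,\log(\padG^{-1}\,\bootP^{-1}))$, and that under $\delta$-easiness (cf.\ the parameter estimates used in the proof of Theorem~\ref{thm:K-general}) every factor is $K$-independent \emph{except} the explicit $K$ (one exploration per arm) and the term $\log(\bootP^{-1})=\tildeO(K/\delta)$. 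The latter is the sole culprit: it counts the geometric-growth iterations per arm, and it is large only because \MainALG bootstraps arm $j$ off the event $\ZEROS_{j,N_0}$ that \emph{all} $j-1$ previous arms return zeros, whose probability is exponentially small in $K$. So the entire task reduces to driving the number of growth iterations per arm down to $\tildeO(1/\delta)$, i.e.\ supplying each arm a $K$-independent initial exploration probability, while keeping every phase BIC.

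\emph{Key idea: a fixed threshold decouples the arms.} The observation I would exploit is that, under $\delta$-easiness, the padded policy for arm $j$ may compare arm $j$ to the \emph{fixed} benchmark $\Phi_\mC=\sup_{\mP\in\mC}\E[\mP]$ rather than to a data-dependent combination of the previous arms. Concretely, take the explicit $(j,N)$-informed policy that recommends arm $j$ iff its empirical mean on $N=\tildeO(\delta^{-2})$ samples exceeds $\Phi_\mC$. Since $\{\pi_j=j\}$ then depends only on arm $j$'s own samples, independence gives $\E[\mu_i\mid\pi_j=j]=\mu_i^0$ for every other arm $i$, so that
\[
\E\sbr{(\mu_j-\mu_i)\,\ind{\pi_j=j}}
 = \E\sbr{(\mu_j-\Phi_\mC)\,\ind{\pi_j=j}} + (\Phi_\mC-\mu_i^0)\,\Pr\sbr{\pi_j=j}.
\]
The second term is nonnegative because $\mu_i^0\le\mu_1^0\le\Phi_\mC$, and a Chernoff argument exactly as in the proof sketch of Lemma~\ref{lm:policy-existence} lower-bounds the first term by $\Omega(\delta)$ using the $\delta$-easy condition~\eqref{eq:K-pwise-inv}. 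Hence this policy is $(j,\Omega(\delta),N)$-suitable with $N=\tildeO(\delta^{-2})$, and---unlike the minimax policy of Lemma~\ref{lm:policy-existence}---it is explicit, needing only a comparison of a running empirical mean to $\Phi_\mC$. This simultaneously yields the claimed computational efficiency ($O(1)$ per round plus one exploitation call on $\tildeO(\delta^{-2})$ samples) and, crucially, removes the $K$-fold compounding: the padding no longer requires knocking down the previous arms, so arm $j$ can be offset against $\Phi_\mC$ with an initial exploration probability that is $K$-independent.

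\emph{Assembling the guarantee.} With a $K$-independent starting probability and padding $\padding=\Omega(\delta)$, the geometric ramp-up to exploration rate $\Omega(\delta)$ takes $\tildeO(1/\delta)$ iterations per arm, in place of $\tildeO(K/\delta)$. The BIC property of each phase and of the interleaving is then inherited almost verbatim from Lemmas~\ref{lm:main-BIC} and~\ref{lm:main-BIC0}, invoking the fixed-threshold padded policy in place of $\pi_j$ (including the analogue of Invariant~2), and the deterministic ``$\ge N$ samples a.s.'' guarantee and the phase bookkeeping are as in Claim~\ref{cl:alg-obs}. For the round count I would split the cost: once the exploration rate reaches $\Omega(\delta)$, collecting $N$ samples of an arm costs $\tildeO(N/\delta)$ rounds, contributing $\tildeO(KN/\delta)$; the geometric ramp-up together with the bootstrapping contributes an $N$-independent overhead of $\tildeO(\delta^{-4})$ per arm (the growth/padded phases operating on $\tildeO(\delta^{-2})$ samples), contributing $\tildeO(K/\delta^4)$. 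Summing gives the stated $\tildeO(KN/\delta+K/\delta^4)$.

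\emph{Main obstacle.} The delicate point is the very first foothold on each arm. The fixed-threshold argument explains why, \emph{once arm $j$ has samples}, the analysis decouples across arms; but recommending arm $j$ before any of its samples exist still needs a BIC offset, and the whole improvement hinges on obtaining one with a $K$-independent probability rather than the $\Pr[\ZEROS_{j,N_0}]$ used by \MainALG. I expect the crux to be redesigning this bootstrapping step: isolating an event of $K$-independent probability under which the initial exploration of arm $j$ is BIC against \emph{all} other arms at once, and verifying that the exponential-growth invariant is preserved so that the padded phases continue to offset the geometrically increasing exploration. Everything else should be a routine adaptation of Section~\ref{sec:sampling}.
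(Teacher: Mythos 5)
You have correctly identified one of the two ingredients: under $\delta$-easiness the padded policy for arm $j$ can ignore the other arms' data entirely, and your fixed-threshold policy (recommend $j$ iff its empirical mean on $\tildeO(\delta^{-2})$ samples beats $\Phi_\mC$) is essentially the paper's Lemma~\ref{lem:easyexplorepadded}, which takes $\widehat\pi_j$ to be exploitation based on $\padN$ samples of arm $j$ alone; this is what delivers the computational efficiency and a $K$-independent padding $\padding=\Omega(\delta)$. But the step you defer in your ``Main obstacle'' paragraph is not a routine adaptation --- it is the entire theorem, and the route you propose for it is provably a dead end. You want each arm $j$ to get its own bootstrap event of $K$-independent probability under which its \emph{first} pull is BIC. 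At that moment arm $j$ has no samples, so the recommendation event $E$ is independent of $\mu_j$ and BIC forces $\E[\mu_i\mid E]\le \mu_j^0$ for every $i\neq j$. Take the $\delta$-easy instance where arms $1 \LDOTS K/2$ have prior mean $\nicefrac12$ and arms $K/2+1 \LDOTS K$ have prior mean $\nicefrac{1}{10}$ (each prior keeping $\Omega(1)$ mass above $\Phi_\mC+\delta$). Averaging the constraints over $i\le K/2$ and applying Hoeffding to $\mu_q$, $q$ uniform on $[K/2]$ --- exactly the concentration argument in the paper's own lower bound, Theorem~\ref{thm:K-general}(b) --- gives $\Pr[E]\le e^{-\Omega(K)}$ for every arm $j>K/2$. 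So about half the arms each require $\Omega(K/\delta)$ doubling iterations, and your per-arm-growth architecture is stuck at $\tildeO(K^2/\delta^4)$ overhead (the bound \eqref{eq:alg2-guarantee} of the fine-tuned Algorithm~\ref{alg:pyramidtightfast}), not the claimed $\tildeO(K/\delta^4)$.

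The paper escapes this with a structurally different design (Algorithm~\ref{alg:pyramidsuperfast}): the expensive exponential-growth process is run for only \emph{one} arm $j_0$, chosen uniformly at random, paying the $\log(\Pexplore^{-1})=\tildeO(K/\delta)$ iterations exactly once with short phases of length $\padN$ (total $\tildeO(K/\delta^4)$). The remaining arms are then explored with no growth process at all: in a final loop, each arm is played in one uniformly random phase out of $n_{\padding}=1+\lceil \padding^{-1}\rceil$, with the other phases running the padded policy $\widehat\pi_{j_0}$, and the hidden identity of $j_0$ (plus a random permutation $\theta$) is what restores incentives. This sidesteps your impossibility because the event $\{A_t=j\}$ is no longer independent of $\mu_j$: with probability $\approx 1/K$ the recommended arm \emph{is} $j_0$ and the recommendation is informed, which yields $\E\sbr{(\mu_j-\mu_i)\ind{A_t=j}}\ge \frac1K\rbr{\padding\,\frac{n_{\padding}-1}{n_{\padding}}-\frac{1}{n_{\padding}}}\ge 0$. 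In short: your proposal supplies the paper's easy lemma, but the decisive idea --- replacing per-arm bootstrapping by a single randomized arm whose padded phase hides everyone else's exploration --- is missing, and no variant of per-arm bootstrapping can replace it.
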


Thus, going back to the setup in Section~\ref{sec:sample-K}, the dependence on $K$ admits a crisp \emph{linear vs. exponential dichotomy} for an arbitrary collection $\mC$. Corollaries for Beta and truncated Gaussian priors, also with linear dependence on $K$, follow from Lemma~\ref{lm:examples-collections}. In particular for strength-$M$ Beta priors, using Corollary~\ref{cor:tslogk}  shows that we efficiently achieve $\RndsOPT(N_{\TS})\leq\tilde O(K\cdot M^{O(M)}).$
 %Given the lower bounds from Theorems~\ref{thm:K-simple} and~\ref{thm:K-general},
 %\end{remark}

The main insight behind Algorithm~\ref{alg:pyramidsuperfast} is that for $\delta$-easy instances the ``suitable" policy $\pi_j$ does not need to depend on the samples from arms $i<j$. We choose an arm $j_0$ uniformly at random and explore it almost surely like in \MainALG, but using policy $\pi_j$ as above. Afterwards we randomize between a padded phase for arm $j_0$ or an exploration phase for a randomly chosen arm $i\neq j_0$. This algorithm is BIC despite only going through the exponential growth process for a single arm $j_0$. It is efficient because we are able to let $\pi_j$ be a form of exploitation.

\bigXhdr{Fine-tuning the main algorithm.}
Algorithm~\ref{alg:pyramidtight} is somewhat wasteful when many samples are desired, i.e. $N\gg \max(\padN,\bootN)$. Indeed, only $\bootN$ samples are needed for the bootstrapping phase to be BIC, and only $\padN$ samples are needed for each iteration of the \term{while} loop to collect $\padN$ samples of arm $j$ almost surely. After that, the remaining samples for arm $j$ can be collected more efficiently: indeed, one can directly randomize between an exploration phase and a padded phase. As we show in Appendix~\ref{sec:improved-algo}, these modifications reduce the number of rounds for collecting $N$ samples of each arm to
\begin{align}\label{eq:alg2-guarantee}
\RndsUBtwo(N) =
O\rbr{
    K\;\padG^{-1}\; \rbr{\padN\,\log(\padG^{-1}\,\bootP^{-1}) + \bootN + N }}.
\end{align}

%%%%%%%%%%%%%
%\section{Efficient Computation for Beta Priors}
%\label{sec:compute}

\bigXhdr{Efficient Computation for Beta Priors.}
We also present a computationally efficient version of the main algorithm. We focus on the special case of Beta priors of strength at most $M$, and recover the statistical guarantee in Corollary~\ref{thm:betabound}(a).
%with $M$-informative beta priors, all the claimed guarantees can be achieved efficiently.
The bottleneck is to compute a $(j,\padding,N)$-suitable policy $\pi_j$, as all other steps are computationally efficient.\footnote{The construction of $\FakePi$ in Lemma~\ref{lem:transform} is by computationally efficient reduction to $\pi_j$, so this causes no issues.}
Such policy can then be plugged into Algorithm~\ref{alg:pyramidtight}. The details can be found in Appendix~\ref{sec:computation-beta}.

The key idea is that if we stochastically increase the priors for
    $\mu_1 \LDOTS \mu_{j-1}$
and keep the padding $\padding$ fixed, this only makes our task more difficult. (Formalizing this involves a coupling argument between ``true" and ``artificial" data.)
Thus, we reduce the problem to one in which all priors are $\term{Beta}(1,M)$. Now that all arms $i\in [j-1]$ are i.i.d., a symmetry argument shows that the only convex combination $q$ of these arms that we need to consider in the recommendation game (see \refeq{eq:pf:lm:policy-existence-1} in the proof of Lemma~\ref{lm:policy-existence}) is the unweighted average. Consequently, the problem of finding a maximin policy for the $j$-recommendation game reduces to competing against this unweighted average, which can be done via a simple comparison.

%As a result, we obtain an efficient algorithm by coupling each posterior distribution of $\mu_i$ to a simulated version corresponding to a fake prior. In fact the same idea works for any collection $\mC$ of priors containing a stochastically maximum element, modulo evaluating $1$-dimensional integrals if the priors are not discrete.

\begin{restatable}{theorem}{betaefficient}
\label{theorem:betaefficient}
Fix arm $j$, the number of arms $K$, parameters $N,M\in\N$ and padding $\padding>0$. Suppose there exists a $(j,N)$-informed policy $\pi_j$ which is BIC and $(j,\padding)$-padded BIC for all problem instances with $K$ arms and Beta priors of strength at most $M$. Then there is policy $\pi_j^{\term{eff}}$ with these properties which can be computed efficiently, namely in time $\poly(K,M,N)$. In particular, one can take $\padding=(\min(K,M))^{O(M)}$. Plugging these $\pi_j^{\term{eff}}$ and $\padding$ into Algorithm~\ref{alg:pyramidtight} yields the guarantee in Corollary~\ref{thm:betabound}(a).
%Fix non-negative integers $N,N_j$ and suppose that for some $\padding>0$, for any $M$-informative beta priors $\mu_1,\dots,\mu_j$, there exists a $(j,\padding)$-padded BIC strategy conditional on some static $\sigma$-algebra $\mathcal G=\mathcal G_{N,N,\dots,N_j,0,\dots,0}$ (where arm $j$ is sampled $N_j$ times). Then a $(j,\padding)$-suitable $\mathcal G$-measurable strategy can be computed in $poly(K,M,N,N_j)$ time.
\end{restatable}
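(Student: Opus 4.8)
The goal of Theorem~\ref{theorem:betaefficient} is to show that for Beta priors of bounded strength, a suitable padded-BIC policy $\pi_j$ can be computed in $\poly(K,M,N)$ time, thereby making Algorithm~\ref{alg:pyramidtight} efficient while retaining the guarantee of Corollary~\ref{thm:betabound}(a). The strategy is to collapse the combinatorially large space of $(j,N)$-informed policies down to a single one-dimensional comparison, using two reductions: (i) a \emph{monotonicity/coupling} step that lets us pessimistically replace the actual priors on arms $1 \LDOTS j-1$ by their ``worst case'' within the strength-$M$ Beta family, and (ii) a \emph{symmetry} step that, once all those arms are identically distributed, shows the only adversarial convex combination $q$ we must defend against is the uniform average.

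\textbf{Step 1 (monotone coupling for priors).} First I would make precise the claim that stochastically increasing the priors for $\mu_1 \LDOTS \mu_{j-1}$, at fixed padding $\padding$, only makes the recommendation game harder for the planner. The right formal object is the recommendation game from \refeq{eq:pf:lm:policy-existence-1}, whose value $V_{j,N}$ is the achievable padding. The plan is to couple the ``true'' data (samples from the actual priors) with ``artificial'' data drawn from a dominating family so that each artificial posterior mean $\widehat\mu_i$ dominates the true $\widetilde\mu_i$ almost surely, exactly as in the proof sketch of Lemma~\ref{lem:transform}. Since the payoff $(\mu_j-\mu_q)\cdot \ind{\pi_j=j}$ is decreasing in the competing arms, a policy that is BIC and $(j,\padding)$-padded BIC against the dominating priors transfers back to the true priors. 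This reduces the problem to the extremal case in which every arm $i<j$ has prior $\term{Beta}(1,M)$ (the stochastically smallest strength-$\le M$ Beta prior).

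\textbf{Step 2 (symmetry reduces $q$ to the average).} Once arms $1 \LDOTS j-1$ are i.i.d.\ $\term{Beta}(1,M)$, I would invoke exchangeability. The inner minimization over $q\in\Delta_{j-1}$ in \refeq{eq:pf:lm:policy-existence-1} can be restricted to symmetric strategies because the game is invariant under permuting arms $1 \LDOTS j-1$: given any minimax-optimal $q$, its average over all coordinate permutations does at least as well for the adversary by convexity of the value in $q$ (the payoff is linear in $q$). Hence the unique $q$ we must defend against is the uniform distribution, so the recommendation game collapses to a \emph{single} scalar comparison: recommend arm $j$ iff its empirical reward on the first $N$ samples exceeds the empirical average reward of arms $1 \LDOTS j-1$. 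This is precisely the simple policy described after \refeq{eq:pf:lm:policy-existence-2}, and computing its recommendation is $O(\poly(K,N))$ per round; conditioning on the $\ZEROS_{j,N_0}$ event and extending to a full BIC policy (exploiting when $\pi_j$ does not recommend arm $j$) is handled by the efficient reductions of Lemma~\ref{lem:transform} and Lemma~\ref{lm:policy-existence}, all of which are computationally explicit for Beta posteriors. Plugging the resulting $\pi_j^{\term{eff}}$ and $\padding=(\min(K,M))^{O(M)}$ into Theorem~\ref{thm:goodalgo} gives the bound in Corollary~\ref{thm:betabound}(a).

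\textbf{Main obstacle.} The delicate point is Step~1: I must verify that the required stochastic domination between ``true'' and ``artificial'' posteriors holds \emph{simultaneously} with preserving the exact marginal distributions needed so that the transferred policy is genuinely BIC (not merely padded-BIC) against the real priors, and that this coupling is compatible with the conditioning on $\ZEROS_{j,N_0}$ used in bootstrapping. Getting a clean monotone likelihood-ratio or coupling argument for Beta posteriors of differing strengths — so that a single $\padding=(\min(K,M))^{O(M)}$ works uniformly — is where the real work lies; the symmetry reduction in Step~2 and the final plug-in are comparatively routine.
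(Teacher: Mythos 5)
Your proposal follows the same two-step route as the paper's proof: a monotone-coupling reduction to an extremal Beta instance (in the style of the $\FakePi$ construction of Lemma~\ref{lem:transform}), then a symmetry argument (the paper's Lemma~\ref{lem:sym}) showing that the uniform mixture over arms $1 \LDOTS j-1$ is an adversary best response, so that the maximin policy collapses to a single, efficiently computable posterior-mean comparison. Step 2 and the final plug-in are correct and match the paper.

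The genuine problem is in Step 1, and it is not merely notational: your reduction target contradicts your own coupling requirement. You (correctly) demand a coupling in which the artificial posterior means dominate the true ones, $\hat\mu_i\geq\widetilde\mu_i$ for all $i<j$, because the transfer inequality
$\E[(\widetilde\mu_j-\widetilde\mu_i)\cdot\ind{\pi=j}]\geq\E[(\widetilde\mu_j-\hat\mu_i)\cdot\ind{\pi=j}]$
needs exactly this direction. But you then declare the extremal case to be the one where every competitor $i<j$ has prior $\term{Beta}(1,M)$, which you yourself identify as the \emph{stochastically smallest} strength-$\leq M$ Beta prior. No coupling can make posterior means drawn from a stochastically smaller prior dominate the true ones almost surely; and if you literally ran the reduction to $\term{Beta}(1,M)$ competitors, you would be proving padded-BIC on the \emph{easiest} instance, which does not transfer back to harder ones. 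The correct extremal instance, used in the paper's appendix proof, is $\mu_1 \LDOTS \mu_{j-1}\sim\term{Beta}(M,1)$ (the MLR-largest strength-$M$ prior, so that $\hat\mu_i\geq\widetilde\mu_i$ is achievable via the canonical monotone coupling, cf.\ Corollary~\ref{cor:betapostdom}), together with $\mu_j\sim\term{Beta}(1,M)$ and the \emph{opposite} coupling $\hat\mu_j\leq\widetilde\mu_j$ for arm $j$ itself. Your proposal omits the arm-$j$ side of the coupling entirely; it is needed both for the paper's displayed chain
$\E[(\widetilde\mu_j-\widetilde\mu_i)\cdot\ind{\pi=j}]\geq\E[(\hat\mu_j-\hat\mu_i)\cdot\ind{\pi=j}]\geq\padding$
and to make the claimed padding bound uniform over all strength-$\leq M$ instances rather than dependent on arm $j$'s actual prior. (In partial mitigation, the sketch in Section~\ref{sec:ext} contains the same loose phrase that all priors reduce to $\term{Beta}(1,M)$; the appendix proof is where the directions are set straight, and your ``Main obstacle'' paragraph correctly flags that this is where the real work lies.)
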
 
\section{Explorability Characterization}
\label{sec:explorability}

We prove that the pairwise non-dominance condition \eqref{eq:pwise-cond} is in fact necessary, and use this condition to characterize which arms can be explored. Our result is modulo a minor non-degeneracy assumption: an arm is called \emph{support-degenerate} if its true mean reward (according the the prior) is always in the set $\{x,1\}$ for some $x\in [0,1)$. The proof is relatively simple: all the ``heavy lifting" is done in the algorithmic results.

\begin{theorem}\label{thm:explorability}
Suppose all arms are not support-degenerate, in the sense defined above.
%, and all priors are Borel measures on $[0,1]$.
Let $S$ be the set of all arms $i$ which satisfy the pairwise non-dominance condition \eqref{eq:pwise-cond}. 
%\begin{align}\label{eq:pwise-cond-thm}
%\Pr[\mu_i^0>\mu_j]>0
%\quad \text{for all arms $j\neq i$}.
%\end{align}
\begin{itemize}
\item[(a)] Only arms in $S$ can be explored. More formally: for each arm $i$, if there exists a BIC algorithm which explores this arm with a positive probability, then $i\in S$.

\item[(b)] All our algorithms can be restricted to $S$. More formally: if an algorithm is guaranteed to be BIC under \eqref{eq:pwise-cond}, then this algorithm remains BIC if it is restricted to the arms in $S$.

\item[(c)] All arms in $S$ can be explored. More formally: there is a BIC algorithm which explores all arms in $S$ with probability $1$ within some finite time $t^*)$ depending on the prior.

\end{itemize}
\end{theorem}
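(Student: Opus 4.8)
I would treat the three parts separately. The heart of the matter is the necessity direction (a); parts (b) and (c) reduce to observing that the restriction to $S$ is again a pairwise non-dominant instance and then invoking the algorithmic results.

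\textbf{Part (a) (necessity).} I would argue by contraposition. Assume $i\notin S$, so there is an arm $j\neq i$ with $\mu_j\geq \mu_i^0$ almost surely, and show no BIC algorithm plays $i$ with positive probability. Suppose one did, and let $t^*$ be the first round with $\Pr[A_{t^*}=i,\,\mE_{t^*-1}]>0$; set $W=\{A_{t^*}=i,\,\mE_{t^*-1}\}$. By minimality of $t^*$, arm $i$ has never been sampled on $W$, so $\mF_{t^*}$ carries no information about $\mu_i$; since the priors are independent and $A_{t^*}$ is determined by $\mF_{t^*}$ together with independent randomness, we get $\mu_i\perp W$ and hence $\E[\mu_i\mid W]=\mu_i^0$. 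The BIC inequality \eqref{eq:BIC-defn} at round $t^*$ gives $\E[\mu_j\mid W]\leq \E[\mu_i\mid W]=\mu_i^0$, while $\mu_j\geq\mu_i^0$ a.s. forces $\E[\mu_j\mid W]\geq \mu_i^0$. Thus $\mu_j=\mu_i^0$ almost surely on $W$, an event of positive probability.

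\textbf{The main obstacle.} Ruling out the conclusion $\mu_j=\mu_i^0$ a.s.\ on $W$ is exactly where the non-degeneracy hypothesis and Bernoulli feedback enter, and I expect this to be the delicate step. First, $\mu_i^0\in(0,1)$, since otherwise arm $i$ is constant and support-degenerate. The posterior law of $\mu_j$ given $\mF_{t^*}$ depends only on finitely many Bernoulli samples of arm $j$, so on the open interval $(0,1)$ it is mutually absolutely continuous with the prior $\mP_j$ (the likelihood $\mu_j^{a}(1-\mu_j)^{b}$ is strictly positive there); the only effect conditioning can have at the endpoints is to annihilate an atom at $0$ or $1$ when the opposite outcome is observed. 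I would then split on $\mathrm{supp}(\mP_j)$, which lies in $[\mu_i^0,1]$: if it meets the open interval $(\mu_i^0,1)$, the posterior assigns positive mass strictly above $\mu_i^0$ for every realization, contradicting $\mu_j=\mu_i^0$ on $W$; otherwise $\mathrm{supp}(\mP_j)\subseteq\{\mu_i^0,1\}$, so arm $j$ is support-degenerate (with $x=\mu_i^0$), which is excluded. Either way we reach a contradiction, proving $i\in S$. The subtlety the $\{x,1\}$ exclusion addresses is precisely that a single observed $0$ collapses a $\{x,1\}$-supported posterior onto $x$, which would otherwise let the algorithm learn $\mu_j$ exactly and explore $i$ at the BIC boundary.

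\textbf{Parts (b) and (c).} For (b) I would note that pairwise non-dominance is inherited by sub-collections: for $i\in S$ and any $j\in S\setminus\{i\}\subseteq\{j:j\neq i\}$, condition \eqref{eq:pwise-cond} for the full instance already gives $\Pr[\mu_j<\mu_i^0]>0$, so the restricted instance on arm set $S$ again satisfies \eqref{eq:pwise-cond}. Since $\mu_i^0$ and the joint law of $\{\mu_i\}_{i\in S}$ are unchanged by deleting arms outside $S$, any algorithm whose BIC guarantee is proved under \eqref{eq:pwise-cond} stays BIC on the restricted instance. For (c), part (b) makes the restriction to $S$ a legitimate pairwise non-dominant instance, so I would simply apply Theorem~\ref{thm:goodalgo} with $N=1$ to it: \MainALG is BIC and samples every arm of $S$ at least once almost surely within the fixed horizon $\RndsUB(1)$, which serves as the desired finite time $t^*$. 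This explores all arms in $S$ and completes the characterization.
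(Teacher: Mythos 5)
Your part (a) is correct, and it is in substance the paper's own argument. The paper phrases it via $m_j$, the minimum of the support of $\mu_j$: at the first round where arm $i$ could be explored, $\E[\mu_i \mid \mF_t] = \mu_i^0 \leq m_j$, and the inequality $m_j < \E[\mu_j \mid \mF_t]$ is strict because with Bernoulli feedback the only values of $\mu_j$ that conditioning can eliminate are $0$ and $1$, so a posterior point mass at $m_j$ would force $\mathrm{support}(\mP_j) \subseteq \{0, m_j, 1\}$, which the non-degeneracy assumption excludes. Your formulation via almost-sure domination, independence of the unexplored arm, and mutual absolute continuity of posterior and prior on $(0,1)$ is the same mechanism in different clothing, and your case analysis (support meets $(\mu_i^0,1)$ versus support $\subseteq \{\mu_i^0,1\}$) is sound.

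The genuine gap is in part (b), and it propagates to (c). The BIC condition \eqref{eq:BIC-defn} quantifies over \emph{all} arms $j \in [K]$, including $j \notin S$: ``remains BIC'' means BIC in the original $K$-arm problem, where agents may deviate to arms outside $S$ (otherwise part (c) would not produce a BIC algorithm for the original problem at all, and the theorem would not characterize explorability). Your inheritance argument only delivers the comparisons among arms of $S$. Since the restricted algorithm never samples any $j \notin S$, the posterior mean of such an arm stays frozen at $\mu_j^0$, so you must additionally show that
$\E[\mu_i \mid A_t = i,\, \mE_{t-1}] \geq \mu_j^0$
for every recommended $i \in S$ and every $j \notin S$. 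This is not automatic: on particular histories all posterior means within $S$ can drop below $\max_{j\notin S}\mu_j^0$, and one has to argue that the average over the whole recommendation event still clears this bar. The paper closes exactly this step by a contradiction through part (a): take the first round at which some comparison against an arm outside $S$ fails, let $j$ be the arm maximizing the posterior (here, prior) mean on the offending recommendation event, and modify the algorithm to recommend $j$ on that event (exploiting thereafter); this yields a BIC algorithm that explores $j \notin S$ with positive probability, contradicting (a). Without this argument, your part (c) --- running \MainALG on the sub-instance $S$ via Theorem~\ref{thm:goodalgo} --- only gives an algorithm that is BIC against deviations inside $S$, which is strictly weaker than what the theorem asserts.
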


\begin{proof}
First we show that any arm $i\not\in S$ is not explorable. Let arm $j$ be a ``reason" why $i\not\in S$, \ie suppose $\mu_i^0\leq m_j$ for some arm $j\neq i$, where $m_j$ is the minimum value in the support of $\mu_j$. For sake of contradiction, let $t$ be the first time at which arm $i$ is explored with positive probability by a BIC algorithm. Then
$\E^t[\mu_i]=\mu_i^0 \leq m_j<\E^t[\mu_j]$, contradiction.

The last inequality is strict by non-degeneracy:  since arm $j$ is not support-degenerate, its posterior reward distribution cannot be a point mass at $m_j$. Indeed, the only values of $\mu_j$ that can be ruled out with probability $1$ in finite time are $0$ and $1$ (because of Bernoulli rewards). Thus, if the posterior on $\mu_j$ is a point mass on $m_j$, then $\mu_j$ must have support in $\{0,m_j,1\}$. If the support includes $0$, then $m_j=0$ by definition, and the support in $\{m_j,1\}$ is ruled out by non-degeneracy.
This proves part (a).

Parts (b) and (c) easily follow. For part (b), consider the algorithm restricted to the arms in $S$. If the BIC condition is ever violated, it can only be because some arm $j\not\in S$ would be preferred by an agent. Thus, we obtain a BIC algorithm which explores arm $j$ with positive probability (by recommending arm $j$ for the same history), which contradicts part (a). Part (c) follows by applying part (b) to Algorithm~\ref{alg:pyramidtight}.
\end{proof}

\addcontentsline{toc}{section}{References}
\bibliographystyle{plainnat}
%\begin{small}
\bibliography{bib-abbrv,bib-AGT,bib-bandits,bib-ML,bib-random,bib-slivkins,bib-competition,bib-Sellke}
%\end{small}

\appendix

\newpage

% shows the appendices in ToC
% \addtocontents{toc}{\protect\setcounter{tocdepth}{2}}
% \tableofcontents
\newpage

\section{Tools from Probability}
\label{app:tools}

\subsection{Fortuin-Kasteleyn-Ginibre (FKG) inequality for correlation}
\label{app:mon-ineq}

%We rely on a correlation inequality known as Fortuin-Kasteleyn-Ginibre (FKG) inequality.
%We use the following formulation.

\begin{lemma}{\cite[Theorem 6.2.1]{ProbabilisticMethod}}[FKG Inequality]
\label{lem:FKG}
Consider measures $\nu_1 \LDOTS \nu_n$ on $\mathbb R$, and let $\nu = \prod_{i\in[n]} \nu_i$ be the product measure on $[0,1]^n$. Suppose $f,g:\mathbb R^n\to [0,1]$ are functions which are increasing in each coordinate. Then
    $\E[f\cdot g]\geq \E[f]\cdot \E[g]$,
where $\E$ denotes expectation relative to $\nu$. Likewise, if $f$ is coordinate-wise increasing and $g$ is coordinate-wise decreasing then
$\E[f\cdot g]\leq \E[f]\cdot \E[g]$.
\end{lemma}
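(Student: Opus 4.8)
The plan is to prove the inequality by induction on the number of coordinates $n$, reducing everything to a one-dimensional correlation statement. The base case $n=1$ is the classical ``two independent copies'' argument: let $X,Y$ be i.i.d.\ with law $\nu_1$, and observe that since $f$ and $g$ are both increasing, the differences $f(X)-f(Y)$ and $g(X)-g(Y)$ always share the same sign, so
\[
\bigl(f(X)-f(Y)\bigr)\bigl(g(X)-g(Y)\bigr)\geq 0
\]
pointwise. Taking expectations and using independence of $X$ and $Y$ to split the cross terms yields $2\,\E[fg]-2\,\E[f]\,\E[g]\geq 0$, which is exactly the claim for $n=1$.

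For the inductive step I would assume the result for $n-1$ coordinates and write a point of $[0,1]^n$ as $(\mathbf{x},x_n)$ with $\mathbf{x}\in[0,1]^{n-1}$. First I would condition on the last coordinate: for each fixed $x_n$, the maps $\mathbf{x}\mapsto f(\mathbf{x},x_n)$ and $\mathbf{x}\mapsto g(\mathbf{x},x_n)$ are increasing in each of the first $n-1$ coordinates, so the inductive hypothesis applied to $\nu_1\times\cdots\times\nu_{n-1}$ gives
\[
\E\sbr{fg \mid X_n=x_n}\;\geq\; F(x_n)\,G(x_n),
\]
where $F(x_n):=\E[f\mid X_n=x_n]$ and $G(x_n):=\E[g\mid X_n=x_n]$. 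The key observation is that $F$ and $G$ are themselves increasing functions of $x_n$, because $f$ and $g$ are increasing in the last coordinate and integrating over the remaining (independent) coordinates preserves monotonicity. Applying the base case to $F,G$ viewed as functions of $X_n\sim\nu_n$ then gives $\E[F(X_n)\,G(X_n)]\geq \E[F]\,\E[G]=\E[f]\,\E[g]$. Chaining the two inequalities through the tower property $\E[fg]=\E\bigl[\,\E[fg\mid X_n]\,\bigr]$ closes the induction.

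Finally, the mixed-monotonicity statement follows immediately by applying the first part to $f$ and $1-g$: if $g$ is coordinate-wise decreasing then $1-g$ is coordinate-wise increasing and still $[0,1]$-valued, so $\E[f(1-g)]\geq\E[f]\,\E[1-g]$, which rearranges to $\E[fg]\leq\E[f]\,\E[g]$.

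The main obstacle is not any single computation but the bookkeeping in the inductive step: one must verify that conditioning on $x_n$ leaves $f(\cdot,x_n),g(\cdot,x_n)$ coordinate-wise increasing (so the hypothesis applies), and that the partial expectations $F,G$ inherit monotonicity in $x_n$ (so the base case applies). Both facts hinge on the product structure $\nu=\prod_i\nu_i$, i.e.\ the independence of the coordinates, which is precisely what licenses splitting the expectation and is the only place that hypothesis is used.
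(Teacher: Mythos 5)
Your proof is correct. Note, however, that the paper does not prove this lemma at all: it is imported as a black box from Alon and Spencer (Theorem 6.2.1 of \emph{The Probabilistic Method}), where the FKG inequality is established in greater generality (log-supermodular measures on distributive lattices) via the Ahlswede--Daykin Four Functions Theorem. What you give instead is the standard self-contained tensorization proof of the Harris inequality: the two-independent-copies trick in one dimension, then induction on dimension by conditioning on the last coordinate, using that the partial expectations $F,G$ inherit monotonicity in $x_n$; your reduction of the mixed case via $1-g$ is also the standard device and matches what the paper's Remark~\ref{rem:fkgmixed} does by negating coordinates. This route buys simplicity and self-containedness, and it covers exactly the generality the paper uses (product measures, i.e.\ independent coordinates --- indeed independence is where your Fubini/conditioning steps are licensed); what it gives up is the lattice-level generality of the cited theorem, which is unused here. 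One technical caveat worth flagging if you wrote this up: for $n\geq 2$ a coordinate-wise increasing function need not be Borel measurable, so (as in the paper's statement) one implicitly assumes $f,g$ are measurable so that the expectations and the Fubini step make sense; your functions $F,G$ are unproblematic since monotone functions of one real variable are automatically Borel.
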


\begin{remark}\label{rem:fkgmixed}

In fact if $f,g$ are both increasing or both decreasing in each coordinate, then the conclusion above still holds as we can simply negate some coordinates in the parametrization, i.e. view $f,g$ as increasing functions of $-x_i$ instead of decreasing functions of $x_i$. We will refer to this as the \emph{mixed-monotoncity} FKG inequality to highlight the slight subtlety in its application.

\end{remark}

%\ascomment{do we ever use these corollaries??}

\begin{corollary}\label{lem:1varFKG}
Suppose $f,g:\mathbb R\to [0,1]$ are increasing, and $X$ is a random variable. Then
    $\E[f(X)\,g(X)]\geq \E[f(X)]\cdot\E[g(x)].$
If $f$ is increasing and $g$ is decreasing then $\E[f(X)\,g(X)]\leq \E[f(X)]\cdot\E[g(X)].$
\end{corollary}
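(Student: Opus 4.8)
The plan is to observe that this corollary is nothing more than the $n=1$ instance of the FKG inequality (Lemma~\ref{lem:FKG}). First I would let $\nu_1$ denote the distribution of the random variable $X$ on $\mathbb{R}$, viewed as a (trivial, single-factor) product measure. Then any increasing $f,g:\mathbb{R}\to[0,1]$ are coordinate-wise increasing functions on $\mathbb{R}^1$, so the first half of Lemma~\ref{lem:FKG} applied with $n=1$ yields $\E[f(X)\,g(X)]\geq \E[f(X)]\cdot\E[g(X)]$ directly. For the mixed case, if $f$ is increasing and $g$ is decreasing, the second half of Lemma~\ref{lem:FKG} (again with $n=1$) gives the reverse inequality $\E[f(X)\,g(X)]\leq \E[f(X)]\cdot\E[g(X)]$. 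Thus both claims follow immediately, with no further work.

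Alternatively, if one prefers a self-contained argument that does not invoke the multivariate statement, I would use the classical two-copies trick. Let $Y$ be an independent copy of $X$ (same distribution, independent of $X$). When both $f$ and $g$ are increasing, the two factors $f(X)-f(Y)$ and $g(X)-g(Y)$ always share the same sign, so their product is pointwise nonnegative; hence
\[
0\leq \E\big[(f(X)-f(Y))(g(X)-g(Y))\big] = 2\,\E[f(X)\,g(X)] - 2\,\E[f(X)]\cdot\E[g(X)],
\]
where the final equality uses independence together with the fact that $X$ and $Y$ have the same distribution. Dividing by $2$ gives the claim. When $f$ is increasing and $g$ is decreasing, the two factors have opposite signs, so the product is pointwise nonpositive and the identical computation yields the reverse inequality.

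There is essentially no obstacle here: the only points requiring care are the bookkeeping in expanding the product and the justification that $f(X)-f(Y)$ and $g(X)-g(Y)$ have matching (resp.\ opposite) signs, both of which are immediate from monotonicity. Because $f,g$ map into $[0,1]$, all expectations are finite and no integrability issues arise.
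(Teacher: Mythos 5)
Your first argument is exactly the paper's (implicit) proof: the paper states this as an immediate corollary of Lemma~\ref{lem:FKG}, obtained by taking $n=1$ with $\nu_1$ the law of $X$, and gives no further argument. Your alternative two-copies computation is also correct and self-contained, but the primary route matches the paper, so nothing more is needed.
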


The above corollary is sometimes known as the Chebyshev inequality. Not to conflate it with the better known Chebyshev inequality from probability theory, we will also call it the FKG inequality in this work.

\begin{corollary}\label{cor:staticFKG}
Let $\mathcal G$ be a static $\sigma$-algebra and $F$ a non-negative $\mathcal G$-measurable function which is increasing in the posterior mean of each arm $a_s$ for $s\in S\subseteq [K]$. Then $F$ is positively correlated with any coordinate-wise increasing function of the true values $(\mu_s)_{s\in S}$.
\end{corollary}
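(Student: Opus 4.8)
The plan is to collapse the statement into a single application of the product-form FKG inequality (Lemma~\ref{lem:FKG}) in the \emph{true} mean rewards $(\mu_s)_{s\in S}$, after first integrating out the sample randomness. I read the hypothesis as: $F$ is a function of the posterior means $(\widetilde\mu_s)_{s\in S}$ (increasing in each) together with the arbitrary samples of the arms outside $S$, where $\widetilde\mu_s = \E[\mu_s\mid\mathcal G]$. Since $g$ depends only on $(\mu_s)_{s\in S}$, I would introduce the conditional expectation
\[
 \bar F := \E\sbr{ F \mid (\mu_s)_{s\in S} },
\]
so that, by the tower rule and $(\mu_s)_{s\in S}$-measurability of $g$, we have $\E[F\cdot g]=\E[\bar F\cdot g]$ and $\E[F]=\E[\bar F]$. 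It therefore suffices to show that $\bar F$ and $g$ are both coordinate-wise increasing functions of the \emph{independent} variables $(\mu_s)_{s\in S}$, and then invoke FKG to get $\E[\bar F\cdot g]\geq \E[\bar F]\cdot \E[g]$.

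The heart of the argument is proving that $\bar F$ is increasing in each $\mu_s$, $s\in S$. Recall that $\mathcal G$ is generated by a fixed number of Bernoulli samples of each arm, and that by independence of the priors the samples of arms outside $S$ are independent of $(\mu_s)_{s\in S}$. Fixing those outside samples, I would proceed in two monotonicity steps. First, each posterior mean $\widetilde\mu_s$ is a coordinate-wise increasing function of the Bernoulli samples of arm $s$: replacing a $0$-sample by a $1$-sample shifts the posterior on $\mu_s$ upward in the likelihood-ratio order — exactly the monotone-likelihood-ratio fact already used in the proof of Lemma~\ref{lem:submartingale} via Lemma~\ref{lem:mlr} and Corollary~\ref{cor:1sampledom} — and hence raises its mean; combined with $F$ being increasing in each $\widetilde\mu_s$, this makes $F$ increasing in the samples of each arm $s\in S$. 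Second, the conditional law of those Bernoulli samples given $(\mu_s)_{s\in S}$ is stochastically increasing in each $\mu_s$, since a larger success probability stochastically dominates a smaller one. Composing these two monotonicities and averaging over the (independent) samples of arms outside $S$ yields that $\bar F$ is increasing in each $\mu_s$, as desired.

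With monotonicity of $\bar F$ in hand the proof finishes at once: $g$ is increasing in $(\mu_s)_{s\in S}$ by hypothesis, the $\mu_s$ are independent, and $\bar F, g$ are bounded (note $F$ is a function of finitely many Bernoulli bits), so the product-measure FKG inequality of Lemma~\ref{lem:FKG}, after a harmless affine rescaling into $[0,1]$, gives $\E[\bar F\cdot g]\geq \E[\bar F]\cdot \E[g]$. I expect the only delicate point to be the first monotonicity step, namely turning ``$F$ increasing in the posterior mean $\widetilde\mu_s$'' into ``$\bar F$ increasing in the true mean $\mu_s$.'' This needs the monotone-likelihood-ratio behaviour of Bernoulli sampling in \emph{both} directions — posterior mean monotone in the samples, and samples stochastically monotone in the true mean — and crucially relies on the independence of the arms and on $F$'s dependence on arm $s$ factoring through $\widetilde\mu_s$, just as in Lemma~\ref{lem:submartingale}. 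Everything else is a routine invocation of FKG.
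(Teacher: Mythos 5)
Your proposal is correct and takes essentially the same route as the paper's proof: condition on the true means, show that $\E[F\mid(\mu_s)_{s\in S}]$ is coordinate-wise increasing because raising $\mu_s$ stochastically raises arm $s$'s Bernoulli samples (hence its posterior mean, hence $F$), and then apply the FKG inequality to the independent variables $(\mu_s)_{s\in S}$. The only difference is one of detail: you make explicit the MLR step linking the samples to the posterior mean, which the paper's one-line argument leaves implicit.
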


\begin{proof}
If $\mu_s$ increases, this stochastically increases the (binomially distributed) empirical mean of arm $a_s$, hence increases the expectation of $F$. Hence $\E[F|(\mu_s)_{s\in S}]$ is a coordinate-wise increasing function, and therefore by FKG is positively correlated with any other coordinate-wise increasing function of $(\mu_s)_{s\in S}$.
\end{proof}

\subsection{Bayesian concentration: proof of Lemma~\ref{lem:TSchernoff}}
\label{app:Bayesian-concentration}

\begin{lemma}
\label{lem:freqguarantee}
Suppose there exists an estimator $\theta=\theta(\gamma)$ for a parameter $\mu$ given an observation signal $\gamma$ which satisfies a concentration inequality
    $  \Pr\sbr{ |\theta-\mu|\geq \eps }\leq\delta$.
Assume we start with a prior $P$ over $\mu$ before observing $\gamma$. Then if $\mu$ is in fact chosen according to $P$ and $\hat\mu$ is a sample from the posterior distribution for $\mu$ conditional on $\gamma$, we have
    $\Pr\sbr{ |\hat\mu-\mu|\geq 2\eps}\leq 2\delta$.
\end{lemma}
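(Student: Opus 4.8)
The plan is to exploit a single structural fact: conditionally on the observed signal $\gamma$, the true parameter $\mu$ and the posterior sample $\hat\mu$ are i.i.d., both distributed according to the posterior law of $\mu$ given $\gamma$. This holds essentially by definition---the conditional distribution of $\mu$ given $\gamma$ \emph{is} the posterior, and $\hat\mu$ is by construction an independent draw from that same posterior. Consequently the pair $(\mu,\hat\mu)$ is exchangeable given $\gamma$, and I would use the frequentist estimate $\theta(\gamma)$ as a common ``anchor'' to which both $\mu$ and $\hat\mu$ are close with high probability.

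First I would lift the hypothesized concentration inequality into the joint probability space. The assumption $\Pr[\,|\theta-\mu|\geq\eps\,]\leq\delta$ holds for each fixed value of the parameter, so integrating over $\mu\sim P$ gives $\Pr[\,|\theta(\gamma)-\mu|\geq\eps\,]\leq\delta$, where the probability is now taken over the joint draw of $(\mu,\gamma)$. Next, writing this as an expectation over $\gamma$ of the posterior probability that $|\theta(\gamma)-\mu|\geq\eps$, I would replace $\mu$ by $\hat\mu$: since $\hat\mu$ and $\mu$ share the same conditional law given $\gamma$ while $\theta(\gamma)$ is $\gamma$-measurable, the corresponding posterior probabilities agree for every $\gamma$. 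Hence $\Pr[\,|\theta(\gamma)-\hat\mu|\geq\eps\,]\leq\delta$ as well.

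Finally, the triangle inequality $|\hat\mu-\mu|\leq|\hat\mu-\theta(\gamma)|+|\theta(\gamma)-\mu|$ shows that the event $\{|\hat\mu-\mu|\geq 2\eps\}$ forces at least one of $\{|\hat\mu-\theta(\gamma)|\geq\eps\}$ or $\{|\theta(\gamma)-\mu|\geq\eps\}$ to occur, so a union bound over the two estimates above yields $\Pr[\,|\hat\mu-\mu|\geq 2\eps\,]\leq 2\delta$, as claimed. The only genuinely substantive step is the conditional-i.i.d.\ observation of the first paragraph, which is what lets the one-sided frequentist guarantee transfer simultaneously to both $\mu$ and $\hat\mu$; once that symmetry is in hand the remainder is a routine triangle inequality and union bound, requiring no assumptions on the structure of the prior $P$ or of the estimator $\theta$.
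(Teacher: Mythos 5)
Your proof is correct and is essentially the paper's own argument: both hinge on the observation that $(\hat\mu,\theta)$ and $(\mu,\theta)$ are identically distributed (you phrase this as conditional i.i.d.-ness of $\mu$ and $\hat\mu$ given $\gamma$, with $\theta(\gamma)$ being $\gamma$-measurable), which transfers the concentration bound from $\mu$ to $\hat\mu$, after which the triangle inequality and a union bound finish the job. Your write-up is somewhat more explicit about integrating the frequentist guarantee over the prior and about why the conditional laws agree, but the structure is the same.
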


\begin{proof}
By assumption,
    $\Pr\sbr{ |\theta-\mu|>\eps}\leq\delta$.
Also, $(\mu,\theta)$ and $(\hat\mu,\theta)$ are identically distributed; choosing $\hat\mu$ amounts to resampling $\mu$ from the joint law of $(\mu,\theta)$. Therefore,
    $\Pr\sbr{ |\theta-\hat \mu|>\eps}\leq\delta$.
Combining these two inequalities gives the Lemma by triangle inequality.
\end{proof}

%\asdelete{We now prove Lemma~\ref{lem:TSchernoff} by applying Lemma~\ref{lem:freqguarantee} with $\theta$ the sample mean. \asedit{We restate it here for convenience.}
%\tschernoff*}

\begin{proof}[Proof of Lemma~\ref{lem:TSchernoff}]
For simplicity show the claimed inequalities for $\mu_i$, explaining the generalization to weighted averages $\mu_q$ at the end. First we apply Lemma~\ref{lem:freqguarantee} with $\theta_i$ being the empirical mean from the first $\varepsilon^{-2}$ samples of arm $i$ - this is $\mathcal F_t$ measurable by assumption. With this choice of estimator $\theta_i$, for any true mean $\mu_i$ the ordinary Chernoff bound implies $\Pr[|\theta_i-\mu_i|\geq r\varepsilon]\leq C_0e^{-r^2/C_0}$ for some absolute constant $C_0$. Then \eqref{eq:BChernoff} follows from Lemma~\ref{lem:freqguarantee}.

We now deduce the second claim \eqref{eq:BChernoff-2} from the first. Indeed \eqref{eq:BChernoff} combined with the tail-decay definition of sub-Gaussianity implies that $\frac{\hat\mu_i-\mu_i}{\varepsilon}$ is $O(1)$ sub-Gaussian (averaged over all randomness in the problem). Moreover it manifestly has mean $0$. 
Also,
\[
    \E[\mu_i|\mF_t]-\mu_i=\E\sbr{\hat\mu_i-\mu_i\mid\sigma(\mF_t,\mu_i)}.
\]
Using the exponential moment definition of sub-Gaussianity, the fact that $\frac{\E[\mu_i|\mF_t]-\mu_i}{\varepsilon}$ is the conditional expectation of an $O(1)$ sub-Gaussian variable implies it is also $O(1)$ sub-Gaussian. (Indeed, exponential moments always decrease under conditional expectation by Jensen's inequality.) Using again the tail-decay definition of sub-Gaussianity completes the proof of \eqref{eq:BChernoff-2}.

The extension to $\mu_q$ poses no additional challenge because we may apply Lemma~\ref{lem:freqguarantee} to $\mu_q=\sum_i q_i\mu_i$, and the estimator $\theta_q=\sum_i q_i\theta_i$ obeys exactly the same Chernoff bound.
\end{proof}

\subsection{Stochastic and MLR Domination}

We use a characterization of multivariate (first-order) stochastic domination \citep{fill2001stochastic}.

\begin{lemma}\label{lem:mvstochdom}
Given two probability distributions $\nu,\nu'$ on $\mathbb R^n$, the following are equivalent:
\begin{OneLiners}
    \item[1.] For any coordinate-wise increasing function $f$ we have $\mathbb E^{\nu}[f]\geq \mathbb E^{\nu'}[f]$.
    \item[2.] There exists a distribution over pairs $(X,X')\in \mathbb R^n\times \mathbb R^n$ such that $X\geq X'$ coordinate-wise almost surely, and $X\sim \nu,X'\sim \nu'$.
\end{OneLiners}

In both cases we will say $\nu$ stochastically dominates $\nu'$. We may also say that $\nu$ is stochastically larger than $\nu'$, call a change from $\nu$ to $\nu'$ a stochastic decrease, etc.

\end{lemma}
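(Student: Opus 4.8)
The plan is to establish the two implications separately, noting that all the difficulty lies in $(1)\Rightarrow(2)$.

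The implication $(2)\Rightarrow(1)$ is immediate: given a coupling $(X,X')$ with $X\geq X'$ coordinate-wise almost surely, any coordinate-wise increasing $f$ satisfies $f(X)\geq f(X')$ pointwise, so $\E^{\nu}[f]=\E[f(X)]\geq \E[f(X')]=\E^{\nu'}[f]$.

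For $(1)\Rightarrow(2)$, I would first recast condition~1 in terms of \emph{up-sets}, i.e.\ Borel sets $U\subseteq\mathbb R^n$ satisfying $x\in U,\ y\geq x\Rightarrow y\in U$. Applying condition~1 to the increasing indicator $f=\mathbf{1}_U$ gives $\nu(U)\geq\nu'(U)$ for every up-set $U$. Conversely these scalar inequalities recover condition~1 via the layer-cake formula $f(x)=\int_0^\infty\mathbf{1}_{\{f>t\}}(x)\,dt$ for bounded increasing $f\geq 0$, since each superlevel set $\{f>t\}$ is an up-set and integrating the inequality $\nu(\{f>t\})\geq\nu'(\{f>t\})$ in $t$ yields $\E^\nu[f]\geq\E^{\nu'}[f]$ (general bounded increasing $f$ reduce to this by an additive shift). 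Thus condition~1 is equivalent to up-set domination.

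Given up-set domination, the existence of the monotone coupling is exactly Strassen's theorem for the coordinate-wise order on $\mathbb R^n$, which I would cite from \citep{fill2001stochastic}. For a self-contained argument I would first treat finitely supported $\nu,\nu'$: set up the transportation network in which mass may be routed only ``upward'' along $\leq$, observe that the up-set inequalities are precisely its cut constraints, and apply the max-flow/min-cut theorem (equivalently Hall's condition) to obtain a feasible monotone transport plan, i.e.\ a coupling with $X\geq X'$. The general case then follows by weakly approximating $\nu,\nu'$ by finitely supported measures, using tightness to extract a weak limit of the associated couplings, and noting that the closed order relation $X\geq X'$ survives the limit. The main obstacle is precisely this last step: promoting the expectation comparison into an actual almost-sure ordered coupling. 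The reduction of condition~1 to up-set domination is routine, but moving from up-set domination to a coupling is the real content---carried either by Strassen's theorem or, if done by hand, by the min-cut characterization in the finite case together with a careful weak-limit argument preserving the ordering.
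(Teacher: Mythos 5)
Your proposal is correct, but it is worth noting that the paper never proves this lemma at all: it is imported as a known characterization of multivariate stochastic dominance, stated with the citation \citep{fill2001stochastic}, and used as a black box (e.g.\ to build the couplings behind $\FakePi$ and the Beta-prior comparisons). So you have supplied an argument where the paper gives none. Your easy direction $(2)\Rightarrow(1)$ is the standard pointwise comparison, and your hard direction follows the classical route: reduce condition~1 to domination on Borel up-sets via the layer-cake formula, then invoke Strassen's theorem for the (closed) coordinate-wise partial order, or, for a self-contained proof, handle finitely supported measures by max-flow/min-cut and pass to general measures by a weak-limit argument, using that the relation $\{(x,x'): x\geq x'\}$ is closed and hence survives weak convergence of the couplings. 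Two points in your sketch deserve to be made explicit if written out in full: first, the discretization step must use a \emph{monotone} rounding map (say, rounding each coordinate down to a grid), since an arbitrary weak approximation need not preserve the up-set inequalities; second, for $n\geq 2$ there exist non-Borel up-sets in $\mathbb{R}^n$ (take the up-set generated by a non-Borel subset of an antichain), so ``coordinate-wise increasing function'' in condition~1 must implicitly mean measurable and integrable --- a caveat the paper's statement also glosses over. What the paper's citation buys is brevity; what your approach buys is a self-contained and essentially standard proof of the same fact, consistent with the reference the paper relies on.
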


In the $1$-dimensional case, there is a \emph{canonical monotone coupling} which is easy to compute when the CDF functions of $\nu,\nu'$ are known. It can be realized explicitly as the random pair $(X,X')=(F_1^{-1}(u),F_2^{-1}(u))$ where $F_1$ is the CDF for $\nu$, $F_2$ is the CDF for $\nu'$, and $u\in [0,1]$ is uniformly random. In the case that $\nu,\nu'$ have finite support, it is easy to sample $X'$ conditionally on $X$ from this coupling by taking
\begin{align}\label{eq:app:dom-smth}
u\in [\; \lim_{\varepsilon\downarrow 0}F_1(X-\varepsilon),F_1(X) \;)
\end{align}
uniformly at random and setting $X'=F_2^{-1}(u)$. This is because \eqref{eq:app:dom-smth}
%\[u\in \big[\lim_{\varepsilon\downarrow 0}F_1(X-\varepsilon),F_1(X)\big)\]
holds if and only if $F^{-1}(u)=X$ (when we take CDFs and inverse-CDFs to be right-continuous). When $\nu$ stochastically dominates $\nu'$, this coupling satisfies $\Pr[X\geq X']=1$. For general random variables, these couplings always have the property that $X,X'$ are synchronized, i.e. if $(X_1,X_1'),(X_2,X_2')$ are two samples from the coupled distribution, then $X_1\geq X_2$ implies $X_1'\geq X_2'$ with probability $1$. See e.g. \cite{mourrat2020free} for more discusion on monotone couplings.

We give several related lemmas in the $1$-dimensional case. These results will allow us to construct couplings between various posterior mean distributions, for example in the construction of $\FakePi$ in Algorithm~\ref{alg:pyramidtight}.

\begin{definition}

Given mutually absolutely continuous probability distributions $\nu,\nu'$ on $\mathbb R^1$, we say $\nu$ MLR-dominates $\nu'$ if the Radon-Nikodym derivative function $f$ with $\nu(dx)=g(x)d\nu'(x)$ is increasing.

\end{definition}

\begin{lemma}\label{lem:mlr}

If $\nu$ MLR-dominates $\nu'$ then $\nu$ also stochastically dominates $\nu'$.

\end{lemma}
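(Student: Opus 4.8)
The plan is to reduce the claim to the one-dimensional characterization of stochastic domination in terms of cumulative distribution functions, and then verify that characterization directly from the monotonicity of the likelihood ratio. Write $F_\nu, F_{\nu'}$ for the CDFs of $\nu,\nu'$. By Lemma~\ref{lem:mvstochdom} together with the canonical monotone coupling $(F_\nu^{-1}(u), F_{\nu'}^{-1}(u))$ with $u$ uniform on $[0,1]$ (discussed immediately after that lemma), in one dimension $\nu$ stochastically dominates $\nu'$ if and only if $F_\nu(t)\le F_{\nu'}(t)$ for every $t\in\mathbb R$; this coupling puts all its mass on $\{X\ge X'\}$ precisely when the CDFs are ordered this way. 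So it suffices to establish this pointwise inequality of CDFs.

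First I would record the basic integral identity. Since $\nu(dx)=g(x)\,d\nu'(x)$ and both $\nu,\nu'$ are probability measures, the function $g-1$ is increasing (as $g$ is) and satisfies
\[
\int_{\mathbb R}(g-1)\,d\nu' \;=\; \nu(\mathbb R)-\nu'(\mathbb R)\;=\;0.
\]
Moreover, for any $t$,
\[
F_{\nu'}(t)-F_\nu(t)
\;=\;\int_{(-\infty,t]}(1-g)\,d\nu'.
\]
Thus I must show this integral is nonnegative for all $t$.

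The heart of the argument is a sign-change threshold. Because $g$ is increasing, set $x_0=\inf\{x: g(x)\ge 1\}$, so that $g\le 1$ on $(-\infty,x_0)$ and $g\ge 1$ on $(x_0,\infty)$; the constraint $\int(g-1)\,d\nu'=0$ guarantees this threshold is well-defined (neither region can be $\nu'$-null while the other carries all the mass of a sign-definite integrand). Now split into two cases. If $t\le x_0$, then $1-g\ge 0$ on $(-\infty,t]$, so $\int_{(-\infty,t]}(1-g)\,d\nu'\ge 0$. If $t>x_0$, then I use $\int_{(-\infty,t]}(1-g)\,d\nu'=-\int_{(t,\infty)}(1-g)\,d\nu'$, and since $1-g\le 0$ on $(t,\infty)\subseteq(x_0,\infty)$ the right-hand integral is $\le 0$, giving again $\int_{(-\infty,t]}(1-g)\,d\nu'\ge 0$. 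In both cases $F_\nu(t)\le F_{\nu'}(t)$, which is exactly the CDF criterion, completing the proof.

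The only genuinely delicate point is handling the threshold $x_0$ and any atoms of $\nu'$ at $x_0$, but this is harmless: the case split uses only weak inequalities on the integrand, so the precise value of $g$ at $x_0$ never matters, and the argument goes through verbatim for general (finitely or infinitely supported, atomic or continuous) mutually absolutely continuous $\nu,\nu'$. I expect no other obstacle; the real content is simply that an increasing density relative to $\nu'$ that averages to one must shift mass to the right, which the threshold argument makes precise.
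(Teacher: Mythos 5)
Your proof is correct, but it takes a genuinely different route from the paper's. The paper verifies condition (1) of Lemma~\ref{lem:mvstochdom} directly: for any increasing test function $f$, write $\int f\,d\nu=\int fg\,d\nu'$ and apply the one-dimensional FKG inequality (Corollary~\ref{lem:1varFKG}) to the two increasing functions $f$ and $g$, giving $\int fg\,d\nu'\ge \int f\,d\nu'\cdot\int g\,d\nu'=\int f\,d\nu'$ in one line. You instead verify the CDF criterion (equivalently, build the monotone coupling of condition (2)) via a single-crossing argument: an increasing density $g$ with $\int g\,d\nu'=1$ must cross $1$ once, so $\int_{(-\infty,t]}(1-g)\,d\nu'\ge 0$ for every $t$, either directly or by passing to the complement. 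Your argument is more elementary and self-contained --- it uses no correlation inequality, only monotonicity and the zero-mean constraint --- and it produces the explicit quantile coupling as a byproduct; the paper's is shorter given that FKG is already set up and is stated in exactly the form needed. One small point: your case split $t\le x_0$ versus $t>x_0$ has an edge case when $\nu'$ has an atom at $x_0$ and $g(x_0)>1$, since then $1-g\ge 0$ can fail on $(-\infty,x_0]$. This is harmless but is fixed by splitting at $t<x_0$ versus $t\ge x_0$ instead, or most cleanly by observing that for every $t$, monotonicity of $g$ forces at least one of the two statements ``$1-g\ge 0$ on $(-\infty,t]$'' and ``$1-g\le 0$ on $(t,\infty)$'' to hold, and either one suffices.
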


\begin{proof}

Let $f$ be an increasing function and $g=\frac{d\nu}{d\nu'}$ the Radon-Nikodym derivative. We have:

\[\frac{\int_{\mathbb R}f(x)d\nu(x)}{\int_{\mathbb R}f(x)d\nu'(x)}=\frac{\int_{\mathbb R}f(x)g(x)d\nu'(x)}{\int_{\mathbb R}f(x)d\nu'(x)\cdot \int_{\mathbb R}g(x)d\nu'(x)} .\]

We used the fact $\int_{\mathbb R}g(x)d\nu'(x)$ since $\nu,\nu'$ are probability measures. The result follows by FKG.
\end{proof}

We now show some comparison results on posterior mean distributions in $1$ dimension. In the below, $\mu,\mu'\in [0,1]$ always denote the mean reward of bandit arms $a,a'$ which outputs Bernoulli rewards. %\mscomment{Decided I should explain these a bit, they are new.}

\begin{lemma}\label{lem:mlrdom}
Suppose $\mu\sim \mP$ and $\mu'\sim\mP'$ where distribution $\mP$ MLR-dominates distribution $\mP'$. Then if we observe $1$ sample from each arm and the reward of arm $a$ is at least that of arm $a'$, the posterior distribution of $\mu$ continues to MLR-dominate that of $\mu'$.
\end{lemma}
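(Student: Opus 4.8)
The plan is to verify the defining inequality of MLR domination directly: I would write down the Radon--Nikodym derivative of the posterior law of $\mu$ with respect to the posterior law of $\mu'$ and check that it is an increasing function on $[0,1]$.

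First I would record the effect of a single Bernoulli observation. Since the two arms are independent a priori and their single samples $r,r'\in\{0,1\}$ are conditionally independent given $(\mu,\mu')$, the posteriors factorize: the posterior of $\mu$ after observing reward $r$ has density proportional to $\ell_r(x)\,\mP(dx)$, where $\ell_r(x)=x^{r}(1-x)^{1-r}$ (so $\ell_1(x)=x$ and $\ell_0(x)=1-x$), and likewise the posterior of $\mu'$ is proportional to $\ell_{r'}(x)\,\mP'(dx)$. In particular the posterior of $\mu$ depends only on $r$ and that of $\mu'$ only on $r'$. Writing $g(x)=\tfrac{d\mP}{d\mP'}(x)$, which is increasing by the MLR hypothesis, the Radon--Nikodym derivative of the two posteriors equals, up to a positive normalizing constant,
\[
\frac{\ell_r(x)}{\ell_{r'}(x)}\,g(x)=\left(\frac{x}{1-x}\right)^{r-r'}g(x).
\]
Because $r\geq r'$ and both are Bernoulli, the exponent $r-r'\in\{0,1\}$, so the first factor is either the constant $1$ or the map $x\mapsto x/(1-x)$; in both cases it is nonnegative and increasing on $[0,1)$. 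A product of two nonnegative increasing functions is increasing, so the whole expression is increasing, which is exactly the assertion that the posterior of $\mu$ MLR-dominates that of $\mu'$.

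The step I would be most careful about is the bookkeeping around mutual absolute continuity and nonnegativity, which is what licenses the ``product of increasing functions is increasing'' conclusion: one must note that $\ell_r,\ell_{r'}$ stay strictly positive on the open interval (so the posteriors remain mutually absolutely continuous and the ratio is well defined) and that $g\ge 0$. I would also remark that the same computation covers the coarser reading in which one conditions merely on the event $\{r\geq r'\}$ rather than on the observed values. There the joint posterior is proportional to $\bigl(1-(1-\mu)\mu'\bigr)\,\mP(d\mu)\,\mP'(d\mu')$; integrating out the other coordinate gives marginals proportional to $\bigl((1-m')+m'x\bigr)\mP(dx)$ and $\bigl(1-(1-m)x\bigr)\mP'(dx)$ with $m=\E_{\mP}[\mu]$, $m'=\E_{\mP'}[\mu']$, so the factor $(x/(1-x))^{r-r'}$ is replaced by $\tfrac{(1-m')+m'x}{1-(1-m)x}$. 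This rational function is again nonnegative and increasing on $[0,1]$ (numerator increasing and nonnegative, denominator decreasing and positive), so multiplying by $g$ leaves the conclusion unchanged.
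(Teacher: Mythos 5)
Your proof is correct and follows essentially the same route as the paper's: a single Bernoulli observation multiplies the density by $x$ or $1-x$, so when $r\geq r'$ the posterior likelihood ratio is the prior ratio $g$ times the increasing factor $\left(x/(1-x)\right)^{r-r'}$, hence still increasing. Your extra care about mutual absolute continuity and the additional remark covering conditioning on the event $\{r\geq r'\}$ go beyond what the paper writes, but the core argument is identical.
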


\begin{proof}
Receiving a reward corresponds to changing $\mP(dx)\to cx\mP(dx)$ for some appropriate constant $c$. Receiving no rewards corresponds to $\mP(x)\to c'(1-x)\mP(dx)$. If $a$ receives at least the reward of $a'$, the MLR property is preserved so the result follows.
\end{proof}

\begin{corollary}\label{cor:1sampledom}

Let $\mu\sim\mP$, and let $\mP_0,\mP_1$ be the posterior distributions for $\mu$ after observing $0,1$ respectively reward from $1$ sample. Then $\mP$ MLR-dominates $\mP$ which MLR-dominates $\mP_0$.

\end{corollary}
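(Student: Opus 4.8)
The plan is to prove the chain ``$\mP_1$ MLR-dominates $\mP$, which MLR-dominates $\mP_0$''; the statement as written contains a typo, with the first $\mP$ intended to be $\mP_1$. Everything follows directly from the definition of MLR domination together with the explicit form of the Bernoulli posterior update already recorded in the proof of Lemma~\ref{lem:mlrdom}.

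First I would write down the two posteriors. Observing a single reward of $1$ reweights the prior density by a factor proportional to $x$, so
\[
\mP_1(dx) = \frac{x\,\mP(dx)}{\E[\mu]},
\]
while observing a reward of $0$ reweights by $(1-x)$, giving
\[
\mP_0(dx) = \frac{(1-x)\,\mP(dx)}{\E[1-\mu]}.
\]
These are well-defined probability measures as long as $0 < \E[\mu] < 1$, which holds for any non-degenerate prior.

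Next I would simply read off the Radon--Nikodym derivatives. For the first domination,
\[
\frac{d\mP_1}{d\mP}(x) = \frac{x}{\E[\mu]},
\]
which is increasing in $x$; hence $\mP_1$ MLR-dominates $\mP$. For the second,
\[
\frac{d\mP}{d\mP_0}(x) = \frac{\E[1-\mu]}{1-x},
\]
which is also increasing in $x$, since $1-x$ is decreasing; hence $\mP$ MLR-dominates $\mP_0$. Composing the two relations yields the claimed chain.

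There is essentially no obstacle here: the entire content is the observation that conditioning on a Bernoulli outcome multiplies the density by a monotone likelihood factor ($x$ for a success, $1-x$ for a failure), which is exactly what the definition of MLR domination requires. If one prefers not to re-derive the posteriors, the same conclusion can be obtained by invoking Lemma~\ref{lem:mlrdom} with $\mP = \mP'$ and comparing the outcomes of a single sample, but the direct density computation above is the shortest route.
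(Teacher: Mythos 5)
Your proof is correct and takes essentially the same route as the paper's: the paper likewise argues ``from first principles'' that observing a reward corresponds to reweighting the prior density by $cx$ (and observing no reward by $c'(1-x)$), which is exactly the monotone likelihood factor you compute. You merely make the normalization constants and Radon--Nikodym derivatives explicit, and your reading of the statement's typo ($\mP_1$ in place of the first $\mP$) matches the intended claim.
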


\begin{proof}
The result follows from first principles, since observing a reward again corresponds to changing $\mP(dx)\to cx\mP(dx)$ for some appropriate constant $c$. Alternatively, since $\mP$ is a mixture of $\mP_1,\mP_0$ we can reason that $\mP_1$ stochastically dominates $\mP_0$ using Lemma~\ref{lem:mlrdom} and conclude.
\end{proof}

\begin{lemma}\label{lem:mlrposteriordom}
Suppose $\mu\sim \mP$ and $\mu'\sim\mP'$ where $\mP$ MLR-dominates $\mP'$. Let $\mP_N,\mP_N'$ be the distributions of their posterior means after observing $N$ samples each arm. Then $\mP_N$ stochastically dominates $\mP_N'.$
\end{lemma}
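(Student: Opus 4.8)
The plan is to exhibit an explicit monotone coupling of the two posterior-mean distributions and then invoke the coupling characterization of stochastic domination (Lemma~\ref{lem:mvstochdom}). The key idea is that one cannot directly compare posterior means at a fixed data outcome, since the two arms induce different distributions over the observed rewards; instead, the coupling must control both the prior draw and the likelihood simultaneously.

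First, since $\mP$ MLR-dominates $\mP'$, Lemma~\ref{lem:mlr} gives that $\mP$ stochastically dominates $\mP'$, so by Lemma~\ref{lem:mvstochdom} there is a monotone coupling of $(\mu,\mu')$ with $\mu\geq\mu'$ almost surely and the correct marginals $\mu\sim\mP$, $\mu'\sim\mP'$. Conditionally on $(\mu,\mu')$, I draw shared uniforms $U_1 \LDOTS U_N\in[0,1]$ and let the $t$-th rewards of the two arms be $R_t=\ind{U_t\leq\mu}$ and $R_t'=\ind{U_t\leq\mu'}$. This is a valid coupling: marginally, arm $a$ sees $N$ i.i.d.\ Bernoulli$(\mu)$ rewards with $\mu\sim\mP$, and symmetrically for $a'$; and since $\mu\geq\mu'$ we have $R_t\geq R_t'$ for every $t$, almost surely.

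Next, I apply Lemma~\ref{lem:mlrdom} inductively over the $N$ rounds. Because the reward of arm $a$ weakly exceeds that of arm $a'$ at every step, MLR-domination of the posterior of $\mu$ over the posterior of $\mu'$ is preserved after each observation; hence after all $N$ rounds the posterior of $\mu$ MLR-dominates that of $\mu'$ in every realization of the coupling. By Lemma~\ref{lem:mlr} this MLR-domination implies stochastic domination of the two posteriors, which in turn forces the posterior mean of $\mu$ to be at least the posterior mean of $\mu'$ (apply the increasing function $x\mapsto x$). Writing $X,X'$ for these posterior means, I obtain $X\geq X'$ almost surely, while $X\sim\mP_N$ and $X'\sim\mP_N'$ by construction. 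The implication (2)$\Rightarrow$(1) of Lemma~\ref{lem:mvstochdom} then yields that $\mP_N$ stochastically dominates $\mP_N'$, as claimed.

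The step requiring the most care is the inductive application of Lemma~\ref{lem:mlrdom}: that lemma is stated for a single sample from each arm with the dominating arm's reward at least the other's, including the ties $R_t=R_t'$. Chaining it across the $N$ rounds is legitimate because the Bernoulli posterior depends only on the running count of successes, so conditioning on the rewards one round at a time makes each step exactly an instance of ``observe one more favorable sample,'' with the posteriors from the previous step playing the role of the new pair of MLR-ordered priors. The shared-uniform construction is precisely what guarantees $R_t\geq R_t'$ pointwise in every realization, so the favorable-observation hypothesis of Lemma~\ref{lem:mlrdom} never fails.
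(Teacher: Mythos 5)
Your proof is correct, and it follows the same skeleton as the paper's: couple the two reward processes so that arm $a$'s rewards weakly dominate arm $a'$'s pointwise, iterate Lemma~\ref{lem:mlrdom} to preserve MLR-dominance of the posteriors, deduce that the posterior means are ordered almost surely, and conclude via the coupling characterization in Lemma~\ref{lem:mvstochdom}. The genuine difference is in how the coupling is built. The paper constructs a \emph{Markovian} coupling round by round: at each step it invokes the inductive hypothesis (MLR-dominance of the current posteriors) to argue that the current posterior mean of $\mu$ exceeds that of $\mu'$, and only this ordering of the posterior predictive probabilities makes it possible to couple the next pair of Bernoulli rewards monotonically; the coupling construction and the induction are thus intertwined. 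You instead fix the coupling upfront: monotone-couple the latent means $(\mu,\mu')$ using prior stochastic dominance (via Lemma~\ref{lem:mlr} and Lemma~\ref{lem:mvstochdom}), then generate both reward sequences from shared uniforms, so $R_t\geq R_t'$ holds in every realization with no reference to posteriors. Your induction then reduces to a purely deterministic statement about Bayesian updates --- coordinatewise-ordered data sequences keep the updated distributions MLR-ordered --- cleanly separated from the probabilistic construction. This buys a simpler logical structure (a non-adaptive coupling, and an induction with no probabilistic content) at the cost of invoking the latent variables, which the paper's coupling never touches; the two arguments are otherwise interchangeable, and both silently pass over the same cosmetic technicality that mutual absolute continuity (as required by the paper's definition of MLR-dominance) can fail at support endpoints after updates with unequal success counts.
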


\begin{proof}
We reveal the reward of the samples for each arm simultaneously in $N$ steps. We claim there exists a Markovian coupling between the two reward processes so that the posterior distribution of $\mu$ always MLR-dominates that of $\mu'$. This holds by induction: as long as MLR-dominance holds at time $t$, the posterior mean of $\mu$ is higher than that of $\mu'$, so we can couple the rewards so $\mu$ gets at least as much reward. Applying Lemma~\ref{lem:mlrdom}, this preserves the MLR property. At the end, the final posterior of $\mu$ MLR-dominates that of $\mu'$. Therefore we have coupled the rewards for $\mu,\mu'$ so that $\mu$ always has higher posterior mean, implying the claimed stochastic domination.
\end{proof}

\begin{corollary}\label{cor:betapostdom}
Let $\mu'\sim \term{Beta}(a,b)$ with $1\leq a,b\leq M$, and $\mu\sim \term{Beta}(1,M)$. Then for any $N$, the distribution of the posterior mean of $\mu$ after $N$ samples stochastically dominates that of $\mu'$.
\end{corollary}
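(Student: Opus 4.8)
The plan is to reduce the statement to the MLR machinery already in place: Lemma~\ref{lem:mlrposteriordom} converts an MLR-domination between two priors into a stochastic domination between the induced posterior-mean distributions after $N$ Bernoulli observations, so the only new work is to identify the MLR order between a generic $\term{Beta}(a,b)$ and the extremal prior $\term{Beta}(1,M)$. Write $\mP'=\term{Beta}(a,b)$ for the prior of $\mu'$ and $\mP=\term{Beta}(1,M)$ for the prior of $\mu$.

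First I would compute the likelihood ratio. Up to normalizing constants the Beta densities give $g(x) := \frac{d\mP'}{d\mP}(x) \propto \frac{x^{a-1}(1-x)^{b-1}}{(1-x)^{M-1}} = x^{a-1}(1-x)^{b-M}$. Under the hypotheses $a\ge 1$ and $b\le M$ the exponent $a-1$ is nonnegative and $M-b$ is nonnegative, so both $x^{a-1}$ and $(1-x)^{-(M-b)}$ are nondecreasing on $[0,1]$, and hence so is their product $g$. Thus $\term{Beta}(a,b)$ MLR-dominates $\term{Beta}(1,M)$ in the sense of the definition preceding Lemma~\ref{lem:mlr}. Feeding this into Lemma~\ref{lem:mlrposteriordom}, with $\term{Beta}(a,b)$ playing the role of the MLR-dominant prior, immediately yields that for every $N$ the posterior-mean distribution of $\mu'\sim\term{Beta}(a,b)$ stochastically dominates that of $\mu\sim\term{Beta}(1,M)$.

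The argument has no real obstacle; the only point requiring care is the direction of the domination. As literally worded, the corollary asserts the opposite---that the posterior mean of $\mu\sim\term{Beta}(1,M)$ dominates that of $\mu'\sim\term{Beta}(a,b)$---but this is false already at $N=0$, where the two posterior means collapse to the deterministic prior means and $\tfrac{1}{1+M}<\tfrac12$ for $a=b=1$, $M>1$. The direction produced above by the cited lemmas is also the one needed downstream: the reduction in Theorem~\ref{theorem:betaefficient} replaces each competing arm's prior by the stochastically smallest prior $\term{Beta}(1,M)$ and therefore wants the $\term{Beta}(a,b)$ posterior means to dominate the $\term{Beta}(1,M)$ ones. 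I would therefore prove this (evidently intended) version, i.e.\ state the conclusion with the roles of $\mu$ and $\mu'$ interchanged relative to the displayed wording.
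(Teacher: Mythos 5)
Your proof is correct and takes essentially the same route as the paper: the paper's entire proof is a one-line appeal to MLR domination between the two priors combined with Lemma~\ref{lem:mlrposteriordom}, and you simply make the likelihood-ratio computation $x^{a-1}(1-x)^{b-M}$ explicit. You are also right about the direction: the corollary as displayed (and the paper's proof, which asserts ``MLR-domination of $\mu$ over $\mu'$'') reverses the true order --- it is $\mu'\sim\term{Beta}(a,b)$ that MLR-dominates $\mu\sim\term{Beta}(1,M)$, your $N=0$ counterexample is valid, and the direction you prove is exactly the one invoked downstream in the proof of Theorem~\ref{theorem:betaefficient}, where the posterior mean under a general strength-$M$ Beta prior is claimed to stochastically dominate the posterior mean under $\term{Beta}(1,M)$ (and, symmetrically, to be dominated by that under $\term{Beta}(M,1)$).
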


\begin{proof}
The MLR-domination of $\mu$ over $\mu'$ implies the result.
\end{proof}

% \begin{lemma}

% Let $\mu\sim \mP$ be a mean reward, and let $E_0,E_1$ be the events that the first sample of $a$ gives $0$ or $1$ reward, respectively. For any $N\geq 1$, let $\mathcal G_N$ be the $\sigma$-algebra generated by the first $N$ distributions. Then the distribution $\mathcal L_1$ of $\mathbb E[\mu|\mathcal G_N]$ conditioned on the event $E_1$ stochastically dominates the unconditioned distribution of $\mathbb E[\mu|\mathcal G_N]$. Moreover, the unconditioned distribution dominates that of $\mathbb E[\mu|\mathcal G_N]$ conditioned on the event $E_0$, which we call $\mathcal L_0$.

% \end{lemma}

% \begin{proof}

% Since the unconditioned distribution of the posterior mean  $\mathbb E[\mu|\mathcal G_N]$ is a mixture of $\mathcal L_1,\mathcal L_0$, it suffices to show $\mathcal L_1$ stochastically domination $\mathcal L_0$.
% This follows from Lemma~\ref{lem:mlrposteriordom} since the Radon-Nikodym derivative of the posterior after reward $1,0$ have ratio proportional to $\frac{x}{1-x}$ which is increasing.

% \end{proof}

\begin{lemma}\label{lem:zerosdom}
Suppose $\mu\sim \mP$. Let $\mathcal L_N$ be the distribution for the posterior mean of $\mu$ after $N$ samples, and $\mathcal L_{N,N_0}$ the distribution for the posterior mean of $\mu$ after $N$ samples, conditioned on the first $N_0$ samples having $0$ reward. Then $\mathcal L_N$ stochastically dominates $\mathcal L_{N,N_0}$.
\end{lemma}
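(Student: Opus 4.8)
The plan is to decompose the unconditional law $\mathcal L_N$ according to the realized outcomes of the first $N_0$ samples, and to exhibit $\mathcal L_{N,N_0}$ as the single ``worst'' term in the resulting mixture. First I would record the Bayesian bookkeeping: observing $N$ samples of which the first $N_0$ are all zero yields the same posterior on $\mu$ as starting from the reweighted prior $\mP_0\propto (1-x)^{N_0}\,\mP(dx)$ and then observing the remaining $N-N_0$ samples. Hence $\mathcal L_{N,N_0}$ is exactly the law of the posterior mean of $\mu'\sim\mP_0$ after $N-N_0$ fresh samples.

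Next, for each outcome $\omega\in\{0,1\}^{N_0}$ of the first $N_0$ samples with $s=s(\omega)$ successes, the conditional posterior is $\mP^{(\omega)}\propto x^{s}(1-x)^{N_0-s}\,\mP(dx)$, depending only on $s$. Conditioning the full experiment on the event $\{\text{first }N_0=\omega\}$ leaves the remaining $N-N_0$ samples i.i.d.\ Bernoulli given $\mu$, so the conditional law of the posterior mean $M_N$ after all $N$ samples is precisely the law of the posterior mean of $\mu\sim\mP^{(\omega)}$ after $N-N_0$ samples. Crucially the number of remaining samples, $N-N_0$, is the same for every $\omega$, which is what lets me compare the terms. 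Summing over $\omega$,
\begin{align*}
\mathcal L_N \;=\; \sum_{\omega\in\{0,1\}^{N_0}} \Pr[\text{first }N_0=\omega]\cdot \big(\text{law of the posterior mean of }\mP^{(\omega)}\text{ after }N-N_0\text{ samples}\big),
\end{align*}
and the all-zeros term ($\omega=0\cdots0$, so $\mP^{(\omega)}=\mP_0$) is exactly $\mathcal L_{N,N_0}$.

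The key comparison is that $\mP^{(\omega)}$ MLR-dominates $\mP_0$ for every $\omega$: the likelihood ratio $d\mP^{(\omega)}/d\mP_0 \propto \big(x/(1-x)\big)^{s(\omega)}$ is increasing in $x$ since $s(\omega)\ge 0$ (the same mechanism as in Corollary~\ref{cor:1sampledom}). Applying Lemma~\ref{lem:mlrposteriordom} with $N$ replaced by $N-N_0$, each term in the mixture---being the posterior-mean law of an MLR-larger prior after the same number of samples---stochastically dominates $\mathcal L_{N,N_0}$. Finally, a mixture of laws each stochastically dominating a fixed law $\mathcal R$ again dominates $\mathcal R$ (integrate any increasing test function term by term), so $\mathcal L_N$ stochastically dominates $\mathcal L_{N,N_0}$.

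I expect the only real subtlety---the ``main obstacle'' such as it is---to be the verification that conditioning on $\{\text{first }N_0=\omega\}$ yields the posterior-mean law of $\mP^{(\omega)}$ after exactly $N-N_0$ further samples, uniformly in $\omega$, together with the attendant MLR comparison near the boundary (possible atoms of $\mP$ at $0$ or $1$); everything after that is the routine mixture-of-dominating-laws argument. An equivalent one-conditioning-at-a-time induction, peeling off a single early zero via Corollary~\ref{cor:1sampledom} and the same mixture argument at each step, would also work and avoids summing over all $\omega$ at once.
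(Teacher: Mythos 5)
Your proof is correct and is essentially the paper's own argument: the paper likewise conditions on the realized values of the first $N_0$ samples (your decomposition over $\omega$), observes that each conditional posterior MLR-dominates the zeros-conditioned prior, applies Lemma~\ref{lem:mlrposteriordom} to the remaining $N-N_0$ samples, and then averages over the first $N_0$ outcomes. Your write-up just spells out the mixture bookkeeping (and the edge case $N_0\geq N$, which the paper dispatches as ``clear'') in more detail.
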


\begin{proof}
If $N_0\geq N$ the result is clear so we assume $N_0\leq N$. Suppose $\mu'$ is an independent copy of $\mu$ corresponding to an arm which observed $N_0$ samples with no reward at the start. After observing the first $N_0$ samples of $\mu$, we obtain a posterior distribution for $\mu$ which MLR-dominates that of $\mu'$. Applying from this point Lemma~\ref{lem:mlrposteriordom}, we have established the result conditionally on any fixed values for the first $N_0$ samples for $\mu$. Averaging over the random choice of first $N_0$ samples now implies the result.
\end{proof}

\subsection{Tails of sub-Gaussian distributions}

\begin{lemma}
\label{lem:subgaussiantail}
If random variable $X$ is $O(1)$-sub-Gaussian and event $E$ has probability $\Pr[E]\leq p$, then $\mathbb E[|X\cdot 1_E|\leq O(p\sqrt{\log(p^{-1})})]$.
\end{lemma}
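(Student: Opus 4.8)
The plan is to use the defining tail bound of sub-Gaussianity together with a truncation of $|X|$ at a threshold $M$ that I will optimize at the end. Since $X$ is $O(1)$-sub-Gaussian, there is an absolute constant $C$ with $\Pr[|X|\ge t]\le 2e^{-t^2/C}$ for all $t\ge 0$. I will work in the regime where $p$ is below a small absolute constant, which is the only regime in which the bound is applied (and in which it is meaningful): for $p$ bounded away from $0$ one has the trivial estimate $\E[|X|\cdot 1_E]\le \E[|X|]=O(1)$.

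First I would split the expectation according to whether $|X|$ exceeds $M$, for an arbitrary threshold $M>0$:
\[
\E[|X|\cdot 1_E]
 = \E[\,|X|\cdot 1_E\cdot 1_{\{|X|\le M\}}\,]
 + \E[\,|X|\cdot 1_E\cdot 1_{\{|X|>M\}}\,]
 \le M\,\Pr[E] + \E[\,|X|\cdot 1_{\{|X|>M\}}\,].
\]
Here in the first term I bounded $|X|\le M$ and used $1_E\le 1$ on the remaining factor, while in the second term I simply dropped $1_E\le 1$. The first term is at most $Mp$ by the hypothesis $\Pr[E]\le p$.

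For the tail term I would apply the layer-cake formula to the nonnegative variable $|X|\cdot 1_{\{|X|>M\}}$, giving
\[
\E[\,|X|\cdot 1_{\{|X|>M\}}\,]
 = M\,\Pr[|X|>M] + \int_M^\infty \Pr[|X|>t]\,dt,
\]
and then plug in the sub-Gaussian tail. The first summand is at most $2Me^{-M^2/C}$, and for the integral I would use $1\le t/M$ on $[M,\infty)$ to obtain
\[
\int_M^\infty 2e^{-t^2/C}\,dt
 \le \frac{2}{M}\int_M^\infty t\,e^{-t^2/C}\,dt
 = \frac{C}{M}\,e^{-M^2/C},
\]
so that $\E[\,|X|\cdot 1_{\{|X|>M\}}\,]\le \bigl(2M+C/M\bigr)e^{-M^2/C}$.

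Finally I would optimize by setting $M=\sqrt{C\log(1/p)}$, so that $e^{-M^2/C}=p$. Then the tail term is $(2M+C/M)\,p=O\!\bigl(p\sqrt{\log(1/p)}\bigr)$ and the truncation term is $Mp=O\!\bigl(p\sqrt{\log(1/p)}\bigr)$; adding them yields the claim. The computation is entirely routine; the only mild points to check are the tail-integral estimate above and confirming that in the relevant small-$p$ regime $\sqrt{\log(1/p)}\ge 1$, so that the various absolute constants can be absorbed cleanly into the $O(\cdot)$.
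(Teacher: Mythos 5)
Your proof is correct, and it is essentially the paper's own argument: the paper bounds $\Pr[|X\cdot 1_E|\ge t]\le \min(p,\Pr[|X|\ge t])$ and integrates the layer-cake formula, splitting at $t\approx\sqrt{\log(1/p)}$, which is exactly your truncation at $M=\sqrt{C\log(1/p)}$ with the two contributions $pM$ and the sub-Gaussian tail integral. Your version just makes the optimization over the threshold explicit (and adds the harmless extra term $M\Pr[|X|>M]$), so there is nothing substantive to distinguish the two.
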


\begin{proof}

Let $Y:=X\cdot 1_E$. Then we estimate:

\[\Pr[|Y|\geq t]\leq \min(p,\Pr[|X|\geq t]) \leq \min(p,e^{-\Omega(t^2)}).\]
Consequently,
\[\E[|Y|]\leq \int_{t=0}^{\infty} \Pr[|Y|\geq t]dt \leq \int_{t=0}^{\infty} \min(p,\Pr[|X|\geq t])dt. \]
The integrand is $t$ for $t=O(\sqrt{\log(p^{-1})})$ and $e^{-\Omega(t^2)}$ elsewhere. Combining gives the claimed bound.
\end{proof}

\section{Initial Sampling: proofs for Section~\ref{sec:sampling}}

\subsection{Existence of a Suitable Policy: proof of Lemma~\ref{lm:policy-existence}}

% \begin{definition}

% We say a strategy is $k$-correlated if $\mathbb P[A_t=k|(\E[\mu_j|\mathcal G])_{j\in [K]}]$ is increasing in $\E[\mu_k|\mathcal G]$ and coordinate-wise decreasing in $\E[\mu_j|\mathcal G]$ for each $j\leq i$.

% \end{definition}

We first formally define the $j$-recommendation game. We proceed more generally than in the main body, defining it relative to any $K$-tuple $(N_1,\dots,N_K)$. We recall the definition of the static $\sigma$-algebra $\mathcal G_{N_1,\dots,N_K}$ which is generated by $N_i$ samples of each arm $i$. When considering arm $j$, we will always have $N_k=0$ for all $k>j$. If $N_i=N$ for all $i$ we recover the $(j,N)$-recommendation game as defined in Section~\ref{sec:sampling}. The definition of $(j,N)$-informed generalizes readily to $(j,\mathcal G)$-informed.

\begin{definition}

The \emph{$j$-recommendation game} is a two-player zero sum game played between a \emph{planner} and an \emph{agent}. The players share a common independent prior over the true mean rewards $(\mu_i)_{i\leq j}$, and the planner gains access to the static $\sigma$-algebra $\mathcal G=\mathcal G_{(N_1,\dots,N_j)}$. The planner then either recommends arm $j$ or does nothing. The agent picks a mixed strategy on $[j-1]$, resulting in the choice of an arm $i\leq j-1$. Suppose the arms chosen are $a_i$ for the planner, $a_k$ for the agent (where $k=j$ or $k$ is undefined). The payoff for the planner is $(\mu_k-\mu_i)\cdot 1_{k=j}.$

\end{definition}

We remark we can without loss of generality view all $j$-recommendation game strategies as depending only on the posterior means of each arm condition on $\mathcal G$. In the below, we set $\widetilde\mu_i=\mathbb E[\mu_i|\mathcal G]$ for the relevant static $\sigma$-algebra $\mathcal G$.  Given a planner strategy in the $j$-recommendation game, we naturally obtain a corresponding  $(j,\mathcal G)$-informed policy for our original problem in which we recommend arm $j$ when the planner would, and recommend the $\mathcal G$-conditional-expectation-maximizing arm otherwise. The key point of this game is as follows.

\begin{lemma}\label{lem:gameBIC}

A strategy for the planner in the $j$-recommendation game corresponds to a $(j,\padding)$-padded  BIC policy if and only if it has minimax value at least $\padding$.

% \mscomment{Do we say: Moreover it is BIC among the arms $[k]$?}

\end{lemma}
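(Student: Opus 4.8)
The plan is to observe that the $(j,\padding)$-padded BIC property \eqref{eq:padded-BIC-defn} and the property ``secures (guaranteed) value at least $\padding$ in the game'' are literally the same finite list of inequalities, bridged by the elementary fact that a linear function on the simplex $\Delta_{j-1}$ attains its minimum at a vertex. Throughout I would identify a planner strategy with the $\mathcal G$-measurable (possibly randomized) indicator $\ind{\pi=j}$ of the event that it recommends arm $j$: as explained just before the lemma, the corresponding $(j,\mathcal G)$-informed policy recommends arm $j$ exactly when the planner does and otherwise exploits, and since \eqref{eq:padded-BIC-defn} only refers to the event $\{\pi=j\}$, this identification loses nothing.

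First I would compute the value a fixed planner strategy $\pi$ secures. The agent does not observe $\mathcal G$ and picks a mixed strategy $q=(q_1,\dots,q_{j-1})\in\Delta_{j-1}$; recalling $\mu_q=\sum_{i<j}q_i\,\mu_i$, the planner's expected payoff against $q$ is
\begin{align*}
\E\sbr{ (\mu_j-\mu_q)\cdot \ind{\pi=j} }
= \sum_{i<j} q_i\, \E\sbr{ (\mu_j-\mu_i)\cdot \ind{\pi=j} },
\end{align*}
which is linear in $q$. Hence the value secured by $\pi$, namely $\min_{q\in\Delta_{j-1}}\E\sbr{(\mu_j-\mu_q)\cdot\ind{\pi=j}}$, is attained at a pure agent strategy and equals $\min_{i<j}\E\sbr{(\mu_j-\mu_i)\cdot\ind{\pi=j}}$.

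The conclusion is then immediate: $\pi$ secures value at least $\padding$ if and only if $\E\sbr{(\mu_j-\mu_i)\cdot\ind{\pi=j}}\geq\padding$ for every arm $i<j$, which is exactly the defining condition \eqref{eq:padded-BIC-defn} of $(j,\padding)$-padded BIC. Both implications of the lemma fall out of this single equivalence, with no induction or approximation required.

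There is essentially no hard step here; the only points requiring care are (i) matching the game payoff $(\mu_j-\mu_i)\cdot\ind{\pi=j}$ to the integrand in \eqref{eq:padded-BIC-defn} — in particular noting that one \emph{integrates over}, rather than \emph{conditions on}, the event $\{\pi=j\}$, so no normalization by $\Pr[\pi=j]$ intervenes — and (ii) justifying the reduction from mixed to pure agent strategies via linearity over $\Delta_{j-1}$. The latter also makes transparent that permitting the agent to randomize, or the planner to randomize its recommendation, leaves the secured value unchanged, so the equivalence holds at the level of strategies as stated.
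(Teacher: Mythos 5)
Your proof is correct and takes essentially the same approach as the paper's: the paper's proof consists of the observation that the corresponding policy plays arms $i\neq j$ only via exploitation (so only the arm-$j$ recommendations matter for the padded condition), after which the equivalence ``follows directly from the definition.'' Your linearity argument over $\Delta_{j-1}$, reducing mixed agent strategies to pure ones, just makes explicit the step the paper treats as immediate.
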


\begin{proof}
Because this policy only plays an arm $i\neq j$ when $\mathbb E[\mu_i|\mathcal G]$ is maximal among all arms, it suffices to show that recommending arm $k$ is $(j,\padding)$-padded BIC. This follows directly from the definition.
\end{proof}

\begin{lemma}\label{lem:gamevalue}

The minimax value of the $j$-recommendation game with $\sigma$ algebra $\mathcal G$ is equal to

\[\inf_{q\in\Delta_{j-1}}\E\left[\left(\E\left[\mu_j-\mu_q\mid \mathcal G\right]\right)_+\right]=\inf_{q\in\Delta_{j-1}}\E\left[\left(\widetilde\mu_j-\widetilde\mu_q\right)_+\right].\]

\end{lemma}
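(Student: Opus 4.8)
The plan is to apply von Neumann's minimax theorem and then compute the planner's best response in closed form. The game is finite: since $\mathcal G = \mathcal G_{(N_1,\dots,N_j)}$ is generated by finitely many Bernoulli samples, there are only finitely many deterministic $(j,\mathcal G)$-informed planner strategies, and the agent's pure-strategy set is $[j-1]$. Hence the value $V$ is well-defined and, by the minimax theorem, equals $\min_{q\in\Delta_{j-1}}\max_\pi \E[\mathrm{payoff}(\pi,q)]$, where the agent's mixed strategy is a distribution $q\in\Delta_{j-1}$ and $\pi$ ranges over planner strategies. The first simplification I would record is that an agent playing $q$ contributes, whenever the planner recommends arm $j$, the expected payoff $\E[\mu_j-\mu_q]$, so by linearity the agent's mixed strategy enters only through the scalar $\mu_q=\sum_i q_i\mu_i$.

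First I would fix the agent's strategy $q$ and evaluate the inner maximum $\max_\pi \E[\mathrm{payoff}(\pi,q)]$. Conditioning on $\mathcal G$ and using the tower property, a planner strategy that recommends arm $j$ with conditional probability $\rho(\mathcal G)$ earns
\[
\E[\mathrm{payoff}(\pi,q)] = \E\big[\rho(\mathcal G)\cdot \E[\mu_j-\mu_q\mid \mathcal G]\big] = \E\big[\rho(\mathcal G)\cdot(\widetilde\mu_j-\widetilde\mu_q)\big],
\]
where the last equality uses $\E[\mu_q\mid\mathcal G]=\widetilde\mu_q$ by linearity. This expression is maximized realization-by-realization of $\mathcal G$ by setting $\rho=1$ exactly when $\widetilde\mu_j-\widetilde\mu_q>0$ and $\rho=0$ otherwise, which yields
\[
\max_\pi \E[\mathrm{payoff}(\pi,q)] = \E\big[(\widetilde\mu_j-\widetilde\mu_q)_+\big].
\]

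Combining this with the minimax theorem gives $V = \min_{q\in\Delta_{j-1}}\E[(\widetilde\mu_j-\widetilde\mu_q)_+]$. Since $q\mapsto \E[(\widetilde\mu_j-\widetilde\mu_q)_+]$ is convex and continuous on the compact simplex $\Delta_{j-1}$, the minimum is attained and coincides with the infimum, and unwinding $\widetilde\mu_j-\widetilde\mu_q=\E[\mu_j-\mu_q\mid\mathcal G]$ recovers both stated expressions. The main obstacle will be justifying the pointwise optimization of the planner's best response: I must verify that the optimal decision depends on $\mathcal G$ only through the posterior means (which is exactly the w.l.o.g.\ reduction noted just before the lemma), so that the inner maximum decomposes over realizations of $\mathcal G$ into the positive-part form. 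By contrast, the appeal to the minimax theorem is routine given finiteness, and the reduction of the agent's play to the single scalar $\mu_q$ is immediate from linearity of the payoff.
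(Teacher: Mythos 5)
Your proof is correct and takes essentially the same route as the paper's: fix the agent's mixed strategy $q$, compute the planner's best response by recommending arm $j$ exactly when $\E[\mu_j-\mu_q\mid\mathcal G]>0$ (yielding $\E[(\widetilde\mu_j-\widetilde\mu_q)_+]$), and invoke the minimax theorem for the finite game. The only difference is that you spell out the pointwise optimization and finiteness details that the paper leaves implicit; the ``obstacle'' you flag about reducing to posterior means is not actually needed, since the pointwise argmax automatically turns out to be a function of $\widetilde\mu_j-\widetilde\mu_q$.
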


\begin{proof}
This expression is the value of the best response when the agent plays mixed strategy $q$. By the minimax theorem, the infimum over all $q\in\Delta_{j-1}$ is exactly the value of the game.
\end{proof}

\begin{lemma}\cite[Corollary 5.5]{apt2011primer}\label{lem:goodnash}
In any finite two-player zero-sum game there is a Nash equilibrium of mixed strategies which are not weakly dominated by any other mixed strategy.
\end{lemma}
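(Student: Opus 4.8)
The plan is to exploit the special product structure of Nash equilibria in finite zero-sum games, together with a one-line optimization over each player's (compact, convex) set of optimal strategies. Write $u$ for the row player's payoff and $V$ for the value of the game. First I would recall, via the minimax theorem, that a pair $(x^*,y^*)$ of mixed strategies is a Nash equilibrium if and only if $x^*$ is a maximin (optimal) strategy for the row player and $y^*$ is a minimax (optimal) strategy for the column player, and that each player's set of optimal strategies is nonempty, compact, and convex. Consequently it suffices to produce, for each player \emph{separately}, an optimal strategy that is not weakly dominated by any mixed strategy; combining the two then yields an equilibrium with both components undominated.

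Next I would construct an undominated optimal strategy for the row player. Fix any fully-supported mixed strategy $\bar y$ of the column player (say, the uniform one), and among all optimal row strategies $x$ select $x^*$ maximizing $u(x,\bar y)$. Such a maximizer exists because the optimal set is compact and $x\mapsto u(x,\bar y)$ is continuous (indeed linear). I then claim $x^*$ is not weakly dominated, arguing by contradiction: suppose a mixed strategy $x'$ weakly dominates $x^*$, so that $u(x',y)\geq u(x^*,y)$ for every pure $y$ with strict inequality for at least one pure $y$. Since $u(x^*,y)\geq V$ for all pure $y$ by optimality, the dominator $x'$ also guarantees at least $V$ and is therefore itself optimal, keeping the comparison inside the optimal set. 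But because $\bar y$ has full support, averaging the pointwise inequalities with the strict gain somewhere gives $u(x',\bar y)>u(x^*,\bar y)$, contradicting the maximality of $x^*$.

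Symmetrically, fixing a fully-supported $\bar x$ and \emph{minimizing} $u(\bar x,y)$ over the column player's optimal set produces an undominated optimal strategy $y^*$ (the direction flips because the column player minimizes, and a column-player dominator of $y^*$ is again optimal since its $\max_x u(x,\cdot)$ cannot drop below the value $V$). By the first step $(x^*,y^*)$ is then a Nash equilibrium, and both components are undominated, which proves the lemma.

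The only delicate point — and hence the ``main obstacle'' — is the upgrade in the second paragraph: weak domination by definition guarantees only a strict improvement against \emph{some} pure strategy, and it is precisely the full support of the reference strategy $\bar y$ that converts this into a strict improvement against $\bar y$, contradicting maximality. The supporting observation that any weak dominator of an optimal strategy is itself optimal is what legitimizes restricting attention to the optimal set throughout. Both facts are elementary, so no genuinely hard step arises; the entire content lies in choosing a full-support reference strategy and optimizing the payoff against it.
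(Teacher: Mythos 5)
Your proof is correct, but there is nothing in the paper to compare it against: the paper states this lemma as a quotation of Corollary 5.5 in Apt's primer on strategic games and provides no proof of its own, so your argument is a self-contained addition rather than a variant of anything in the text. Your route is the standard one for zero-sum games, and every step checks out. The reduction to two separate one-player problems is justified by interchangeability of optimal strategies (any maximin $x^*$ paired with any minimax $y^*$ is an equilibrium), and each one-player problem is solved by optimizing the payoff against a fully supported reference strategy over the set of optimal strategies, which is nonempty (minimax theorem), compact, and convex (it is a superlevel set of the concave function $x \mapsto \min_y u(x,y)$ intersected with the simplex). The two observations carrying the contradiction are both right and both necessary: first, a weak dominator of an optimal strategy guarantees at least the value $V$ against every pure reply and is hence itself optimal, which keeps the comparison inside the set over which $x^*$ was chosen maximal; second, full support of $\bar y$ converts the ``strict against some pure $y$'' clause of weak dominance into $u(x',\bar y) > u(x^*,\bar y)$. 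Checking dominance only against pure replies is also legitimate, since by linearity the inequalities extend to all mixed replies, and a strict inequality against a mixed reply forces one against some pure reply in its support. What your proof buys is a constructive, elementary argument (optimize against the uniform strategy within the optimal set); what the citation buys the authors is brevity.
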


The $j$-recommendation game can be viewed as a finite game because there are only finitely many datasets due to the Bernoulli reward assumption. Hence the above lemma applies.

\begin{definition}

We say a policy $\pi$ (for the original bandit problem) is $j$-correlated if $\mathbb P[\pi=j|(\widetilde\mu_i)_{i\in [K]}]$ is increasing in $\widetilde\mu_j$.

%and coordinate-wise decreasing in $\mu_i$ for each $i\leq j$.

\end{definition}

\begin{lemma}\label{lem:weakdomjcor}

If a strategy $S$ for the $j$-recommendation game is not weakly dominated by any other policy, then the resulting policy for the original bandit problem is $j$-correlated.

\end{lemma}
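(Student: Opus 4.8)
The plan is to argue by contradiction via a local, mass-preserving rearrangement of the planner's strategy. As already noted in the text, a planner strategy may be taken to depend only on the posterior means $\widetilde\mu_i=\E[\mu_i\mid\mathcal G]$, so I will view $S$ as a function $P$ assigning to each realized posterior-mean vector $v=(\widetilde\mu_1,\dots,\widetilde\mu_j)$ a probability $P(v)\in[0,1]$ of recommending arm $j$. First I would record the payoff of $S$ against an agent mixed strategy $q\in\Delta_{j-1}$ in a form that isolates the role of $\widetilde\mu_j$. Because the recommendation event is $\mathcal G$-measurable, conditioning on $\mathcal G$ and using $\E[\mu_i\mid\mathcal G]=\widetilde\mu_i$ gives
\[
U(S,q)=\E\!\left[(\mu_j-\mu_q)\cdot\ind{\text{recommend }j}\right]
=\E\!\left[(\widetilde\mu_j-\widetilde\mu_q)\,P(v)\right],
\qquad \widetilde\mu_q=\textstyle\sum_{i<j}q_i\,\widetilde\mu_i,
\]
exactly the computation underlying Lemma~\ref{lem:gamevalue}.

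Suppose now, for contradiction, that the induced policy $\pi$ is not $j$-correlated. Then monotonicity in $\widetilde\mu_j$ fails, so there are two posterior-mean vectors $v_1,v_2$ in the support that agree in every coordinate $i\neq j$, with $\widetilde\mu_j(v_1)=a<b=\widetilde\mu_j(v_2)$ yet $P(v_1)>P(v_2)$; note $a<b$ must be strict, since if $a=b$ then $v_1=v_2$. Let $\rho_\ell=\Pr[v=v_\ell]>0$. I would then define a competing strategy $S'$ that is rearranged only on $\{v_1,v_2\}$: pick a small $m>0$ and set $P'(v_1)=P(v_1)-m/\rho_1$ and $P'(v_2)=P(v_2)+m/\rho_2$, leaving $P'=P$ elsewhere. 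Since $P(v_1)>P(v_2)$ we have $P(v_1)>0$ and $P(v_2)<1$, so for $m$ small enough both values remain in $[0,1]$ and $S'$ is legitimate.

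The crux is computing $U(S',q)-U(S,q)$ and observing that it is strictly positive \emph{uniformly in $q$}. Because $v_1$ and $v_2$ agree on all coordinates $i\neq j$ and $q$ is supported on arms $<j$, the value $\widetilde\mu_q$ is identical at $v_1$ and $v_2$; call it $c$. Hence
\[
U(S',q)-U(S,q)=-\,m\,(a-c)+m\,(b-c)=m\,(b-a)>0
\]
for every $q\in\Delta_{j-1}$, as the $c$-terms cancel. Thus $S'$ strictly, and in particular weakly, dominates $S$, contradicting the assumption that $S$ is not weakly dominated. I expect the only real subtlety to be this cancellation: it is what upgrades a rearrangement that helps for one particular $q$ into one that helps simultaneously for all $q$, and it relies precisely on $q$ being supported away from arm $j$, so that shifting recommendation mass toward larger $\widetilde\mu_j$ raises the payoff regardless of the agent's play. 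The remaining bookkeeping (feasibility of $P'$ and that it defines a valid $(j,\mathcal G)$-informed strategy) is routine.
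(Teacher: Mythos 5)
Your proof is correct and follows essentially the same route as the paper's: both argue by contradiction via a mass-preserving rearrangement that shifts recommendation probability between two posterior-mean vectors differing only in the $j$-th coordinate, with the key cancellation of the $\widetilde\mu_q$ terms (since $q$ is supported on arms $<j$) yielding a uniform improvement over all agent strategies and hence weak domination. Your write-up is in fact more careful than the paper's terse version, which contains some typos in the direction of the swap, but the underlying argument is identical.
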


\begin{proof}
If $\mathbb P[S=j|(\widetilde\mu_i)_{i\in [K]}]$ is not increasing in $\E[\mu_j|\mathcal G]$ then there exist two $j$-tuples \[(e_1,\dots,e_j),(e_1,\dots, e_{j-1},e_j')\] of values for $(\widetilde\mu_i)_{i\leq j}$ which agree except on $e_j>e_j'$ and yet

\[\Pr[S =j|(\widetilde\mu_i)_{i\in [K]}=(e_1,\dots,e_j)]< \Pr[S' \text{ recommends } a_j |(\widetilde\mu_i)_{i\in [K]}=(e_1,\dots,e_j')].\]

In this case we may decrease the probability to play $a_j$ on former event that $(\widetilde\mu_i)_{i\in [K]}=(e_1,\dots,e_j)$, and proportionally increase this probability given $(\widetilde\mu_i)_{i\in [K]}=(e_1,\dots,e_j')$, so that the total probability to play arm $j$ is preserved. This strategy weakly dominates $S'$, yielding a contradiction.
\end{proof}

\begin{lemma}\label{lem:jcorbic}

If a strategy for the $j$-recommendation game is $j$-correlated and $(j,0)$-padded, then the resulting $(j,\mathcal G)$-informed policy is BIC.

\end{lemma}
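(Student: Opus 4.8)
The plan is to verify the BIC inequality $\E[(\mu_\ell-\mu_i)\cdot\ind{\pi=\ell}]\geq 0$ for every recommended arm $\ell$ (with $\Pr[\pi=\ell]>0$) and every alternative $i\neq\ell$, where $\pi$ is the $(j,\mathcal G)$-informed policy induced by the given strategy. Recall that $\pi$ recommends arm $j$ exactly when the planner would, and otherwise recommends the posterior-mean-maximizing arm $\arg\max_{i\in[K]}\widetilde\mu_i$, with $\widetilde\mu_i:=\E[\mu_i\mid\mathcal G]$. I would split into three cases: $\ell\neq j$ (pure exploitation); $\ell=j$ against $i<j$; and $\ell=j$ against $i>j$.

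For $\ell\neq j$, I would use that $\pi$ depends on the rewards only through $(\widetilde\mu_i)_i$, together with external randomization independent of the rewards. Hence, conditioning on $\mathcal G$, $\E[(\mu_\ell-\mu_i)\cdot\ind{\pi=\ell}]=\E[(\widetilde\mu_\ell-\widetilde\mu_i)\cdot\Pr[\pi=\ell\mid\mathcal G]]$, and on the support of the event $\{\pi=\ell\}$ exploitation has selected $\ell$ as the posterior-mean-maximizer, so $\widetilde\mu_\ell\geq\widetilde\mu_i$ pointwise and the integrand is non-negative. For $\ell=j$ against $i<j$ the conclusion is immediate from the hypothesis: by Lemma~\ref{lem:gameBIC}, the strategy being $(j,0)$-padded means precisely that the induced policy satisfies $\E[(\mu_j-\mu_i)\cdot\ind{\pi=j}]\geq 0$ for all $i<j$.

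The remaining case $\ell=j$, $i>j$ is the crux, and it is where the $j$-correlation hypothesis enters. Since $i>j$ the arm $i$ is never sampled in $\mathcal G$ (so $N_i=0$), hence $\mu_i$ is independent of the event $\{\pi=j\}$ and $\E[\mu_i\mid\pi=j]=\mu_i^0$. It therefore suffices to prove $\E[\mu_j\mid\pi=j]\geq\mu_j^0\geq\mu_i^0$, the last inequality being the ordering of prior means. Equivalently, I must show that $\mu_j$ and $\ind{\pi=j}$ are non-negatively correlated. Writing $p:=\Pr[\pi=j\mid\mathcal G]$, which (WLOG) is a function of $(\widetilde\mu_1,\dots,\widetilde\mu_j)$, conditioning on $\mathcal G$ gives $\E[\mu_j\cdot\ind{\pi=j}]=\E[\widetilde\mu_j\,p]$, while $\mu_j^0\,\Pr[\pi=j]=\E[\widetilde\mu_j]\,\E[p]$; so the goal reduces to $\E[\widetilde\mu_j\,p]\geq\E[\widetilde\mu_j]\,\E[p]$.

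The obstacle is that $j$-correlation only guarantees $p$ is increasing in the \emph{single} coordinate $\widetilde\mu_j$, giving no control over its dependence on the other coordinates, so the multivariate FKG inequality does not apply directly. I would resolve this using that under independent priors and independent per-arm samples the posterior means $\widetilde\mu_1,\dots,\widetilde\mu_j$ are mutually independent. Conditioning on $(\widetilde\mu_i)_{i\neq j}$, both $\widetilde\mu_j$ and $p$ become increasing functions of the single remaining variable $\widetilde\mu_j$, so the one-variable FKG inequality (Corollary~\ref{lem:1varFKG}) yields $\E[\widetilde\mu_j\,p\mid(\widetilde\mu_i)_{i\neq j}]\geq\E[\widetilde\mu_j]\cdot\E[p\mid(\widetilde\mu_i)_{i\neq j}]$, where I used independence to replace the conditional mean of $\widetilde\mu_j$ by $\mu_j^0$. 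Taking expectations over $(\widetilde\mu_i)_{i\neq j}$ and applying the tower property gives the desired correlation bound. Combining the three cases establishes that $\pi$ is BIC.
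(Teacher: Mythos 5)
Your proof is correct and follows essentially the same route as the paper's: the same three-case split, with the $(j,0)$-padded property handling alternatives $i<j$, pointwise exploitation handling recommendations of arms $\ell\neq j$, and an FKG/positive-correlation argument handling $i>j$ via the ordering of prior means. The only minor difference is in the mechanics of the FKG step: the paper (via Corollary~\ref{cor:staticFKG}) conditions on the true mean $\mu_j$ and uses that the probability of recommending arm $j$ is stochastically increasing in $\mu_j$, whereas you condition on $(\widetilde\mu_i)_{i\neq j}$ and invoke mutual independence of the posterior means; both are valid reductions to the one-variable FKG inequality.
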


\begin{proof}

 The BIC property against arms $i<j$ is equivalent to the $(j,0)$-padded property. By the first property of $j$-correlation it follows that the event of playing arm $j$ is increasing in $\mathbb E[\mu_j|\mathcal G]$, hence the probability to play arm $j$ is stochastically increasing in $\mu_j$. Therefore FKG implies that
\[\mathbb P[\mu_j|A_t=j]\geq \mu_j^0\geq \mu_i^0\]
for any $i>j$. It remains to show the BIC property for playing arms other than arm $j$, and this follows from the fact that we simply exploit conditionally on $\mathcal G$ whenever this happens.
\end{proof}

\begin{lemma}\label{lem:kcorrelated}

Given a static $\sigma$-algebra $\mathcal G$, suppose there exists a $\mathcal G$-measurable $(j,\padding)$-padded BIC strategy $\hat\pi_j$ for $\padding\geq 0$. Then there exists a $\mathcal G$-measurable $(j,\padding)$-suitable BIC strategy $\pi_j$.

\end{lemma}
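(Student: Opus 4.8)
The plan is to compose the game-theoretic lemmas established immediately above, treating the problem as one about the $j$-recommendation game associated with $\mathcal G$. First I would translate the hypothesis into the language of that game: by Lemma~\ref{lem:gameBIC}, the existence of a $\mathcal G$-measurable $(j,\padding)$-padded BIC strategy $\hat\pi_j$ is equivalent to the $j$-recommendation game having minimax value at least $\padding$. Since this game is finite (there are only finitely many datasets, by the Bernoulli reward assumption), Lemma~\ref{lem:goodnash} supplies a Nash-equilibrium strategy $\pi^*$ for the planner that is not weakly dominated by any other mixed strategy. As an equilibrium strategy for the maximizing player in a zero-sum game, $\pi^*$ is maximin-optimal and therefore guarantees the full game value, which is at least $\padding$. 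Applying Lemma~\ref{lem:gameBIC} in the reverse direction, the induced $(j,\mathcal G)$-informed policy $\pi_j$ is $(j,\padding)$-padded BIC.

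Second, I would upgrade $\pi_j$ to a globally BIC policy. Because $\pi^*$ is not weakly dominated, Lemma~\ref{lem:weakdomjcor} shows that $\pi_j$ is $j$-correlated. Since $\padding\geq 0$, the $(j,\padding)$-padded inequality $\E[(\mu_j-\mu_i)\cdot\ind{\pi_j=j}]\geq\padding$ trivially implies the weaker $(j,0)$-padded inequality for each arm $i<j$. Feeding the $j$-correlation together with the $(j,0)$-padded property into Lemma~\ref{lem:jcorbic} then yields that $\pi_j$ is BIC. Combining the two parts, $\pi_j$ is a $\mathcal G$-measurable policy that is simultaneously BIC and $(j,\padding)$-padded BIC, i.e.\ $(j,\padding)$-suitable, which is exactly the desired conclusion.

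The one point deserving care — the main (if modest) obstacle — is the coupling of two requirements on the selected planner strategy in the first paragraph: it must be \emph{optimal} (attaining the game value, so that padding $\padding$ is preserved) and simultaneously \emph{non-weakly-dominated} (so that $j$-correlation, and through it global BIC, is available). This is precisely why Lemma~\ref{lem:goodnash} is invoked rather than an arbitrary maximin strategy: it furnishes an equilibrium strategy carrying the non-domination property, and every equilibrium strategy of the maximizer in a finite zero-sum game is automatically maximin-optimal. Once this single strategy is fixed, the rest is a mechanical chaining of Lemmas~\ref{lem:gameBIC}, \ref{lem:weakdomjcor}, and~\ref{lem:jcorbic}, with no further estimates required.
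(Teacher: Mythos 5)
Your proposal is correct and follows essentially the same route as the paper's own (very terse) proof: take a non-weakly-dominated equilibrium strategy via Lemma~\ref{lem:goodnash}, note it attains the game value so Lemma~\ref{lem:gameBIC} gives the $(j,\padding)$-padded property, and then chain Lemmas~\ref{lem:weakdomjcor} and~\ref{lem:jcorbic} to get BIC. Your writeup is in fact more careful than the paper's, since it makes explicit the one point the paper leaves implicit — that an equilibrium strategy of the maximizer in a finite zero-sum game is maximin-optimal, so optimality and non-domination are achieved by a single strategy.
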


\begin{proof}

The assumption implies there exists a $j$-recommendation game strategy $S$, and any such strategy yields a $(j,\padding)$-padded BIC policy. By Lemma~\ref{lem:goodnash} we may assume $S$ is not weakly dominated. In such case, by Lemmas~\ref{lem:weakdomjcor} and \ref{lem:jcorbic} the resulting bandit strategy $\pi_j$ is BIC as well.
\end{proof}

% \begin{lemma}

% If a strategy for the $j$-recommendation game has associated bandit policy which is $(j,\padding)$-padded BIC and $j$-correlated, then it is also BIC on the full arm set $[K]$.

% \end{lemma}

% \begin{proof}

% By the first property of $k$-correlation it follows that the event of playing arm $k$ is increasing in $\mathbb E[\mu_k|\mathcal G]$, hence stochastically increasing in $\mu_k$. For any $j\leq k-1$, the second property of $k$-correlation implies the event of playing $j$ is increasing in $\mathbb E[\mu_k|\mathcal G]$, hence stochastically increasing in $\mu_j$. Therefore it follows by definition of stochastically increasing that either type of recommendation is BIC.

% \mscomment{explain more?}

% \end{proof}

% \BICmargin*

Combining the previous lemmas, we obtain the main guarantee for $(j,\padding)$-suitable strategies.

\begin{restatable}{lemma}{BICmargin}
\label{lem:BICmargin}
Fix $j\in [K]$ and $\padding>0$. Suppose there exists a static $\sigma$-algebra $\mathcal G=\mathcal G_{N_1,N_2,\dots,N_k,0,\dots,0}$ which is independent of
    $(\mu_{j+1}  \LDOTS \mu_K)$
and satisfies
\[\E\left[\left(\E\left[\mu_j-\mu_q\mid \mathcal G\right]\right)_+\right]
\geq \padding \quad\text{ for all }q\in\Delta_{j-1}.\]
Then there exists a $\mathcal G$-measurable $(j,\padding)$-suitable strategy for the planner.

\end{restatable}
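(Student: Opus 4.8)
The plan is to recognize this lemma as the capstone that simply assembles the game-theoretic machinery already developed in Lemmas~\ref{lem:gameBIC}--\ref{lem:kcorrelated}. The hypothesis is exactly a lower bound on the payoff the agent can force by playing any mixed strategy $q\in\Delta_{j-1}$, so the first step is to translate it into a statement about the value of the $j$-recommendation game associated with $\mathcal G$. Since $\mathcal G$ is independent of $(\mu_{j+1}\LDOTS\mu_K)$ and measurable with respect to finitely many samples of arms $i\le j$, the game is well-defined and finite (finitely many datasets, by the Bernoulli reward assumption), so the minimax theorem applies and an optimal planner strategy exists.

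First I would invoke Lemma~\ref{lem:gamevalue}, which identifies the minimax value of this game as $\inf_{q\in\Delta_{j-1}}\E[(\E[\mu_j-\mu_q\mid\mathcal G])_+]$. The hypothesis asserts that every term in this infimum is at least $\padding$, so the value is at least $\padding$. By Lemma~\ref{lem:gameBIC}, any minimax-optimal planner strategy therefore corresponds to a $\mathcal G$-measurable $(j,\padding)$-padded BIC policy $\hat\pi_j$.

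Second, I would feed $\hat\pi_j$ into Lemma~\ref{lem:kcorrelated}: given the existence of a $\mathcal G$-measurable $(j,\padding)$-padded BIC strategy, that lemma produces a $\mathcal G$-measurable $(j,\padding)$-\emph{suitable} BIC strategy $\pi_j$, which is the assertion we want. Internally this upgrade works by selecting the minimax-optimal strategy to be non-weakly-dominated (Lemma~\ref{lem:goodnash}), which forces it to be $j$-correlated (Lemma~\ref{lem:weakdomjcor}); the $j$-correlation together with the $(j,0)$-padded property then yields full BIC via FKG (Lemma~\ref{lem:jcorbic}). The independence of $\mathcal G$ from the later arms is precisely what makes BIC hold against arms $i>j$, since their posterior means coincide with the prior means $\mu_i^0\le\mu_j^0$. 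Composing these facts gives exactly the claimed suitable strategy.

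There is little genuine obstacle here, as the real content is distributed across the preceding lemmas; the remaining points require only care in bookkeeping. One must check that the infimum over $q$ inherits the uniform lower bound $\padding$ (immediate, since each term is bounded below by it), and that the hypotheses of Lemmas~\ref{lem:gameBIC} and~\ref{lem:kcorrelated} are literally met—in particular that $\mathcal G$-measurability of the strategy matches the $(j,\mathcal G)$-informed requirement in the definition of suitability, and that the independence assumption is what licenses applying Lemma~\ref{lem:jcorbic} to arms beyond $j$. The one conceptually subtle step, already absorbed into Lemma~\ref{lem:kcorrelated}, is the passage from a merely padded-BIC strategy to a fully BIC (hence suitable) one via the non-weakly-dominated equilibrium; flagging this as the crux is the most that is needed.
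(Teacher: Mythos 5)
Your proposal is correct and follows exactly the paper's own route: the paper's proof is the one-line statement that the lemma ``follows from combining Lemmas~\ref{lem:gameBIC}, \ref{lem:gamevalue}, \ref{lem:kcorrelated}," and you have assembled precisely these three lemmas in the same order (value identification, padded-BIC from minimax optimality, upgrade to suitability via the non-weakly-dominated equilibrium). Your additional unpacking of the machinery inside Lemma~\ref{lem:kcorrelated} is accurate bookkeeping rather than a departure from the paper's argument.
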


\begin{proof}

Follows from combining Lemmas~\ref{lem:gameBIC}, \ref{lem:gamevalue}, \ref{lem:kcorrelated}.
\end{proof}

The next lemma upper bounds the number of samples we must include in $\mathcal G$ to ensure $\padding\geq\Omega(\padG)$.

\begin{lemma}
\label{lem:chernoff2}
Fix arm $j$ and time $T_0$, and a sufficiently large universal constant $\padC\geq 5$. Suppose for some $q\in\Delta_{j-1}$ and
    $\eps_{\post},\delta_{\post}>0$,
\[ \Pr\left[\left(\mu_j-\mu_q\right)_+ \geq \eps_{\post}\right]\geq \delta_{\post}.\]
Let \ALG be a BIC algorithm which by time $T_0$ collects at least $N_{\post}$ samples of each arm $i\in [j]$ a.s., where
\[N_{\post}\geq
\padC\;\eps_{\post}^{-2}
\left(1+\log\left(\eps_{\post}^{-1}\cdot\delta_{\post}^{-1}\right)
\right)\]
Let $\mathcal G_{N_{\post},k}$ be the $\sigma$-algebra generated by the first $N_{\post}$ samples of each of these $k$ arms. Then
\[\E\left[\left(\E\left[\mu_j-\mu_q \mid \mathcal G_{N_{\post},k}\right]\right)_+\right]\geq \frac{\E[(\mu_j-\mu_q)_+]-\delta_{\post}-\eps_{\post}}{2}.\]

\end{lemma}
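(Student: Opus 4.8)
The plan is to show that the posterior mean of the gap $\mu_j-\mu_q$ concentrates around the true gap, so that its positive part loses little in expectation. Write $X := \mu_j-\mu_q$ and, using linearity of conditional expectation together with $\mu_q=\sum_i q_i\mu_i$, set $\widehat{X} := \E[\mu_j-\mu_q \mid \mathcal G_{N_{\post},k}] = \E[\mu_j \mid \mathcal G_{N_{\post},k}] - \E[\mu_q \mid \mathcal G_{N_{\post},k}]$. Both $X$ and $\widehat{X}$ lie in $[-1,1]$, and $x\mapsto x_+$ is $1$-Lipschitz, so $|(\widehat{X})_+ - X_+|\leq |\widehat{X}-X|$ pointwise. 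The conclusion will follow by controlling $|\widehat{X}-X|$ on a high-probability ``good'' event and bounding the complementary event crudely via $X_+\leq 1$.

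First I would establish the concentration. Apply the second inequality of the Bayesian Chernoff Bound (Lemma~\ref{lem:TSchernoff}) with $\eps=N_{\post}^{-1/2}$, separately to arm $j$ and to the mixture $q$ (the general $\mu_q$ version of the lemma). This is legitimate because the static $\sigma$-algebra $\mathcal G_{N_{\post},k}$ records exactly $N_{\post}=\eps^{-2}$ samples of each arm it tracks, in particular of arm $j$ and of every arm in the support of $q$ (which lies in $[j-1]$). Setting $r=\tfrac{\eps_{\post}}{2}N_{\post}^{1/2}$ gives threshold $r\eps=\eps_{\post}/2$ and exponent
\[ r^2 = \tfrac{\eps_{\post}^2}{4}\,N_{\post} \geq \tfrac{\padC}{4}\,\bigl(1+\log(\eps_{\post}^{-1}\delta_{\post}^{-1})\bigr), \]
by the hypothesis on $N_{\post}$. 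Choosing the absolute constant $\padC$ large enough relative to the universal constant $C$ of Lemma~\ref{lem:TSchernoff}, each of the two tail bounds $C\,e^{-r^2/C}$ is at most $\delta_{\post}/2$, so a union bound yields $\Pr\bigl[\,|\widehat{X}-X|\geq \eps_{\post}\,\bigr]\leq \delta_{\post}$. Call $E$ the complementary good event.

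Then I would assemble the bound. On $E$ we have $(\widehat{X})_+ \geq X_+ - |\widehat{X}-X| \geq X_+-\eps_{\post}$, while on $E^c$ we use only $(\widehat{X})_+\geq 0$ and $X_+\leq 1$. Hence
\[ \E[(\widehat{X})_+] \;\geq\; \E\bigl[(X_+-\eps_{\post})\,\ind{E}\bigr] \;\geq\; \E[X_+]-\eps_{\post}-\E[X_+\,\ind{E^c}] \;\geq\; \E[X_+]-\eps_{\post}-\Pr[E^c] \;\geq\; \E[(\mu_j-\mu_q)_+]-\eps_{\post}-\delta_{\post}. \]
Since $(\widehat{X})_+\geq 0$ always, the left-hand side is in fact at least $\max\{0,\ \E[(\mu_j-\mu_q)_+]-\eps_{\post}-\delta_{\post}\}$, and $\max\{0,a\}\geq a/2$ for every real $a$; this gives exactly the claimed inequality, the factor $\tfrac12$ absorbing the regime where the right-hand side is negative and being otherwise slack.

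The only real obstacle is the concentration step. I must check that Lemma~\ref{lem:TSchernoff} applies to the static $\sigma$-algebra $\mathcal G_{N_{\post},k}$ rather than a filtration $\mathcal F_t$; this is fine because its proof runs through Lemma~\ref{lem:freqguarantee}, which only needs an observation signal together with an estimator (here the empirical mean of the first $N_{\post}$ i.i.d.\ samples of each arm) obeying an ordinary Chernoff bound. I must also verify that the constants line up so the tail is genuinely at most $\delta_{\post}$: the role of the hypothesis is precisely that $\Pr[(\mu_j-\mu_q)_+\geq \eps_{\post}]\geq \delta_{\post}>0$ forces $\eps_{\post}\leq 1$, so $\log(\eps_{\post}^{-1})\geq 0$ may be dropped, leaving $r^2\geq \tfrac{\padC}{4}\,(1+\log\delta_{\post}^{-1})$, which dominates $C\log(2C)+C\log\delta_{\post}^{-1}$ once $\padC$ exceeds a fixed multiple of $C\log(2C)$.
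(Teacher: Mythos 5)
Your proof is correct, and it rests on the same engine as the paper's: apply the Bayesian Chernoff bound (Lemma~\ref{lem:TSchernoff}) to the posterior means of arm $j$ and of the mixture $q$ over the static $\sigma$-algebra $\mathcal G_{N_{\post},j}$, with the hypothesis $\Pr[(\mu_j-\mu_q)_+\geq\eps_{\post}]\geq\delta_{\post}$ serving only to force $\eps_{\post},\delta_{\post}\leq 1$ so the constants close. Where you diverge is in how the concentration is converted into the expectation bound. The paper integrates the tails into expected-overshoot bounds of the form $\E\sbr{(|\E[\mu_j\mid\mathcal G_{N_{\post},j}]-\mu_j|-\eps_{\post}/10)_+}\leq \eps_{\post}\delta_{\post}/10$ and then uses a pointwise decomposition of $(\E[\mu_j-\mu_q\mid\mathcal G_{N_{\post},j}])_+$ restricted to the event $\{\mu_j-\mu_q\geq\eps_{\post}\}$, which is where its factor of $\tfrac12$ genuinely enters (via $(\mu_j-\mu_q-\eps_{\post}/5)\cdot\ind{\mu_j-\mu_q\geq\eps_{\post}}\geq \tfrac12(\mu_j-\mu_q)_+\cdot\ind{\mu_j-\mu_q\geq\eps_{\post}}$). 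You instead take a single good event of probability $1-\delta_{\post}$, use the $1$-Lipschitz property of $x\mapsto x_+$ on it, and bound the bad event crudely by $\Pr[E^c]\leq\delta_{\post}$; this gives the strictly stronger additive bound $\E[(\mu_j-\mu_q)_+]-\eps_{\post}-\delta_{\post}$, after which the stated factor $\tfrac12$ is pure slack (via $\max\{0,a\}\geq a/2$). Your version is a bit more elementary (no tail integration), gives a cleaner constant, and — like the paper — implicitly extends Lemma~\ref{lem:TSchernoff} from the filtration $\mF_t$ to a static $\sigma$-algebra; you are actually more careful than the paper in justifying that step through Lemma~\ref{lem:freqguarantee}. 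Downstream uses of the lemma (e.g.\ Lemma~\ref{lm:policy-existence} with $\eps_{\post}=\delta_{\post}=\padG/3$) are unaffected, or slightly improved, by your stronger form.
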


\begin{proof}
Applying Lemma~\ref{lem:TSchernoff} and using that $\padC\geq 5$ is sufficiently large, for any $r\geq 1$ we have
\[\Pr\left[|\E[\mu_q\mid\mathcal \mathcal G_{N_{\post},j}]-\mu_q|\geq \frac{r\eps_{\post}}{10}\right]\leq \padC e^{-100\padC^2 r^2\log(\eps_{\post}\delta_{\post})}\leq \padC (\eps_{\post}\delta_{\post})^{100\padC r}.\]
This easily implies by integration that
\[\E\left[\left(|\E[\mu_q\mid \mathcal G_{N_{\post},j}]-\mu_q|- \frac{\eps_{\post}}{10}\right)_+\right]\leq
\frac{\eps_{\post}\delta_{\post}}{10}. \]

Similarly,
\[\E\left[\left(|\E[\mu_j\mid \mathcal G_{N_{\post},j}]-\mu_j|- \frac{\eps_{\post}}{10}\right)_+\right]\leq \frac{\eps_{\post}\delta_{\post}}{10}. \]

However we have
\begin{align*}\left(\E\left[\mu_j-\mu_q|\mathcal G_{N_{\post},j}\right]\right)_+ &\geq \left(\mu_j-\mu_q - \frac{\eps_{\post}}{5}\right)\cdot 1_{\mu_j- \mu_q\geq\eps_{\post}}- \left(|\E[\mu_j|\mathcal G_{N_{\post},j}]-\mu_j|- \frac{\eps_{\post}}{10}\right)_+ \\&-\left(|\E[\mu_q|\mathcal G_{N_{\post},j}]-\mu_q|- \frac{\eps_{\post}}{10}\right)_+. \end{align*}

Note that $\left(\mu_j-\mu_q - \frac{\eps_{\post}}{5}\right)\cdot 1_{\mu_j- \mu_q\geq\eps_{\post}}\geq \frac{(\mu_j-\mu_q)_+}{2}\cdot 1_{\mu_j- \mu_q\geq\eps_{\post}}.$ Substituting and taking expectations, the conclusion of the lemma follows.
\end{proof}

We now obtain Lemma~\ref{lm:policy-existence}, guaranteeing the existence of the policies $\pi_j$ needed in Algorithm~\ref{alg:pyramidtight}.

\begin{proof}[Proof of Lemma~\ref{lm:policy-existence}]
Letting $\eps_{\post}=\delta_{\post}=\padG/3$ in Lemma~\ref{lem:chernoff2}, the result follows immediately. To see that the assumption of Lemma~\ref{lem:chernoff2} holds note that $(\mu_j-\mu_q)_+\in [0,1]$ almost surely. Therefore \[\Pr[(\mu_j-\mu_q)_+\geq \padG/3]\geq \padG/3\] as long as $\mathbb E[(\mu_j-\mu_q)_+]\geq \padG.$
\end{proof}

\subsection{A suitable policy for bootstrapping: proof of Lemma~\ref{lem:transform}}

We consider $\FakePi$, a transformed version of policy $\pi_j$ which is, essentially,  $(j,\padding,N)$-suitable conditioned on $\ZEROS_{j,N_0}$.

%\begin{proof}[Proof of Lemma~\ref{lem:transform}]

%We first observe that to show such a policy $\FakePi$ exists
Existence of such a policy is straightforward, as the conditioning on $\ZEROS_{j,N_0}$ stochastically decreases the values $(\widetilde{\mu}_i)_{i<j}$, hence increasing the minimax value of the $j$-recommendation game. However we give a reduction to $\pi_j$, so that to make Algorithm~\ref{alg:pyramidtight} efficient it essentially suffices to compute policies $\pi_j$.

We use the following fake data technique. Let $\widetilde{\mu}_i$ be the posterior mean $\mathbb E[\mu_i|\mathcal G_{j,N}]$, and $\mathcal L_i$ be the distribution of $\widetilde{\mu}_i$. Let $\mathcal L_i'$ be the distribution of $\widetilde{\mu}_i$ conditioned on $\ZEROS_{j,N_0}$. It follows from Lemma~\ref{lem:zerosdom} that $\mathcal L_i$ is stochastically dominated by $\mathcal L_i'$. Hence by Lemma~\ref{lem:mvstochdom} there exists a coupling $(\widetilde{\mu}_i,\hat{\mu}_i)$ with $\widetilde{\mu}_i\leq\hat{\mu}_i$ almost surely, with $\widetilde{\mu}_i\sim \mathcal L_i,\hat\mu_i\sim\mathcal L_i'.$
% We call couplings of two real-valued random variables \emph{monotone couplings}. See e.g. \cite{mourrat2020free} for more general properties of such objects.

We now define $\FakePi.$ We view $\pi_j=\pi_j(\widetilde\mu_1,\dots\widetilde\mu_j)$ as a policy of posterior mean rewards; this makes no difference because for fixed prior on some $\mu_i$ and a fixed number of samples, specifying the posterior mean $\widetilde\mu_i$ is equivalent to specifying the empirical average reward. We set $\FakePi(\widetilde\mu_1,\dots\widetilde\mu_j)=j$ when $\pi_j(\hat\mu_1,\dots,\hat\mu_{j-1},\widetilde\mu_j)=j$. That is, we use the couplings $\mathcal X_i$ to generate overly optimistic posterior mean rewards for all arms except arm $j$, and then apply $\pi_j$. When $\pi_j(\hat\mu_1,\dots,\hat\mu_{j-1},\widetilde\mu_j)\neq j$ we simply exploit based on the first $N$ samples of each arm $1,\dots,j$ (which may result in sampling arm $j$ anyway).

We proceed to show $\FakePi$ is $(j,\padding)$-suitable. It suffices to show arm $j$ recommendations are BIC since otherwise $\FakePi$ exploits. Fix $i<j$. Below we let $\pi_j$ denote the value $\pi_j(\hat\mu_1,\dots,\hat\mu_{j-1},\widetilde\mu_j)$. The key point is that the distribution of $(\hat\mu_1,\dots,\hat\mu_{j-1},\widetilde\mu_j)$ conditioned on $\ZEROS_{j,N_0}$ is, by construction, the unconditioned distribution for $(\widetilde\mu_1,\dots\widetilde\mu_j)$. Therefore, by the $(j,\padding)$-padded BIC property of $\pi_j$:
\[\mathbb E[(\widetilde\mu_j-\hat\mu_i)\cdot 1_{\pi_j=j}|\ZEROS_{j,N_0} ]\geq \padding.\]

As $\widetilde\mu_i\geq \hat\mu_i$ almost surely we obtain
\[\mathbb E[(\widetilde\mu_j-\widetilde\mu_i)\cdot 1_{\pi_j=j}|\ZEROS_{j,N_0} ]\geq \padding.\]
 for each $i<j$. On the event $\pi_j\neq j$ we always exploit, implying that
 \[\mathbb E[(\widetilde\mu_j-\widetilde\mu_i)\cdot 1_{\pi_j\neq j}|\ZEROS_{j,N_0} ]\geq 0.\]

 Adding, we obtain the padded BIC property conditional on $\ZEROS_{j,N_0}$. Compared against arms $i>j$, the conditional BIC property for $\FakePi$ follows directly from the unconditioned property of $\pi_j$. Indeed, the couplings restore the distribution of the first $j-1$ arm samples back to normal, and do not affect the remaining $N+1-j$ arms. This concludes the proof.

%\end{proof}

\subsection{BIC property for bootstrapping: proof of Lemma~\ref{lm:main-BIC0}}

\mainBIC*

\begin{proof}[Proof of Lemma~\ref{lm:main-BIC0}]
The proof follows the same strategy as that of Lemma~\ref{lm:main-BIC}. Line~\ref{line:initialexploit} is BIC since it is just exploitation. For the next phase, recommending any arm $i\neq j$ is BIC since it can only happen via exploitation, or via the BIC policy $\FakePi$. Hence we fix an arm $i\neq j$ and show that arm $j$ is BIC against arm $i$ in this phase. We define
$\IndPadded+\IndExploit+\IndExplore=1$ and $\Lambda_{i,j}$ as in the proof of Lemma~\ref{lm:main-BIC}. Note that we have probability $\Pr[\ZEROS_{j,N_0}]\cdot (1-p_j)$ to reach line~\ref{line:padded1}. Applying the guarantee of Lemma~\ref{lem:transform} conditional on this event, we obtain:
\begin{align*}\E[\Lambda_{i,j}\cdot \IndPadded]&=\E[(\mu_j-\mu_i)\cdot 1_{\FakePi=j}\cdot \IndPadded]\\
&=\E[(\widetilde \mu_j-\widetilde \mu_i)\cdot 1_{\FakePi=j}\cdot \IndPadded]\\
&\geq \E[(\widetilde \mu_j-\hat \mu_i)\cdot 1_{\FakePi=j}\cdot \IndPadded]\\
& \geq \Pr[\ZEROS_{j,N_0}]\cdot \padding (1-p_j)\\
 &= p_j.\end{align*}

The second equality above follows from the fact that both $1_{\FakePi=j}$ and $\IndPadded$ only depend on the first $N$ samples of the first $j$ arms and independent external randomness.

Moreover using a worst-case bound on exploration, and that exploitation is always BIC, we have:

\[\E[\Lambda_{i,j}\cdot \IndExplore]\geq -\E[\IndExplore]=-p_j\quad\text{ and }\quad  \E[\Lambda_{i,j}\cdot \IndExploit]\geq 0.\]

Adding, we obtain Equation~\ref{eq:BICcondition} for $i<j$. For $i>j$, similar reasoning again leads to the same three inequalities with $p_j$ replaced by $0$, i.e. $\E[\Lambda_{i,j}\cdot 1_E]\geq 0$ for $1_E\in \{\IndPadded,\IndExploit,\IndExplore\}.$ Altogether we have shown that playing arm $j$ is BIC against any arm $i$, establishing the BIC property for the bootstrapping phases.
\end{proof}

\section{Sample Complexity for Arbitrary Priors (for Section~\ref{sec:sample})}

% \easybound*

% \begin{proof}
% For the algorithm to be BIC, on the rounds where $a_j$ is played, it must be better than $\mu_q$, on average. That is, we must have

% \[ \E[N (\mu_j-\mu_q)]\geq 0\]

% where $N$ is the random number of times that $A_t=a_j$. Now, that expression is minimized by taking $N=1$ when $\mu_j<\mu_q$ and $N=T$ when $\mu_j\geq\mu_q$. This shows the desired lower bound on $T$.
% \end{proof}

% \begin{lemma}

% Let $X\in [0,1]$ have standard deviation at least $\eps$. Then for any $Y\in [0,1]$ independent of $X$, we have $\E[|X-Y|]\geq \eps^2.$

% \end{lemma}

% \begin{proof}
% Let $X,X'$ be independent copies of $X$. We have
% \[2\eps^2 = \E[|X-X'|^2]\leq \E[|X-X'|]\leq \E[|X-Y|+|X'-Y|]\leq 2\E[|X-Y|].\qedhere\]
% \end{proof}

% Now we estimate $\Tbic{N_{\TS}}$ in terms of the lower bounds.

 %\tscomplexity*

The purpose of this Appendix is to estimate the parameters $\padG,N_{\TS},\padN,\bootN,\log\left(\Pr[\ZEROS_{j,\bootN}]^{-1}\right)$ appearing in the sample complexity upper bound of Theorem~\ref{thm:goodalgo}. In Theorem~\ref{thm:tscomplexity} we achieve upper bounds based on the lower bounds $K,\bootN,\MainLB$ for $\Tbic{1}$ as well as a standard deviation lower bound $\sigma$. Then in Lemma~\ref{lem:easyexploreparameters} we show upper bounds depending on $\delta$ for $\delta$-easy and $\delta$-non-dominant problem instances.

In Theorem~\ref{thm:tscomplexity} we use the following definition, a type of anti-concentration assumption on the priors. It is implied (up to constant factors) by all $K$ arms having standard deviation at least $\sigma$. However it also allows for a single highly concentrated arm, as long as this arm has some chance to be away from $1$.

\begin{definition} \label{defn:epsndg} The priors are called \emph{$\sigma$-non-degenerate} if
    (i)  $\term{StdDev}(\mu_i) \geq \sigma$ for all but at most one arm $i$,
and (ii) $\Pr[\mu_i\in [0,1-\sigma]] \geq e^{-1/\sigma}$
for all arms $i$.
\end{definition}

The next result is a more refined statement of Corollary~\ref{thm:matching}, incorporating Corollary~\ref{eq:thm:matching-improved} as well via Definition~\ref{defn:epsndg}.

\begin{theorem}
\label{thm:tscomplexity}
Assume the priors are $\sigma$-non-degenerate. Then Algorithm~\ref{alg:pyramidtight} achieves $N_{\TS}$ samples of each arm within this many rounds:
\[\Tbic{N_{\TS}} =\tilde O\left(\frac{K^{9/2}\MainLB^3\Nexplore}{\sigma^4}+\frac{K^{5/2} \MainLB\Nexplore^2}{\sigma^{2}}\right).\]
Moreover Algorithm~\ref{alg:pyramidtightfast} (defined later), achieves $N_{\TS}$ samples of each arm within this many rounds:
\[\Tbic{N_{\TS}}= \tilde O\left(\frac{K^{7/2}\MainLB^3\Nexplore}{\sigma^4}\right).\]
\end{theorem}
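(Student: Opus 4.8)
The plan is to bound each prior-dependent parameter that enters Theorem~\ref{thm:goodalgo} --- namely $\eps_\TS,\delta_\TS$ (hence $N_\TS$), $\padG$ (hence $\padN$), $\bootN$, and $\log(\bootP^{-1})$ --- in terms of $K$, $\MainLB$, $\sigma$ and $\bootN$, and then substitute. For the first bound I would feed the estimates into $\RndsUB$ of Theorem~\ref{thm:goodalgo}, keeping $\bootN$ as a free quantity since it already appears on the right-hand side of the claim. For the second, sharper bound I would instead invoke the fine-tuned Algorithm~\ref{alg:pyramidtightfast}, whose guarantee saves a factor of $K$ by running the exponential-growth process more economically and by separating the bulk-sampling cost from the $\log(\bootP^{-1})$ factor, in the spirit of \eqref{eq:alg2-guarantee}.

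The workhorse for the ``gap'' parameters is the elementary inequality $\E[(\mu_j-\mu_q)_+]\ge \E|\mu_j-\mu_q|/(1+\MainLB)$, immediate from the definition of $\MainLB$ (the ratio of negative to positive part is at most $\MainLB$ for every admissible pair $(j,q)$). It then remains to lower-bound $\E|\mu_j-\mu_q|$. When the ``reference'' argument has variance at least $\sigma^2$, conditioning on the other argument and using $|x|\ge x^2$ on $[-1,1]$ gives $\E|\mu_j-\mu_q|\ge\Var\ge\sigma^2$. Applied to single-arm $q$ this yields $\eps_\TS\ge \sigma^2/(1+\MainLB)$; combined with the FKG bound $\delta_\TS=\min_i\Pr[A^*=i]\ge \eps_\TS^{\,K}$ (exactly as in the proof of Corollary~\ref{cor:TSsamples}) we get $\log\delta_\TS^{-1}=\tilde{O}(K)$, so from \eqref{eq:thm:TSsamples}, $N_\TS=\tilde{O}\!\big(K(1+\MainLB)^2/\sigma^4\big)$. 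The same ratio trick controls $\padG$, and then $\padN=\tilde{O}(\padG^{-2})$.

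The delicate point --- and the source of the half-integer powers of $K$ --- is the single arm permitted to be near-degenerate by Definition~\ref{defn:epsndg}. For $j$ equal to this arm the bound $\E|\mu_j-\mu_q|\ge\Var(\mu_j)$ is useless, so I would instead control $\E|\mu_\ell-\mu_q|$ through the fluctuations of $\mu_q=\sum_{i<\ell}q_i\mu_i$, a weighted average of the remaining arms, all of which have variance $\ge\sigma^2$. The key lemma I would prove is an anti-concentration estimate $\E|\mu_q-\mu_q^0|\gtrsim \min(\sigma^2,\sigma/\sqrt K)$, interpolating between the Bernoulli-type worst case (one dominant weight, giving $\sigma^2$) and the CLT regime (spread-out weights, giving $\sigma\sqrt{\sum_i q_i^2}\ge \sigma/\sqrt K$). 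This yields $\padG^{-1}=O\!\big((1+\MainLB)\max(\sigma^{-2},\sqrt K/\sigma)\big)$, whose $\sqrt K$ propagates to the exponents $K^{9/2},K^{5/2}$ (resp.\ $K^{7/2}$). Separately, $\log(\bootP^{-1})$ is controlled arm by arm: since $\bootP=\Pr[\ZEROS_{K,\bootN}]=\prod_{i<K}\E[(1-\mu_i)^{\bootN}]$ and condition (ii) of Definition~\ref{defn:epsndg} gives $\E[(1-\mu_i)^{\bootN}]\ge e^{-1/\sigma}\sigma^{\bootN}$, we obtain $\log(\bootP^{-1})=\tilde{O}\!\big(K(\bootN+\sigma^{-1})\big)$.

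Finally I would substitute all estimates into $\RndsUB$ (resp.\ the Algorithm~\ref{alg:pyramidtightfast} guarantee) and group the resulting monomials. The term carrying $\bootN^2$ comes from pairing the $N=\bootN$ contribution with the $\bootN$-part of $\log(\bootP^{-1})$ and matches $K^{5/2}\MainLB\bootN^2/\sigma^2$; the remaining terms should collapse onto $K^{9/2}\MainLB^3\bootN/\sigma^4$ after treating the two regimes $\sigma\le 1/\sqrt K$ and $\sigma>1/\sqrt K$ separately (in the first, anti-concentration contributes no extra $K$; in the second it contributes the $\sqrt K$) and after checking that the $\bootN$-free monomials are dominated. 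I expect this last step --- the regime-by-regime bookkeeping, and in particular reconciling the $\sigma^{-1}$ tail of $\log(\bootP^{-1})$ against the target power $\sigma^{-4}$ --- to be the most error-prone part, while the genuine mathematical obstacle is the anti-concentration lemma for $\mu_q$ that licenses the $\sqrt K$ improvement and thereby the treatment of the near-degenerate arm. The second bound then follows by the identical substitution, the factor-$K$ saving of Algorithm~\ref{alg:pyramidtightfast} eliminating the $K^{5/2}$ term and reducing $K^{9/2}$ to $K^{7/2}$.
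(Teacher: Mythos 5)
Your route is structurally identical to the paper's: bound $N_\TS$ via the FKG estimate $\delta_\TS\ge\eps_\TS^K$, bound $\log(\bootP^{-1})$ arm-by-arm from condition (ii) of $\sigma$-non-degeneracy, control all gap parameters through the ratio identity $\E[(\mu_j-\mu_q)_+]\ge \E|\mu_j-\mu_q|/(1+\MainLB)$, and substitute into Theorem~\ref{thm:goodalgo} and \eqref{eq:alg2-guarantee}. The paper even works with the same intermediate quantity $\widehat L:=\sup_{j,q:\,q_j=0}1/\E[(\mu_j-\mu_q)_+]$ and the same two intermediate expressions, $\tilde O\big(K^2\Nexplore\widehat L(K\widehat L^2+\Nexplore)\sigma^{-1}\big)$ and $\tilde O\big(K^2\widehat L^3\Nexplore\sigma^{-1}\big)$, before setting $\widehat L=O(\MainLB\sqrt K/\sigma)$. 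The gap is in your last step, and you half-flagged it yourself. Your (correct) anti-concentration lemma gives $\E|\mu_j-\mu_q|\gtrsim\min(\sigma^2,\sigma/\sqrt K)$, hence $\padG^{-1}\lesssim(1+\MainLB)\max(\sigma^{-2},\sqrt K/\sigma)$. In the regime $\sigma\le 1/\sqrt K$ the maximum is $\sigma^{-2}$, and the leading monomial becomes $K^3\MainLB^3\Nexplore\sigma^{-7}$, which exceeds the target $K^{9/2}\MainLB^3\Nexplore\sigma^{-4}$ by the factor $(\sigma\sqrt K)^{-3}\ge 1$. So the terms do not ``collapse onto'' the stated bound there: your remark that anti-concentration ``contributes no extra $K$'' in that regime is true but beside the point, since it contributes extra powers of $\sigma^{-1}$ instead. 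Carried out honestly, your argument proves $\tilde O\big(K^3\MainLB^3\Nexplore\,\max(\sigma^{-2},\sqrt K/\sigma)^3\,\sigma^{-1}+\cdots\big)$, not the theorem as stated.

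It is worth saying how the paper itself gets the stated exponents: it asserts $\sup_{j,q}1/\E|\mu_j-\mu_q|=O(\sqrt K/\sigma)$ outright, on the grounds that ``either $\mu_j$ has standard deviation at least $\sigma$, or $\mu_q$ has standard deviation at least $\sigma/\sqrt K$.'' That inference silently uses $\E|X-Y|\gtrsim\mathrm{std}$, which is exactly the step you (rightly) refused to take: for a two-point prior with $\Pr[\mu_i=1]=\sigma^2$ and $\mu_i=\nicefrac12$ otherwise, the standard deviation is of order $\sigma$ but $\min_c\E|\mu_i-c|$ is of order $\sigma^2$, so the paper's claim on $\widehat L$ fails whenever $\sigma\ll 1/\sqrt K$. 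In other words, your more careful interpolation exposes a genuine soft spot in the paper's own proof (its substitution is justified only when $\sigma\gtrsim 1/\sqrt K$, or under priors with Gaussian-like anti-concentration); but as a proof of the statement given, your proposal does not close, and no regime-by-regime bookkeeping built on the lemma $\E|\mu_q-\mu_q^0|\gtrsim\min(\sigma^2,\sigma/\sqrt K)$ can make it close.
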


\begin{proof}
We use the guarantees from Theorem~\ref{thm:goodalgo} and
\eqref{eq:alg2-guarantee},
%The algorithms require
%\[\tilde O\left(\frac{K}{\padG}\left(\max(N_{\post},\Nexplore,N_{\TS})\log\left(\frac{1}{\padG\Pr[\ZEROS_{j,\bootN}]}\right)\right)\right)\]
%and
%\[\tilde O\left(\frac{K}{\padG}\left(N_{\post}\log\left(\frac{1}{\padG\Pr[\ZEROS_{j,\bootN}]}\right)+\Nexplore+N_{\TS}\right)\right)\]
%rounds, respectively.
and we upper-bound each of the relevant parameters in terms of $\MainLB,\bootN,K,\sigma$. Recall from Theorem~\ref{thm:TSsamples} that
$N_{\TS} = C_{\TS}\,\eps_{\TS}^{-2}\;\log \delta_{\TS}^{-1}$,
%First recall that
%\[N_{\TS}=O\left(\frac{\log(1/\delta_{\TS})}{\eps_{\TS}^2}\right).\]
where $\delta_{\TS}\geq \eps_{\TS}^K$ by FKG (see the proof of Lemma~\ref{cor:TSsamples}) so that $N_{\TS}=O\left(\frac{K\log(1/\eps_{\TS})}{\eps_{\TS}^2}\right).$

From the condition that $\Pr[\mu_i\in [0,1-\sigma]]\geq e^{-K\sigma^{-1}}$, it is not hard to obtain
    \[\Pr[\ZEROS_{j,\bootN}]\geq e^{-O(K\Nexplore\sigma^{-1})},\]
which implies \[\log\left(\frac{1}{\Pr[\ZEROS_{j,\bootN}]}\right)=O\left(K\Nexplore\sigma^{-1}\right).\]

Define \[\widehat L:= \sup_{q\in\Delta,j\in[K]:q_j=0} \frac{1}{\E[(\mu_j-\mu_q)_+]}.\] We have $\frac{1}{\padG}\leq\widehat L$ as $\widehat L$ minimizes the same objective over a large set. We may take
    $\eps_{\TS}=\frac{1}{10\widehat L}$,
so that $N_{\TS}=\tilde O(K\hat{L}^2).$ Recall that $N_{\post}=O\left(\frac{\log(\padG^{-1})}{\padG^2}\right)=\tilde O(\hat{L}^2)$. Combining we obtain the bounds
\[ \tilde O\left( K^2\Nexplore \widehat L(K\widehat L^2+\Nexplore)\sigma^{-1} \right)
    \quad\text{and}\quad
 \tilde O\left(K^2\widehat L^3 \Nexplore\sigma^{-1}\right)\]
on the number of rounds needed by Algorithms~\ref{alg:pyramidtight} and~\ref{alg:pyramidtightfast} respectively. Next we estimate:
\begin{align*}\widehat L&=\sup_{q\in\Delta_K,j\in[K]:q_j=0}\left(\frac{1}{\E[(\mu_j-\mu_q)_+]}\right) \\
&\leq\sup_{q\in\Delta_K,j\in[K]:q_j=0}\frac{\E[(\mu_j-\mu_q)_+ +(\mu_j-\mu_q)_-]}{\E[(\mu_j-\mu_q)_+]}\cdot \sup_{q\in\Delta_K,j\in[K]:q_j=0}\frac{1}{\E[(\mu_j-\mu_q)_+ +(\mu_j-\mu_q)_-]}\\
&\leq\sup_{q\in\Delta_K,j\in[K]:q_j=0}\frac{\E[(\mu_j-\mu_q)_+ +(\mu_j-\mu_q)_-]}{\E[(\mu_j-\mu_q)_+]}\cdot \sup_{q\in\Delta_K,j\in[K]:q_j=0}\frac{1}{\E[|\mu_j-\mu_q|]}\\
&\leq O\left(\frac{\MainLB\sqrt{K}}{\sigma}\right).\end{align*}

In the last step, we use the assumption that at most one $\mu_i$ has standard deviation less than $\sigma$ to bound the last supremum by $O\left(\frac{\sqrt{K}}{\sigma}\right)$. This bound follows because either $\mu_k$ has standard deviation at least $\sigma$, or $\mu_q$ has standard deviation at least $\frac{\sigma}{\sqrt{K}}$. The claimed bounds follow.
\end{proof}

%The final bounds are
%\[\tilde O\left(\frac{K^4\MainLB^3\Nexplore}{\sigma^4}+\frac{K^{5/2} \MainLB\Nexplore^2}{\sigma^{2}}\right)\]
%and
%\[O\left(\frac{K^{7/2}\MainLB^3\Nexplore}{\sigma^4}\right).\]

\begin{lemma}\label{lem:closetoEV}

If $X\in [0,1]$ almost surely then $\mathbb P[X\leq \mathbb E[X]+\varepsilon]\geq \frac{\varepsilon}{1+\varepsilon}.$

\end{lemma}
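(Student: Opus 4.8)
The plan is to prove the claim by controlling the upper tail via a reverse Markov inequality. Write $\mu=\E[X]$ and let $p=\Pr[X>\mu+\varepsilon]$ be the probability of the complementary event; the goal then reduces to showing $p\le \tfrac{1}{1+\varepsilon}$, since in that case $\Pr[X\le \mu+\varepsilon]=1-p\ge \tfrac{\varepsilon}{1+\varepsilon}$, which is exactly the stated bound.

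First I would exploit the nonnegativity of $X$. Splitting the expectation over the event $\{X>\mu+\varepsilon\}$ and its complement, bounding $X$ below by $\mu+\varepsilon$ on the former, and discarding the (nonnegative) contribution of the latter, gives
\[
\mu=\E[X]\;\ge\; \E\bigl[X\cdot 1_{\{X>\mu+\varepsilon\}}\bigr]\;\ge\;(\mu+\varepsilon)\,p.
\]
Rearranging yields $p\le \tfrac{\mu}{\mu+\varepsilon}$.

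The final step is to eliminate the dependence on $\mu$. Since the map $\mu\mapsto \tfrac{\mu}{\mu+\varepsilon}$ is increasing for fixed $\varepsilon>0$, and the assumption $X\in[0,1]$ forces $\mu\le 1$, we obtain $p\le \tfrac{1}{1+\varepsilon}$, which completes the argument. I do not expect any real obstacle here: the proof is a two-line reverse-Markov estimate. The only point worth flagging is that the full two-sided support assumption $X\in[0,1]$ is genuinely used — the lower bound $X\ge 0$ is what lets me drop the complementary term, while the upper bound $X\le 1$ is what caps $\mu$ — so neither half of the hypothesis is superfluous.
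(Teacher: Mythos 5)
Your proof is correct and takes essentially the same route as the paper's: both are two-line first-moment estimates that split the expectation over the tail event $\{X>\E[X]+\varepsilon\}$, using $X\geq 0$ to handle the complementary part and $\E[X]\leq 1$ to remove the dependence on the mean. The paper phrases it as a contradiction for the centered variable $X-\E[X]$ (with a harmless typo swapping the roles of $p$ and $1-p$), whereas you apply a reverse-Markov bound to $X$ directly; the difference is purely cosmetic.
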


\begin{proof}
Let
    $p=\mathbb P[X\leq \mathbb E[X]+\varepsilon]$.
Then
    $\mathbb E[X-\mathbb E[X]]\geq p\varepsilon-(1-p)$.
However, this is false if
    $p< \frac{\varepsilon}{1+\varepsilon}$.
\end{proof}

In the next lemma we estimate all the parameters appearing in the guarantee of Theorem~\ref{thm:goodalgo} for Algorithm~\ref{alg:pyramidtight} for $\delta$-easy and $\delta$-non-dominant problem instances.

\begin{restatable}{lemma}{easyexploreparameters}
\label{lem:easyexploreparameters}
If $\mC$ is $\delta$-easy and $\delta$-non-dominant, then:
%\begin{itemize}
    %\item
    $N_{\TS}=\tilde O(K\delta^{-2})$ and
    %\item
    $\padG\geq \delta$ and
    %\item
    $\padN=\tilde O(\delta^{-2})$  and
    %\item
    $\bootN=\tilde O(\delta^{-1})$  and
    %\item
    $\log(\Pexplore^{-1})=\tilde O(K\delta^{-1})$.
%\end{itemize}
\end{restatable}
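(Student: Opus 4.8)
The plan is to bound the five parameters one at a time, leaning on two recurring devices: a convexity argument that replaces a convex combination $\mu_q$ by the fixed threshold $\Phi_{\mathcal C}$, and a posterior estimate for the effect of conditioning on an all-zero history. I would start with $\padG\geq\delta$ and the closely related $\eps_{\TS}$. Fix $j$ and $q\in\Delta_{j-1}$; since arm $j$ is independent of $\mu_q$ (which involves only arms $i<j$), write $\E[(\mu_j-\mu_q)_+]=\E_{\mu_q}[g(\mu_q)]$ with $g(v):=\E_{\mu_j}[(\mu_j-v)_+]$. As $v\mapsto(\mu_j-v)_+$ is convex and decreasing, so is $g$, and Jensen gives $\E[(\mu_j-\mu_q)_+]\geq g(\mu_q^0)=\E[(\mu_j-\mu_q^0)_+]$. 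Because $\mu_q^0=\sum_{i<j}q_i\mu_i^0\leq\Phi_{\mathcal C}$ and $g$ decreases, this is at least $\E[(\mu_j-\Phi_{\mathcal C})_+]>\delta$ by $\delta$-easiness, so $\padG\geq\delta$. The identical computation for an arbitrary ordered pair gives $\eps_{\TS}=\min_{i\neq j}\E[(\mu_i-\mu_j)_+]\geq\delta$. The bound $\padN=\tilde O(\delta^{-2})$ then follows at once from $\padN=\padC\,\padG^{-2}\log(\padG^{-1})$.

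For $N_{\TS}=C_{\TS}\,\eps_{\TS}^{-2}\log\delta_{\TS}^{-1}$ I already have $\eps_{\TS}^{-2}\leq\delta^{-2}$, so it remains to lower bound $\delta_{\TS}=\min_i\Pr[A^*=i]$. Reusing the FKG chain from Corollary~\ref{cor:TSsamples}, $\Pr[A^*=i]\geq\prod_j\Pr[\mu_i\geq\mu_j]\geq\prod_j\E[(\mu_i-\mu_j)_+]\geq\eps_{\TS}^K\geq\delta^K$, whence $\log\delta_{\TS}^{-1}\leq K\log\delta^{-1}$ and $N_{\TS}=\tilde O(K\delta^{-2})$.

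The heart of the argument is $\bootN=\tilde O(\delta^{-1})$, where $\delta$-non-dominance enters. Since arm $j$ is independent of the event $\ZEROS_{j,N_0}$ (which concerns arms $i<j$), the defining inequality $\E[\mu_j-\mu_i\mid\ZEROS_{j,N_0}]>0$ is equivalent to $\E[\mu_i\mid N_0\text{ zeros}]<\mu_j^0$, i.e. to $\E[(\mu_j^0-\mu_i)(1-\mu_i)^{N_0}]>0$. I would split this into a positive contribution $P$ from $\{\mu_i\leq\mu_j^0-\delta/2\}$ and a negative one $Q$ from $\{\mu_i>\mu_j^0\}$. From $\E[(\mu_j^0-\mu_i)_+]\geq\delta$ one extracts both $\mu_j^0\geq\delta$ and $\Pr[\mu_i\leq\mu_j^0-\delta/2]\geq\delta/2$, giving $P\geq\tfrac{\delta}{2}(1-\mu_j^0+\tfrac{\delta}{2})^{N_0}$, while $(1-\mu_i)^{N_0}\leq(1-\mu_j^0)^{N_0}$ on the negative side yields $Q\leq(1-\mu_j^0)^{N_0}$. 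Since $\big(\tfrac{1-\mu_j^0}{1-\mu_j^0+\delta/2}\big)^{N_0}\leq(1-\delta/3)^{N_0}$, we obtain $P>Q$ once $N_0=\Theta(\delta^{-1}\log\delta^{-1})$, uniformly over all pairs $i<j$. I expect this posterior-reweighting estimate to be the main obstacle, since it must balance the anti-concentration supplied by non-dominance against the exponential tilt $(1-\mu_i)^{N_0}$.

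Finally, for $\log(\bootP^{-1})=\tilde O(K\delta^{-1})$ I would use $\bootP=\Pr[\ZEROS_{K,\bootN}]=\prod_{i<K}\E[(1-\mu_i)^{\bootN}]$ and bound each factor. The same anti-concentration step produces a threshold $\tau_i$ with $\Pr[\mu_i\leq\tau_i]\geq\delta/2$ and $1-\tau_i\geq\delta$ (here $\delta$-easiness is convenient, as it forces $\Phi_{\mathcal C}<1-\delta$ and hence keeps every $\mu_j^0$ away from $1$), so that $\E[(1-\mu_i)^{\bootN}]\geq\tfrac{\delta}{2}\,\delta^{\bootN}$ and $\log\E[(1-\mu_i)^{\bootN}]^{-1}=\tilde O(\delta^{-1})$ per arm. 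Summing over the $K-1$ arms $i<K$ gives $\log(\bootP^{-1})=\tilde O(K\delta^{-1})$, completing the list.
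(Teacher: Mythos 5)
Your proposal is correct and follows essentially the same route as the paper: parameter-by-parameter estimates where $\delta$-easiness is exploited through the Jensen/convexity reduction of $\mu_q$ (or $\mu_j$) to the fixed threshold $\Phi_{\mathcal C}$, and $\delta$-non-dominance supplies the anti-concentration $\Pr[\mu_i\leq\mu_j^0-\delta/2]\geq\delta/2$ that drives the bootstrapping bounds. The only divergences are cosmetic — you bound $\delta_{\TS}$ by reusing the FKG chain of Corollary~\ref{cor:TSsamples} where the paper invokes its small-deviation Lemma~\ref{lem:closetoEV}, and your explicit posterior-reweighting inequality $\E[(\mu_j^0-\mu_i)(1-\mu_i)^{N_0}]>0$ makes rigorous the likelihood-ratio argument the paper only sketches for $\bootN$ (your constant $\tfrac{\delta}{2}$ in the bound on $P$ should be $\tfrac{\delta^2}{4}$, but this changes nothing since $N_0=\Theta(\delta^{-1}\log\delta^{-1})$ still suffices).
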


\begin{proof}

We begin with the first assertion. Since $\mathbb E\sbr{ (\mu - \Phi^{\sup}_\mC)_+}>\delta$ for any $\mu_j\sim\mP_j\in\mC$ we see that $\mathbb P[\mu_j\geq \Phi^{\sup}+\frac{\delta}{2}]\geq \frac{\delta}{2}$. On the other hand by Lemma~\ref{lem:closetoEV} with $\varepsilon=\frac{\delta}{3}$ we have for each $\mu_i\sin \mP_i$ that
\[\mathbb P[\mu_i\leq \Phi^{\sup}+\frac{\delta}{3}]\geq \mathbb P[\mu_i\leq \mathbb E[\mu_i]+\frac{\delta}{3}]\geq \Omega(\delta).\]

We conclude that $\mathbb P[A^*=\mu_j]\geq \Omega(\delta)^K$ for any $j$ and therefore that $\delta^{-1}_{\TS}=O(K\log(\delta^{-1})).$ On the other hand by definition:
\[\mathbb E[(\mu_j-\mu_i)_+]\geq \mathbb E[(\mu_j-\Phi^{\sup}_{\mC})_+]\geq \delta \]
which implies $\varepsilon_{\TS}\geq \delta$ and hence $N_{\TS}=\tilde O(K\delta^{-2}).$

To see that $\padG\geq \delta$ we observe that for any $i\in [K],q\in\Delta_{i-1}$ we have by convexity:
\begin{align*}\mathbb E[(\mu_i-\mu_q)_+]&\geq \mathbb E[(\mu_i-\mu_q^0)_+]\\
&\geq \mathbb E[(\mu_i-\Phi^{\sup}_{\mC})_+]\\
&\geq \delta.\end{align*}

The fact that $\padG\geq\delta$ immediately implies $N_{\post}=\tilde O(\delta^{-2})$.

Finally, for $\bootN$ and $\Pexplore$ we rely also on $\delta$-explorability. To estimate $\bootN$ we observe that every $\delta^{-1}$ zero-reward samples of arm $i$ give a constant factor likelihood ratio advantage of any $\mu_i\leq \mu_j^0-\delta/2$ over any $\mu_i\geq \mu_j^0$. Since $\mathbb P[\mu_i\leq \mu_j^0-\delta/2]\geq \frac{\delta}{2}$ it is easy to see that $\bootN=\tilde O(\delta)$ from this. Since each zero-reward event has probability at least $1-\mu_i^0\geq \Omega(\delta)$ given the previous ones,
$\log(\Pexplore^{-1})=\tilde O(K/\delta)$.
%\item $\bootN=O_{\delta}(1)$.
    %\item $\log(\Pexplore^{-1})=\Omega_{\delta}(K^{-1})$
\end{proof}

\subsection{Necessity of the Non-Degeneracy Assumption}
\label{app:non-degeneracy}

We previously showed that under a $\sigma$-non-degeneracy assumption, Thompson sampling can be made BIC after an amount of time polynomial in $\sigma^{-1}$ and the time required to sample every action. It is natural to wonder if the dependence on an additional parameter $\sigma$ is necessary, or if these two quantities are always polynomially related. Here we show that dependence on an additional parameter is unavoidable. In particular we give an $\varepsilon$-non-degenerate problem instance in which both arms can be sampled in $2$ rounds, but $\Omega(\varepsilon^{-1})$ rounds are needed to make Thompson sampling BIC. In particular, the polynomial dependence on $\sigma$ in Corollary~\ref{thm:matching} is necessary in this example.

\begin{proposition}
\label{prop:degenexample}
Consider an initial prior on two arms where $\mu_1=\frac{1}{2}\pm\eps$ where the sign is chosen uniformly at random, and $\mu_2=\frac{1}{2}-\frac{\eps^2}{10}$ almost surely. Then it is possible to sample both arms in time $O(1)$. However, if there is a BIC algorithm which almost surely uses Thompson sampling on round $t$, then $t\geq \Omega(\eps^{-1})$.
\end{proposition}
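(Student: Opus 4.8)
The plan is to handle the two claims separately; the first is routine and the second is the real content. For the easy direction I would observe that arm $1$ is the prior-best arm, since $\mu_1^0 = \tfrac12 > \tfrac12 - \tfrac{\eps^2}{10} = \mu_2^0$, so recommending it in round $1$ is trivially BIC. A single failure of arm $1$ drives its posterior mean to $\tfrac12 - 2\eps^2 < \mu_2^0$, so arm $2$ becomes posterior-preferred, and the ``hidden exploration'' recipe works in round $2$: recommend arm $2$ whenever the round-$1$ reward is $0$ (and, if one wants the sampling probability close to $1$, also with a calibrated hidden probability when it is $1$), so that the integrated inequality $\E[(\mu_2-\mu_1)\cdot\ind{A_2=2}]\ge 0$ holds. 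Since arm $1$ fails with probability $\tfrac12$, both arms are sampled within $O(1)$ rounds.

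For the lower bound, fix the round $t$ at which the algorithm almost surely uses Thompson Sampling, and let $w_t := \Pr^t[\mu_1 = \tfrac12+\eps]$ be the posterior probability that arm $1$ is ``good''. Since $\tfrac12-\eps < \mu_2 < \tfrac12+\eps$ and there are no ties, $\{A^*=2\} = \{\mu_1 = \tfrac12-\eps\}$, and Thompson Sampling recommends arm $2$ at round $t$ with probability $\Pr^t[A^*=2] = 1-w_t \in (0,1)$, so the BIC constraint for arm $2$ is genuinely active. By \eqref{eq:TS-bayes-2} and the Thompson-Sampling identity \eqref{eq:TS-def}, BIC at round $t$ is equivalent to $\E[\E^t[\mu_2-\mu_1]\cdot\ind{A^*=2}]\ge 0$. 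Writing $\E^t[\mu_1] = \tfrac12 - \eps + 2\eps w_t$, using $\ind{A^*=2}=1-\ind{\mu_1=\frac12+\eps}$ together with $\E[w_t\cdot\ind{A^*=2}] = \E[w_t(1-w_t)]$ (tower property, since $w_t = \E^t[\ind{\mu_1=\frac12+\eps}]$) and $\E[w_t]=\tfrac12$, a short computation collapses the BIC condition to
\begin{align*}
\Var(w_t) = \E[w_t^2] - \tfrac14 \;\ge\; \tfrac{\eps}{40}.
\end{align*}

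The second step is to show this posterior variance is small unless $t$ is large. As $w_t$ is a bounded martingale with $w_1=\tfrac12$, orthogonality of increments gives $\Var(w_t) = \sum_{\tau=2}^{t}\E[(w_\tau-w_{\tau-1})^2]$. Because the arms are independent and rewards are Bernoulli, $w$ is unchanged on any round playing arm $2$, so only the at most $t-1$ rounds playing arm $1$ contribute. On such a round a direct Bayes computation gives $\E[(w_\tau-w_{\tau-1})^2\mid\mF_{\tau-1}] = \tfrac{4\eps^2\,w^2(1-w)^2}{p(1-p)}$, where $w=w_{\tau-1}$ and $p=\tfrac12-\eps+2\eps w$ is the posterior-predictive success probability; since $w^2(1-w)^2\le\tfrac1{16}$ and $p(1-p)\ge\tfrac14-\eps^2$, each term is $O(\eps^2)$. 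Hence $\Var(w_t)\le O(\eps^2)\cdot(t-1)$, and combining with $\Var(w_t)\ge\eps/40$ forces $t=\Omega(\eps^{-1})$.

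The crux is this second step: a single Bernoulli observation of an arm whose two candidate means differ by $2\eps$ moves the posterior by only $O(\eps)$, so posterior variance accumulates at rate $O(\eps^2)$ per sample. This is exactly what separates the target $\Omega(\eps^{-1})$ from the naive $\Omega(\eps^{-2})$ distinguishing bound, because BIC demands only the very weak separation $\Var(w_t)\gtrsim\eps$ rather than full identification of $\mu_1$. The other delicate point is the algebraic reduction of BIC to a clean statement about $\Var(w_t)$; I expect the care there to lie in correctly using the tie-breaking and the integrated (rather than conditional) form of the BIC inequality.
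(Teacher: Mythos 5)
Your lower-bound argument is correct, and it shares the paper's skeleton (BIC for Thompson Sampling at round $t$ forces the posterior on $\mu_1$ to carry variance $\Omega(\eps)$, while each Bernoulli sample of arm $1$ can add only $O(\eps^2)$), but your execution differs in two ways, and both are improvements in rigor. First, the paper reduces to the algorithm that plays arm $1$ in every round: it invokes the submartingale property of $X_t = \E^t[\mu_2-\mu_1]\cdot\Pr^t[A_t=2]$ (via Lemma~\ref{lem:submartingale} and Lemma~\ref{thm:TSBIC}) to argue that this algorithm maximizes $\E[X_t]$, and only then bounds the information collected by that specific algorithm. Your orthogonal-increments identity $\Var(w_t)=\sum_\tau \E[(w_\tau-w_{\tau-1})^2]$ handles an arbitrary adaptive algorithm directly, since rounds playing arm $2$ contribute zero increment, so no such reduction is needed (and none of the FKG/submartingale machinery is used). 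Second, the paper's final step asserts that ``any function of our observations is correlated at most $O(t\eps^2)$ with $\ind{\mu_1>\frac12}$''; taken literally for arbitrary bounded functions this is not true (the indicator of seeing a majority of successes among $t$ arm-$1$ samples has correlation $\Theta(\eps\sqrt{t})$ with $\ind{\mu_1>\frac12}$ when $t\ll \eps^{-2}$). The claim is only valid for the posterior probability $w_t$ itself, where $\Cov\rbr{w_t,\ind{\mu_1>\frac12}}=\Var(w_t)$, and your explicit Bayes computation $\E[(w_\tau-w_{\tau-1})^2\mid\mF_{\tau-1}]=4\eps^2w^2(1-w)^2/\rbr{p(1-p)}=O(\eps^2)$ is exactly the missing step that makes this airtight. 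Your algebraic reduction of BIC to $\Var(w_t)\geq \eps/40$ agrees with what one gets by expanding the paper's condition via \eqref{eq:TS-bayes-2} and \eqref{eq:TS-def}, and your handling of tie-breaking and activeness of the constraint ($w_t\in(0,1)$ a.s.) is fine.

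There is one genuine, though minor and fixable, gap in the easy direction. The paper's notion of sampling ($\RndsOPT$) requires both arms to be collected \emph{almost surely} within a fixed number of rounds, and your round-$2$ scheme cannot achieve this: the maximal hidden probability $\alpha$ of recommending arm $2$ when $r_1=1$ satisfies $\tfrac12\cdot\tfrac{19}{10}\eps^2\geq \alpha\cdot\tfrac12\cdot\tfrac{21}{10}\eps^2$, i.e.\ $\alpha\leq 19/21$, so arm $2$ remains unsampled with probability at least $1/21$ after round $2$; repeating with fresh independent coins gives almost-sure sampling only in finite, not bounded, time. The fix --- which is what the paper means by ``doing this twice with appropriate coupling'' --- is to recommend arm $2$ in round $3$ on the \emph{entire} remaining event $\{r_1=1,\ \text{coin chose arm }1\}$, hiding it inside the event $\{r_1=0\}$ (on which arm $2$ is still posterior-preferred, since arm-$2$ samples are uninformative): the integrated constraint becomes $\tfrac12\cdot\tfrac{19}{10}\eps^2-\tfrac{1}{21}\cdot\tfrac{21}{10}\eps^2\geq 0$, so this is BIC and both arms are sampled almost surely by round $3$.
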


\begin{proof}

First we explain how to sample both arms in time $O(1)$. We first sample arm $1$. Note that if the observed reward $r_1$ is $0$, then the posterior mean for $\mu_1$ is $\frac{1}{2}-2\eps^2$ while if the reward is $1$ then the posterior mean is $\frac{1}{2}+2\eps^2$. Therefore, if we in the next round sample $\mu_2$ with probability $1$ when $r_1=0$, and with probability $\frac{1}{2}$ when $r_1=1$, this is also BIC. Doing this twice with appropriate coupling allows us to sample both arms almost surely in $3$ rounds.

Now we prove the lower bound for Thompson sampling. We recall that Thompson sampling at time $t$ is BIC if and only if \[\E^{t}[\mu_i-\mu_j]\cdot {\Pr}^{t}[A_t=i]\geq 0.\] The proof of Lemma~\ref{thm:TSBIC} shows that the left-hand side of the above equation, which we will call $X_t$, is a submartingale. Now, sampling arm $2$ gives no information, hence no change in $X_t$. Therefore we may define $Y_m$ to be the value of $X_t$ at any time when arm $2$ has been sampled $m$ times. It is not hard to see that $Y_m$ is also a submartingale, and is independent of the choice of which arm to sample. (We may define $Y_m$ for all $m$ by generating infinity many samples of arm $2$ ``in secret" and thinking of sampling arm $2$ as revealing the next sample.) Since $Y$ is a submartingale, we by definition have $\E^t[Y_t]\geq \E^t[X_t]$ almost surely for any bandit algorithm. Therefore $\mathbb E[Y^t]\geq \E[X^t]$ for any algorithm. We conclude that if there exists a BIC algorithm which uses Thompson sampling almost surely at time $t$, then sampling arm $1$ for the first $t-1$ rounds also works. After $t-1$ samples of $a_1$, we gain $O(t\eps^2)$ bits of information on the value of $\mu_1$. Therefore any function of our observations is correlated at most $O(t\eps^2)$ with $1_{\mu_1>\frac{1}{2}}$, and hence $O(t\eps^3)$ correlated with the value $\mu_1$. This implies that $t=\Omega(\eps^{-1})$ is required for TS to be BIC.
\end{proof} 

% \newpage
\section{Sample Complexity for Truncated Gaussians and Beta priors}

In this Appendix we consider concrete problems with truncated Gaussian and Beta priors, determining the sample complexity for Thompson sampling up to polynomial dependence in several situations. Our strategy throughout is to methodically estimate the parameters $\padG,\bootN,\MainLB$, and so on. The next lemma allows us to simplify many of these computations using symmetry.

\begin{lemma}\label{lem:sym}

Suppose $\mu_1,\dots,\mu_{j-1}$ are independent and identically distributed. Then

\[\sup_{q\in\Delta_{j-1}} \frac{\E[\mu_q^0-\mu_j^0]}{\E[(\mu_j-\mu_q)_+]}\]
is achieved at $q=\left(\frac{1}{j-1},\frac{1}{j-1},\dots,\frac{1}{j-1}\right).$

\end{lemma}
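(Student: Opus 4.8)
The plan is to reduce the optimization to minimizing the denominator, and then invoke convexity together with symmetry. First I would observe that because $\mu_1,\dots,\mu_{j-1}$ are i.i.d.\ they share a common prior mean $m := \mu_1^0 = \dots = \mu_{j-1}^0$, so that $\mu_q^0 = \sum_{i<j} q_i\, m = m$ for \emph{every} $q\in\Delta_{j-1}$. Hence the numerator $\E[\mu_q^0-\mu_j^0]=m-\mu_j^0$ does not depend on $q$ at all, and since the arms are ordered by decreasing prior mean it is nonnegative. Maximizing the ratio over $q$ is therefore equivalent to minimizing the denominator $f(q):=\E[(\mu_j-\mu_q)_+]$ over $\Delta_{j-1}$ (in the degenerate case $m=\mu_j^0$ the ratio is identically zero and the claim is trivial).

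Next I would establish two structural properties of $f$. For \emph{convexity}: for each fixed realization of $(\mu_1,\dots,\mu_j)$ the map $x\mapsto(\mu_j-x)_+$ is convex, and $\mu_q=\sum_{i<j} q_i\,\mu_i$ is linear in $q$; composing and taking expectations preserves convexity, so $f$ is convex on $\Delta_{j-1}$. For \emph{symmetry}: since $(\mu_1,\dots,\mu_{j-1})$ are exchangeable (being i.i.d.\ and independent of $\mu_j$), for any permutation $\pi$ of $\{1,\dots,j-1\}$ the random variable $\mu_{q_\pi}$ has the same joint law with $\mu_j$ as $\mu_q$ does, where $q_\pi$ denotes the coordinate-permuted vector. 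Consequently $f(q_\pi)=f(q)$.

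Finally I would run the standard symmetrization argument. Let $q^*$ be any minimizer of $f$, which exists since $f$ is continuous and $\Delta_{j-1}$ is compact. By symmetry each permuted vector $q^*_\pi$ is also a minimizer with $f(q^*_\pi)=f(q^*)$. Averaging over all $(j-1)!$ permutations gives $\bar q := \frac{1}{(j-1)!}\sum_\pi q^*_\pi$, whose $k$-th coordinate is $\frac{1}{(j-1)!}\sum_\pi q^*_{\pi(k)} = \frac{1}{j-1}$, so $\bar q$ is exactly the uniform distribution. Convexity then yields $f(\bar q)\le \frac{1}{(j-1)!}\sum_\pi f(q^*_\pi)=f(q^*)$, so the uniform distribution is itself a minimizer of $f$ and hence a maximizer of the ratio, as claimed.

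I do not expect a serious obstacle here; the argument is routine once set up. The one point to get right is the reduction in the first paragraph: recognizing that under the i.i.d.\ hypothesis the numerator is a $q$-independent constant, so that maximizing the ratio flips to \emph{minimizing} the convex denominator, after which the convexity-plus-symmetry minimization at the centroid is entirely standard.
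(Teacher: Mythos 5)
Your proof is correct and follows essentially the same route as the paper's: reduce to minimizing the denominator $\E[(\mu_j-\mu_q)_+]$ (since the numerator is $q$-independent under the i.i.d.\ hypothesis), then use convexity plus permutation symmetry to place the minimizer at the uniform distribution. You merely spell out the symmetrization-at-the-centroid step and the sign of the numerator more explicitly than the paper does, which is fine.
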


\begin{proof}

It suffices to show that $\E[(\mu_j-\mu_q)_+]$ is minimized at the claimed value of $q$, since $\mu_q^0$ is independent of $q\in\Delta_{j-1}$. Indeed $\E[(\mu_j-\mu_q)_+]$ is a convex, symmetric function of $q\in\Delta_{j-1}$ so it must be minimized when all coordinates are equal. The symmetry is clear while convexity holds for any fixed values of $(\mu_i)_{1\leq i\leq j}$, hence in expectation.

\end{proof}

The next lemma allows comparison with stochastically dominating problem instances to estimate $\MainLB,G$.

\begin{lemma}\label{lem:stochdomcompare}

For any $j$ and $q\in \Delta_K$ with $q_j=0$, the expressions

\[\frac{\E[\mu_q^0-\mu_j^0]}{\E[(\mu_j-\mu_q)_+]},\quad \frac{1}{\E[(\mu_j-\mu_q)_+]}\]

are stochastically decreasing in the prior $\mP_j$ for $\mu_j$ and stochastically increasing in all the priors $\mP_i$ for $\mu_i$ when $i\neq j$.

\end{lemma}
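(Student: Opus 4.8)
The plan is to reduce both expressions to ratios of expectations of \emph{nonnegative, coordinate-wise monotone} functions of $(\mu_1,\dots,\mu_K)$, and then invoke the coupling characterization of multivariate stochastic domination (Lemma~\ref{lem:mvstochdom}) together with the independence of the arms. The first expression $1/\E[(\mu_j-\mu_q)_+]$ needs no rewriting: its ``numerator'' is the constant $1$. For the second, the key observation is that, writing $X=\mu_j-\mu_q$ and using $\E[X]=\E[X_+]-\E[X_-]$ together with $\E[X]=\mu_j^0-\mu_q^0$, we get $\mu_q^0-\mu_j^0=\E[(\mu_j-\mu_q)_-]-\E[(\mu_j-\mu_q)_+]$. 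Hence
\[
\frac{\mu_q^0-\mu_j^0}{\E[(\mu_j-\mu_q)_+]}
= \frac{\E[(\mu_j-\mu_q)_-]}{\E[(\mu_j-\mu_q)_+]}-1,
\]
which is a ratio of two nonnegative expectations, shifted by a constant.

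Next I would record the monotonicity structure. Since $q_j=0$, the quantity $\mu_q=\sum_{i\neq j}q_i\mu_i$ is coordinate-wise increasing in each $\mu_i$ with $i\neq j$ and does not involve $\mu_j$. Therefore $(\mu_j-\mu_q)_+$ is coordinate-wise increasing in $\mu_j$ and coordinate-wise decreasing in each $\mu_i$, $i\neq j$, while $(\mu_j-\mu_q)_-=(\mu_q-\mu_j)_+$ has exactly the opposite monotonicity. A stochastic increase of the single prior $\mP_j$ (resp.\ of some $\mP_i$, $i\neq j$), with the remaining priors held fixed, yields by independence a stochastic increase of the joint law; by Lemma~\ref{lem:mvstochdom} there is a coupling under which the corresponding coordinate only increases almost surely. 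Evaluating the monotone functions along this coupling shows that raising $\mP_j$ increases $\E[(\mu_j-\mu_q)_+]$ and decreases $\E[(\mu_j-\mu_q)_-]$, whereas raising some $\mP_i$ decreases $\E[(\mu_j-\mu_q)_+]$ and increases $\E[(\mu_j-\mu_q)_-]$.

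Finally I would combine these facts using the elementary observation that if $f,g\ge 0$ with $g>0$, then $f/g$ increases when $f$ increases and $g$ decreases (and decreases in the opposite case); passing to a reciprocal or adding a constant preserves or merely reverses the direction in the obvious way. The denominator $\E[(\mu_j-\mu_q)_+]$ is strictly positive under the pairwise non-dominance assumption, so both ratios are well defined. Applying this to $1/\E[(\mu_j-\mu_q)_+]$ and to the reformulated second expression gives that both are stochastically decreasing in $\mP_j$ and stochastically increasing in each $\mP_i$, $i\neq j$, as claimed. I expect the only real subtlety to be the reformulation in the first paragraph: without it one is left with $\mu_q^0-\mu_j^0$ in the numerator, a quantity that need not be sign-definite, so the naive ``increasing numerator over decreasing denominator'' argument can fail; converting to the ratio $\E[(\mu_j-\mu_q)_-]/\E[(\mu_j-\mu_q)_+]$ of two nonnegative monotone quantities is what makes the monotonicity of the ratio go through cleanly.
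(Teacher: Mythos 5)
Your proof is correct, and it follows the same basic route as the paper: coordinate-wise monotonicity of the integrands, transferred to the expectations through the coupling characterization of stochastic dominance (Lemma~\ref{lem:mvstochdom}) and the independence of the arms. The paper's own proof, however, is a single sentence asserting that the monotonicity of ``each part of the expressions without expectations'' is clear; taken literally, that argument is incomplete for the first ratio, for exactly the reason you flag: the numerator $\mu_q^0-\mu_j^0$ need not be nonnegative, and a ratio with a sign-indefinite numerator can move the wrong way even when the numerator decreases and the positive denominator increases. Your rewriting
$\frac{\mu_q^0-\mu_j^0}{\E[(\mu_j-\mu_q)_+]}=\frac{\E[(\mu_j-\mu_q)_-]}{\E[(\mu_j-\mu_q)_+]}-1$,
which turns the claim into monotonicity of a ratio of two nonnegative, oppositely monotone quantities, is precisely the step the paper elides; it matches the identity the paper itself records when defining $\MainLB$ (where, incidentally, the ``$-1$'' should read ``$+1$''). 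So your proposal does not take a different route so much as it supplies the rigor that the paper's one-line proof omits.
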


\begin{proof}

Based on Lemma~\ref{lem:mvstochdom} it suffices to show the relevant monotonicity of each part of the expressions without expectations in $\mu_j$ and $\mu_i$, which is clear.

\end{proof}

\begin{lemma}\label{lem:gaussstochdom}

For any $ \mu<\mu'$, the distribution $\widetilde N(\mu,\sigma)$ is stochastically smaller than the distribution $\widetilde N(\mu',\sigma)$.

\end{lemma}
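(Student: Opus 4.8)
The cleanest route is through monotone likelihood ratio (MLR) domination, which by Lemma~\ref{lem:mlr} implies stochastic domination. The plan is to prove the stronger statement that $\widetilde N(\mu',\sigma)$ MLR-dominates $\widetilde N(\mu,\sigma)$ whenever $\mu<\mu'$, and then simply invoke Lemma~\ref{lem:mlr}. First I would write down the two densities. Let $p_\mu(x) = Z_\mu^{-1}\,\exp\!\big(-(x-\mu)^2/(2\sigma^2)\big)\,\mathbf 1_{[0,1]}(x)$ be the density of $\widetilde N(\mu,\sigma)$, where $Z_\mu$ is the normalizing constant obtained by integrating the untruncated Gaussian density over $[0,1]$; define $p_{\mu'}$ analogously. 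Both densities are strictly positive on $[0,1]$, so the two laws are mutually absolutely continuous there and the Radon--Nikodym derivative $g = p_{\mu'}/p_\mu$ is well defined, as required by the definition of MLR domination.

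Next I would compute the likelihood ratio and check monotonicity. The normalizing constants $Z_\mu,Z_{\mu'}$ are independent of $x$, so they contribute only a multiplicative constant that does not affect whether $g$ is increasing. The remaining factor is the ratio of the two Gaussian exponentials, whose exponent simplifies to
\[
-\frac{(x-\mu')^2-(x-\mu)^2}{2\sigma^2}
= \frac{(\mu'-\mu)\,(2x-\mu-\mu')}{2\sigma^2}.
\]
As a function of $x$ this is strictly increasing, since $\mu'-\mu>0$. Hence $g(x)$ is increasing on $[0,1]$, which is exactly the condition that $\widetilde N(\mu',\sigma)$ MLR-dominates $\widetilde N(\mu,\sigma)$.

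Finally, applying Lemma~\ref{lem:mlr} yields that $\widetilde N(\mu',\sigma)$ stochastically dominates $\widetilde N(\mu,\sigma)$; equivalently, $\widetilde N(\mu,\sigma)$ is stochastically smaller than $\widetilde N(\mu',\sigma)$, which is the claim. There is no genuine obstacle in this argument: the only point that deserves a line of care is that truncation to $[0,1]$ alters only the ($x$-independent) normalizers and therefore preserves the MLR structure already present in the underlying Gaussian location family. One could alternatively argue directly at the level of CDFs, but the MLR route is both shorter and reuses the machinery already developed earlier in this appendix.
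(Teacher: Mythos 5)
Your proof is correct and follows essentially the same route as the paper: both establish the stronger MLR domination by computing the likelihood ratio of the two truncated Gaussian densities (which is proportional to an exponential of a linear, hence monotone, function of $x$, with the truncation affecting only $x$-independent normalizers) and then invoke Lemma~\ref{lem:mlr} to conclude stochastic domination. Your write-up is in fact slightly more careful than the paper's, which states the ratio inequality with the densities' roles ambiguously labeled.
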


\begin{proof}

In fact MLR domination holds, and hence stochastical domination follows. Observe that the densities $f(x), g(x)$ for $\widetilde N(\mu,\sigma)$ and $\widetilde N(\mu',\sigma)$ satisfy
$f(a)/g(a)\leq f(b)/g(b)$
for any $a\leq b$. In fact the ratio is proportional to $e^{(a-b)\cdot (\mu'-\mu)}$.
\end{proof}

% In particular for any $c\in [0,1]$ we have

% \[\frac{\int_c^1 f(x)dx}{\int_c^1 g(x)dx}\leq \frac{f(c)}{g(c)}\leq  \frac{\int_0^c f(x)dx}{\int_0^c g(x)dx}\]

% which immediately implies the desired stochastic domination.

\gaussianbound*

\begin{proof}
We assume without loss of generality that $\sigma$ is at most a small constant. First, the distributions are clearly $\Omega(\sigma)$-nondegenerate so we focus on bounding the values $\delta$ and $\MainLB$. The upper bound then follows based on Lemma~\ref{lem:easyexploreparameters}.

% \mscomment{Let's try bounding $\delta$ instead.}

We note that the mean of $\widetilde N(\nu_i,\sigma^2)$ is $\nu_i\pm O(\sigma)$ because (as $\sigma=O(1)$) we are conditioning on a probability $\Omega(1)$ event in truncating the Gaussian to $[0,1]$. Similarly any such distribution has mean $\Omega(\sigma)$ and $1-\Omega(\sigma)$.

Let $\mC$ be the set of $K$ priors in the problem, we claim $\mC$ is $\delta$-easy and $\delta$-non-dominating for $\delta^{-1}\leq poly(\sigma^{-1},e^{R^2}).$ Since the conditions are symmetric under $\nu_i\to 1-\nu_i$ we focus on $\delta$-non-dominance. First it is easy to see that $\mu_i^0\geq\Omega(\sigma)$ for each arm $i$. The fact that for any $i,j$ and constants $C,c$ we have
    \[\mu_i^0\geq \max\rbr{ c\sigma,\nu_j-(R+C)\,\sigma}, \]
which implies that
\[\mathbb E[(\mu_i^0-\mu_j)_+]\geq \Omega\left(\sigma \mathbb P[\mu_j\leq \max(c\sigma/2,\nu_j-2(R+C)\sigma)] \right).\]

It is easy to see that the density of $\mu_j$ is at least $e^{-O(R^2)}$ in the interval
\[\max\left(c\sigma/4,\left(\nu_j-2(R+C)-c/4\right)\sigma\right).\]

Hence $\mathbb P[\mu_j\leq \max(c\sigma/2,\nu_j-2(R+C)\sigma)]\geq \Omega(\sigma e^{R^2})$ and so we obtain the claimed lower bound on $\delta$. Using the general $\delta$-dependent upper bounds now yields the result.

We next turn to estimating $\MainLB$ to achieve a lower bound. Suppose $m_i+R\sigma=m_j$ with $R\sigma=\Omega(1)$ of constant order. Then we easily see that
\[\mathbb P[\mu_j\geq \mu_i]\leq e^{-\Omega(R^2)}.\]

Indeed this is clear for non-truncated Gaussians, and conditioning $\mu_i,\mu_j\in [0,1]$ restricts to a constant probability event, hence can only increase probabilities by a constant factor. Therefore
\[\MainLB\geq \frac{\nu_j-\nu_i}{\mathbb P[\mu_j\geq \nu_i]}=e^{\Omega(R^2)}(\nu_j-\nu_i).  \]

Since we may take $i,j$ to maximize $|\nu_i-\nu_j|$ this shows the claimed lower bound (as the assumptions imply $R=\Omega(\sigma^{-2})$.
\end{proof}

\betabound*

\begin{proof}

For convenience we replace the strength condition with the (up to constant factors) equivalent condition that all \term{Beta} parameters are at most $M$. We first estimate $N_{\TS},\Nexplore,\log(\Pexplore^{-1})$ towards establishing the upper bound. We estimate $\padG,\MainLB$ together at the end, which implies the upper bound via Theorem~\ref{thm:pyramidtightfast} and the lower bound directly. Corollary~\ref{cor:tslogk} gives $N_{\TS}=2^{O(M)}$. Computing $\Nexplore$ may be done exactly in the case when one variable is $\term{Beta}(1,M)$ and the other is $\term{Beta}(M,1)$ which is easily seen to be the worst case. In fact for all the priors this is the worst case, by Lemma~\ref{lem:stochdomcompare} for $G,\MainLB$ and similar arguments for the others. Recall that the mean of a $\term{Beta}(a,b)$ random variable is exactly $\frac{a}{a+b}$. So, we need
\[\frac{M}{M+1+\Nexplore}<\frac{1}{M+1},\]
and so $\Nexplore=M^2.$ Since we need at most $K\cdot \Nexplore\leq KM^2$ total zero-rewards to make all arms have $0$ reward for their first $M^2$ samples, and this has probability at least $\Pexplore\geq M^{-KM^2}$, we obtain $\log(\Pexplore^{-1})=\tilde O(KM)$.

To estimate $\padG$ we show $\E[(\mu_K-\mu_q)_+]\geq M^{-O(M)}$. By Lemma~\ref{lem:sym} and the fact that $\term{Beta}(M,1)$ is the stochastically largest possible prior, we see that $\padG$ is minimized when $\mu_1,\dots,\mu_{K-1}\sim \term{Beta}(M,1)$, $\mu_K\sim \term{Beta}(1,M)$, and $q=\left(\frac{1}{K-1},\frac{1}{K-1},\dots,\frac{1}{K-1}\right).$
We will show that in this case:

\[\E[(\mu_K-\mu_q)_+]=\min(K,M)^{\Theta(M)}.\]

Because also $\E[(\mu_K-\mu_q)_+]=\frac{\Omega(1)}{\MainLB}$, this establishes both the upper and lower bounds. We first show $\E[(\mu_K-\mu_q)_+]\geq K^{-O(M)}$. Observe that with probability $K^{-O(M)}$ we have \[\mu_K>1-\frac{1}{K},\quad \mu_1,\mu_2,\mu_3<\frac{1}{3}.\] In this situation, $\mu_q\leq \frac{K-3}{K-1}<1-\frac{2}{K}$ and $\mu_K\geq 1-\frac{1}{K}$ and so $(\mu_K-\mu_q)_+\geq \frac{1}{K}$. We conclude that:

\[\E[(\mu_K-\mu_q)_+]\geq K^{-O(M)}.\]

We next show $\E[(\mu_K-\mu_q)_+]\geq M^{-O(M)}$. By Jensen's inequality we have: $\E[(\mu_K-\mu_q)_+]\geq \E[(\mu_K-\E[\mu_q])_+].$ Now, $\E[\mu_q]=1-\frac{1}{M}$ and there is at least an $M^{-O(M)}$ chance that $\mu_K\geq 1-\frac{1}{2M}$ so this shows the desired lower bound of $M^{-O(M)}.$

We now turn to the matching lower bound. Assume first that $K\leq \frac{M}{10}$. We estimate the $M/2$ exponential moment of a $\term{Beta}(1,M)$ variable denoted by $Z$. We note that $(1-x)e^x\leq 1$ for $x\in [0,1]$. Therefore
\[\mathbb E[e^{MZ/2}]=M\int_0^1 (1-x)^{M-1} e^{Mx/2}dx \leq M\int_0^1(1-x)^{M/2-1}dx= 2. \]

The random variable $1-\mu_i$ for $i\leq K-1$ is distributed as $Beta(1,M)$ and so $\E[e^{M(1-\mu_1)}]\leq 2$. Moreover since exponential moments multiply under independent sum, we obtain:
\[\mathbb E[e^{M(K-1)(1-\mu_q)/2}]\leq 2^{K-1}\leq 2^K.\]

Therefore:
\begin{align*}\mathbb P\left[\mu_q\leq 1-\frac{\log(K)}{K-1}\right]&=\mathbb P\left[1-\mu_q\geq \frac{\log(K)}{K-1}\right]\\
&\leq \mathbb E\left[e^{M(K-1)(1-\mu_q)/2}\right]\cdot e^{-\frac{M \log(K)}{2}}\\
&\leq 2^K K^{-M/2}.
\end{align*}

Since we assume $K\leq \frac{M}{10}$ we have
\[\mathbb P\left[\mu_q\leq 1-\frac{\log(K)}{K-1}\right]\leq 2^K K^{-M/2}\leq K^{-\Omega(M)}.\]

We also trivially have $\Pr[\mu_K\geq 1-\frac{\log(K)}{K-1}]\leq K^{-\Omega(M)}.$
Therefore if $K\leq \frac{M}{10}$ we have establish the upper bound $\E[(\mu_K-\mu_q)_+]\leq K^{-\Omega(M)}$. Now let us denote by $f(K,M)$ the value of $\E[(\mu_K-\mu_q)_+]$ in the worst case we are working in. Observe that $f(K,M)$ is decreasing in $K$, since decreasing $K\to (K-1)$ is equivalent to replacing $q=\left(\frac{1}{K-1},\dots,\frac{1}{K-1}\right)\in\Delta_{K-1}$ with $q'=\left(\frac{1}{K-2},\dots,\frac{1}{K-2},0\right)\in\Delta_{K-1}$, which by Lemma~\ref{lem:sym} gives a larger value of $\E[(\mu_K-\mu_q)_+]$.

To finish, it is not hard to see that the statement $f(M,K)=\min(K,M)^{-\Theta(M)}$ follows from combining the three statements below:
\begin{itemize}
    \item $ \min(K,M)^{-O(M)}\leq f(M,K)$
    \item $f(M,K)\leq K^{-\Omega(M)}$ for $K\leq \frac{M}{10}$.
    \item $f(M,K)\geq f(M,K+1)$.
\end{itemize}
Indeed, to complete the upper bound, if $K\geq \frac{M}{10}$ we have $f(M,K)\leq f(M,M/10)=M^{-\Omega(M)}$. This concludes the proof.

Finally in the case that $\mu_i,\mu_j$ have constant-separated prior means, the lower bound on $\MainLB$ follows by a simple Chernoff bound showing that
\[\mathbb E[(\mu_j-\mu_i)_+]\leq \mathbb P[\mu_j\geq \mu_i]\leq \mathbb P[\mu_j\geq t]+\mathbb P[\mu_i\leq t]\leq 2^{-\Omega(M)}
\quad \text{for $t= \nicefrac{1}{2}\rbr{\mu_i^0+\mu_j^0}$}. \qedhere\]
\end{proof}

\examplescollections*

\begin{proof}
The Gaussian case was already proved inside the proof of Corollary~\ref{thm:gaussianbound}. In the Beta case, it suffices to lower bound $\mathbb E\left[\left(\mu- 1-\frac{1}{M}\right)\right]$ for $\mu\sim \term{Beta}(1,M).$ This is at least
\[\frac{1}{2M}\mathbb P\left[\mu\geq 1-\frac{1}{2M}\right]\geq \Omega(M^{-O(M)}) \qedhere\]
\end{proof}

\realistic*

\begin{proof}

% It is easy to see that $\Nexplore=O(M+m)$ based on the mean value of Beta distributions. It is also easy to see that the priors are $\EpsNdg$-nondegenerate for $\EpsNdg=\Omega(\min(m^{-1},1-\mu_1^0))$. It remains to estimate the value of $L$. The point is that for any $q\in\Delta_K$ we have

% \[\mathbb E[\mu_q]\leq \max(1-\frac{1}{2m},1-\mu_1^0).\]

% Then we simply observe that for each $i\in [K]$ and $t\in [0,1]$, we have

% \[\mathbb P[\mu_i\geq 1-t]\geq t^m.\]

% In particular letting $m_0=\min(\frac{1}{2m},\mu_1^0)$ and taking $t=m_0/2$ we obtain:

% \[\mathbb E\left[\left(\mu_i-\left(1-m_0\right)\right)\right]\geq \frac{m_0}{2} \mathbb P\left[\mu_i\geq \left(1-\frac{m_0}{2}\right)\right] \geq \left(\frac{m_0}{2}\right)^{m+1}.\]

% Therefore we obtain $L\leq \left(\frac{m_0}{2}\right)^{m+1}.$ Applying Theorem~\ref{thm:tscomplexity} concludes the proof.

Again we assume that both Beta parameters are bounded by $M$ or $m$. It is easy to see that $\Nexplore=O(M+m)$ based on the mean value of Beta distributions, and so as before we have $\log(\Pexplore^{-1})=\tilde O(K\cdot \max(m,(1-\mu_j^0)^{-1}))$.

We next estimate $\padG$. Note that for any $q\in\Delta_K$ we have
\[\mathbb E[\mu_q]\leq \max\left(1-\frac{1}{2m},\mu_j^0\right).\]

Then we simply observe that for each $i\geq 2$ and $t\in [0,1]$, we have
\[\mathbb P[\mu_i\geq 1-t]\geq t^m.\]

In particular letting $m_0=\min\left(\frac{1}{2m},1-\mu_j^0\right)$ and taking $t=m_0/2$ we obtain:
\[\mathbb E\left[\left(\mu_i-\left(1-m_0\right)\right)_+\right]\geq \frac{m_0}{2} \mathbb P\left[\mu_i\geq \left(1-\frac{m_0}{2}\right)\right] \geq \left(\frac{m_0}{2}\right)^{m+1}.\]

This handles everything except the case of $\mathbb E[(\mu_j-\mu_q)_+]$ for $q\in \Delta_{j-1}$ (which is irrelevant when $j=1$). Note that for any $i\neq j$ we have:
\[\mathbb P\left[\mu_i\leq\frac{\mu_j^0}{2}\right]\geq \left(\frac{\mu_j^0}{2}\right)^{m}.\] Therefore we obtain $\mathbb E[(\mu_j-\mu_q)_+]\geq  \left(\frac{\mu_j^0}{2}\right)^{m(j-1)+1}$ for $q\in \Delta_{j-1}$. Altogether this implies \[\padG \geq \max(m_0,(\mu_j^0)^{(j-1)})^{O(m)}.\]

To estimate $N_{\TS}$, in the case $j=1$, for the highly informed arm $\mu_j$ we don't need to do anything since Thompson sampling is already BIC for arm $1$ at time $1$. For all the other arms $i\geq 2$, the computation above shows that
\[\mathbb P[\mu_i\geq \max_{j\neq i}\mu_j+\varepsilon_{\TS}]\geq \delta_{\TS}\]

for $\varepsilon_{\TS}\geq \frac{m_0}{2}$ and $\delta_{\TS}\geq m_0^m.$ Therefore $N_{\TS}\leq \tilde O(m_0^{-3})$. When $j\geq 2$ we also need to estimate $\varepsilon_{\TS},\delta_{\TS}$ for arm $j$. We have
\[ \mathbb E[(\mu_j-\mu_i)_+]\geq 
    \left(\mu_j^0/2\right)\mathbb P\left[\mu_i\leq\mu_j^0/2\right]\geq \left(\mu_j^0/2\right)^{m+1}.\]

Similarly the chance that arm $j$ is the best is at least

\[\delta_{\TS}\geq \Omega(1)\cdot (\mu_j^0)^{(j-1)m}\cdot 2^{-O(K)}.\]

Indeed there is a constant chance for beta random variables to be on either side of their mean, and the above therefore lower bounds the chance that $\mu_j\geq \mu_j^0$ and $\mu_i\leq \mu_j^0$ for all $i\neq j$. Applying Theorem~\ref{thm:goodalgo} or Theorem~\ref{thm:pyramidtightfast} concludes the proof.
\end{proof}

\section{Extension: A more efficient version of \MainALG}
\label{sec:improved-algo}

%\begin{figure}

\SetKwFor{Loop}{loop}{}{end}

\begin{algorithm2e}[p]
\caption{\MainALG with tighter phases}
\label{alg:pyramidtightfast}

\SetAlgoLined\DontPrintSemicolon

\textbf{Parameters:}
%    phase length $N_{\post}$,
%    bootstrapping parameter $\Nexplore$,
%    padding $\padding>0$. \\
    Desired number of samples $N$, calculated value and phase length $\padN$

\textbf{Given:} recommendation policies $\pi_1 \LDOTS \pi_K$ for \PaddedPhase.

\textbf{Initialize:} \ExplorePhase[1] of length $\max(N,\bootN,\padN)$

\For{each arm $j=2, 3  \LDOTS K$}{
    \invariant{1}{each arm $i<j$ has been sampled at least $\max(N,\bootN,\padN)$ times}

    \vspace{1mm}\tcp*[l]{Bootstrapping: two phases}
    Event $\ZEROS_{j,N_0} =
        \cbr{ \text{the first $N_0$ samples of each arm $i<j$ return reward $0$}} $. \\
    $p_j\leftarrow q/(1+q)$, where $q = \padding\cdot\Pr\sbr{\ZEROS_{j,N_0}}$.
    %\label{line:init-prob}
    \\ %\frac{\padding\cdot\Pr\sbr{\ZEROS_{j,N_0}}}{1+\padding\cdot\Pr\sbr{\ZEROS_{j,N_0}}}$.\\
    \ExploitPhase[N_0]

    \textbf{with probability} $p_j$ \textbf{do}\\
        \myTAB\ExplorePhase  \\
    \uElseIf{$\ZEROS_{j,N_0}$}
           % \For{each arm $i=1,2, 3  \LDOTS j-1$}{
           % Let $\widetilde{\mu}_i=\mathbb E[\mu_i|\text{ first }N\text{ samples}].$\\
           % Sample $\hat\mu_i$ from $\mathcal X_i$-law conditioned on $\widetilde{\mu}_i$, so that $(\widetilde{\mu}_i,\hat\mu_i)\sim \mathcal X_i.$}
        {\PaddedPhase: use policy $\FakePi$}
        %using posterior means $(\hat \mu_1,\dots,\hat\mu_{j-1},\widetilde{\mu}_j).$
                \lElse{\ExploitPhase[N]}

    \vspace{2mm}\tcp*[l]{main loop: exponentially grow the exploration probability}

    \While{$p_j< 1$}{
    \invariant{2}{$\Pr\sbr{\text{\happened} \mid \mu_1 \LDOTS \mu_K}=p_j$}
    %\invariant{3}{The choice of phase during each iteration is independent of the mean rewards $(\mu_i)_{i\in [K]}$}

    \uIf{\happened}{
        \PaddedPhase: use policy $\pi_j$
        }
    \textbf{else with probability} $\min\rbr{1,\,\frac{p_j}{1-p_j}\cdot \padding}$ \textbf{do}\\
    \myTAB \ExplorePhase \\
    \lElse{\ExploitPhase}
Update $p_j\leftarrow \min\rbr{1,\, p_j\, (1+\padding)}$.
}%end while

\vspace{2mm}
\tcp*[l]{Post-processing: collect remaining samples}
\For{each arm $j=1,2,\dots,K$}{
    Choose phase $\ell_0$ uniformly at random from $\sbr{ 1+\Cel{\padding^{-1}}}$ .\\
    \For{each phase $\ell=1,2,\dots,\lceil \padding^{-1}\rceil+1$}{
        \uIf{$\ell=\ell_0$}{Explore arm $j$ for $\max(N,\bootN,\padN)$ rounds}
            \uElse{Use policy $\pi_j$ for $\max(N,\bootN,\padN)$ rounds}
    }
}
} %end for
\end{algorithm2e}
%\end{figure}

% \begin{algorithm2e}[H]\label{alg:pyramidtightfast}

% \caption{\MainALG with tighter phases}

% \SetAlgoLined\DontPrintSemicolon

% \textbf{Parameters:}
% %    phase length $N_{\post}$,
% %    bootstrapping parameter $\Nexplore$,
% %    padding $\padding>0$. \\
%     Desired number of samples $N$

% \textbf{Given:} recommendation policies $\pi_1 \LDOTS \pi_K$ for \PaddedPhase.

% Run Algorithm~\ref{alg:pyramidtight} with phase length $\padN$, bootstrapping parameter $\bootN$ and $\padding = \padG/10$.
% \tcp*[l]{Collects $\padN$ samples of each arm with probability $1$}
% \tcp*[l]{in $O\left(K\padding^{-1}(\padN \log(\padding^{-1}\bootP^{-1})+\bootN)\right)$ rounds.}

% \vspace{2mm}
% \tcp*[l]{Post-processing: collect remaining samples}
% \For{each arm $j=1,2,\dots,K$}{
%     Choose phase $\ell_0$ uniformly at random from $\sbr{ 1+\Cel{\padding^{-1}}}$ .\\
%     \For{each phase $\ell=1,2,\dots,\lceil \padding^{-1}\rceil+1$}{
%         \uIf{$\ell=\ell_0$}{Explore arm $j$ for $N$ rounds}
%             \uElse{Use policy $\pi_j$ for $N$ rounds}
% }}
% \end{algorithm2e}
% %\end{figure}

We now give Algorithm~\ref{alg:pyramidtightfast}, a version of Algorithm~\ref{alg:pyramidtight} which requires fewer rounds. Algorithm~\ref{alg:pyramidtightfast} uses the observation that when $N,\bootN\geq \padN$ is rather large, only the initial $\padN$ samples of each arm $j$ require the hard work of exponentially growing exploration probability. As a result, we can use the same technique as in Algorithm 1 to obtain the first $N_{\post}$ samples of arm $j$, and then obtain the remaining samples more quickly. The ``post-processing" stage of each loop is easily seen to be BIC by the $(j,\padding)$-padded BIC property of policy $\pi_j$. By inspection, Algorithm~\ref{alg:pyramidtightfast} completes in this many rounds:
\begin{align}\label{eq:better-algo}
 O\rbr{ K\;\padG^{-1}\; \rbr{\padN\,\log(\padG^{-1}\;\bootP^{-1}) + \bootN + N }}.
\end{align}

%\[O\left(\frac{K}{G}\left(N_{\post}\log\left(\frac{1}{\Pexplore}\right)+\Nexplore+N_{\TS}\right)\right)\] total time steps. Therefore we obtain:

\begin{theorem}\label{thm:pyramidtightfast}
Given a parameter $N\geq \padN$, Algorithm~\ref{alg:pyramidtightfast} with $N_0=\bootN$ collects $N$ rounds of each arm almost surely and completes in the number of rounds given by \eqref{eq:better-algo}.

%\[O\left(\frac{K}{G}\left(N_{\post}\log\left(\frac{1}{\Pr[\Pexplore]}\right)+\Nexplore+N_{\TS}\right)\right).\]
\end{theorem}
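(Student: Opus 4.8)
The plan is to regard Algorithm~\ref{alg:pyramidtightfast} as Algorithm~\ref{alg:pyramidtight} run with the short phase length $\padN$, followed by a cheap per-arm ``top-up'' stage, and to establish the three assertions --- BIC, the almost-sure sample count, and the round count --- by reusing the analysis of Section~\ref{sec:sampling} wherever the phase structure is unchanged and checking only the new post-processing stage directly.

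\emph{BIC.} The bootstrapping and the \term{while} loop are structurally identical to those of Algorithm~\ref{alg:pyramidtight}, and Lemmas~\ref{lm:main-BIC} and~\ref{lm:main-BIC0} use only Invariants~1 and~2 together with the $(j,\padding,\padN)$-suitability of $\pi_j$ (supplied by Lemma~\ref{lm:policy-existence} with $\padding=\padG/10$), none of which refer to the phase length; so their BIC carries over verbatim. The genuinely new ingredient is the post-processing stage, whose BIC I would verify by hand. Fix arm $j$ and a round $t$ inside the post-processing for $j$; with $P=\lceil\padding^{-1}\rceil+1$ the phase containing $t$ is the dedicated exploration phase with probability $1/P$ and otherwise runs $\pi_j$, and this coin (the choice of $\ell_0$) is independent of the mean rewards. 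Hence for $i<j$,
\begin{align*}
\E\sbr{(\mu_j-\mu_i)\cdot\ind{A_t=j}}
&= \tfrac{1}{P}\,\E\sbr{\mu_j-\mu_i}
 + \tfrac{P-1}{P}\,\E\sbr{(\mu_j-\mu_i)\cdot\ind{\pi_j=j}}\\
&\geq \tfrac{1}{P}\rbr{-1 + \lceil\padding^{-1}\rceil\,\padding}\geq 0,
\end{align*}
using the worst-case bound $\E\sbr{\mu_j-\mu_i}\geq -1$, the padded-BIC property~\eqref{eq:padded-BIC-defn}, and $\lceil\padding^{-1}\rceil\,\padding\geq 1$. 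For $i>j$ both terms are already nonnegative (the first by the prior-mean ordering $\mu_j^0\geq\mu_i^0$, the second by the plain BIC property of $\pi_j$), and any recommendation of an arm $\ell\neq j$ happens only through $\pi_j$ and is BIC for the same reason. This is the heart of the matter: the $(j,\padding)$-padded surplus accumulated over the $\lceil\padding^{-1}\rceil$ policy phases exactly offsets the one exploration phase.

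\emph{Sample count.} Each arm $j$ is explored deterministically for $\max(N,\bootN,\padN)\geq N$ consecutive rounds in its uniformly chosen phase $\ell_0$, and arm $1$ for $\max(N,\bootN,\padN)$ rounds at initialization; so every arm is sampled at least $N$ times with probability $1$. I would also record the bookkeeping that sustains Invariant~1: the \term{while} loop for arm $j$ guarantees only the $\padN$ samples produced by the single length-$\padN$ exploration phase that has occurred once $p_j$ reaches $1$ (which is all the $(j,\padN)$-informed padded phase needs), and the post-processing then lifts arm $j$ to $\max(N,\bootN,\padN)\geq\bootN$ samples before arm $j+1$ is handled, so that $\ZEROS_{j+1,\bootN}$ is well-defined. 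This is exactly why the top-up must be interleaved per arm rather than deferred to the very end.

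\emph{Round count.} The main part inherits the phase count of Algorithm~\ref{alg:pyramidtight} from Claim~\ref{cl:alg-obs}, which equals $O\rbr{K\,\padG^{-1}\,\log(\padG^{-1}\,\bootP^{-1})}$ (using $p_j\gtrsim\padding\,\bootP$ and $\ln(1+\padding)=\Theta(\padG)$); multiplying by the short phase length $\padN$ gives the first term of~\eqref{eq:better-algo}. The post-processing performs $K\cdot P = O\rbr{K\,\padG^{-1}}$ phases of length $\max(N,\bootN,\padN)$, i.e.\ $O\rbr{K\,\padG^{-1}(N+\bootN)}$ rounds since $N\geq\padN$, which are the remaining two terms. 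Summing yields~\eqref{eq:better-algo}. I expect the one delicate point to be precisely the invariant/phase-length bookkeeping above --- reconciling the cheap length-$\padN$ main part with the $\bootN$-sample precondition of the next arm's bootstrapping --- whereas the post-processing BIC inequality is the single new estimate.
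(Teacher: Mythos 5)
Your proposal is correct and takes essentially the same route as the paper, which in fact treats this theorem only by a prose sketch (``the post-processing stage of each loop is easily seen to be BIC by the $(j,\padding)$-padded BIC property'' and a round count ``by inspection'') plus a remark explaining that the per-arm top-up is what keeps Invariant~1 intact when $\bootN>\padN$. Your write-up supplies precisely the details that sketch alludes to: the reuse of Lemmas~\ref{lm:main-BIC} and~\ref{lm:main-BIC0} with phase length $\padN$, the explicit offset inequality $\tfrac{1}{P}\bigl(-1+\lceil\padding^{-1}\rceil\,\padding\bigr)\geq 0$ for the post-processing stage, and the observation that the top-up must be interleaved per arm so that $\ZEROS_{j+1,\bootN}$ is well-defined.
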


Note that the phase length in the main part of the algorithm is only $\padN$ in Algorithm~\ref{alg:pyramidtightfast}. We remark that we do not need to assume $\bootN\leq \padN$, even though Algorithm~\ref{alg:pyramidtight} required $\bootN\leq N.$ This is because the post-processing phase ensures that Invariant 1 continues to hold, and Lemma~\ref{lem:transform} does not require $\bootN\leq N$.

% \newpage
\section{Extension: Improved Algorithm for ``Easy" problem Instances}
\label{sec:fastest-algo}

We now explain Algorithm~\ref{alg:pyramidsuperfast}, which achieves the guarantee of Theorem~\ref{thm:linearexplore}. We fix an $N\geq 1$ and show it samples each arm $N$ times in $\tildeO\rbr{\frac{KN}{\delta}+\frac{K}{\delta^4}}$ rounds.

The algorithm's structure is again similar to Algorithm~\ref{alg:pyramidtight}, featuring an initial bootstrap phase followed phase of exponentially growing exploration probability facilitated by a padded phase. The main difference is that we only carry out these steps for a single arm $j_0$ chosen randomly, so that we manage to sample arm $j_0$ many times without needing to first ``unlock" the previous arms. In general problem instances this may not be possible, but the $\delta$-easy assumption ensures that it is.

The algorithm continues with a \term{for} loop to complete the exploration, balanced by a padded phase. This is reminiscent of Algorithm~\ref{alg:pyramidtightfast}, but in this case we have only thoroughly explored arm $j_0$. A key new insight is that having to explore the single arm $j_0$ allows us to explore the remaining arms without requiring another exponential growth phase. This is achieved by randomizing between exploiting arm $j_0$ and exploring a random arm $i$. Because $j_0$ is random, the agent seeing the recommendation does not know whether we are exploring or exploiting, so this is BIC when the exploration probability is small. We couple the random choices of exploration arms $i$ so that there are no repeats via a uniformly random permutation $\theta:[K]\to[K]$.

%\begin{figure}[H]
\SetKwFor{Loop}{loop}{}{end}

\begin{algorithm2e}[p]
\caption{Collect Samples For Easy Collections}
\label{alg:pyramidsuperfast}
\SetAlgoLined\DontPrintSemicolon

\textbf{Parameters:}
    number of target samples $N\geq \bootN,\padN$,
    padding $\padding=\frac{\delta}{10}>0$;

\textbf{Given:} A uniformly random permutation $\theta:[K]\to[K]$

Choose a random arm $j_0$.

Event $\ZEROS_{j_0,\bootN} = \cbr{ \text{the first $\bootN$ samples of each arm $i<j_0$ return reward $0$}} $.

\tcp*[l]{Setup: get $N$ samples of arm $j_0$ with positive probability}

\For{$j=1,2,\dots,K$}{
    \ExploitPhase[\bootN] and length $\bootN$ \label{line:exploitfast1}
}
\uIf{$\ZEROS_{j_0,\bootN}$}{\ExplorePhase[j_0] with length $\padN$} \label{line:explorefast1}

\uElse{\ExploitPhase[\bootN] and length $\padN$\label{line:exploitfast1.5}}

\vspace{1mm}\tcp*[l]{Bootstrapping phase}

$p_{j_0}\leftarrow \frac{\padding\cdot\Pr\sbr{\ZEROS_{j_0,\bootN}}}{1+\padding\cdot\Pr\sbr{\ZEROS_{j_0,\bootN}}}$.

\textbf{with probability} $p_{j_0}$ \textbf{do}\\
    \myTAB\ExplorePhase[j_0] with length $\padN$ \label{line:explorefast2}\\
\uElse{

    \uIf{exploration phase has happened}{\PaddedPhase of length/depth $\padN$: use $\widehat\pi_{j_0}$ to decide whether to play arm $j$. If $\widehat\pi_{j_0}\neq j$, exploit based on all available data. \label{line:exploitfast2}}
    \uElse{\ExploitPhase[\padN] with length $\padN$.\label{line:exploitfast3}}

    } % endouterelse

% \For{each arm $j=1,2,\dots,K$}{ \label{line:algloop}

%     \tcp*[f]{the bootstrapping phase}\label{line:fastbootstrap}

%     Set $\ZEROS_{j,N} = \{\text{the first $N$ samples of each arm $i<j$ return reward $0$}\}$

%     % \uIf{$\ZEROS_{j,N}$}{\ExplorePhase}
%     %     \uElse{\ExploitPhase$_{j_0}$.}

%     \uIf{$\ZEROS_{j,\bootN}$}{\ExplorePhase \label{line:zeros1} \\
%     \PaddedPhase: use policy $\pi_j$ for the next $N$ rounds, but with independent, artificially generated data on arms $1,\dots,j-1$. (But when not recommending arm $j$, use Bayesian greedy on the true posterior means.) \label{line:zeros2}\\
%     With probability $\frac{\padding p_j}{1-p_j}$, pretend arm $j$ is not yet explored in the while loop.}
%         \uElse{\ExploitPhase. \label{line:nonzero1}\\
%         With probability $\min\rbr{1,\,\frac{p_j}{1-p_j}\cdot \padding}$, \ExplorePhase; \label{line:nonzero2}
%         \\ with the remaining probability; \ExploitPhase}

%     % \uIf{$\ZEROS_{j,\bootN}$}{\PaddedPhase: use policy $\pi_j$ for the next $N$ rounds}
%     %     \uElse{With probability $\min\rbr{1,\,\frac{p_j}{1-p_j}\cdot \padding}$, \ExplorePhase;
%     %     \\ with the remaining probability; \ExploitPhase}

%     $p_j\leftarrow \min(1,\frac{\padding p_j}{1-p_j}).$ Continue to $j+1$ case if $p_j=1$.

%   } %end for

    \tcp*[f]{grow the exploration probability}\label{line:fastexplore}\\
    Set $p_{j_0}=\Pr\sbr{\ZEROS_{j_0,\bootN}}$.

    \While{$p_{j_0}< 1$}{
    % \tcp*[l]{invariant: $\Pr\sbr{\text{exploration phase has happened}}=p_{j_0}$}
    \invariant{}{$\Pr\sbr{\text{\happened} \mid \mu_1 \LDOTS \mu_K}=p_{j_0}$}\\
    \uIf{exploration phase has happened}{
        \PaddedPhase of length and depth $\padN$: use $\widehat\pi_{j_0}$ for $\padN$ steps.
        }
    \uElse{
    With probability $\min\rbr{1,\,\frac{p_{j_0}}{1-p_{j_0}}\cdot \padding}$, \ExplorePhase[j_0] with length $\padN$; %\label{line:algexplore}
    \\with the remaining probability: \ExploitPhase[\padN].
}
$p_{j_0}\leftarrow \min\rbr{1,\, p_{j_0}\, (1+\padding)}$
}%end while

\tcp*[f]{explore the other arms}\label{line:fastfinish}\\

\For{each $j=1,2,\dots,K$}{
Pick a phase $\ell_0$ uniformly at random from $\sbr{ n_{\padding} }$,
where $n_{\padding}:= 1+ \Cel{\padding^{-1}}$

\For{each phase $\ell=1,2  \LDOTS n_{\padding}$}{

\uIf{$\ell=\ell_0$}{
    Play arm $\theta(j)$ for $N$ rounds.
    }% end if

\uElse{\PaddedPhase using $\widehat\pi_{j_0}$ with depth $\padN$ for $N$ rounds
}

}% end for

} % end for

\end{algorithm2e}
%\end{figure}

\begin{lemma}\label{lem:easyexplorepadded}

Suppose $\mC$ is $\delta$-easy and $\delta$-non-dominant, and $N\geq\padN=\tilde O(\delta^{-2})$ as guaranteed by Lemma~\ref{lem:easyexploreparameters}. For fixed $j\in [K]$, consider exploitation based on $\padN$ samples from arm $j$ and no other information. This policy is $(j,\frac{\delta}{10})$-suitable.

\end{lemma}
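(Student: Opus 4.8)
The plan is to verify the three defining properties of $(j,\tfrac{\delta}{10})$-suitability separately; only the padded-BIC property requires real work.

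Let $\mG$ be the static $\sigma$-algebra generated by the $\padN$ samples of arm $j$, and write $\widetilde\mu_k=\E[\mu_k\mid\mG]$. The policy reads only these $\padN\leq N$ samples, so it is $(j,N)$-informed, and it is pure exploitation, which is always BIC (the standard tower-property argument gives $\E[\mu_i-\mu_k\mid\pi_j=i]\geq 0$ for every recommended arm $i$). Since the arms $k\neq j$ are never sampled, $\widetilde\mu_k=\mu_k^0$ for $k\neq j$, so exploitation recommends arm $j$ exactly on the $\mG$-measurable event $\{\widetilde\mu_j\geq\tau\}$, where $\tau:=\max_{k\neq j}\mu_k^0$. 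Note that $\tau\leq\Phi_\mC$, since $\mu_k^0=\E[\mP_k]\leq\Phi_\mC$ for each $k$.

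For the padded-BIC bound I would fix an arm $i<j$. The event $\{\widetilde\mu_j\geq\tau\}$ depends only on arm $j$'s samples and is therefore independent of $\mu_i$; combined with $\mu_i^0\leq\tau$ this yields
\[
\E\sbr{(\mu_j-\mu_i)\,\ind{\widetilde\mu_j\geq\tau}}
=\E\sbr{\mu_j\,\ind{\widetilde\mu_j\geq\tau}}-\mu_i^0\,\Pr\sbr{\widetilde\mu_j\geq\tau}
\geq \E\sbr{(\mu_j-\tau)\,\ind{\widetilde\mu_j\geq\tau}}.
\]
Conditioning on $\mG$ and using $\E[\mu_j\mid\mG]=\widetilde\mu_j$ turns the right-hand side into the clean posterior quantity $\E[(\widetilde\mu_j-\tau)_+]$. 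Thus it suffices to prove $\E[(\widetilde\mu_j-\tau)_+]\geq\tfrac{\delta}{10}$. The key structural point — and the reason the other arms' samples can be discarded for $\delta$-easy instances — is that the comparison point here is the deterministic constant $\tau$, rather than a sampled combination of the other arms.

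To finish I would invoke the Bayesian Chernoff estimate exactly as in the proof of Lemma~\ref{lem:chernoff2}, but with the comparison point taken to be the constant $\tau$ in place of $\mu_q$: because $\E[\tau\mid\mG]=\tau$ identically, the estimation-error term for the comparison point vanishes and one only pays for estimating $\mu_j$. The $\delta$-easy hypothesis \eqref{eq:K-pwise-inv} gives $\E[(\mu_j-\tau)_+]\geq\E[(\mu_j-\Phi_\mC)_+]>\delta$, and since $(\mu_j-\tau)_+\in[0,1]$ the bounded-support Markov argument (as in the proof of Lemma~\ref{lm:policy-existence}) gives the anti-concentration $\Pr[(\mu_j-\tau)_+\geq\tfrac{\delta}{4}]\geq\tfrac{\delta}{4}$. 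Running the Lemma~\ref{lem:chernoff2} estimate with $\eps_{\post}=\delta_{\post}=\tfrac{\delta}{4}$ and $\padN=\tilde{O}(\delta^{-2})$ samples (the value guaranteed by Lemma~\ref{lem:easyexploreparameters}) then yields
\[
\E\sbr{(\widetilde\mu_j-\tau)_+}\geq \frac{\E[(\mu_j-\tau)_+]-\eps_{\post}-\delta_{\post}}{2}>\frac{\delta-\tfrac{\delta}{4}-\tfrac{\delta}{4}}{2}=\frac{\delta}{4}\geq\frac{\delta}{10}.
\]
The one point needing care is the calibration of the sample count: the concentration requires $\padN=\tilde\Omega(\delta^{-2})$, which matches the $\tilde O(\delta^{-2})$ bound from Lemma~\ref{lem:easyexploreparameters}, so $\padN=\tilde\Theta(\delta^{-2})$ samples both suffice and keep the algorithm efficient.
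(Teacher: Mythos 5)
Your proof is correct in substance, but it takes a genuinely different route from the paper's. The paper argues indirectly: since the policy sees no data on arms $i<j$, one may replace $\mu_1 \LDOTS \mu_{j-1}$ by their deterministic prior means; by $\delta$-easiness the resulting instance has gap at least $\delta$, so Lemma~\ref{lm:policy-existence} yields a $(j,\delta/10)$-suitable policy, and then one argues that exploitation based only on arm $j$'s samples is the maximin-optimal strategy in the corresponding $j$-recommendation game, so by the equivalence of Lemma~\ref{lem:gameBIC} it inherits the game value as its padding. You instead verify the padded-BIC inequality by hand: the policy recommends arm $j$ exactly on $\{\widetilde\mu_j\geq\tau\}$ with $\tau=\max_{k\neq j}\mu_k^0\leq\Phi_\mC$; independence of arm $j$'s samples from $\mu_i$, the bound $\mu_i^0\leq\tau$, and the tower property reduce the padding to the clean quantity $\E[(\widetilde\mu_j-\tau)_+]$; and the Lemma~\ref{lem:chernoff2} concentration argument with a \emph{deterministic} comparison point gives $\E[(\widetilde\mu_j-\tau)_+]\geq\delta/4$. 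Your deterministic $\tau$ is exactly the paper's replace-by-the-mean reduction in disguise, but your version makes the quantitative step explicit rather than routing it through the minimax/non-weakly-dominated-strategy machinery (Lemmas~\ref{lem:gamevalue}--\ref{lem:kcorrelated}), and it gets the BIC and informedness properties by inspection. That is a real gain in self-containedness; what the paper's route buys is reuse of already-proved lemmas and no need to redo any estimates.

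One caveat, which you flag but resolve too glibly — and which the paper itself shares. Your last sentence infers $\padN=\tilde\Theta(\delta^{-2})$ from the \emph{upper} bound $\padN=\tildeO(\delta^{-2})$ of Lemma~\ref{lem:easyexploreparameters}; that inference is not valid. By its definition \eqref{eq:params-padded}, $\padN$ is calibrated to $\padG$ of the original instance, and when $\padG\gg\delta$ (which $\delta$-easiness permits, since randomness in the other arms can inflate $\E[(\mu_j-\mu_q)_+]$ far above $\E[(\mu_j-\Phi_\mC)_+]$), the literal $\padN=\padC\,\padG^{-2}\log\padG^{-1}$ can be far smaller than the $\tilde\Omega(\delta^{-2})$ samples your concentration step — and the paper's, inside Lemma~\ref{lm:policy-existence} applied to the deterministic-means instance — actually requires. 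The statement is true under the intended reading that here $\padN$ denotes a quantity of order $\padC\,\delta^{-2}\log\delta^{-1}$, i.e., calibrated to $\delta$ (equivalently, to the gap of the modified instance); this is clearly what Algorithm~\ref{alg:pyramidsuperfast} needs. With that reading your argument is complete, and it even extends to any larger sample count for free, since by conditional Jensen $\E[(\widetilde\mu_j-\tau)_+]$ is non-decreasing in the number of samples.
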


\begin{proof}
Since we collect no information on the first $j-1$ arms, it is equivalent to replace the random values $\mu_1,\dots,\mu_{j-1}$ with their expectations $\mathbb E[\mu_i]$. From the fact that $\mC$ is $\delta$-easy to explore we see that $\padG\geq \delta$ in this case. Since we replace the $\mu_i$ with their (deterministic) expectations, we may include $\padN$ samples of each arm $a_1,\dots, a_{j-1}$ without making any difference. That a $(j,\frac{\delta}{10})$-suitable policy exists now follows from Lemma~\ref{lm:policy-existence}. However because we only observe samples of arm $j$ it is not difficult to see that for any value of $N$, exploitation based on $N$ samples from arm $j$ yields the optimal strategy in the $j$-recommendation game. The equivalence of Lemma~\ref{lem:gameBIC} implies the result.
\end{proof}

Based on the lemma above, we define for each $j\in [K]$ the policy $\widehat\pi_j$ which decides which arm to play by exploitation based on $\padN$ samples from arm $j$. In particular $\widehat\pi_j$ will only play arm $1$ or $j$.

% When $\widehat\pi_j$ does not decide to play arm $j$ in this way, it exploits based on all current information (which may still result in playing arm $j$). Note that $\widehat\pi_j$ does not strictly depend on a fixed dataset due to how we define the exploitation.

% \begin{lemma}

% If the priors $\mu_1,\dots,\mu_K$ come from a $\delta$-easy collection $\mC$, then Algorithm~\ref{alg:pyramidsuperfast} is BIC and collects $N$ samples of each arm almost surely in $O_{\delta}(KN)$ rounds.

% \end{lemma}

\linearexplore*

\begin{proof}

The algorithm uses $O\left(K\bootN + \frac{\padN\log(\lambda^{-1}\Pexplore^{-1})}{\padding}+\frac{KN}{\padding}\right)=\tildeO\rbr{\frac{KN}{\delta}+\frac{K}{\delta^4}}$ rounds by construction, according to the general estimates of Lemma~\ref{lem:easyexploreparameters}. Hence we focus on the BIC property. Lines~\ref{line:exploitfast1},~\ref{line:explorefast1},\ref{line:exploitfast1.5} are clearly BIC. Line~\ref{line:explorefast1}, if executed, always samples arm $j_0$ for $\padN$ rounds. By our definition of $\widehat\pi_j$, before the second while loop recommending any arm $i\neq j_0$ is always BIC, hence we focus on the recommendations of $j_0$. We first consider the combination of Lines~\ref{line:explorefast2},~\ref{line:exploitfast2}, and~\ref{line:exploitfast3}. The proof is essentially identical to that of Lemma~\ref{lm:main-BIC0}, where the point is that conditioned on reaching line~\ref{line:exploitfast2}, playing from $\widehat\pi_{j_0}$ is $(j,\padding)$-padded BIC. That we condition on $\ZEROS_{j_0,\bootN}$ only helps the $(j,\padding)$-padded property, because this conditioning decreases the mean reward of arm $i$ for all $i<j_0$. This counterbalances the exploration in line~\ref{line:explorefast2}.

The \texttt{while} loop is BIC for the same reason as in the proof of \ref{lm:main-BIC}. The key point is again that the padded, exploration, and exploitation phases occur independently of the true mean rewards. To show that the final \texttt{for} loop is BIC,
we observe:
\[\mathbb E[(\mu_j-\mu_i)\cdot 1_{A_t=a_j}] \geq \frac{1}{K}\cdot\left( \padding\cdot \frac{n_{\padding}-1}{n_{\padding}}-\frac{1}{n_{\padding}}\right)\geq 0.\]
Here the first term comes from the exploitation phase while the second term comes from the event $t=s$. The factor $\frac{1}{K}$ comes from the randomness in choosing $j_0$ and $\theta:[K]\to[K].$ This concludes the proof that the algorithm is BIC.
\end{proof}

% Specializing to $M$-informative beta priors, we know that $N_{\TS}=M^{O(M)}\log(K)$ from the proof of Corollary~\ref{cor:tslogk}. Estimating the other terms via Lemma~\ref{lem:easyexploreparameters} gives the claimed result.

\section{Extension: Efficient Computation for Beta Priors}
\label{sec:computation-beta}

\betaefficient*

% We actually show a more general result, in which the $(j,N)$-informed property is replaced by an $(N_1,\dots,N_K)$-informed property, where $N_i=N$ for all $i<j$ and $N_k=0$ for all $k>j$. This generalization is needed to guarantee the computational efficiency of Algorithm~\ref{alg:pyramidsuperfast}. We call this $\mathcal G$-informedness in the proof below, where $\mathcal G=\mathcal G_{N,N,\dots,N,N_j,0,\dots,0}$ is the static $\sigma$-algebra with access to these samples.

% we don't need this anymore what am I saying...

\begin{proof}

First suppose that we are in the worst case $\mu_1,\dots,\mu_{j-1}\sim \term{Beta}(M,1), \mu_j\sim  \term{Beta}(1,M)$ and aim to explore $a_j$. The key point is that by the argument of Lemma~\ref{lem:sym}, assuming without loss of generality that our strategy is symmetric in $(\mu_1,\dots,\mu_{j-1})$, the uniform distribution $q=q_j:=\left(\frac{1}{j-1},\frac{1}{j-1},\dots,\frac{1}{j-1}\right)$ over arms $i<j$ is always a best response in the $j$-recommendation game. Therefore the optimal $j$-recommendation strategy is to recommend arm $K$ exactly when $\widetilde\mu_j\geq \widetilde\mu_{q_j}.$ Here as usual we use $\widetilde\mu_i$ to denote posterior mean with respect to the relevant data, in this case the first $N$ samples of arm $j$.

To efficiently compute the resulting value of $\padding=\mathbb E[(\widetilde\mu_j\geq \widetilde\mu_{q_j})_+]/10$ is not difficult. One can simply compute the distribution of $\mathbb E[\mu_{i}|\mathcal G]$ for each $i<j$ and then compute convolutions to find the distribution of $\mathbb E[\mu_{q_0}|\mathcal G].$ Since Beta distributions with a fixed strength have closed-form probability mass functions supported on an arithmetic progression, this is computationally efficient.

Of course, we might not have $\mu_1,\dots,\mu_{j-1}\sim \term{Beta}(M,1)$, $\mu_K\sim  \term{Beta}(1,M)$. The second key point is that we may reduce to this case in a similar manner to the construction of $\FakePi$ previously. Indeed, let $\mu_i'\sim\term{Beta}(M,1)$. Likewise let $\mu_j'\sim \term{Beta}(1,M)$, and let $\mathcal G'=\mathcal G'_{N,j}$ be a $\sigma$-algebra encapsulating $N$ samples of arms with means $\mu_1',\dots,\mu_{j}'.$

By Corollary~\ref{cor:betapostdom} we know that $\widetilde\mu_i=\mathbb E[\mu_i|\mathcal G]$ is stochastically smaller than $\mathbb E[\mu_i'|\mathcal G']$, for each $i<j$, and that the opposite holds for $\mu_j,\mu_j'$. Moreover it is computationally easy to compute these distributions exactly.\footnote{For instance, the sequence of $0/1$-reward values with a $\term{Beta}$-prior is a simple Markov chain following Laplace's rule of succession (with prior-dependent initialization), so the probabilities to earn $k$ reward from $N$ samples can be computed easily by dynamic programming.} Once the two distributions are computed we then compute the canonical monotone coupling between the two conditional expected values, as in defining the couplings $\mathcal X_i$ in the description of Algorithm~\ref{alg:pyramidtight}. Finally, given the values $\widetilde\mu_i=\mathbb E[\mu_i|\mathcal G]$ we sample the value $\hat\mu_i=\mathbb E[\mu_i'|\mathcal G']$ according to $\mathcal X_i$, for each $i\leq j$. Hence we have
    $\widetilde\mu_i\leq \hat\mu_i$
if $i<j$, and
    $\widetilde\mu_j\geq \hat\mu_i$ otherwise.

$\pi_j^{\term{eff}}$ can now be defined. $\pi_j^{\term{eff}}$ first decides whether to recommend arm $j$, doing so whenever
\[\hat\mu_j\geq\hat\mu_{q}=\frac{1}{j-1}\sum_{i\in [j-1]}\hat\mu_i.\]
When this does not happen, $\pi_j^{\term{eff}}$ exploits conditional on $\mathcal G$ as usual.

To see that $\pi_j^{\term{eff}}$ is $(j,\padding)$-padded BIC we use the same strategy as in previous comparison arguments. Exploitation parts are automatically BIC. For $i<j$ we have:
\begin{align*}
\E[(\mu_j-\mu_i)\cdot 1_{\pi_j^{\term{eff}}=j}] &=\E[(\widetilde \mu_j-\widetilde \mu_i)\cdot 1_{\pi_j^{\term{eff}}=j}]\\
&\geq \E[(\hat \mu_j-\hat \mu_i)\cdot 1_{\pi_j^{\term{eff}}=j}]\\
& \geq \padding.\end{align*}

The ordinary BIC property against the other arms $i>j$ holds similarly.
\end{proof}

\end{document}